\newtheorem{theorem}{Theorem}
\newtheorem*{theorem*}{Theorem}
\newtheorem{proposition}{Proposition}
\newtheorem{lemma}{Lemma}
\newtheorem{corollary}{Corollary}
\theoremstyle{definition}
\newtheorem{definition}{Definition}
\theoremstyle{remark}
\newcommand\newtarget[2]{\Hy@raisedlink{\hypertarget{#1}{}}#2}
\newcommand{\E}{\mathbb E}								
\newcommand{\V}{\mathrm{Var}}							
\renewcommand{\P}{\mathbb{P}}							
\newcommand{\Q}{\mathbb{Q}}								
\newcommand{\R}{\mathbb{R}}								
\newcommand{\indicator}{\mathbbm 1}						
\newcommand{\norm}[1]{\left\lVert{#1}\right\rVert}		
\newcommand{\independent}{{\perp \! \! \! \perp}}		
\newcommand{\iidsim}{\stackrel{\mathrm{i.i.d.}}{\sim}} 	
\newcommand{\convp}{\overset p \rightarrow}             
\newcommand{\convd}{\overset d \rightarrow}             
\newcommand{\convas}{\overset {a.s.} \rightarrow}       
\newcommand{\argmin}[1]{\underset{#1}{\arg \min}}       
\newcommand{\prx}{\bm X}								
\newcommand{\srx}{X}									
\newcommand{\prz}{\bm Z}								
\newcommand{\srz}{Z}									
\newcommand{\srxk}{\widetilde X}						
\newcommand{\pry}{{\bm Y}}								
\newcommand{\sry}{Y}									
\newcommand{\law}{\mathcal L}							
\newcommand{\nulllaws}{\mathscr L^0}					
\newcommand{\regclass}{\mathscr R}					    
\newcommand{\lawhat}{\widehat{\mathcal L}}				
\newcommand{\dCRT}{\textnormal{dCRT}} 					
\newcommand{\GCM}{\textnormal{GCM}}						
\newcommand{\dCRThat}{\widehat{\textnormal{dCRT}}}		
\newcommand{\MXtwohat}{\widehat{\textnormal{MX(2)}}}		
\newcommand{\ndCRThat}{\widehat{\textnormal{ndCRT}}}	
\renewcommand{\H}{\mathcal H}		 					
\newcommand{\MXtwo}{\textnormal{MX(2)}}                 
\newcommand{\convdp}{\overset {d,p} \longrightarrow}    
\newcommand{\convpp}{\overset {p,p} \longrightarrow}    
\let\oldnl\nl
\newcommand{\nonl}{\renewcommand{\nl}{\let\nl\oldnl}} 
\begin{document}
	

	\title{Reconciling model-X and doubly robust approaches to conditional independence testing}
	\author{Ziang Niu$^{*1}$, Abhinav Chakraborty$^{*1}$,
		Oliver Dukes$^2$, Eugene Katsevich$\textsuperscript{1 \Letter}$}

	\renewcommand{\thefootnote}{*} 
	\footnotetext{These authors contributed equally to this work.}
	\renewcommand{\thefootnote}{1}
	\footnotetext{Department of Statistics and Data Science, University of Pennsylvania}
	\renewcommand{\thefootnote}{2}
	\footnotetext{Department of Applied Mathematics, Computer Science and Statistics, Ghent University}
	\renewcommand{\thefootnote}{\Letter} 
	\footnotetext{\texttt{ekatsevi@wharton.upenn.edu}}
	\renewcommand{\thefootnote}{3} 
	
	\maketitle
	\thispagestyle{empty}
	\begin{abstract}
		Model-X approaches to testing conditional independence between a predictor and an outcome variable given a vector of covariates usually assume exact knowledge of the conditional distribution of the predictor given the covariates. Nevertheless, model-X methodologies are often deployed with this conditional distribution learned in sample. We investigate the consequences of this choice through the lens of the distilled conditional randomization test (dCRT). We find that Type-I error control is still possible, but only if the mean of the outcome variable given the covariates is estimated well enough. This demonstrates that the dCRT is doubly robust, and motivates a comparison to the generalized covariance measure (GCM) test, another doubly robust conditional independence test. We prove that these two tests are asymptotically equivalent, and show that the GCM test is optimal against (generalized) partially linear alternatives by leveraging semiparametric efficiency theory. In an extensive simulation study, we compare the dCRT to the GCM test. These two tests have broadly similar Type-I error and power, though dCRT can have somewhat better Type-I error control but somewhat worse power in small samples or when the response is discrete. We also find that post-lasso based test statistics (as compared to lasso based statistics) can dramatically improve Type-I error control for both methods.
	\end{abstract}

	\section{Introduction}
	
	\subsection{Conditional independence testing and the model-X assumption} \label{sec:ci-testing-and-mx}
	
	Given a predictor $\prx \in \R$, response $\pry \in \R$, and high-dimensional covariate vector $\prz \in \R^{p}$ drawn from a joint distribution $(\prx, \pry, \prz) \sim \law_n$ (potentially varying with $n$ to accommodate growing $p$), consider testing the hypothesis of conditional independence (CI)
	\begin{equation}
		H_{0n}: \pry\ \independent\ \prx\ |\ \prz
		\label{conditional-independence}
	\end{equation}
	at level $\alpha \in (0,1)$ using $n$ data points
	\begin{equation}
		\label{eq:xyz}
		(\srx, \sry, \srz) \equiv \{(\srx_i, \sry_i, \srz_i)\}_{i = 1, \dots, n} \iidsim \law_n. 
	\end{equation}
	In a high-dimensional regression setting, $H_{0n}$ is a model-agnostic way of formulating the null hypothesis that the predictor $\prx$ is unimportant in the regression of $\pry$ on $(\prx, \prz)$ \citep{CetL16}. In a causal inference setting with treatment $\prx$, outcome $\pry$, observed confounders $\prz$, and no unobserved confounders, $H_{0n}$ is the null hypothesis of no causal effect of $\prx$ on $\pry$ \citep{Pearl2009}.
	
	As \citet{Shah2018} showed, the CI null hypothesis is too large in the sense that any test controlling Type-I error on $H_{0n}$ must be powerless against all alternatives (unless $\prz$ is supported on a finite set). Therefore, additional assumptions must be placed on $\law_n$ to make progress. One such assumption is the \textit{model-X (MX) assumption} \citep{CetL16}, which states that $\law_n(\prx | \prz)$ is known exactly. Under the MX assumption, \citet{CetL16} propose the MX knockoffs and conditional randomization test (CRT) methodologies, which have elegant finite-sample Type-I error control guarantees. These MX methodologies have since exploded in popularity, undergoing active methodological development and deployment in a range of applications.
	
	One of the primary challenges in the practical application of MX methods is to obtain the required conditional distribution $\law_n(\prx|\prz)$. Outside the context of randomized controlled experiments \citep{Aufiero2022, Ham2022}, the MX assumption is an approximation \citep{Barber2018, Huang2019, Li2022}. In genome-wide association studies, a realistic parametric distribution can be postulated for this conditional law \citep{SetC17}, but the parameters of this distribution must still be learned from data. In practice, the conditional law is usually fit in sample on the same data that is used for testing, and then treated as if it were known \citep{CetL16, SetC17, SetS19, Bates2020, Liu2022a, Li2021b, Sesia2021, Katsevich2020c}. Such adaptations of MX methodologies are widely deployed, but their robustness and power properties have not been thoroughly investigated. 
	
	\subsection{Our contributions} 
	
	In this paper, we address this gap by investigating the properties of MX methods with $\law_n(\prx|\prz)$ learned in sample. This investigation leads us to establish close connections between these methods and double regression approaches to CI testing, and to explore the optimality of CI tests against semiparametric alternatives. We focus our analyses on the \textit{distilled conditional randomization test} (\textit{dCRT}), a fast and powerful instance of the CRT \citep{Liu2022a}, and the \textit{generalized covariance measure (GCM) test}, a prototypical double regression approach to CI testing \citep{Shah2018}. Both tests involve learning $\law_n(\prx \mid \prz)$ and $\law_n(\pry \mid \prz)$ in sample. Our main contributions are outlined next:
	
	\begin{enumerate}
		\item \textbf{The dCRT with $\law_n(\prx \mid \prz)$ learned in sample with can have poor Type-I error control if $\law_n(\pry \mid \prz)$ is learned poorly.} If $\law(\prx \mid \prz)$ is known exactly, then the dCRT has finite-sample Type-I error control regardless of $\law(\pry \mid \prz)$ or the quality of its estimate. This is no longer the case once $\law(\prx \mid \prz)$ is fit in sample, as we demonstrate in a numerical simulation and a theoretical counterexample (Section~\ref{sec:neg-results}).
		\item \textbf{The dCRT is doubly robust, in the sense that errors in $\law_n(\prx \mid \prz)$ can be compensated for by better approximations of $\law_n(\pry \mid \prz)$.} The MX assumption shifts the modeling burden entirely from $\law_n(\pry \mid \prz)$ to $\law_n(\prx \mid \prz)$. When the latter is fit in sample, shifting the modeling burden partially back towards $\law_n(\pry \mid \prz)$ helps recover asymptotic Type-I error control, as we demonstrate theoretically (Section~\ref{sec:double-robustness}).
		\item \textbf{The dCRT resampling distribution approaches normality, making this test asymptotically equivalent to the GCM test.} The dCRT is a resampling-based test, whereas the GCM test is asymptotic. In large samples, however, the resampling-based null distribution of the former converges to the $N(0,1)$ null distribution of the latter (Section~\ref{sec:conv-to-normal}). We show that these two tests are asymptotically equivalent against local alternatives (Section~\ref{sec:equivalence}).
		\item \textbf{The GCM test is asymptotically uniformly most powerful against local non-interacting alternatives.} Optimality results are widely prevalent in the semiparametric literature, but not in the CI testing literature. We leverage semiparametric optimality theory to prove that the GCM is the optimal CI test against  local (generalized) partially linear alternatives (Section~\ref{sec:optimality}), a broad class of alternatives in which $\prx$ and $\prz$ do not interact.
		\item \textbf{In finite samples, the dCRT and GCM test have broadly similar Type-I error and power, with some exceptions.} The asymptotic equivalence between the dCRT and GCM test largely carries over to finite samples, as we demonstrate in numerical simulations (Section~\ref{sec:simulations}). The two tests have broadly similar Type-I error and power, although there is some divergence in small samples or when $\pry$ is discrete: in these cases dCRT can have somewhat better Type-I error control but somewhat worse power. 
		\item \textbf{In finite samples, replacing the lasso with the post-lasso markedly improves Type-I error control for both dCRT and GCM test.} In MX applications, the lasso is perhaps the most common approach for learning both $\law_n(\prx \mid \prz)$ and $\law_n(\pry \mid \prz)$. However, we demonstrate in numerical simulations (Section~\ref{sec:simulations}) that the bias reduction offered by the post-lasso greatly improves Type-I error control in the context of both GCM test and dCRT, though at some cost in power. 
	\end{enumerate}
	
	On the way to making the aforementioned primary contributions, we make a few secondary contributions of independent interest:
	
	\begin{enumerate}
		\item[7.] \textbf{We reexamine numerical simulation setups from prior MX papers, finding that many have only low levels of marginal dependence between $\prx$ and $\pry$.} Prior works have used numerical simulations to establish that MX methods are fairly robust when fitting $\mathcal L_n(\prx|\prz)$ in sample. However, we note that the conditional independence testing problem~\eqref{conditional-independence} is difficult to the extent that $\prz$ induces spurious marginal dependence between $\prx$ and $\pry$ (a ``confounding'' effect). We find simulation setups in prior works have low levels of this marginal dependence (Section~\ref{sec:sim-revisiting}), potentially leading to optimistic conclusions.
		
		\item[8.] \textbf{We collate a number of conditional analogs of classical convergence theorems (some but not all novel).} The dCRT involves resampling conditionally on the observed data, so its asymptotic analysis requires reasoning about convergence after conditioning on a $\sigma$-algebra that changes with $n$. We state and prove conditional analogs of Slutsky's theorem, the law of large numbers, the central limit theorem, and other classical convergence theorems (Appendix~\ref{sec:conditional-convergence-results}). These results are not surprising, but at least some appear novel.
		
		\item[9.] \textbf{We prove a sharpened theorem on optimality in semiparametric testing.} In the literature on semiparametric \textit{estimation}, an estimator need only be regular \textit{in the vicinity of a point} for efficiency bounds to hold, whereas popular textbooks \citep{VDV1998, Kosorok2008} state semiparametric \textit{testing} optimality results \textit{globally}: a test must control Type-I error on the entire semiparametric null, rather than just in the vicinity of a point, for efficiency bounds to hold. We address this gap by proving a stronger local optimality result for semiparametric testing (Appendix~\ref{sec:semiparametric-preliminaries}).
	\end{enumerate}
	
	\subsection{Related work}
	
	The question of robustness of existing MX methods to misspecification of $\law_n(\prx \mid \prz)$ has been investigated before, though not specifically in the context of learning this distribution in sample. \citet{Berrett2019} proved that, \textit{in the worst case} over all possible test statistics and all possible distributions $\law_n(\pry \mid \prz)$, the excess Type-I error of the CRT based on an approximation to $\law_n(\prx \mid \prz)$ is bounded below by the total variation error in approximating $\prod_{i = 1}^n \law_n(\srx_i \mid \srz_i)$. This error is $O(1)$ when fitting $\law_n(\prx \mid \prz)$ in sample. We show (see contribution 1) that, even when specializing to the dCRT test statistic, Type-I error control can be poor when $\law_n(\pry \mid \prz)$ is estimated poorly. \cite{Berrett2019} provided a matching upper bound on the Type-I error of the CRT, while \citet{Barber2018} proved a similar upper bound for MX knockoffs. These worst-case bounds guarantee Type-I error control only when an additional unlabeled sample of size $N \gg n$ is available. Another kind of robustness to misspecification of the MX assumption was proposed by \citet{Katsevich2020a}; they showed that if only the first two moments of $\law_n(\prx \mid \prz)$ are known exactly, then the dCRT has asymptotic Type-I error control. Even this weaker assumption cannot be expected to hold when $\law_n(\prx \mid \prz)$ is fit in sample, however.
	
	Other MX methods have been designed specifically to have improved robustness to misspecifications of $\law_n(\prx \mid \prz)$. For example, if this law is known to belong to a parametric family with a low-dimensional sufficient statistic, MX inference can be carried out conditionally on this sufficient statistic without needing to accurately estimate the parameters themselves \citep{Huang2019,Barber2020}. The former methodology enjoys a double robustness property, related to but different from the one we state for the dCRT (see contribution 2). The \textit{conditional permutation test} \citep{Berrett2019} was proposed as a more robust variant of the CRT, though this additional robustness has yet to be formalized theoretically. Finally, the \textit{Maxway CRT} \citep{Li2022} has recently been proposed as a doubly robust analog of the dCRT. We argue that the dCRT itself is doubly robust. We conjecture that the improved empirical performance of the Maxway CRT over the (lasso-based) dCRT is primarily due to the post-lasso step in the former. Indeed, our inspiration to apply the dCRT with the post-lasso (see contribution 6) comes from the Maxway CRT; we find in simulations that this variant of the dCRT is actually more robust than the Maxway CRT.
	
	Asymptotic analysis of MX methodologies has also been undertaken before \citep{Weinstein2017, Liu2019, Weinstein2020, Katsevich2020a, Wang2020b}, although primarily for the purposes of power analyses and none in the context of fitting $\law_n(\prx \mid \prz)$ in sample. All but \citet{Katsevich2020a} assume that $\law_n(\prx \mid \prz)$ is known exactly (the full MX assumption), whereas the latter assumes that the first two moments of this distribution are known exactly. In some ways, the current work generalizes the results of \citet{Katsevich2020a}. For example, the convergence of the dCRT resampling distribution to normality (see contribution 3) was shown in a fixed-dimensional setting where $\law_n(\prx \mid \prz)$ is learned out of sample and the first two moments of $\law_n(\prx \mid \prz)$ are known. Here, we allow growing dimension, and learning both $\law_n(\prx \mid \prz)$ and $\law_n(\pry \mid \prz)$ in sample.
	
	\subsection{Notation, definitions, and preliminaries} \label{sec:notation-definitions-preliminaries}
	
	\paragraph*{Notation}
	
	We use boldface font to denote population quantities and regular font to denote sample quantities. We denote by 
	\begin{equation}
		\nulllaws_n \equiv \{\law_n: \law_n(\prx, \pry \mid \prz) = \law_n(\prx\mid\prz) \times \law_n(\pry\mid\prz)\}
	\end{equation}
	the set of laws satisfying conditional independence, and $\regclass_n$ a class of distributions satisfying some regularity assumptions. For example, the MX assumption is that 
	\begin{equation*}
	\law_n \in \regclass_n \equiv \{\law_n: \law_n(\prx | \prz) = \law_n^*(\prx | \prz)\}, 
	\end{equation*}
	where $\law_n^*(\prx | \prz)$ is a fixed, known distribution. For any regularity class $\regclass_n$, we consider testing the null hypothesis $\law_n \in \nulllaws_n \cap \regclass_n$. A sequence of tests $\phi_n: (\srx, \sry, \srz) \mapsto [0,1]$ of this null hypothesis has asymptotic Type-I error control if
	\begin{equation}
		\limsup_{n \rightarrow \infty}\sup_{\law_n \in \nulllaws_n \cap \regclass_n} \E_{\law_n}[\phi_n(\srx,\sry,\srz)] \leq \alpha.
		\label{eq:asymptotic-control}
	\end{equation}
	Let
	\begin{equation}
		\mu_{n,x}(\prz) \equiv \E_{\law_n}[\prx|\prz] \quad \text{and} \quad \mu_{n,y}(\prz) \equiv \E_{\law_n}[\pry|\prz].
	\end{equation}
	
	\paragraph*{The dCRT and $\dCRThat$}
	
	A simple approach to CI testing under the MX assumption is the \textit{conditional randomization test} (CRT, \cite{CetL16}), which controls Type-I error not just asymptotically~\eqref{eq:asymptotic-control} but in finite samples as well. The CRT is based on constructing a null distribution for any test statistic $T_n(\srx, \sry, \srz)$ by resampling $\srx$ conditionally on $\srz$ using the known conditional law $\law_n(\prx|\prz)$. While the CRT is in general computationally costly, using a test statistic of the form
	\begin{equation*}
		T_n^{\dCRT}(\srx, \sry, \srz) \equiv \frac{1}{\sqrt{n}}\sum_{i = 1}^n (\srx_i - \mu_{n,x}(\srz_i))(\sry_i - \widehat \mu_{n,y}(\srz_i))
	\end{equation*}
	gives a fast and powerful test called the \textit{distilled CRT} (dCRT, \cite{Liu2022a}). Here, $\mu_{n,x}$ is known under the MX assumption and $\widehat \mu_{n,y}$ is learned in sample. Variants of the dCRT have now been deployed in genetics \citep{Bates2020} and genomics \citep{Katsevich2020c} applications. As discussed in Section~\ref{sec:ci-testing-and-mx}, MX methodologies (including the dCRT) are usually deployed by learning $\law_n(\prx \mid \prz)$ in sample. For clarity, we give the dCRT with $\law_n(\prx \mid \prz)$ fit in sample a new name: $\dCRThat$. This procedure is based on the test statistic
	\begin{equation}
		T_n^{\dCRThat}(\srx, \sry, \srz) \equiv \frac{1}{\sqrt{n}}\sum_{i = 1}^n (\srx_i - \widehat \mu_{n,x}(\srz_i))(\sry_i - \widehat \mu_{n,y}(\srz_i)),
		\label{eq:dcrt-hat-stat}
	\end{equation}
	where $\widehat \mu_{n,x}(\srz_i) \equiv \E_{\lawhat_n}[\srx_i \mid \srz_i]$. The $\dCRThat$ procedure is outlined in Algorithm~\ref{alg:dcrt-hat}; one of the primary goals of this paper is to study this procedure.
	
	\begin{center}
		\begin{minipage}{\linewidth}
			\begin{algorithm}[H]
				\nonl  \textbf{Input:}  Data $(\srx,\sry,\srz)$, number of randomizations $M$. \\
				Learn $\lawhat_n(\prx|\prz)$ based on $(\srx, \srz)$ and $\widehat \mu_{n,y}(\prz)$ based on $(\sry, \srz)$\;
				Compute $T_n^{\dCRThat}(\srx, \sry, \srz)$\;
				\For{$m = 1, 2, \dots, M$}{
					Sample $\srxk^{(m)}|\srx, \sry, \srz \sim \prod_{i = 1}^n \lawhat_n(\srx_i|\srz_i)$ and compute 
					\begin{equation}
						T_n^{\dCRThat}(\srxk^{(m)}, \srx, \sry, \srz) \equiv \frac{1}{\sqrt{n}}\sum_{i = 1}^n (\srxk_i - \widehat \mu_{n,x}(\srz_i))(\sry_i - \widehat \mu_{n,y}(\srz_i)); \label{eq:resampled-dcrt-def}
					\end{equation}
				}
				\nonl \textbf{Output:} $\dCRThat$ $p$-value	 \small $\frac{1}{M+1} (1+ \sum_{m=1}^M\indicator\{T_n^{\dCRThat}(\srxk^{(m)}, \srx, \sry, \srz) \geq T_n^{\dCRThat}(\srx, \sry, \srz)\}).$
				\caption{\bf The $\dCRThat$.}
				\label{alg:dcrt-hat}
			\end{algorithm}
		\end{minipage}
	\end{center}
	The resampled test statistics $T_n^{\dCRThat}(\srxk^{(m)}, \srx, \sry, \srz)$~\eqref{eq:resampled-dcrt-def} have four arguments instead of three in order to emphasize that the conditional mean $\widehat \mu_{n,x}(\cdot)$ is not refit upon resampling.
	
	\paragraph*{The GCM test and double robustness}
	
	Another CI test is the GCM test \citep{Shah2018}, defined as
	\begin{equation}
		\phi_n^{\GCM}(\srx, \sry, \srz) \equiv \indicator(T_n^{\GCM}(\srx, \sry, \srz) > z_{1-\alpha}),
		\label{eq:gcm-test}
	\end{equation}
	where
	\begin{equation}
		T_n^{\GCM}(\srx, \sry, \srz) \equiv \frac{1}{\widehat S_{n}^{\GCM}}\frac{1}{\sqrt{n}}\sum_{i = 1}^n (\srx_i - \widehat \mu_{n,x}(\srz_i))(\sry_i - \widehat \mu_{n,y}(\srz_i)) \equiv \frac{1}{\widehat S_{n}^{\GCM}}T_n^{\dCRThat}(\srx, \sry, \srz)
	\end{equation}
	and $(\widehat S_{n}^{\GCM})^2$ is the empirical variance of the product-of-residual summands:
	\begin{equation}
		(\widehat S_{n}^{\GCM})^2 \equiv \widehat{\V}\{(\srx_i - \widehat \mu_{n,x}(\srz_i))(\sry_i - \widehat \mu_{n,y}(\srz_i))\}.
	\end{equation}
	It controls Type-I error if the following in-sample mean-squared error quantities are small \citep{Shah2018}:
	\small
	\begin{equation*}
		E_{n, x} \equiv \left(\frac{1}{n}\sum_{i = 1}^n (\widehat \mu_{n,x}(\srz_i) -  \mu_{n,x}(\srz_i))^2\right)^{1/2};\ E'_{n, x} \equiv \left(\frac{1}{n}\sum_{i = 1}^n (\widehat \mu_{n,x}(\srz_i) -  \mu_{n,x}(\srz_i))^2\textnormal{Var}_{\law_n}[\sry_i|\srz_i]\right)^{1/2};
	\end{equation*}	
	\begin{equation*}
		E_{n, y} \equiv \left(\frac{1}{n}\sum_{i = 1}^n (\widehat \mu_{n,y}(\srz_i) -  \mu_{n,y}(\srz_i))^2\right)^{1/2};\ E'_{n, y} \equiv \left(\frac{1}{n}\sum_{i = 1}^n (\widehat \mu_{n,y}(\srz_i) -  \mu_{n,y}(\srz_i))^2\textnormal{Var}_{\law_n}[\srx_i|\srz_i]\right)^{1/2}.
	\end{equation*}
	\normalsize
	In particular, \citet{Shah2018} require that
	\begin{equation}
		E_{n, x} E_{n, y}  = o_{\law_n}(n^{-1/2}),\  E'_{n, x} = o_{\law_n}(1),\  E'_{n, y} = o_{\law_n}(1)
		\label{eq:sp1}, \tag{SP1}
	\end{equation}
	and, for some constants $c_1, c_2, \delta > 0$,
	\begin{equation}
		\begin{split}
			&\inf_n\ \E_{\law_n}[(\prx-\mu_{n,x}(\prz))^2(\pry-\mu_{n,y}(\prz))^2] > c_1\\
			&\sup_n\ \E_{\law_n}[|(\prx-\mu_{n,x}(\prz))(\pry-\mu_{n,y}(\prz))|^{2+\delta}] < c_2.
			\label{eq:sp2}
		\end{split}
		\tag{SP2}
	\end{equation}
	The GCM test is therefore doubly robust in the sense that it controls Type-I error if the product of the estimation errors for $\E[\prx|\prz]$ and $\E[\pry|\prz]$ ($E_{n,x}E_{n,y}$) converges to zero at the $o_{\law_n}(n^{-1/2})$ rate. Note that this is a \textit{rate double robustness} property rather than a \textit{model double robustness} property; see \citet{Smucler2019} for a discussion of this distinction.
	
	\section{$\dCRThat$ resampling distribution converges to normal} \label{sec:conv-to-normal}
	
	To make it easier to analyze the asymptotic properties of the $\dCRThat$, in this section we prove that it is asymptotically equivalent to the resampling-free $\MXtwohat$ $F$-test, a variant of the $\MXtwo$ $F$-test \citep{Katsevich2020a} where the first two moments of $\law_n(\prx|\prz)$ are estimated in sample. This equivalence was already shown by these authors in the case when $\mu_{n,x}$ is known and $\widehat \mu_{n,y}$ is fit out of sample (see their Theorem 2). They conjectured that the equivalence continues to hold when $\widehat \mu_{n,y}$ is fit in sample. Here, we prove this conjecture, not just when $\widehat \mu_{n,y}$ is fit in sample, but also when the first two moments of $\mu_{n,x}$ are unknown and also fit in sample.
	
	Note that the variance of the resampling distribution of $T_n^{\dCRThat}$ is
	\begin{equation}
		(\widehat S_{n}^{\dCRThat})^2 \equiv \V_{\lawhat_n}[T_n^{\dCRThat}(\srxk, \srx, \sry, \srz) \mid \srx, \sry, \srz] = \frac{1}{n}\sum_{i = 1}^n \V_{\lawhat_n}[\srx_i|\srz_i](\sry_i - \widehat \mu_{n,y}(\srz_i))^2.
		\label{eq:conditional-variance-def}
	\end{equation}
	It will be convenient to reformulate $\dCRThat$ as 
	\begin{equation*}
		\begin{split}
			\phi^{\dCRThat}_n(\srx, \sry, \srz) &\equiv \indicator(T_n^{\dCRThat}(\srx, \sry, \srz) > \Q_{1-\alpha}[T_n^{\dCRThat}(\srxk, \srx, \sry, \srz) \mid \srx, \sry, \srz]) \\
			&= \indicator\left(\frac{1}{\widehat S_{n}^{\dCRThat}}T_n^{\dCRThat}(\srx, \sry, \srz) > \Q_{1-\alpha}\left[\frac{1}{\widehat S_{n}^{\dCRThat}}T_n^{\dCRThat}(\srxk, \srx, \sry, \srz) \mid \srx, \sry, \srz\right]\right) \\
			&\equiv \indicator\left(\frac{1}{\widehat S_{n}^{\dCRThat}}T_n^{\dCRThat}(\srx, \sry, \srz) > C^{\dCRThat}_n(\srx, \sry, \srz)\right).
		\end{split}
	\end{equation*}
	Note that this test is obtained from that in Algorithm~\ref{alg:dcrt-hat} by sending $M \rightarrow \infty$; we focus our theoretical analysis here and throughout on this infinite-resamples limit of the $\dCRThat$. Here, the $\alpha$ conditional quantile $\Q_{\alpha}[W \mid \mathcal F]$ of a random variable $W$ given a $\sigma$-algebra $\mathcal F$ is defined via
	\begin{equation}
		\mathbb{Q}_{\alpha}[W \mid \mathcal F] \equiv \inf\{t:\P[W \leq t \mid \mathcal F] \geq \alpha\}.
	\end{equation}
	One would expect, based on the central limit theorem, that the conditional distribution of the ratio $T_n^{\dCRThat}(\srxk, \srx, \sry, \srz)/\widehat S_{n}^{\dCRThat}$ tends to $N(0,1)$. This statement is complicated by the conditioning event, which requires us to be careful to define conditional convergence in distribution:
	
	\begin{definition} \label{def:conditional-convergence-distribution}
		For each $n$, let $W_n$ be a random variable and let $\mathcal F_n$ be a $\sigma$-algebra. Then, we say $W_n$ converges in distribution to a random variable $W$ conditionally on $\mathcal F_n$ if
		\begin{equation}
			\P[W_n \leq t \mid \mathcal F_n] \convp \P[W \leq t] \ \text{for each } t \in \R \text{ at which } t \mapsto \P[W \leq t] \text{ is continuous.}
		\end{equation}
		We denote this relation via $W_n \mid \mathcal F_n \convdp W$.
	\end{definition}
	
	Based on an extension of the Lyapunov central limit theorem to conditional convergence in distribution (Theorem~\ref{thm:conditional-clt}), we get the following result:
	\begin{theorem} \label{thm:normal-limit}
		Suppose the sequences of true and learned laws $\law_n$ and $\lawhat_n$ satisfy the following two nondegeneracy properties:
		\begin{gather}
			\P_{\law_n}[(\widehat S_{n}^{\dCRThat})^2 \geq \epsilon] \rightarrow 1 \ \text{for some } \epsilon > 0
			\label{eq:var-bounded-below}; \tag{NDG1} \\
			0 < \V_{\lawhat_n}[\srx_i|\srz_i], (\sry_i - \widehat \mu_{n, y}(\srz_i))^2, (\sry_i - \mu_{n, y}(\srz_i))^2 < \infty \ \text{almost surely}. \label{eq:variance-bounds} \tag{NDG2}
		\end{gather}
		If the conditional Lyapunov condition
		\begin{equation}
			\frac{1}{n^{1+\delta/2}} \sum_{i=1}^n |\sry_i-\widehat\mu_{n,y}(\srz_i)|^{2+\delta}\E_{\lawhat_n}\left[|\srxk_i-\widehat\mu_{n,x}(\srz_i)|^{2+\delta}\mid \srx,\srz\right] \convp 0
			\label{eq:lyapunov-condition} \tag{Lyap-1}
		\end{equation}
		is satisfied for some $\delta > 0$, then
		\begin{equation}
			\frac{1}{\widehat S_{n}^{\dCRThat}}T_n^{\dCRThat}(\srxk, \srx, \sry, \srz) \mid \srx, \sry, \srz \convdp N(0,1)
			\label{eq:conditional-convergence}
		\end{equation}
		and therefore
		\begin{equation}
			C^{\dCRThat}_n(\srx, \sry, \srz) \equiv \Q_{1-\alpha}\left[\frac{1}{\widehat S_{n}^{\dCRThat}}T_n^{\dCRThat}(\srxk, \srx, \sry, \srz) \mid \srx, \sry, \srz\right] \convp z_{1-\alpha}.
			\label{eq:critical-value-convergence}
		\end{equation}
	\end{theorem}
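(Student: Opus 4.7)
The plan is to apply the conditional Lyapunov CLT (Theorem~\ref{thm:conditional-clt}) to the resampled statistic and then convert conditional weak convergence to a continuous limit into convergence of quantiles. First, I condition on $\mathcal F_n \equiv \sigma(\srx, \sry, \srz)$. By construction of Algorithm~\ref{alg:dcrt-hat}, the resampled entries $\srxk_i$ are, given $\mathcal F_n$, mutually independent with $\srxk_i \sim \lawhat_n(\cdot \mid \srz_i)$, so the summands
\[
W_{n,i} \equiv (\sry_i - \widehat\mu_{n,y}(\srz_i))(\srxk_i - \widehat\mu_{n,x}(\srz_i))
\]
are conditionally independent given $\mathcal F_n$. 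Because $\widehat\mu_{n,x}(\srz_i) = \E_{\lawhat_n}[\srxk_i \mid \srz_i]$, each $W_{n,i}$ has conditional mean zero and conditional variance $(\sry_i - \widehat\mu_{n,y}(\srz_i))^2\,\V_{\lawhat_n}[\srx_i \mid \srz_i]$, matching~\eqref{eq:conditional-variance-def}. Writing $T_n^{\dCRThat}(\srxk,\srx,\sry,\srz)/\widehat S_n^{\dCRThat} = \sum_{i=1}^n W_{n,i}/(\sqrt n\, \widehat S_n^{\dCRThat})$ then puts us in the standard setting of a triangular array of centered, conditionally independent summands whose conditional variances sum to one.

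Next, I verify the hypotheses of the conditional Lyapunov CLT. Assumption~\eqref{eq:variance-bounds} ensures the individual conditional variances and residual squares $(\sry_i-\widehat\mu_{n,y}(\srz_i))^2$ are positive and finite a.s., so the normalization by $\widehat S_n^{\dCRThat}$ is well defined, and~\eqref{eq:var-bounded-below} bounds $(\widehat S_n^{\dCRThat})^2$ below by $\epsilon$ on a set whose probability tends to one. On this set, the Lyapunov ratio
\[
\frac{1}{(\sqrt n\, \widehat S_n^{\dCRThat})^{2+\delta}} \sum_{i=1}^n \E_{\lawhat_n}\bigl[|W_{n,i}|^{2+\delta}\bigm|\mathcal F_n\bigr]
\]
is bounded by $\epsilon^{-(1+\delta/2)}$ times the left-hand side of~\eqref{eq:lyapunov-condition}, since the conditional independence of $\srxk_i$ from $\sry$ given $(\srx,\srz)$ yields $\E_{\lawhat_n}[|W_{n,i}|^{2+\delta}\mid\mathcal F_n] = |\sry_i-\widehat\mu_{n,y}(\srz_i)|^{2+\delta}\,\E_{\lawhat_n}[|\srxk_i-\widehat\mu_{n,x}(\srz_i)|^{2+\delta}\mid\srx,\srz]$. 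Hence the conditional Lyapunov condition is implied by~\eqref{eq:lyapunov-condition}, and Theorem~\ref{thm:conditional-clt} delivers~\eqref{eq:conditional-convergence}.

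Finally, to obtain~\eqref{eq:critical-value-convergence} I combine~\eqref{eq:conditional-convergence} with the fact that the standard normal CDF is continuous and strictly increasing at $z_{1-\alpha}$: conditional convergence in distribution to such a limit implies convergence in probability of the conditional $(1-\alpha)$-quantile to $z_{1-\alpha}$, by a conditional analog of the classical quantile-continuity lemma that I would draw from Appendix~\ref{sec:conditional-convergence-results}. The main obstacle is not any single computation but rather careful bookkeeping of conditional independence, conditional moments, and conditional weak convergence under the data-dependent $\sigma$-algebra $\mathcal F_n$; the condition~\eqref{eq:lyapunov-condition} is stated in precisely the form that makes this bookkeeping reduce to a constant-factor bound once~\eqref{eq:var-bounded-below} controls the random denominator.
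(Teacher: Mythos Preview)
Your proposal is correct and follows essentially the same route as the paper: apply the conditional Lyapunov CLT (Theorem~\ref{thm:conditional-clt}) to the conditionally independent, mean-zero summands $W_{n,i}$ given $\mathcal F_n=\sigma(\srx,\sry,\srz)$, using \eqref{eq:variance-bounds} for nondegeneracy and combining \eqref{eq:var-bounded-below} with \eqref{eq:lyapunov-condition} to control the Lyapunov ratio, then invoke Lemma~\ref{lem:conditional-convergence-to-quantile} to pass from conditional weak convergence to convergence of the conditional quantile. The only cosmetic difference is that the paper absorbs the $1/\sqrt{n}$ into the summands $W_{in}$ so that $S_n^2$ equals $(\widehat S_n^{\dCRThat})^2$ directly.
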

	
	This suggests that the $\dCRThat$ is asymptotically equivalent to the $\MXtwohat$ $F$-test, defined
	\begin{equation}
		\phi_n^{\MXtwohat}(\srx,\sry,\srz) \equiv \indicator\left(\frac{1}{\widehat S_{n}^{\dCRThat}}T_n^{\dCRThat}(\srx, \sry, \srz) > z_{1-\alpha}\right).
		\label{eq:mx-2-f-test}
	\end{equation}
	Indeed, we have the following corollary.
	
	\begin{corollary} \label{cor:asymptotic-equivalence} 
		Consider a sequence of laws $\law_n$ satisfying the assumptions~\eqref{eq:var-bounded-below}, \eqref{eq:variance-bounds}, and~\eqref{eq:lyapunov-condition} of Theorem~\ref{thm:normal-limit}, and assume that the test statistic does not accumulate near $z_{1-\alpha}$, i.e.
		\begin{equation}
			\lim_{\delta \rightarrow 0}\limsup_{n \rightarrow \infty}\ \P_{\law_n}[|T_n^{\dCRThat}(\srx, \sry, \srz)-z_{1-\alpha}| \leq \delta] = 0.
			\label{eq:non-accumulation}
		\end{equation}
		Then, the $\dCRThat$ is asymptotically equivalent to the $\MXtwohat$ $F$-test:
		\begin{equation}
			\lim_{n \rightarrow \infty}\P_{\law_n}[\phi_n^{\dCRThat}(\srx, \sry, \srz) = \phi_n^{\MXtwohat}(\srx, \sry, \srz)] = 1.
		\end{equation}
		
	\end{corollary}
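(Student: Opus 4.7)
The plan is to show that the set on which the two tests disagree has vanishing probability by sandwiching it between a shrinking neighborhood of $z_{1-\alpha}$ and the event that the random critical value $C_n^{\dCRThat}$ is far from $z_{1-\alpha}$. Write $U_n \equiv T_n^{\dCRThat}(\srx,\sry,\srz)/\widehat S_n^{\dCRThat}$ for the standardized statistic, so that $\phi_n^{\dCRThat} = \indicator(U_n > C_n^{\dCRThat})$ and $\phi_n^{\MXtwohat} = \indicator(U_n > z_{1-\alpha})$. The two indicators disagree exactly when $U_n$ lies (weakly) between $C_n^{\dCRThat}$ and $z_{1-\alpha}$, so
\begin{equation*}
\{\phi_n^{\dCRThat} \neq \phi_n^{\MXtwohat}\} \subseteq \{|U_n - z_{1-\alpha}| \leq |C_n^{\dCRThat} - z_{1-\alpha}|\}.
\end{equation*}

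Next, I would invoke Theorem~\ref{thm:normal-limit}, which gives $C_n^{\dCRThat} \convp z_{1-\alpha}$. For any fixed $\delta > 0$, splitting on whether $|C_n^{\dCRThat} - z_{1-\alpha}| \leq \delta$ yields the bound
\begin{equation*}
\P_{\law_n}[\phi_n^{\dCRThat} \neq \phi_n^{\MXtwohat}] \leq \P_{\law_n}[|U_n - z_{1-\alpha}| \leq \delta] + \P_{\law_n}[|C_n^{\dCRThat} - z_{1-\alpha}| > \delta].
\end{equation*}
Taking $\limsup_{n \to \infty}$, the second summand vanishes by Theorem~\ref{thm:normal-limit}, leaving $\limsup_n \P_{\law_n}[\phi_n^{\dCRThat} \neq \phi_n^{\MXtwohat}] \leq \limsup_n \P_{\law_n}[|U_n - z_{1-\alpha}| \leq \delta]$. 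Finally, sending $\delta \to 0$ and applying the non-accumulation hypothesis~\eqref{eq:non-accumulation} makes the right-hand side vanish, which gives the claimed equivalence.

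The only genuinely delicate point is reconciling the hypothesis~\eqref{eq:non-accumulation}, which is written in terms of $T_n^{\dCRThat}$, with the disagreement event, which is naturally expressed in terms of the standardized ratio $U_n$. Under the interpretation that~\eqref{eq:non-accumulation} refers to the standardized statistic (consistent with its comparison to $z_{1-\alpha}$), the argument above is immediate; otherwise one needs the auxiliary observation that $\widehat S_n^{\dCRThat}$ is bounded away from $0$ in probability by~\eqref{eq:var-bounded-below}, so that a $\delta$-neighborhood of $z_{1-\alpha}$ for $U_n$ corresponds to a $O(\delta)$-neighborhood of $\widehat S_n^{\dCRThat} z_{1-\alpha}$ for $T_n^{\dCRThat}$, and non-accumulation transfers accordingly. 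Either way, the core step reduces to the deterministic sandwich inclusion above combined with the conditional central limit theorem supplied by Theorem~\ref{thm:normal-limit}.
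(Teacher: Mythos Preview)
Your proof is correct and follows essentially the same route as the paper: the paper packages the sandwich argument into an auxiliary lemma (Lemma~\ref{lem:equivalence-lemma}), but the content---bounding the disagreement event by $\{|U_n - z_{1-\alpha}| \leq \delta\} \cup \{|C_n^{\dCRThat} - z_{1-\alpha}| > \delta\}$, invoking the critical-value convergence from Theorem~\ref{thm:normal-limit}, and then sending $\delta \to 0$---is identical to yours.

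On the delicate point you flag: the paper applies its lemma with $T_n \equiv (\widehat S_n^{\dCRThat})^{-1} T_n^{\dCRThat}$ and simply asserts that the non-accumulation condition is the assumed one, so it effectively adopts your first interpretation (that~\eqref{eq:non-accumulation} is meant for the standardized statistic). Your alternative ``otherwise'' fix, however, does not go through as written: knowing only that $\widehat S_n^{\dCRThat}$ is bounded away from zero does not let you transfer non-accumulation of $T_n^{\dCRThat}$ near $z_{1-\alpha}$ to non-accumulation of $U_n$ near $z_{1-\alpha}$, since the relevant center for $T_n^{\dCRThat}$ would be $\widehat S_n^{\dCRThat} z_{1-\alpha}$, which need not be close to $z_{1-\alpha}$ without an assumption that $\widehat S_n^{\dCRThat} \to 1$.
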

	
	This result extends \citet[Theorem 2]{Katsevich2020a} by allowing $\widehat \mu_{n,x}$ and $\widehat \mu_{n,y}$ to be fit in sample, rather than assuming $\mu_{n,x}$ is known and $\widehat \mu_{n,y}$ is fit out of sample. It is a first indication that the $\dCRThat$ approximates a test based on asymptotic normality.
	
	\section{$\dCRThat$ is not robust for general $\widehat \mu_{n,y}$} \label{sec:neg-results}
		
	One of the hallmarks of MX inference is that it requires ``no restriction on the dimensionality of the data or the conditional distribution of [$\law_n(\pry|\prz)$]'' \citep{CetL16}. For the CRT, this means that Type-I error is controlled in finite samples, regardless of the test statistic used or the distribution of the response variable. If $\law_n(\prx|\prz)$ is described by a parametric model with $k$ unknown parameters and we have $N \gg n \cdot k$ unlabeled samples to learn this model, then at least asymptotic Type-I error control is still possible without assumptions on $\law_n(\pry|\prz)$ \citep{Berrett2019}. By contrast, in this section we show that when $\law_n(\prx|\prz)$ is approximated in sample, we cannot expect Type-I error control without assumptions on the response variable.
	
	Let us consider a simple null model $\law_n$ with
	\begin{equation}
		\law_n(\prz) = N(0, I_p), \quad \law_n(\prx|\prz) = N(\prz^T \beta, 1), \quad \text{and} \quad \law_n(\pry|\prz) = N(\prz^T \beta, 1).
		\label{eq:simple-null-model}
	\end{equation}
	Suppose we fit $\law_n(\prx|\prz)$ via a ridge regression while using the trivial estimate $\widehat \mu_{n,y}(\prz) \equiv 0$ for $\E[\pry|\prz]$. To build intuition while avoiding technical difficulties, we loosely approximate the ridge regression estimator as $\widehat \beta_n \equiv (1-\frac{c}{\sqrt{n}}) \beta$, where the $1/\sqrt{n}$ error term reflects that we are fitting $\widehat \beta_n$ in sample (and is optimistic in the sense that it ignores possible growth in $p$). Then, consider the $\dCRThat$ based on $\lawhat_n(\prx|\prz) = N(\prz^T \widehat \beta_n, 1)$ and  $\widehat \mu_{n,y}(\prz) \equiv 0$. In this case, the normality of $\lawhat_n(\prx|\prz)$ leads to normality of the resampling distribution holding not just asymptotically~\eqref{eq:conditional-convergence} but in finite samples as well. Therefore, the $\dCRThat$ is equal to the $\MXtwohat$ $F$-test:
	\begin{equation}
		\phi^{\dCRThat}_{n}(\srx, \sry, \srz) = \indicator\left(\frac{1}{\sqrt{\frac{1}{n}\sum_{i = 1}^n \sry_i^2}}\frac{1}{\sqrt n}\sum_{i = 1}^n (\srx_i - \srz_i^T \widehat \beta_n)\sry_i > z_{1-\alpha}\right).
	\end{equation}
	On the other hand, it is easy to derive that
	\begin{equation}
		\frac{1}{\sqrt{\frac{1}{n}\sum_{i = 1}^n \sry_i^2}}\frac{1}{\sqrt n}\sum_{i = 1}^n (\srx_i - \srz_i^T \widehat \beta_n)\sry_i \convd N\left(\frac{c\|\beta\|^2}{\sqrt{\|\beta\|^2+1}}, 1\right).
	\end{equation}
	Therefore, the limiting Type-I error of the $\dCRThat$ in this case is
	\begin{equation}
		\lim_{n \rightarrow \infty}\E_{\law_n}[\phi^{\dCRThat}_{n}(\srx, \sry, \srz)] = 1-\Phi\left(z_{1-\alpha} -\frac{c\|\beta\|^2}{\sqrt{\|\beta\|^2+1}}\right),
	\end{equation}
	which can be made arbitrarily close to one as $c \rightarrow \infty$. This issue is caused by a combination of the $O(1/\sqrt{n})$ shrinkage bias in the estimator for $\mu_{n,x}$ and the failure to estimate $\mu_{n,y}$. This leaves an $O(1/\sqrt{n})$ correlation between $\prx-\widehat \mu_{n,x}(\prz)$ and $\pry$ induced by $\prz$, which shifts the mean of the null distribution of the $\dCRThat$ test statistic away from zero by a nontrivial amount.
	
	Numerical simulations (although with lasso instead of ridge regression) confirm this phenomenon. We constructed a numerical simulation based on the null model~\eqref{eq:simple-null-model} with $n = 1600$,  $p = 400$, and $\beta$ having only $s = 5$ nonzero entries (see Section~\ref{sec:sim-design} below for more on our data-generating model). In this setting, we applied the $\dCRThat$ using the cross-validated lasso and intercept-only models to estimate $\mu_{n,x}$ and $\mu_{n,y}$, respectively. As we increased the magnitude of the coefficient vector $\beta$, this test exhibited significant loss of Type-I error control (Figure~\ref{fig:negative_Result}). By contrast, using the lasso instead of the intercept-only model to estimate $\mu_{n,y}$ reduced the Type-I error to nearly the nominal level.
	\begin{figure}[!ht]
		\centering
		\includegraphics[scale = 0.85]{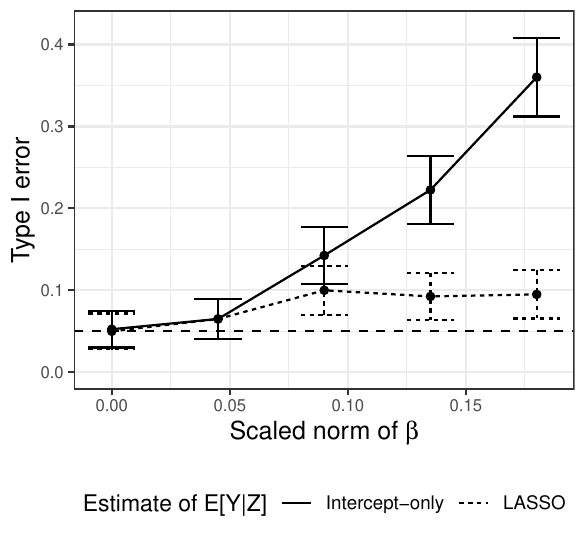}
		\caption{The Type-I error of two instances of the $\dCRThat$ under the data-generating model~\eqref{eq:simple-null-model}, depending on which method is used to estimate $\mu_{n,y}$, when the lasso is used to estimate $\mu_{n,x}$. Improved estimation of $\mu_{n,y}$ leads to markedly reduced Type-I error.}
		\label{fig:negative_Result}
	\end{figure}
	
	So even when $\law_n(\prx|\prz)$ is estimated at a parametric rate (albeit with regularization),  the $\dCRThat$ can have inflated Type-I error rate for certain test statistics. A similar observation was made by \citet{Li2022} (see the discussion after Theorem 3). Similar phenomena have been noted in the contexts of causal inference \citep{Dukes2020a} and doubly robust estimation \citep{Chernozhukov2018, Chernozhukov2022}; in the latter literature this issue is called ``regularization bias.'' We note that poor estimation of $\E[\pry|\prz]$, in conjunction with the plug-in resampling scheme of the $\dCRThat$ can also lead to conservative inference rather than liberal inference. This happens in cases when $\widehat \beta_n$ is an efficient estimator of $\beta$, e.g. that derived from ordinary least squares. In the causal inference context, this conservatism is a consequence of the fact that using estimated propensity scores can lead to more efficient estimates than using known propensity scores \citep{Robins1992, Henmi2004}. If the propensity score is estimated but the standard error is constructed as though it were known, then conservative inference would result. 
		
	As already alluded to, the Type-I error inflation in the above example stems from the fact that
	\begin{equation*}
	\E_{\law_n}[(\widehat \mu_{n,x}(\prz) - \mu_{n,x}(\prz))(\widehat \mu_{n,y}(\prz) - \mu_{n,y}(\prz))] = O(1/\sqrt{n}), 
	\end{equation*}
	a rate insufficient for Type-I error control. If we had at least consistency of $\widehat \mu_{n,y}(\prz)$, then this rate would improve to $o(1/\sqrt{n})$ and Type-I error control would be restored. This intuition is supported by the simulation results in Figure~\ref{fig:negative_Result}, where estimating $\E[\pry|\prz]$ via lasso brought the Type-I error down to nearly the nominal level. This discussion suggests that, if $\law_n(\prx|\prz)$ is learned in sample (or on an external sample of similar size), then assumptions must be placed not only on $\law_n(\prx|\prz)$ but also on $\law_n(\pry|\prz)$ for Type-I error control. This motivates us to investigate the double robustness of the $\dCRThat$ and compare it to the GCM test.
	
	\section{$\dCRThat$ is doubly robust and equivalent to GCM test} \label{sec:dr-and-equivalence}
		
	Of course, in practice $\widehat \mu_{n,y}$ is not fit as naively as in the counterexample from Section~\ref{sec:neg-results}. The conditional mean $\E[\pry|\prz]$ is usually approximated via a machine learning algorithm, as improved approximation of this quantity improves the power of the dCRT \citep{Katsevich2020a}. In the context where $\law_n(\prx|\prz)$ must be approximated, we claim that more accurate estimation of $\E[\pry|\prz]$ can improve not just the power but also the Type-I error control of the $\dCRThat$. We formalize this by showing that the $\dCRThat$ is \textit{doubly robust} (recall Section~\ref{sec:notation-definitions-preliminaries}). This property is a consequence of the fact that, under the null, the $\dCRThat$ is asymptotically equivalent to the GCM test, which itself is doubly robust. This equivalence also implies that the $\dCRThat$ and GCM test have the same asymptotic power against contiguous alternatives.
	
	\subsection{Equivalence between GCM test and $\dCRThat$} \label{sec:equivalence}
	
	When comparing the GCM test~\eqref{eq:gcm-test} to the $\MXtwohat$ $F$-test~\eqref{eq:mx-2-f-test}, which is asymptotically equivalent to the $\dCRThat$ (Corollary~\ref{cor:asymptotic-equivalence}), the only difference is the normalization term. Under the null hypothesis, this difference vanishes asymptotically as long as the estimated variance $\V_{\lawhat_n}[\prx|\prz]$ is consistent in the following sense:
	\begin{equation}
		\frac{1}{n} \sum_{i=1}^n (\V_{\lawhat_n}[\srx_i\mid \srz_i]-\V_{\law_n}[\srx_i\mid \srz_i])\V_{\law_n}[Y_i \mid Z_i] \convp 0.
		\label{eq:variance-consistency}
	\end{equation}
	In preparation to state our equivalence result, we augment the assumption~\eqref{eq:sp1} as follows:
	\begin{equation}
		E_{n, x} E_{n, y}  = o_{\law_n}(n^{-1/2}),\  E'_{n, x} = o_{\law_n}(1),\  E'_{n, y} = o_{\law_n}(1),\ \widehat E'_{n,y} = o_{\law_n}(1),
		\label{eq:sp1prime} \tag{SP1'}
	\end{equation}
	where
	\begin{equation}
		\widehat E'_{n, y} \equiv \left(\frac{1}{n}\sum_{i = 1}^n (\widehat \mu_{n,y}(\srz_i) -  \mu_{n,y}(\srz_i))^2\textnormal{Var}_{\lawhat_n}[\srx_i|\srz_i]\right)^{1/2}.
	\end{equation}
	
	\begin{theorem} \label{thm:equivalence}
		Suppose $\law_n \in \nulllaws_n$ is a sequence of laws satisfying the assumptions~\eqref{eq:sp1prime} and~\eqref{eq:sp2}, the nondegeneracy condition~\eqref{eq:variance-bounds}, the variance consistency property~\eqref{eq:variance-consistency} and the Lyapunov condition 
		\begin{equation}
			\frac{1}{n^{1+\delta/2}} \sum_{i=1}^n \E_{\law_n}\left[|\sry_i-\mu_{n,y}(\srz_i)|^{2+\delta}\mid \srz_i\right]\E_{\lawhat_n}[|\srxk_i-\widehat\mu_{n,x}(\srz_i)|^{2+\delta}\mid \srx,\srz] \convp 0.
			\label{eq:lyapunov-condition-2} \tag{Lyap-2}
		\end{equation}
		Then, the $\dCRThat$ and GCM variance estimates are asymptotically equivalent:
		\begin{equation}
			\frac{(\widehat S_n^{\dCRThat})^2}{(\widehat S_{n}^{\GCM})^2} \convp 1,
			\label{eq:asymptotic-variance-equivalence}
		\end{equation}
		as are the $\dCRThat$ and GCM tests themselves:
		\begin{equation}
			\lim_{n \rightarrow \infty} \P_{\law_n}[\phi^{\dCRThat}_{n}(\srx, \sry, \srz) = \phi^{\GCM}_{n}(\srx, \sry, \srz)] = 1.
			\label{eq:asymptotic-test-equivalence}
		\end{equation}
	\end{theorem}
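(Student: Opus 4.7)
The plan is to establish the variance-ratio convergence~\eqref{eq:asymptotic-variance-equivalence} first, then deduce the test equivalence~\eqref{eq:asymptotic-test-equivalence} by combining it with Corollary~\ref{cor:asymptotic-equivalence}. The natural common limit for both $(\widehat S_n^{\GCM})^2$ and $(\widehat S_n^{\dCRThat})^2$ is
\begin{equation*}
\sigma_n^2 \equiv \frac{1}{n}\sum_{i=1}^n \V_{\law_n}[\srx_i \mid \srz_i]\V_{\law_n}[\sry_i \mid \srz_i],
\end{equation*}
which under $\law_n \in \nulllaws_n$ equals the conditional (on $\srz$) variance of the summands in $T_n^{\dCRThat}$ and is bounded below by~\eqref{eq:sp2} together with conditional independence. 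It therefore suffices to show $(\widehat S_n^{\GCM})^2 - \sigma_n^2 \convp 0$ and $(\widehat S_n^{\dCRThat})^2 - \sigma_n^2 \convp 0$, after which~\eqref{eq:asymptotic-variance-equivalence} is immediate.

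For the GCM variance, the sample-mean correction equals $(T_n^{\dCRThat})^2/n = o_p(1)$ by the asymptotic normality of the GCM statistic under the null \citep{Shah2018}, so it remains to analyze $\frac{1}{n}\sum_i (\srx_i - \widehat\mu_{n,x}(\srz_i))^2(\sry_i - \widehat\mu_{n,y}(\srz_i))^2$. I would expand each squared residual around the true regression function; the leading term $\frac{1}{n}\sum_i (\srx_i - \mu_{n,x}(\srz_i))^2(\sry_i - \mu_{n,y}(\srz_i))^2$ converges to $\sigma_n^2$ by a conditional weak law of large numbers on $\srz$ (valid via the $(2+\delta)$-th moment bound in~\eqref{eq:sp2}), and each of the remaining cross terms is $o_p(1)$ by Cauchy--Schwarz against one of $E_{n,x}, E_{n,y}, E'_{n,x}, E'_{n,y}$ from~\eqref{eq:sp1prime}. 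For the $\dCRThat$ variance, I would decompose $\V_{\lawhat_n}[\srx_i \mid \srz_i] = \V_{\law_n}[\srx_i \mid \srz_i] + \Delta_i$ and expand $(\sry_i - \widehat\mu_{n,y}(\srz_i))^2$ around $(\sry_i - \mu_{n,y}(\srz_i))^2$. The leading term $\frac{1}{n}\sum_i \V_{\law_n}[\srx_i \mid \srz_i](\sry_i - \mu_{n,y}(\srz_i))^2$ converges to $\sigma_n^2$ by another conditional LLN; the $\Delta_i$-piece reduces to $\frac{1}{n}\sum_i \Delta_i \V_{\law_n}[\sry_i \mid \srz_i]$ via a conditional LLN and is then $o_p(1)$ by~\eqref{eq:variance-consistency}; the squared-error cross term $\frac{1}{n}\sum_i \V_{\lawhat_n}[\srx_i \mid \srz_i](\widehat\mu_{n,y}(\srz_i) - \mu_{n,y}(\srz_i))^2 = (\widehat E'_{n,y})^2 = o_p(1)$ by~\eqref{eq:sp1prime}; and the mean-zero cross terms are handled by Cauchy--Schwarz against $\widehat E'_{n,y}$ together with a conditional LLN.

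Given~\eqref{eq:asymptotic-variance-equivalence}, the test equivalence follows in two steps. First I verify the hypotheses of Corollary~\ref{cor:asymptotic-equivalence}: \eqref{eq:variance-bounds} is directly assumed, \eqref{eq:var-bounded-below} follows from the variance convergence together with the lower bound on $\sigma_n^2$, \eqref{eq:lyapunov-condition} follows from~\eqref{eq:lyapunov-condition-2} via $|\sry_i - \widehat\mu_{n,y}(\srz_i)|^{2+\delta} \lesssim |\sry_i - \mu_{n,y}(\srz_i)|^{2+\delta} + |\widehat\mu_{n,y}(\srz_i) - \mu_{n,y}(\srz_i)|^{2+\delta}$ combined with concentration and~\eqref{eq:sp1prime}, and the non-accumulation condition~\eqref{eq:non-accumulation} follows from the asymptotic normality of $T_n^{\dCRThat}/\widehat S_n^{\GCM}$ together with~\eqref{eq:asymptotic-variance-equivalence}. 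Thus $\phi_n^{\dCRThat} = \phi_n^{\MXtwohat}$ with probability tending to one. Writing
\begin{equation*}
\frac{T_n^{\dCRThat}}{\widehat S_n^{\GCM}} - \frac{T_n^{\dCRThat}}{\widehat S_n^{\dCRThat}} = \frac{T_n^{\dCRThat}}{\widehat S_n^{\dCRThat}}\left(\frac{\widehat S_n^{\dCRThat}}{\widehat S_n^{\GCM}} - 1\right) = O_p(1)\cdot o_p(1) = o_p(1)
\end{equation*}
and invoking continuity of the $N(0,1)$ law at $z_{1-\alpha}$, we conclude $\phi_n^{\MXtwohat} = \phi_n^{\GCM}$ with probability tending to one, proving~\eqref{eq:asymptotic-test-equivalence}. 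The main obstacle will be the bookkeeping in the $(\widehat S_n^{\dCRThat})^2$ decomposition: the cross term $\frac{1}{n}\sum_i \Delta_i(\sry_i - \mu_{n,y}(\srz_i))^2$ requires bridging the squared residual to its conditional expectation via a conditional LLN before~\eqref{eq:variance-consistency} can be applied, and the new moment $\widehat E'_{n,y}$ in~\eqref{eq:sp1prime} is introduced precisely to cover the $\V_{\lawhat_n}[\srx_i\mid\srz_i](\widehat\mu_{n,y} - \mu_{n,y})^2$ cross term that is absent from the GCM decomposition.
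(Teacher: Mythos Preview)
Your variance-ratio argument is essentially the paper's: both $(\widehat S_n^{\dCRThat})^2$ and $(\widehat S_n^{\GCM})^2$ are shown to converge to a common nondegenerate limit (the paper uses the deterministic $s_n^2 \equiv \E_{\law_n}[\V_{\law_n}[\prx\mid\prz]\V_{\law_n}[\pry\mid\prz]]$ rather than your empirical $\sigma_n^2$, but this is immaterial since $\sigma_n^2 - s_n^2 \convp 0$), and your decompositions track the paper's Lemma~\ref{lem:variance_convg} closely. The final display reducing $\phi_n^{\MXtwohat}$ to $\phi_n^{\GCM}$ via~\eqref{eq:asymptotic-variance-equivalence} also matches the paper's use of Lemma~\ref{lem:equivalence-lemma}.

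There is, however, a genuine gap in your reduction to Corollary~\ref{cor:asymptotic-equivalence}: you claim that~\eqref{eq:lyapunov-condition} follows from~\eqref{eq:lyapunov-condition-2} via the bound $|\sry_i - \widehat\mu_{n,y}(\srz_i)|^{2+\delta} \lesssim |\sry_i - \mu_{n,y}(\srz_i)|^{2+\delta} + |\widehat\mu_{n,y}(\srz_i) - \mu_{n,y}(\srz_i)|^{2+\delta}$ together with~\eqref{eq:sp1prime}. The first piece is fine (take conditional expectation given $(\srx,\srz)$ and invoke~\eqref{eq:lyapunov-condition-2}), but the second piece
\[
\frac{1}{n^{1+\delta/2}}\sum_{i=1}^n |\widehat\mu_{n,y}(\srz_i) - \mu_{n,y}(\srz_i)|^{2+\delta}\,\E_{\lawhat_n}\!\bigl[|\srxk_i - \widehat\mu_{n,x}(\srz_i)|^{2+\delta}\mid \srx,\srz\bigr]
\]
is not controlled by any of the stated assumptions: \eqref{eq:sp1prime} only bounds weighted $L^2$-norms of the estimation errors (through $E_{n,y}$ and $\widehat E'_{n,y}$), not weighted $L^{2+\delta}$-norms, and nothing prevents the $(2{+}\delta)$-moment weights $\E_{\lawhat_n}[|\srxk_i - \widehat\mu_{n,x}(\srz_i)|^{2+\delta}\mid\srx,\srz]$ from being large precisely where $|\widehat\mu_{n,y}(\srz_i) - \mu_{n,y}(\srz_i)|$ is large.

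The paper avoids this entirely: rather than verify~\eqref{eq:lyapunov-condition}, it proves a strengthened version of Theorem~\ref{thm:normal-limit} (and hence of Corollary~\ref{cor:asymptotic-equivalence}) that holds directly under the Theorem~\ref{thm:equivalence} assumptions. The key step is to decompose the resampled statistic as $T_n^{\dCRThat}(\srxk,\srx,\sry,\srz) = I_n - J_n$, where $I_n$ uses the true $\mu_{n,y}$ and $J_n = n^{-1/2}\sum_i(\srxk_i - \widehat\mu_{n,x}(\srz_i))(\widehat\mu_{n,y}(\srz_i) - \mu_{n,y}(\srz_i))$. Then $J_n \convp 0$ requires only a conditional second-moment bound (namely $\widehat E'_{n,y} = o_p(1)$), while the conditional CLT for $I_n$ uses~\eqref{eq:lyapunov-condition-2} directly after conditioning on $(\srx,\srz)$. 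This isolates the $\widehat\mu_{n,y}$-error at the second-moment level and never demands a $(2+\delta)$-moment on it.
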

	
	The variance consistency property~\eqref{eq:variance-consistency} is relatively easy to achieve, given the other assumptions of Theorem~\ref{thm:equivalence}. The following proposition states two sufficient conditions for this property.
	
	\begin{proposition} \label{prop:sufficient-for-variance-consistency} 
		If the assumptions of Theorem~\ref{thm:equivalence} other than variance consistency~\eqref{eq:variance-consistency} hold, then the latter property holds in the following two cases:
		\begin{enumerate}
			\item $\V_{\lawhat_n}[\srx_i|\srz_i] \equiv (\srx_i - \widehat \mu_{n,x}(\srz_i))^2$;
			\item $\V_{\lawhat_n}[\prx|\prz] \equiv f(\widehat \mu_{n,x}(\prz))$, if 
			\begin{itemize}
				\item $\V_{\law_n}[\prx|\prz] = f(\mu_{n,x}(\prz))$ for $f$ Lipschitz on domain $\cup_{n=1}^{\infty}\mathrm{Conv}(\mathrm{supp}(\law_{n}(\prx)))$ and $\mathrm{supp}(\widehat \mu_{n,x}(\prz))\subseteq\mathrm{Conv}(\mathrm{supp}(\law_{n}(\prx)))$ almost surely for every $n$;
				\item $\sup_n \E_{\law_n}[|\pry-\mu_{n,y}(\prz)|^{2+\delta}] < \infty$ for some $\delta > 0$.
			\end{itemize}
		\end{enumerate} 
	\end{proposition}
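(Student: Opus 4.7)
The plan is to reduce each case to the assumption $E'_{n,x} = o_{\law_n}(1)$ from~\eqref{eq:sp1prime}, combined with moment bounds derived from~\eqref{eq:sp2} under the null.

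For Case 1, I would expand $(\srx_i - \widehat\mu_{n,x}(\srz_i))^2$ around the true conditional mean to decompose $(\V_{\lawhat_n}[\srx_i|\srz_i] - \V_{\law_n}[\srx_i|\srz_i])\V_{\law_n}[\sry_i|\srz_i]$ as $A_i + B_i + C_i$ with
\begin{align*}
A_i &\equiv \bigl[(\srx_i - \mu_{n,x}(\srz_i))^2 - \V_{\law_n}[\srx_i|\srz_i]\bigr]\V_{\law_n}[\sry_i|\srz_i], \\
B_i &\equiv 2(\srx_i - \mu_{n,x}(\srz_i))(\mu_{n,x}(\srz_i) - \widehat\mu_{n,x}(\srz_i))\V_{\law_n}[\sry_i|\srz_i], \\
C_i &\equiv (\widehat\mu_{n,x}(\srz_i) - \mu_{n,x}(\srz_i))^2 \V_{\law_n}[\sry_i|\srz_i].
\end{align*}
The $C$-sum equals $(E'_{n,x})^2 = o_p(1)$ outright. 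For the $B$-sum, Cauchy--Schwarz bounds its absolute value by $2E'_{n,x} \cdot \bigl(\frac{1}{n}\sum_i (\srx_i - \mu_{n,x}(\srz_i))^2 \V_{\law_n}[\sry_i|\srz_i]\bigr)^{1/2}$; the second factor is $O_p(1)$ by Markov since, under conditional independence, its expectation equals $\E_{\law_n}[\V_{\law_n}[\prx|\prz]\V_{\law_n}[\pry|\prz]] = \E_{\law_n}[(\prx - \mu_{n,x}(\prz))^2(\pry - \mu_{n,y}(\prz))^2]$, which is uniformly bounded via~\eqref{eq:sp2} and one application of Jensen.

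The delicate piece is the $A$-sum, whose summands have mean zero conditional on $\srz_i$ but need not have finite variance. I would verify that $\E_{\law_n}|A_i|^{1+\delta/2}$ is uniformly bounded in $n$ by combining the elementary bound $|a-b|^{1+\delta/2} \leq 2^{\delta/2}(|a|^{1+\delta/2} + |b|^{1+\delta/2})$, Jensen to pass from conditional second moments to conditional $(2+\delta)$-moments of $\prx - \mu_{n,x}(\prz)$ and $\pry - \mu_{n,y}(\prz)$, and conditional independence of $\prx$ and $\pry$ given $\prz$. This yields a bound of order $\E_{\law_n}|(\prx - \mu_{n,x}(\prz))(\pry - \mu_{n,y}(\prz))|^{2+\delta} \leq c_2$. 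Since the $A_i$ are i.i.d.\ with mean zero, the von Bahr--Esseen inequality for $p = 1 + \delta/2 \in (1, 2]$ then gives $\E\bigl|\frac{1}{n}\sum_i A_i\bigr|^p \leq C_p \cdot n^{1-p} \cdot \sup_n \E_{\law_n}|A_i|^p \to 0$, hence $\frac{1}{n}\sum_i A_i \convp 0$.

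For Case 2, Lipschitz continuity of $f$ on a domain containing both $\mu_{n,x}(\srz_i)$ and $\widehat\mu_{n,x}(\srz_i)$ almost surely gives $|f(\widehat\mu_{n,x}(\srz_i)) - f(\mu_{n,x}(\srz_i))| \leq L|\widehat\mu_{n,x}(\srz_i) - \mu_{n,x}(\srz_i)|$. Cauchy--Schwarz then bounds the target sum in absolute value by $L \cdot E'_{n,x} \cdot \bigl(\frac{1}{n}\sum_i \V_{\law_n}[\sry_i|\srz_i]\bigr)^{1/2}$; the first factor is $o_p(1)$ by~\eqref{eq:sp1prime}, and the second is $O_p(1)$ because $\E_{\law_n}[\V_{\law_n}[\pry|\prz]] = \E_{\law_n}[(\pry - \mu_{n,y}(\prz))^2]$ is uniformly bounded via the proposition's added assumption $\sup_n \E_{\law_n}|\pry - \mu_{n,y}(\prz)|^{2+\delta} < \infty$ and Jensen. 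The main obstacle is the $A$-sum in Case 1: a direct $L^2$-bound is unavailable, so invoking a Marcinkiewicz--Zygmund-type inequality (here, von Bahr--Esseen) together with a careful extraction of an $L^{1+\delta/2}$-bound from~\eqref{eq:sp2} via conditional independence and Jensen is essential to avoid strengthening the assumed moment conditions.
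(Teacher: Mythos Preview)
Your proof is correct and follows essentially the same route as the paper. For Case~1 the paper packages the argument by citing two convergences from its Lemma~\ref{lem:variance_convg} (equations~\eqref{eq:variance_convg_1} and~\eqref{eq:variance_convg_4}), but unwinding that lemma yields exactly your $A+B+C$ decomposition, with $B$ and $C$ handled identically; for the centered $A$-sum the paper invokes its own triangular-array WLLN (Corollary~\ref{cor:wlln}) under the same $L^{1+\delta/2}$ moment bound you derive, whereas you use von~Bahr--Esseen---a cosmetic difference. Case~2 is argued the same way in both (Lipschitz then Cauchy--Schwarz, with the $\V_{\law_n}[\sry_i|\srz_i]$ average controlled via the extra $2+\delta$ moment assumption).
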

	
	The first variance estimate given in the proposition can always be applied; the second applies to cases when the mean-variance relationship for $\law_n(\prx|\prz)$ is known and Lipschitz on the convex hull of the support of $\prx$, denoted $\mathrm{Conv}(\law_{n}(\prx))$. This is the case, for example, if $\prx$ is binary and we define $f(t) \equiv t(1-t)$.
	
	One consequence of Theorem~\ref{thm:equivalence} is that the $\dCRThat$ and GCM test are also asymptotically equivalent against local alternatives, so in particular have the same power.
	
	\begin{corollary} \label{cor:asymptotic-equivalence-alternative}
		If $\law_n'$ is a sequence of alternative distributions that is contiguous to a sequence $\law_n \in \nulllaws_n$ satisfying the assumptions of Theorem~\ref{thm:equivalence}, then the $\dCRThat$ and GCM tests are asymptotically equivalent against $\law'_n$:
		\begin{equation}
			\lim_{n \rightarrow \infty} \P_{\law_n'}[\phi^{\dCRThat}_{n}(\srx, \sry, \srz) = \phi^{\GCM}_{n}(\srx, \sry, \srz)] = 1
			\label{eq:equivalent-tests-alternative}
		\end{equation}
		and therefore have the same asymptotic power:
		\begin{equation}
			\lim_{n \rightarrow \infty} \E_{\law'_n}[\phi_n^{\dCRThat}(\srx, \sry, \srz)] - \E_{\law'_n}[\phi_n^{\GCM}(\srx, \sry, \srz)] = 0.
		\end{equation}
	\end{corollary}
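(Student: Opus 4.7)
The plan is to leverage the definition of contiguity to transfer the null-distribution statement in Theorem~\ref{thm:equivalence} directly to the alternative sequence $\law_n'$. Recall that $\law_n'$ contiguous to $\law_n$ means: for every sequence of measurable events $A_n$, if $\P_{\law_n}(A_n) \to 0$, then $\P_{\law_n'}(A_n) \to 0$. So the whole argument reduces to verifying that the event of disagreement $\{\phi^{\dCRThat}_{n}(\srx,\sry,\srz) \neq \phi^{\GCM}_{n}(\srx,\sry,\srz)\}$ is a measurable event whose $\law_n$-probability tends to zero.

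The key step is that~\eqref{eq:asymptotic-test-equivalence} from Theorem~\ref{thm:equivalence}, rephrased, gives
\begin{equation*}
\P_{\law_n}\bigl[\phi^{\dCRThat}_{n}(\srx,\sry,\srz) \neq \phi^{\GCM}_{n}(\srx,\sry,\srz)\bigr] \longrightarrow 0,
\end{equation*}
since both tests take values in $\{0,1\}$ and the set where they differ is a measurable function of $(\srx,\sry,\srz)$. By the defining property of contiguity applied to the sequence of events $A_n \equiv \{\phi^{\dCRThat}_{n} \neq \phi^{\GCM}_{n}\}$, the same probability under $\law_n'$ also vanishes, yielding~\eqref{eq:equivalent-tests-alternative}.

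For the power statement, I would observe that since $\phi^{\dCRThat}_n,\phi^{\GCM}_n \in \{0,1\}$ we have the pointwise bound $|\phi_n^{\dCRThat} - \phi_n^{\GCM}| = \indicator\{\phi^{\dCRThat}_n \neq \phi^{\GCM}_n\}$. Taking expectations under $\law_n'$ gives
\begin{equation*}
\bigl|\E_{\law'_n}[\phi_n^{\dCRThat}(\srx,\sry,\srz)] - \E_{\law'_n}[\phi_n^{\GCM}(\srx,\sry,\srz)]\bigr| \leq \P_{\law_n'}\bigl[\phi^{\dCRThat}_{n} \neq \phi^{\GCM}_{n}\bigr] \longrightarrow 0,
\end{equation*}
which is exactly the asymptotic power equivalence claimed. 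There is no real obstacle here beyond invoking the right definition of contiguity; the entire substance is carried by Theorem~\ref{thm:equivalence}, and contiguity simply provides the mechanism to port its conclusion across to $\law_n'$ without needing to revisit any of the moment or Lyapunov conditions under the alternative.
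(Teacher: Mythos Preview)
Your proposal is correct and follows essentially the same argument as the paper: define the disagreement event, use Theorem~\ref{thm:equivalence} to conclude its $\law_n$-probability vanishes, invoke contiguity to transfer this to $\law_n'$, and then bound the difference in expected test values by the probability of disagreement. Your use of the identity $|\phi_n^{\dCRThat} - \phi_n^{\GCM}| = \indicator\{\phi^{\dCRThat}_n \neq \phi^{\GCM}_n\}$ for $\{0,1\}$-valued tests is in fact slightly tidier than the paper's chain of inequalities, but the substance is identical.
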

	
	By constructing a null distribution via resampling, the CRT allows for arbitrarily complicated test statistics whose asymptotic distributions are not known. For the $\dCRThat$, however, the resampling-based null distribution simply recapitulates the asymptotic normal distribution used by the GCM test (Theorems~\ref{thm:normal-limit} and~\ref{thm:equivalence}). Therefore, at least in large samples, the extra computational burden of resampling is unnecessary as the equivalent GCM can be applied instead.
	
	\subsection{Double robustness of $\dCRThat$} \label{sec:double-robustness}
	
	Another consequence of Theorem~\ref{thm:equivalence} is that the $\dCRThat$ is doubly robust under the variance consistency condition~\eqref{eq:variance-consistency}, since it is equivalent under the null hypothesis to the doubly robust GCM test. 
	
	\begin{corollary} \label{cor:dcrt-double-robustness}
		Let $\regclass_n$ be a sequence of regularity conditions such that for any sequence $\law_n \in \regclass_n$, we have the nondegeneracy condition~\eqref{eq:variance-bounds}, the Lyapunov condition~\eqref{eq:lyapunov-condition-2}, the conditions~\eqref{eq:sp1prime} and~\eqref{eq:sp2}, and consistent variance estimates~\eqref{eq:variance-consistency}. Then, the $\dCRThat$ has asymptotic Type-I error control over $\nulllaws_n \cap \regclass_n$ in the sense of the definition~\eqref{eq:asymptotic-control}.
	\end{corollary}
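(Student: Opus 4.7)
The plan is to combine the asymptotic equivalence to the GCM test (Theorem~\ref{thm:equivalence}) with the known Type-I error control of the GCM test under (SP1)--(SP2) established by \citet{Shah2018}, and then promote the sequence-level guarantee to the uniform $\limsup$-sup form of~\eqref{eq:asymptotic-control}.

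First I would reduce the uniform statement to a statement about sequences by the standard near-sup argument: for each $n$, pick $\law_n^\star \in \nulllaws_n \cap \regclass_n$ with
\[
\E_{\law_n^\star}[\phi_n^{\dCRThat}(\srx,\sry,\srz)] \geq \sup_{\law_n \in \nulllaws_n \cap \regclass_n} \E_{\law_n}[\phi_n^{\dCRThat}(\srx,\sry,\srz)] - \tfrac{1}{n}.
\]
By the hypothesis on $\regclass_n$, the sequence $\law_n^\star$ automatically satisfies \eqref{eq:variance-bounds}, \eqref{eq:lyapunov-condition-2}, \eqref{eq:sp1prime}, \eqref{eq:sp2}, and the variance consistency condition~\eqref{eq:variance-consistency}; moreover $\law_n^\star \in \nulllaws_n$ by construction. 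It therefore suffices to show $\limsup_{n \to \infty} \E_{\law_n^\star}[\phi_n^{\dCRThat}] \leq \alpha$ for every such sequence.

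Next I would invoke the GCM Type-I error result of \citet{Shah2018} along the sequence $\law_n^\star$: since~\eqref{eq:sp1prime} trivially implies~\eqref{eq:sp1}, and~\eqref{eq:sp2} is assumed, their Theorem~6 yields $\limsup_{n \to \infty} \E_{\law_n^\star}[\phi_n^{\GCM}] \leq \alpha$. In parallel, Theorem~\ref{thm:equivalence} applied to the same sequence gives $\P_{\law_n^\star}[\phi_n^{\dCRThat} = \phi_n^{\GCM}] \to 1$. The two facts combine via
\[
\E_{\law_n^\star}[\phi_n^{\dCRThat}] \leq \E_{\law_n^\star}[\phi_n^{\GCM}] + \P_{\law_n^\star}[\phi_n^{\dCRThat} \neq \phi_n^{\GCM}] \leq \alpha + o(1),
\]
which delivers the claim.

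There is no real obstacle here: the substantive content has been absorbed into Theorem~\ref{thm:equivalence} and the GCM Type-I guarantee of \citet{Shah2018}. The only small bookkeeping points are (i) verifying that~\eqref{eq:sp1prime} is strong enough to feed into the GCM result (it is, since it is~\eqref{eq:sp1} plus an extra condition), and (ii) justifying the reduction from the uniform sup to an arbitrary sequence, which is what the near-sup choice of $\law_n^\star$ accomplishes.
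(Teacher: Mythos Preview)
Your proposal is correct and follows essentially the same approach as the paper: reduce the uniform $\limsup$--$\sup$ statement to a single near-sup sequence, then combine Theorem~\ref{thm:equivalence} (asymptotic equivalence of $\dCRThat$ and GCM along that sequence) with the Shah--Peters Type-I error control of the GCM test. The only cosmetic difference is that the paper chooses the near-sup sequence for $\P_{\law_n}[\phi_n^{\dCRThat}\neq\phi_n^{\GCM}]$ and separately invokes uniform GCM control, whereas you choose it directly for $\E_{\law_n}[\phi_n^{\dCRThat}]$ and invoke sequence-level GCM control; both routes are valid and equally short.
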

	
	Therefore, Type-I error control requires accuracy of only the first two moments of $\lawhat_n$, in parallel to Theorem 2 of \citet{Katsevich2020a}. The condition on the second moment of $\lawhat_n(\prx|\prz)$ is needed because the variance of the resampling distribution must not be smaller (asymptotically) than the true variance of the test statistic. This condition does not require much more than accurate estimation of the first moments (Proposition~\ref{prop:sufficient-for-variance-consistency}). It can be dropped altogether if we build normalization directly into the $\dCRThat$ test statistic. We explore this possibility in Appendix~\ref{sec:ndcrt}.
	
	Our conclusion that $\dCRThat$ is doubly robust initially appears at odds with the statement that ``the model-X CRT...does not pursue such double robustness through learning and adjusting for both $X|Z$ and $Y|Z$...'' \citep{Li2022}. This statement is in reference to the worst-case performance of the CRT across all possible test statistics \citep{Berrett2019}. We agree that this worst-case performance can be poor when learning $\law_n(\prx|\prz)$ in sample (Section~\ref{sec:neg-results}). However, the test statistics applied in conjunction with the CRT (such as the dCRT statistic) do usually involve learning and adjusting for $\law_n(\pry|\prz)$. In this sense, practical applications of the (d)CRT do learn and adjust for both $\law_n(\prx|\prz)$ and $\law_n(\pry|\prz)$; the former is learned when approximating the ``model for X'' and the latter when computing the test statistic. If the quality of these estimates is sufficiently good, then the $\dCRThat$ will control Type-I error (Corollary~\ref{cor:dcrt-double-robustness}).

	\section{GCM test is optimal against certain alternatives} \label{sec:optimality}
	
	We have shown that, in large samples, the $\dCRThat$ has the same power against local alternatives as the resampling-free GCM test. Of course, other instances of the much more general CRT paradigm have better power than the GCM test against certain alternatives. We show in this section, however, that this is not the case for generalized partially linear models (GPLMs), a broad class of alternatives. In fact, the GCM test is asymptotically most powerful against GPLM alternatives. We leverage classical semiparametric efficiency theory \citep{Choi1996, VDV1998, Kosorok2008} to prove this result. We state our optimality result in Section~\ref{sec:optimality-result}, give an example of its application in Section~\ref{sec:kernel-ridge-regression}, and then compare it to existing semiparametric optimality results in Section~\ref{sec:semiparametric-discussion}.
	
	\subsection{Optimality result} \label{sec:optimality-result}
	
	To facilitate the link with semiparametric theory, in this section of the paper we operate in a fixed-dimensional setting. Accordingly, we drop the subscript $n$ from $\nulllaws_n$ and $\regclass_n$. For each value of $n$, we have $(\prx, \pry, \prz) \in \R^{1 + 1 + p}$ for fixed $p$. We will seek power against semiparametric GPLM alternatives of the form
	\begin{equation}
		\law_{\theta}(\prx, \pry, \prz) \equiv \law_{\beta, \eta}(\prx, \pry, \prz) \equiv \law_{x,z}(\prx, \prz) \times f_{\bm \eta}(\pry|\prx, \prz),\quad \bm \eta = \prx \beta + g(\prz).
		\label{eq:alternatives-1}
	\end{equation}
	Here, $\law_{x,z}$ is a fixed law, $f_\eta$ is a one-parameter exponential family with natural parameter $\eta \in \R$ and log-partition function $\psi$, $\beta \in \R$ and 
	\begin{equation}
		g \in \H_g \subseteq L^2(\law_{x,z}(\bm Z)), 
	\end{equation}
	where $\H_g$ is a linear subspace of the $L^2$ space of functions on $\R^p$ with the measure $\law_{x,z}(\bm Z)$. The alternatives~\eqref{eq:alternatives-1} are those where $\pry|\prx,\prz$ follows an exponential family distribution with natural parameter linear in $\prx$ and potentially nonlinear in $\prz$. Note that GPLMs include linear and generalized linear models as special cases, and therefore cover a broad range of alternative distributions.
	
	We focus on power against local alternatives $\law_{\theta_n(h)}$ near $\theta_0 \equiv (0, g_0)$, defined by 
	\begin{equation}
		\theta_n(h) \equiv \theta_n(h_\beta, h_g) \equiv (h_\beta/\sqrt n, g_0 + h_g/\sqrt n), \quad \text{for} \quad h \equiv (h_\beta, h_g) \in (0, \infty) \times \H_g.
		\label{eq:local-alternatives}
	\end{equation}
	We leave the dependence of $\theta_n(h)$ on $g_0$ implicit. Next, we define asymptotic optimality against such local alternatives following \citet{Choi1996}:
	\begin{definition}
		For $h \in (0, \infty) \times \H_g$, we say a test $\phi^*_n$ is the locally asymptotically most powerful level $\alpha$ test of 
		\begin{equation}
			H_{0}: \law \in  \regclass\subseteq \nulllaws \quad \text{versus}  \quad H_{1n}: \law = \law_{\theta_n(h)}
			\label{eq:testing-problem}
		\end{equation}
		if $\phi^*_n$ has asymptotic Type-I error control over $ \regclass$ at level $\alpha$ and for any other test $\phi_n$ satisfying the same property we have
		\begin{equation}
			\limsup_{n \rightarrow \infty}\  \E_{\law_{\theta_n(h)}}[\phi_n(\srx, \sry, \srz)] \leq \liminf_{n \rightarrow \infty}\ \E_{\law_{\theta_n(h)}}[ \phi^{*}_n(\srx, \sry, \srz)].
			\label{eq:power-bound}
		\end{equation}
		If this is true for every $h \in (0, \infty) \times \H_g$, such a test is locally asymptotically uniformly most powerful at $g_0$, or LAUMP($g_0$). A test is LAUMP($\mathcal{S}$) against $\law_{\theta_n(h)}$ for $h \in (0, \infty) \times \H_g$ if it is LAUMP($g_0$) for each $g_0 \in\mathcal{S}\subseteq \H_g$.
		
	\end{definition}
	
	Finally, define 
	\begin{equation}
		s^2(\theta_0) \equiv \E_{\law_{\theta_0}}[\V_{\law_{\theta_0}}[\prx|\prz]\V_{\law_{\theta_0}}[\pry|\prz]].
	\end{equation}
	We are now ready to state our main optimality result.
	
	\begin{theorem} \label{thm:optimality} 
		Consider the conditional independence testing problem~\eqref{eq:testing-problem}, with a collection of null distributions $\regclass\subseteq\nulllaws$ satisfying some regularity conditions, a linear subspace $\H_g \subseteq L^2(\law_{x,z}(\bm Z))$ specifying possible values for the nonparametric component $g$ in the GPLM alternative model~\eqref{eq:alternatives-1}, and some subset $\mathcal S \subseteq \H_g$. If the following four assumptions hold:
		\begin{align}
			\text{assumptions \eqref{eq:sp1} and \eqref{eq:sp2} hold for all } \law \in \regclass, \label{eq:sp-assumptions}\\
			\ddot{\psi} = K > 0 \text{ and } \E_{\law_{x,z}}[\prx^2] < \infty \text{ OR } \mathrm{supp}(\prx, \prz) \text{ is compact and } \H_g \subseteq C(\R^p), \label{eq:moment-assumptions}\\
			\E_{\law_{x,z}}[\prx|\ \cdot \ ] \in \H_g, \label{eq:conditional-expectation} \\
			\forall\ g_0\in\mathcal{S}, h_g \in \H_g, \ \law_{\theta_n(0, h_g)} \in \regclass \text{ for large enough } n, \label{eq:interior-point}
		\end{align}
		then $\phi_n^{\GCM}$ is LAUMP($\mathcal{S}$) against $\law_{\theta_n(h)}$ for $h \in (0, \infty) \times \H_g$, with 
		\begin{equation}
			\lim_{n \rightarrow \infty}\E_{\law_{\theta_n(h)}}[\phi_n^{\GCM}(\srx, \sry, \srz)] = 1 - \Phi(z_{1-\alpha} - h_\beta \cdot s(\theta_0)).
			\label{eq:power-of-gcm-test}
		\end{equation}
	\end{theorem}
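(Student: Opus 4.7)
The plan is to treat this as a standard semiparametric one-sided testing problem and combine local asymptotic normality (LAN) with Le Cam's third lemma and the sharpened semiparametric optimality bound from Appendix~\ref{sec:optimality-proofs}. The four steps are: (i) derive LAN along the submodel $\theta_n(h)$; (ii) identify the efficient score and information for $\beta$ at $\theta_0$; (iii) invoke the semiparametric upper bound on attainable asymptotic power; and (iv) verify that $T_n^{\GCM}$ attains this bound under $\law_{\theta_n(h)}$ by a direct Le Cam calculation.

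\emph{LAN and efficient score.} Writing the log-density of $\law_\theta$ as $\log \law_{x,z}(x,z) + y\eta - \psi(\eta) + c(y)$ with $\eta = x\beta + g(z)$, the score in direction $h = (h_\beta, h_g)$ at $\theta_0 = (0,g_0)$ is $\dot\ell_h(X,Y,Z) = (h_\beta X + h_g(Z))(Y-\mu_y(Z))$. Standard Hellinger differentiability of exponential families, combined with the integrability supplied by~\eqref{eq:moment-assumptions}, yields
\[
\log\frac{d\law^n_{\theta_n(h)}}{d\law^n_{\theta_0}} = \frac{1}{\sqrt n}\sum_{i=1}^n \dot\ell_h(X_i,Y_i,Z_i) - \tfrac12 \E_{\law_{\theta_0}}[\dot\ell_h^2] + o_{\law_{\theta_0}}(1),
\]
so $\law^n_{\theta_n(h)}$ is contiguous to $\law^n_{\theta_0}$. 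The nuisance tangent space is $\dot{\mathcal P}_g = \overline{\{h_g(Z)(Y-\mu_y(Z)):h_g\in\H_g\}}$ in $L^2(\law_{\theta_0})$. By~\eqref{eq:conditional-expectation}, $\mu_x(\cdot) \equiv \E_{\law_{\theta_0}}[X\mid Z=\cdot]\in \H_g$, and since $\V_{\law_{\theta_0}}[Y\mid Z] = \ddot\psi(g_0(Z))$ is $Z$-measurable,
\[
\E_{\law_{\theta_0}}[(X-\mu_x(Z))h_g(Z)(Y-\mu_y(Z))^2] = \E_{\law_{\theta_0}}\!\bigl[h_g(Z)\ddot\psi(g_0(Z))\E_{\law_{\theta_0}}[X-\mu_x(Z)\mid Z]\bigr] = 0
\]
for every $h_g \in \H_g$. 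Hence the efficient score is $\tilde\ell_\beta^* = (X-\mu_x(Z))(Y-\mu_y(Z))$, with efficient information $\tilde I = \E_{\law_{\theta_0}}[\V_{\law_{\theta_0}}[X\mid Z]\,\V_{\law_{\theta_0}}[Y\mid Z]] = s^2(\theta_0)$.

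\emph{Upper bound and achievement by GCM.} Assumption~\eqref{eq:interior-point} guarantees that the nuisance submodel $\{\law_{\theta_n(0,h_g)}: h_g \in \H_g\}$ lies in $\regclass$ for $n$ large, so any level-$\alpha$ test with asymptotic Type-I error control over $\regclass$ has asymptotic power against $\law_{\theta_n(h)}$ bounded above by $1 - \Phi(z_{1-\alpha} - h_\beta \sqrt{\tilde I}) = 1 - \Phi(z_{1-\alpha} - h_\beta s(\theta_0))$ by the sharpened semiparametric testing theorem in Appendix~\ref{sec:optimality-proofs}. For the matching lower construction, \eqref{eq:sp-assumptions} and \citet[Theorem 8]{Shah2018} give $T_n^{\GCM} = T_n^{\mathrm{or}} + o_{\law_{\theta_0}}(1)$, where $T_n^{\mathrm{or}} \equiv \frac{1}{s(\theta_0)\sqrt n}\sum_{i=1}^n (X_i-\mu_x(Z_i))(Y_i-\mu_y(Z_i))$ is an i.i.d.\ mean-zero sum with unit variance. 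Contiguity transfers the $o_p$ remainder to $\law_{\theta_n(h)}^n$. The multivariate CLT gives joint asymptotic normality of $(T_n^{\mathrm{or}}, \log(d\law^n_{\theta_n(h)}/d\law^n_{\theta_0}))$ under $\law_{\theta_0}$, and Le Cam's third lemma yields $T_n^{\GCM} \mid \law^n_{\theta_n(h)} \convd N(\mu,1)$ with
\[
\mu = \E_{\law_{\theta_0}}\!\Bigl[\tfrac{1}{s(\theta_0)}(X-\mu_x(Z))(Y-\mu_y(Z))\,\dot\ell_h(X,Y,Z)\Bigr].
\]
The $h_g$-contribution vanishes since $\E_{\law_{\theta_0}}[X-\mu_x(Z)\mid Z]=0$; the $h_\beta$-contribution equals $\frac{h_\beta}{s(\theta_0)}\E_{\law_{\theta_0}}[(X-\mu_x(Z))^2\ddot\psi(g_0(Z))] = h_\beta s(\theta_0)$. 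Thus the limiting power of $\phi_n^{\GCM}$ equals $1 - \Phi(z_{1-\alpha} - h_\beta s(\theta_0))$, which matches the upper bound, establishing LAUMP($\mathcal S$).

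\emph{Main obstacle.} The delicate step is Step~(iii): rigorously applying the semiparametric upper bound at a one-sided null ($\beta = 0$ with $h_\beta > 0$) while only requiring Type-I error control over the restricted class $\regclass$ rather than over the full conditional-independence null $\nulllaws$. This is precisely what motivates the sharpened Choi--Hall--Schick-style theorem stated in Appendix~\ref{sec:optimality-proofs}; one must verify that the LAN submodel built from $\dot\ell_h$ is actually contained in $\regclass$ at the $1/\sqrt n$ scale, which is exactly the content of~\eqref{eq:interior-point}. By comparison, the LAN expansion along an infinite-dimensional $h_g$ and the behaviour of $\widehat\mu_{n,x},\widehat\mu_{n,y}$ under the alternative are routine once contiguity is in hand.
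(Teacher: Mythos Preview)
Your proposal is correct and follows essentially the same route as the paper: LAN and efficient score/information computations (packaged in the paper as Lemma~\ref{lem:semiparametric-results}), the semiparametric upper bound via the interior-point assumption~\eqref{eq:interior-point} (the paper's Theorem~\ref{thm:classic-semiparametric-optimality}), and the Shah--Peters decomposition plus contiguity to show the GCM test attains the bound. The only substantive differences are that you make the Le~Cam third-lemma calculation explicit where the paper hides it inside the achievement half of Theorem~\ref{thm:classic-semiparametric-optimality}, and you cite \citet[Theorem~8]{Shah2018} for the null decomposition $T_n^{\GCM}=T_n^{\mathrm{or}}+o_{\law_{\theta_0}}(1)$ when the correct reference is the proof of their Theorem~6 (Theorem~8 concerns non-local power under sample splitting).
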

	
	\noindent Let us discuss each of the four assumptions of Theorem~\ref{thm:optimality}:
	\begin{itemize}
		\item The assumption~\eqref{eq:sp-assumptions} is a set of regularity conditions on the null distributions $\regclass$. It is the same set of assumptions made by \citet{Shah2018} to ensure Type-I error control of the GCM test over $\regclass$, including the assumption that the conditional means $\mu_{n,x}$ and $\mu_{n,y}$ are fit accurately enough~\eqref{eq:sp1} and fairly mild moment assumptions~\eqref{eq:sp2}.
		\item The assumption~\eqref{eq:moment-assumptions} is a set of regularity conditions on the alternative distribution~\eqref{eq:alternatives-1}. These conditions are required for the semiparametric optimality theory to apply. These assumptions allow for GPLMs based on the normal distribution (assuming $\prx$ has second moment) or any other exponential family (assuming $(\prx, \prz)$ is compactly supported and the functions $g$ are continuous).
		\item The assumption~\eqref{eq:conditional-expectation} states that the conditional expectation $\prz \mapsto \E_{\law_{x,z}}[\prx|\prz]$ must belong to the subspace $\H_g$. It guarantees that the ``least favorable'' value of the nonparametric component $g$ is in the space $\H_g$, yielding the optimality of the GCM statistic.
		\item The assumption~\eqref{eq:interior-point} connects the semiparametric alternative hypothesis to the conditional independence null hypothesis. In some sense it requires $\law_{\theta_0} \equiv \law_{(0, g_0)}$ (derived from the semiparametric alternative distribution~\eqref{eq:alternatives-1}) to be an interior point of $\regclass$ (the conditional independence null) for each $g_0 \in \mathcal S$.
	\end{itemize}
	We give an example of when these assumptions hold in the next section.
	
	\subsection{Example: Kernel ridge regression} \label{sec:kernel-ridge-regression}
	We illustrate Theorem \ref{thm:optimality} with a kernel ridge regression example, borrowed from \citet[Section 4]{Shah2018}. Suppose the conditional expectations $\mu_x(\prz) \equiv \E_{\law}[\prx|\prz]$ and $\mu_y(\prz) \equiv \E_{\law}[\pry|\prz]$ satisfy $\mu_x, \mu_y \in \H_k$ for some reproducing kernel Hilbert space $(\H_k,\|\cdot\|_{\H_k})$ with reproducing kernel $k:\mathbb{R}\times\mathbb{R}\rightarrow\mathbb{R}$. In particular, we consider $\H_k \equiv W^{1,2}([0,1]) \subset L^2([0,1])$, i.e. the Sobolev space defined 
	\begin{align*}
		W^{1,2}([0,1]) \equiv \big\{f:[0,1]\rightarrow \mathbb{R} \mid f(0)=0,\ f \text{ is absolutely continuous with } \dot{f}\in L^2([0,1])\big\},
	\end{align*} 
	equipped with the inner product
	\begin{equation*}
		\langle f,g\rangle_{W^{1,2}([0,1])} \equiv \int_0^1\dot{f}(z)\dot{g}(z)\mathrm{d}z.
	\end{equation*}
	$W^{1,2}([0,1])$ is an RKHS with kernel $k(x,y)=\min\{x,y\}$ \citep[Example 12.16]{Wainwright2019}. Consider the kernel ridge estimators
	\begin{equation}
		\begin{split}
			\widehat \mu_x &\equiv \argmin{\mu_x \in W^{1,2}([0,1])}\bigg\{\frac{1}{n}\sum_{i=1}^n|X_i-\mu_x(Z_i)|^2+\lambda\|\mu_x\|_{W^{1,2}([0,1])}^2\bigg\};\\ 
			\widehat \mu_y &\equiv \argmin{\mu_y \in W^{1,2}([0,1])}\bigg\{\frac{1}{n}\sum_{i=1}^n|Y_i-\mu_y(Z_i)|^2+\lambda\|\mu_y\|_{W^{1,2}([0,1])}^2\bigg\},
			\label{eq:kernel-ridge-estimators}
		\end{split}
	\end{equation}
	with $\lambda$ tuned as described in \citet[Section 4]{Shah2018}. Using \citet[Theorem 11]{Shah2018}, the following result can be derived as a consequence of Theorem~\ref{thm:optimality}.
	
	\begin{corollary}\label{cor:RKHS_example}
		Fix $C>0$, and consider the following regularity class $\regclass \subseteq \nulllaws$:
		\begin{equation}
			\begin{split}
				\regclass \equiv \{&\law(\prx, \pry, \prz)=\law(\prz)\times \law(\prx|\prz)\times\law(\pry|\prx,\prz):\\
				&\law(\prz) = \textnormal{Unif}([0,1]),\ \law(\prx|\prz) = N(\mu_x(\prz), 1),\ \law(\pry|\prx, \prz) = N(\mu_{y}(\prz), 1), \\
				&\mu_x, \mu_y \in B_{W^{1,2}}(0, C)\},
			\end{split}
		\end{equation}
		where we define the $W^{1,2}([0,1])$ ball
		\begin{equation}
			B_{W^{1,2}}(0, C) \equiv \{f \in W^{1,2}([0,1]): \norm{f}_{W^{1,2}([0,1])} < C\}.
		\end{equation}
		Now, fix $\mu_{0x}, \mu_{0y} \in B_{W^{1,2}}(0, C)$ and for each $h=(h_\beta,h_g)\in(0,\infty)\times W^{1,2}([0,1])$ consider the set of local alternatives $\law_{\theta_n(h)}(\prx, \pry, \prz)$ given by 
		\begin{equation}
			\begin{split}
				&\law_{\theta_n(h)}(\prz) \equiv \textnormal{Unif}([0,1]); \\
				&\law_{\theta_n(h)}(\prx|\prz) \equiv N(\mu_{0x}(\prz), 1); \\
				&\law_{\theta_n(h)}(\pry|\prx, \prz) \equiv N(\prx h_\beta/\sqrt{n} + \mu_{0y}(\prz) + h_g(\prz)/\sqrt{n}, 1).
			\end{split}
		\end{equation}
		Then, the GCM test based on the kernel ridge estimators~\eqref{eq:kernel-ridge-estimators} is LAUMP($B_{W^{1,2}}(0, C)$) against alternatives $\law_{\theta_n(h)}$.
	\end{corollary}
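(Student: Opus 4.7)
The strategy is a direct application of Theorem~\ref{thm:optimality} with $\H_g = W^{1,2}([0,1])$ and $\mathcal{S} = B_{W^{1,2}}(0,C)$. It suffices to verify the four assumptions \eqref{eq:sp-assumptions}, \eqref{eq:moment-assumptions}, \eqref{eq:conditional-expectation}, and \eqref{eq:interior-point} in this concrete Sobolev-ball/Gaussian setup, after which the LAUMP conclusion (and the power formula~\eqref{eq:power-of-gcm-test}, which in this setting specializes via $s^2(\theta_0) = \V[\prx|\prz]\V[\pry|\prz] = 1$) deliver the claim.

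First I would establish \eqref{eq:sp-assumptions}. For the rate condition \eqref{eq:sp1}, I would invoke \citet[Theorem 11]{Shah2018}, which with $\lambda$ tuned as prescribed there yields the Sobolev minimax rate $E_{n,x}, E_{n,y} = O_P(n^{-1/3})$ for the kernel ridge estimators~\eqref{eq:kernel-ridge-estimators} on $W^{1,2}([0,1])$. The product is then $O_P(n^{-2/3}) = o_P(n^{-1/2})$, and the variance-weighted errors $E'_{n,x}, E'_{n,y}$ are also $o_P(1)$ because the conditional variances $\V_{\law}[\prx|\prz] = \V_{\law}[\pry|\prz] = 1$ are bounded uniformly over $\regclass$. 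The moment conditions \eqref{eq:sp2} are immediate: under any null law in $\regclass$, $\prx-\mu_x(\prz)$ and $\pry-\mu_y(\prz)$ are independent standard Gaussians conditional on $\prz$, so $\E[(\prx-\mu_x(\prz))^2(\pry-\mu_y(\prz))^2] = 1 > 0$ and all higher-order moments of their product are finite.

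Next I would handle the three remaining assumptions. For \eqref{eq:moment-assumptions}, the Gaussian exponential family gives $\psi(\eta) = \eta^2/2$, so $\ddot\psi \equiv 1 > 0$; moreover the boundary condition $f(0) = 0$ together with the Cauchy-Schwarz bound $|f(z)| \leq \sqrt{z}\,\|\dot f\|_{L^2} \leq \|f\|_{W^{1,2}}$ yields the continuous embedding $W^{1,2}([0,1]) \hookrightarrow C([0,1])$, so $\|\mu_{0x}\|_\infty \leq C$ and $\E[\prx^2] \leq C^2 + 1 < \infty$. Assumption \eqref{eq:conditional-expectation} is immediate because $\E_{\law_{x,z}}[\prx|\prz] = \mu_{0x} \in B_{W^{1,2}}(0,C) \subseteq W^{1,2}([0,1]) = \H_g$. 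For \eqref{eq:interior-point}, fix $g_0 = \mu_{0y} \in B_{W^{1,2}}(0,C)$ and any $h_g \in W^{1,2}([0,1])$; since $h_\beta = 0$ leaves $\law(\prx|\prz)$ unchanged, it remains only to show that the perturbed mean $\mu_{0y} + h_g/\sqrt n$ lies in $B_{W^{1,2}}(0,C)$ eventually, which follows from the triangle inequality $\|\mu_{0y} + h_g/\sqrt n\|_{W^{1,2}} \leq \|\mu_{0y}\|_{W^{1,2}} + \|h_g\|_{W^{1,2}}/\sqrt n < C$ for all sufficiently large $n$ because the ball is open and $\|\mu_{0y}\|_{W^{1,2}} < C$ strictly.

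The main obstacle is the careful invocation of \citet[Theorem 11]{Shah2018}: one must confirm that its convergence rate holds in the $o_P$ sense required by~\eqref{eq:sp1} uniformly over the Sobolev ball defining $\regclass$, and with the regularization parameter $\lambda$ in~\eqref{eq:kernel-ridge-estimators} chosen precisely as prescribed there. The rest of the verification is essentially bookkeeping, exploiting the Gaussianity of the laws in $\regclass$, the openness of the Sobolev ball $B_{W^{1,2}}(0,C)$, and the continuous embedding of $W^{1,2}([0,1])$ into $C([0,1])$.
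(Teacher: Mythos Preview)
Your proposal is correct and follows essentially the same approach as the paper: both verify the four assumptions of Theorem~\ref{thm:optimality} with $\H_g = W^{1,2}([0,1])$ and $\mathcal S = B_{W^{1,2}}(0,C)$, using \citet[Theorem 11]{Shah2018} for~\eqref{eq:sp-assumptions}, the Gaussian family for~\eqref{eq:moment-assumptions}, and the openness of the Sobolev ball for~\eqref{eq:interior-point}. The only cosmetic differences are that the paper verifies~\eqref{eq:sp1} by checking the three sufficient conditions in \citet[Theorem 11 and Remark 12]{Shah2018} (bounded conditional variances, bounded Sobolev norms, summable eigenvalues $\lambda_j = (2/((2j-1)\pi))^2$) rather than quoting the rate directly, and bounds $\E[\prx^2]$ via the $L^2$ embedding of $W^{1,2}$ rather than the sup-norm embedding you use.
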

	
	Hence, the GCM test based on kernel ridge regression does not just control Type-I error \citep[Theorem 11]{Shah2018}; it is also optimal against local alternatives.
	
	\subsection{Discussion of Theorem~\ref{thm:optimality}} \label{sec:semiparametric-discussion}
	
	Theorem~\ref{thm:optimality} states that the GCM test of \citet{Shah2018} is the optimal test of conditional independence against a broad class of semiparametric GPLM alternatives, including linear and generalized linear models. To our knowledge, it is the first result at the intersection of conditional independence testing and semiparametric optimality, although \citet{Shah2018} have already noted the connection between the GCM test and nonparametric estimation of the expected conditional covariance between $\prx$ and $\pry$ given $\prz$. Our result complements another line of work on minimax optimality for conditional independence testing \citep{Canonne2018, Neykov2021, Kim2021}. In the related model-X context, few optimality results are available. Two existing works show optimality statements based on likelihood ratio statistics; one in the context of the CRT \citep{Katsevich2020a} and the other in the context of model-X knockoffs \citep{Spector2022a}.
	
	Theorem~\ref{thm:optimality} closely parallels results on estimation in semiparametric regression \citep{Robinson1988, Bickel1993,Donald1994, Hardle2000, Robins2001, VanDeGeer2014, Ning2017, Jankova2018a, Chernozhukov2018}. It follows from \citet{Bickel1993, Robins2001} that the GCM statistic with the true conditional means $\mu_x$ and $\mu_y$ is the efficient score under the null hypothesis $\beta = 0$ in the context of GPLMs based on one-parameter exponential families with canonical link. Existing results on semiparametric optimality for hypothesis testing state that tests based on optimal estimators are themselves optimal \citep{Choi1996, VDV1998, Kosorok2008}.
	
	Despite the similarity between Theorem~\ref{thm:optimality} and existing semiparametric optimality results, we emphasize that this theorem is a statement about optimality for conditional independence testing rather than for semiparametric testing. The semiparametric model~\eqref{eq:alternatives-1} plays the role of the alternative distribution with respect to which power is evaluated, and need not hold under the null hypothesis. To bridge this gap, it suffices to find an open ball within the conditional independence null hypothesis containing the semiparametric null hypothesis~\eqref{eq:interior-point}. This allows us to reduce the conditional independence testing problem to a semiparametric testing problem, and therefore to leverage existing semiparametric optimality results (Appendix~\ref{sec:optimality-proofs}).
	
	Note that Theorem~\ref{thm:optimality} gives the power against local alternatives of the GCM test with $\mu_x$ and $\mu_y$ estimated in sample. This complements \citet[Theorem 8]{Shah2018}, where these authors compute the power of the GCM test against non-local alternatives by resorting to sample splitting, which is not required to show Type-I error control for the GCM test. This sample splitting is necessary under non-local alternatives to avoid Donsker conditions; using either sample splitting or Donsker conditions is also standard practice in the semiparametric literature. By contrast, we avoid sample splitting by exploiting the special structure of the conditional independence null and contiguity arguments to compute limiting power under local alternatives.
	
	While the Type-I error control results in Section~\ref{sec:dr-and-equivalence} are stated in the high-dimensional setting, Theorem~\ref{thm:optimality} is stated only for fixed-dimensional covariate vectors $\prz$. Indeed, semiparametric optimality theory is predominantly low-dimensional. A notable exception is the work of \citet{Jankova2018a}, which provides a semiparametric theory of estimation in high dimensions. Extending this theory to hypothesis testing is nontrivial, and beyond the scope of the current work. Nevertheless, proving optimality statements for conditional independence testing in high dimensions is an interesting direction for future work. We note in passing that high-dimensional results for lasso-based estimators often assume exact sparsity of the coefficient vector, which poses a problem for condition~\eqref{eq:interior-point} requiring the regularity class $\regclass$ to have interior points.
	
	Finally, we note that Theorem~\ref{thm:optimality} gives the optimality of the GCM statistic against alternative models for $\pry$ in which $\prx$ and $\prz$ do not interact. For alternatives where the conditional association between $\pry$ and $\prx$ is modified by $\prz$,  the GCM test will no longer be optimal. Variants of the CRT \citep{Zhong2021, Sesia2022}, model-X knockoffs \citep{Li2021b}, and the GCM test \citep{Lundborg2022} are designed to improve power in the presence of effect modification are available, although their optimality properties are not described. Optimal tests developed specifically for detecting interaction effects between $\bm X$ and $\bm Z$ (rather than main effects) may be constructed based on \citet{Vansteelandt2008}.
	
	\section{Finite-sample performance assessment} \label{sec:simulations}
		
	The results in the preceding sections are all asymptotic. In this section, we complement these results with a comprehensive simulation-based assessment of Type-I error and power in finite samples. Previous simulation-based assessments of the Type-I error of MX methods have come to differing conclusions: \citet{SetC17, Romano2019a, SetS19, Liu2022a} found broad robustness to misspecification of $\law_n(\prx|\prz)$ while \citet{Li2022} found such misspecifications to cause marked Type-I error inflation. We show that differences in the level of marginal association between $\prx$ and $\pry$ implied by the simulation design explain these discrepancies, and then use this insight to inform our own simulation design in Section~\ref{sec:sim-design}. Then, we present the results of our numerical simulations in Section~\ref{sec:sim-results}. Numerical simulation results and instructions to reproduce them are available at \url{https://github.com/Katsevich-Lab/symcrt-manuscript-v1}.	
	
	\subsection{Revisiting prior simulations of robustness}  \label{sec:sim-revisiting}
	
	The question of robustness of MX methods to the misspecification of $\law_n(\prx|\prz)$ has been investigated starting from the paper in which the model-X framework was originally proposed \citep{CetL16}. In this paper, the joint distribution $\law_n(\prx,\prz)$ was estimated in sample via the graphical lasso, which is similar to estimating the conditional distribution $\law_n(\prx|\prz)$ via the ordinary lasso. These authors found that
	\begin{quote}
		``Although the graphical Lasso is well suited for this problem since the covariates have a sparse precision matrix, its covariance estimate is still off by nearly 50\%, and yet surprisingly the resulting power and FDR are nearly indistinguishable from when the exact covariance is used...the nominal level of 10\% FDR is never violated, even for covariance estimates very far from the truth.''
	\end{quote}
	Similar conclusions have been drawn from numerical simulations in subsequent papers as well \citep{SetC17, Romano2019a, SetS19, Liu2022a}, the latter studying the dCRT specifically. On the other hand, the numerical simulations of \citet{Li2022} show that the dCRT can suffer significant Type-I error inflation when $\law_n(\prx|\prz)$ is inaccurately fit. These authors state that ``for model-X inference, the dependence of $\prx$ on $\prz$ is not adequately characterized and adjusted [for] due to the shrinkage bias of lasso.''
	
	To resolve this apparent contradiction, we consider a common data-generating model used in MX literature:
	\begin{equation}
		\law_n(\prx, \prz) = N(0, \Sigma), \quad \law_n(\pry|\prx,\prz) = N(\prx \theta + \prz^T \beta, \sigma^2_y).
		\label{eq:ar1-dgp}
	\end{equation}
	Often, $(\prx, \prz)$ are assumed to have a spatial structure (motivated by the GWAS application), with $\Sigma = \Sigma(\rho) \in \R^{(1+p) \times (1+p)}$ taken to be the AR(1) covariance matrix with autocorrelation parameter $\rho \in (-1,1)$. This covariance matrix roughly approximates linkage disequilibrium structure among genotypes, where correlations among variables are local with respect to the spatial structure. Conditional independence under this model~\eqref{eq:ar1-dgp} reduces to $H_0: \theta = 0$. Furthermore, the conditional distribution $\law_n(\prx|\prz)$ implied by the normal joint distribution is that of a linear model:
	\begin{equation}
		\text{Under } H_0, \quad  \law_n(\prx | \prz) = N(\prz^T \gamma, \sigma^2_x), \quad \law_n(\pry|\prz) = N(\prz^T \beta, \sigma^2_y).
	\end{equation}
	In the context of this model, the conditional independence testing problem is nontrivial to the extent that $\prz$ induces marginal association between $\prx$ and $\pry$ even in the absence of conditional association. In a causal inference context, this spurious marginal association would be called a confounding effect of $\prz$. This marginal association can be small or large, depending on the correlation structure of $\prz$ and the extent to which the supports of $\beta$ and $\gamma$ overlap. Properly adjusting for $\prz$ is important to the extent that $\prz$ induces marginal association between $\prx$ and $\pry$.
	
	We claim that the simulation studies in much of the original MX literature had relatively low levels of marginal association between $\prx$ and $\pry$, whereas the simulation studies in \citet{Li2022} were done in a regime with much more marginal association. To illustrate this point, we quantify the level of marginal association in a given problem setup as the Type-I error of the GCM test with intercept-only models for $\law_n(\prx|\prz)$ and $\law_n(\pry|\prz)$. This test is essentially a Pearson test of (marginal) independence between $\prx$ and $\pry$, and ignores the variables $\prz$ altogether. We compute this Type-I error for the data-generating models used to assess robustness by \citet{CetL16, Liu2022a, Li2022} (Appendix~\ref{sec:sim_liter}). The former two papers are framed in the variable selection context, where several explanatory variables $\bm W_j$ are considered, and the hypothesis $H_0: \pry \independent \bm W_j \mid \bm W_{\text{-}j}$ is tested for each $j$. Therefore, $\prx \equiv \bm W_j$ for each $j$. On the other hand, \citet{Li2022} considered a conditional independence testing framework, where $\prx$ was a single variable of interest.
	
	For the data-generating models used by \citet{CetL16, Liu2022a}, we evaluate the Type-I error of the marginal GCM test for each hypothesis $H_0: \pry \independent \bm W_j \mid \bm W_{\text{-}j}$, plotting these as a function of $j$ (Figure~\ref{fig:evaluation_typeI_err}, top row). We superimpose onto these plots a blue horizontal line indicating the Type-I error of the marginal GCM test for the data-generating model used by \citet{Li2022} (equal to 0.99, suggesting strong marginal association), and a red dashed horizontal line indicating the nominal level of this marginal test (equal to 0.05). The green ticks indicate the locations of the non-null variables. As expected for a setting where variable correlation is local, we see that Type-I error is inflated for null variables near the signal variables. The extent of this inflation depends on the autocorrelation parameter (set at 0.3 by \cite{CetL16} and 0.5 by \cite{Liu2022a}) and the locations of the signal variables. Most null variables, however, are not near signal variables, and therefore the marginal GCM test shows no inflation. This is reflected by the histograms of the Type-I error inflations (Figure~\ref{fig:evaluation_typeI_err}, bottom row). The median Type-I error of the marginal GCM test is near the nominal level of 0.05 in all three of the simulation setups from \citet{CetL16, Liu2022a}.
	
	\begin{figure}[!ht]
		\centering
		\includegraphics[width = \textwidth]{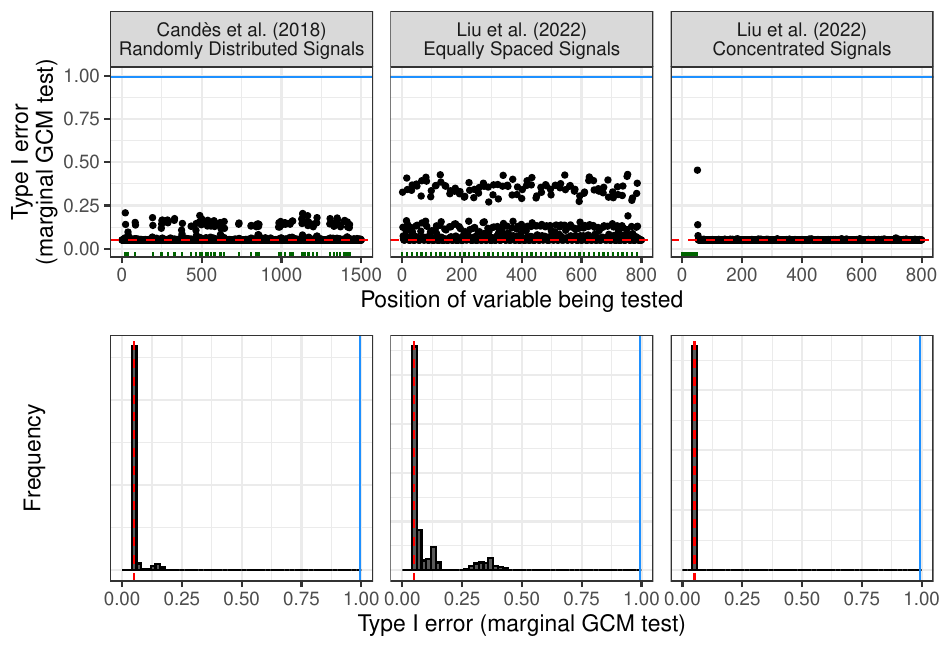}
		\caption{Comparing the marginal associations between $\prx$ and $\pry$ in the robustness simulations of \citet{CetL16, Liu2022a, Li2022} (Appendix~\ref{sec:sim_liter}). Top: Type-I error of the marginal GCM test as a function of the position of null variables with respect to the non-null variables (represented as green ticks). Bottom: Histograms of the Type-I error across null variables. The solid blue line indicates the Type-I error of the marginal GCM test for the robustness simulation of \citet{Li2022}, and the dashed red line the nominal Type-I error level of the marginal GCM test (0.05).}
		\label{fig:evaluation_typeI_err}
	\end{figure}

	\subsection{Simulation design} \label{sec:sim-design}
	
	\paragraph*{Data-generating model} As discussed in the previous section, appropriately setting the marginal correlation between $\prx$ and $\pry$ in a given data-generating model is crucial to properly evaluate the impact of inaccurate estimation of $\law_n(\prx|\prz)$ on the Type-I error control of a model-X method. Keeping this in mind, we propose the following data-generating model:
	\begin{equation}
		\law_n(\prz) = N(0, \Sigma(\rho)),\ \law_n(\prx | \prz) = N(\prz^T \beta, 1), \ \law_n(\pry|\prx,\prz) = N(\prx \theta + \prz^T \beta, 1).
		\label{eq:normal-dgm}
	\end{equation}
	We set the first $s$ coefficients of $\beta$ to be equal to $\nu$ and the rest to zero. Therefore, the entire data-generating process is parameterized by the six parameters $(n, p, s, \rho, \theta, \nu)$ (Table~\ref{tab:parameter-values}). For both null and alternative simulations, we vary each of the first four across five values each, setting the remaining three to the default value indicated in bold. The fifth parameter $\theta$ controls the signal strength and the sixth parameter $\nu$ controls the extent of marginal association between $\prx$ and $\pry$. For the null simulation, we set $\theta \equiv 0$, and for each setting of $(n, p, s, \rho)$, we choose five values of $\nu$ equally spaced between 0 (no marginal association) and $\nu_{\max}$ (computed so that the marginal GCM method has Type-I error 0.99). Note that $\nu_{\max}$ depends on the parameters $(n, p, s, \rho)$, so not exactly the same values of $\nu$ were used across settings of these four parameters. For the alternative simulation, we kept $\nu$ fixed at $\nu_{\max}/2$ while for each setting of $(n, p, s, \rho)$, we choose five values of $\theta$ equally spaced between 0 (no signal) and $\theta_{\max}$ (computed so that the GCM method with oracle settings of $\widehat \mu_{n,x}$ and $\widehat \mu_{n,y}$ has power 0.99). Finally, we complement the linear regression data-generating model~\eqref{eq:normal-dgm} with an analogous one based on logistic regression.
	
	\begin{table}
		\centering
		\begin{tabular}{ llll }
			$n$ & $p$ & $s$ & $\rho$ \\ 
			\hline
			100 & 100 & \textbf{5} & 0 \\
			\textbf{200} & 200 & 10 & 0.2 \\
			400 & \textbf{400} & 20 & \textbf{0.4} \\
			800 & 800 & 40 & 0.6 \\
			1600 & 1600 & 80 & 0.8
		\end{tabular}
		\hspace{0.5cm}
		\begin{tabular}{ ll }
			$\theta$ (null) & $\nu$ (null) \\ 
			\hline
			0 & 0  \\
			0 & $\nu_{\max}/4$ \\
			0 & $\nu_{\max}/2$ \\
			0 & $3\nu_{\max}/4$ \\
			0 & $\nu_{\max}$ 
		\end{tabular}
		\hspace{0.5cm}
		\begin{tabular}{ ll }
			$\theta$ (alt) & $\nu$ (alt) \\ 
			\hline
			0 & $\nu_{\max}/2$  \\
			$\theta_{\max}/4$ & $\nu_{\max}/2$ \\
			$\theta_{\max}/2$ & $\nu_{\max}/2$ \\
			$3\theta_{\max}/4$& $\nu_{\max}/2$ \\
			$\theta_{\max}$ & $\nu_{\max}/2$
		\end{tabular}
		\caption{The values of the sample size $n$, covariate dimension $p$, sparsity $s$, autocorrelation of covariates $\rho$, signal strength $\theta$, and marginal association strength $\nu$ used for the simulation study. Each of the parameters $n, p, s, \rho$ was varied among the values in the first table while keeping the other three at their default values, indicated in bold. For example, $p = 400, s = 5, \rho = 0.4$ were kept fixed while varying $n \in \{100, 200, 400, 800, 1600\}$. The second and third tables denote the values of $(\theta, \nu)$ used for the null and alternative simulations. Each combination of $(n, p, s, \rho)$ was paired with each of the five values of $(\theta, \nu)$ displayed for null and alternative simulations.}
		\label{tab:parameter-values}
	\end{table}
	
	\paragraph*{Methodologies compared}
	
	In Section~\ref{sec:dr-and-equivalence}, we found that the GCM test and the $\dCRThat$ are equivalent when applied with the same estimation methods for $\mu_{n,x}$ and $\mu_{n,y}$. Using this equivalence, we also showed that the $\dCRThat$ is robust to errors in $\widehat \mu_{n,x}$ if they are compensated for by accurate estimates $\widehat \mu_{n,y}$. In our simulation to assess Type-I error, we wish to probe the finite-sample Type-I error control of the GCM and the $\dCRThat$. We apply both of these methods with the lasso to estimate $\mu_{n,x}$ and $\mu_{n,y}$, as this is the most common choice in the MX literature.
	
	In addition to the GCM test and the $\dCRThat$, we apply the Maxway CRT \citep{Li2022}, designed specifically to improve the Type-I error control of the dCRT in the context when $\mu_{n,x}$ must be estimated. The Maxway CRT is inherently a semi-supervised method, assuming the existence of an auxiliary unlabeled dataset containing observations of $\prx$ and $\prz$ but not of $\pry$. The methodology (specifically, ``Maxway$_{\text{in}}$ example 1'') proceeds---roughly---by fitting $\lawhat_n(\prx|\prz)$ on the unlabeled data via the post-lasso (i.e. selecting active variables via the lasso and then refitting via ordinary least squares, \cite{Belloni2013}), fitting $\widehat \mu_{ny}(\prz)$ on the labeled data via post-lasso, and then applying dCRT on the labeled data based on these two models.
	
	Since the primary focus of this paper is the setting when no auxiliary unlabeled data are available, we implement the Maxway CRT by randomly splitting the data into two equal pieces, using the first as the unlabeled data (in particular, ignoring the response data) and the second as the labeled data. This strategy is consistent with the real data analysis in \citet[Section 6]{Li2022}. We also consider a bona-fide semi-supervised setup, in order to compare the GCM test and $\dCRThat$ to the Maxway CRT in the setting originally considered by \citet{Li2022}. However, in the semi-supervised setting we use all of the available data on $(\prx, \prz)$ (i.e. both unlabeled and labeled data) to fit $\law_n(\prx|\prz)$. By contrast, \citet{Li2022} used only the unlabeled data to learn $\law_n(\prx|\prz)$ in their implementation of the $\dCRThat$ for semi-supervised data.
	
	Finally, we noted in Section~\ref{sec:dr-and-equivalence} that the $\dCRThat$ already has a built-in doubly robust property. Therefore, we conjectured that the Type-I error inflation observed in the simulations of \citet{Li2022} is attributable to poor estimation of $\mu_n(\prx|\prz)$ and/or $\mu_n(\pry|\prz)$ and that the $\dCRThat$ can achieve Type-I error control if used in conjunction with better estimators of these conditional means. Taking inspiration from \citet{Li2022}, we also considered versions of the $\dCRThat$ and the GCM test based on the post-lasso in addition to those based on the usual lasso. In summary, we compared five methods: lasso and post-lasso based GCM, lasso and post-lasso based $\dCRThat$, and Maxway CRT (Table~\ref{tab:methods-compared}). As a point of reference for the null simulation, we also included the GCM test with intercept-only models for $\mu_{n,x}$ and $\mu_{n,y}$; the Type-I error of this test quantifies the degree of marginal association in the data-generating model (Section~\ref{sec:sim-revisiting}). As a point of reference for the alternative simulation, we also included the GCM test with $\mu_{n,x}$ and $\mu_{n,y}$ set to their ground truth values; the power of this test is the maximum power achievable by any test and therefore quantifies the signal strength in the data-generating model.
	
	\begin{table}
		\centering
		\begin{tabular}{ lllll }
			Method name & Estimating $\mu_{n,x}$ & Data for $\widehat \mu_{n,x}$ & Estimating $\mu_{n,y}$ & Data for $\widehat \mu_{n,y}$\\ 
			\hline
			GCM (LASSO) & lasso & all & lasso & all/labeled \\
			$\dCRThat$ (LASSO) & lasso & all & lasso & all/labeled \\
			GCM (PLASSO) & post-lasso & all & post-lasso & all/labeled \\
			$\dCRThat$ (PLASSO) & post-lasso & all & post-lasso & all/labeled \\
			Maxway CRT & post-lasso & unlabeled & post-lasso & labeled \\
			\hline
			GCM (marginal) & intercept-only & all & intercept-only & all/labeled \\
			GCM (oracle) & ground truth & -- & ground truth & --
		\end{tabular}
		\caption{The five methodologies compared, how they estimate $\mu_{n,x}$ and $\mu_{n,y}$, and what data they use for each in the context of semi-supervised or fully supervised data. Note that in the fully supervised case, data is split in half to form ``unlabeled'' and labeled sets for Maxway CRT. In this case, the $\dCRThat$ and GCM tests still use all of the data available for estimating $\mu_{n,x}$ and $\mu_{n,y}$. Two additional tests were used for reference purposes: the GCM test with intercept-only models for $\mu_{n,x}$ and $\mu_{n,y}$ and the GCM test with $\mu_{n,x}$ and $\mu_{n,y}$ set to their ground truth values.}
		\label{tab:methods-compared}
	\end{table}
	
	\paragraph*{Evaluation of power in the presence of Type-I error inflation}
	
	The methodologies compared control Type-I error to differing extents across the variety of simulation parameters in Table~\ref{tab:parameter-values}. This makes it challenging to compare power across methods, since some control Type-I error while others do not. To address this challenge, we chose to compare the power of the \textit{test statistics} underlying the methods, each under oracle calibration to ensure Type-I error control. Given the composite null, exact oracle calibration is computationally intractable. Therefore, we instead calibrated each test with respect to the point null given by 
	\begin{equation*}
		\law_n(\prz) = N(0, \Sigma(\rho)),\ \law_n(\prx | \prz) = N(\prz^T \beta, 1), \ \law_n(\pry|\prx,\prz) = N(\E[\prx|\prz]^T \theta + \prz^T \beta, 1).
	\end{equation*}
	This is the ``closest'' point in the null to the alternative~\eqref{eq:normal-dgm} under consideration; therefore ensuring Type-I error control at this point null should be a decent proxy for ensuring Type-I error control over the whole null. To calibrate two-sided tests with respect to this point null, we generate samples of a test statistic from the null and then define lower and upper critical values as the 2.5\% and 97.5\% quantiles of this distribution. Using potentially asymmetric lower and upper critical values is necessary, as the null distribution may not be symmetric and centered at zero \citep{Liu2022a}.
	
	\subsection{Simulation results} \label{sec:sim-results}
	
	We conducted simulations for Gaussian and binary models for the response $\pry$, each within the supervised and semi-supervised settings. We present the Type-I error and power for Gaussian responses in the supervised setting in Figures~\ref{fig:gaussian_supervised_partial_null} and~\ref{fig:gaussian_supervised_partial_alternative}, respectively, while deferring the other cases to Appendix~\ref{sec:additional-simulation-results}. Note also that for the sake of brevity Figures~\ref{fig:gaussian_supervised_partial_null} and~\ref{fig:gaussian_supervised_partial_alternative} only present three out of the five values for the four parameters $n, p, s, \rho$; the complete results are presented in Appendix~\ref{sec:additional-simulation-results}.
	
	Next we list the main conclusions regarding Type-I error based on the results in Figures~\ref{fig:gaussian_supervised_partial_null} (Gaussian supervised), \ref{fig:gaussian_semi-supervised_null} (Gaussian semi-supervised), \ref{fig:binomial_supervised_null} (binary supervised), and~\ref{fig:binomial_semi-supervised_null} (binary semi-supervised): 
	\begin{itemize}
		\item As one would expect, across all simulation settings, all methods have poorer Type-I error control as sample size $n$ decreases, dimension $p$ increases, number of nonzero coefficients $s$ increases, autocorrelation $\rho$ increases, or marginal association strength $\nu$ increases.
		\item For Gaussian responses, the $\dCRThat$ and GCM methods based on the same test statistics have very similar Type-I error control, echoing the asymptotic equivalence of the two methods (Theorem~\ref{thm:equivalence}). For binary responses, the lasso-based $\dCRThat$ has somewhat lower Type-I error than the lasso-based GCM test (Figure~\ref{fig:binomial_supervised_null}). The discreteness of binary responses likely slows down the convergence to normality of the GCM statistic, rendering the resampling-based null distribution of the $\dCRThat$ a better approximation to the null distribution.
		\item Across all simulation settings, the $\dCRThat$ and GCM methods based on the post-lasso have dramatically better Type-I error control than their lasso-based counterparts. This is because the post-lasso tends to more fully regress the confounders $\prz$ out of the response $\pry$; see also Appendix~\ref{sec:lasso-vs-plasso}.
		\item Across all simulation settings, Maxway CRT has better Type-I error control than the lasso-based $\dCRThat$ (in line with the results of \cite{Li2022}), but worse Type-I error control than the post-lasso-based $\dCRThat$. The latter is likely due to the fact that Maxway CRT uses only half of the available data on $(\prx, \prz)$ to fit $\law_n(\prx|\prz)$, and therefore does not adjust for $\prz$ as accurately.
	\end{itemize}
	
	\begin{figure}[h]
		\centering
		\includegraphics[width = \textwidth]{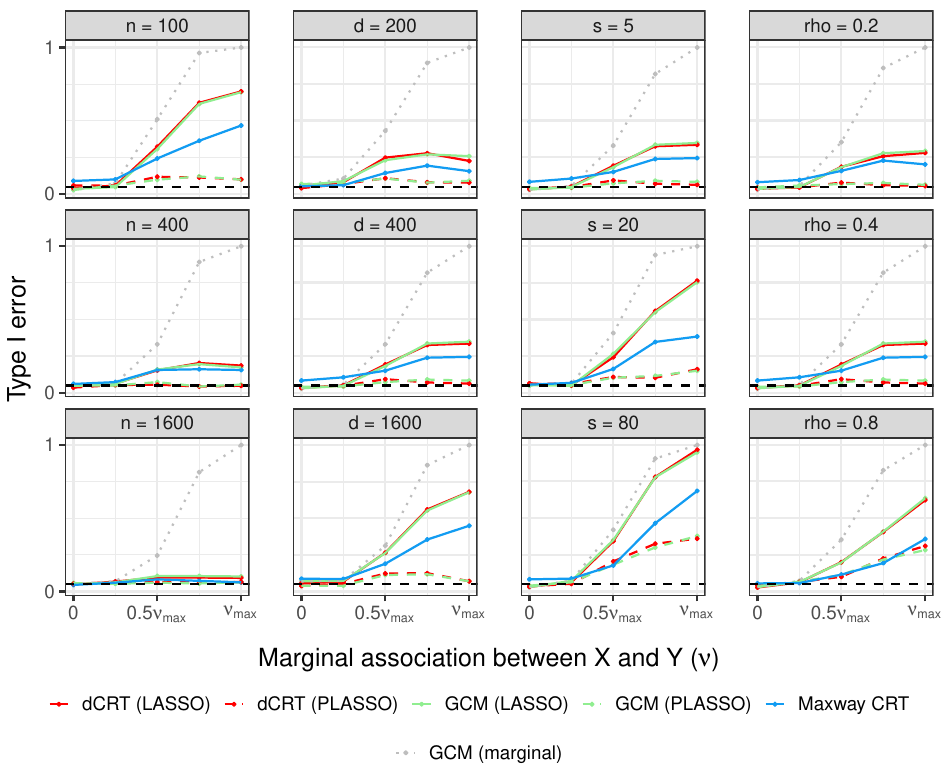}
		\caption{Type I error control for Gaussian supervised setting: we vary only one parameter in each column and there are five values of the marginal association strength $\nu$ in each subplot. Each point is the average of 400 Monte Carlo replicates.}
		\label{fig:gaussian_supervised_partial_null}
	\end{figure}
	
	Next, we list the main conclusions regarding power based on the results in Figures~\ref{fig:gaussian_supervised_partial_alternative} (Gaussian supervised), \ref{fig:gaussian_semi-supervised_alternative} (Gaussian semi-supervised), \ref{fig:binomial_supervised_alternative} (binary supervised), and~\ref{fig:binomial_semi-supervised_alternative} (binary semi-supervised): 
	\begin{itemize}
		\item Across all simulation settings, GCM-based methods have somewhat higher power than their $\dCRThat$-based methods. This may have to do with the stabilizing effect of the GCM normalization, compared to the unnormalized $\dCRThat$ statistic. The difference between the two tends to vanish as sample size grows, reflecting the asymptotic equivalence of the two methods (Corollary~\ref{cor:asymptotic-equivalence-alternative}).
		\item Across all simulation settings, the $\dCRThat$ and GCM methods based on the lasso have lower power than their post-lasso-based counterparts. This is because the post-lasso introduces more variance into the estimation of $\mu_{n,y}$; see also Appendix~\ref{sec:lasso-vs-plasso}.
		\item Across Gaussian and binary supervised simulation settings (Figures~\ref{fig:gaussian_supervised_alternative} and~\ref{fig:binomial_supervised_alternative}), Maxway CRT has the lowest power among all methods compared. The reason for this is that Maxway CRT relies on data splitting and therefore has half the effective sample size of the other methods. On the other hand, for semi-supervised settings (Figures~\ref{fig:gaussian_semi-supervised_alternative} and~\ref{fig:binomial_semi-supervised_alternative}), Maxway CRT has power comparable to or better than those of the post-lasso-based methods, but still worse than the lasso-based methods. This is due to the additional variance introduced by the refitting step in the post-lasso.
	\end{itemize}
	
	\begin{figure}[h]
		\centering
		\includegraphics[width = \textwidth]{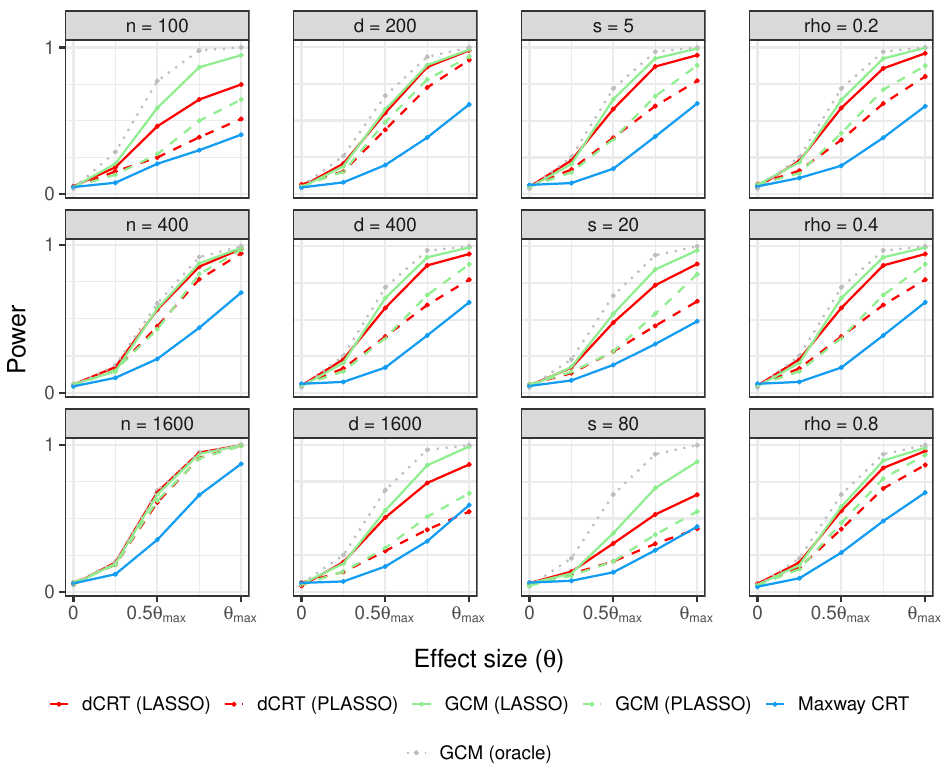}
		\caption{Power for Gaussian supervised setting: we vary only one parameter in each column and there are five values of the signal strength $\theta$ in each subplot. Each point is the average of 400 Monte Carlo replicates.}
		\label{fig:gaussian_supervised_partial_alternative}
	\end{figure}
	
	In summary, the methods with the best Type-I error control across all simulation settings are the $\dCRThat$ and the GCM test based on the post-lasso, although this improved robustness does come with a cost in terms of power when compared to the lasso-based methods. We investigate the associated trade-off in Appendix~\ref{sec:lasso-vs-plasso}.
	
	\section{Conclusion} \label{sec:conclusion}
	
	We conclude by summarizing our main findings and highlighting directions for future work.
	
	\paragraph*{Model-X inference with $\law(\prx|\prz)$ fit in sample can be doubly robust}
	
	Model-X inference \citep{CetL16} is presented as a mode of inference where the assumptions are transferred entirely from $\law(\pry|\prz)$ to $\law(\prx|\prz)$; no restrictions are made on the former law (or the test statistic used, at least in the context of the CRT), while the latter law is assumed exactly known. In practice, however, the law $\law(\prx|\prz)$ is often fit in sample. In the context of the dCRT, we show that Type-I error control cannot be guaranteed without restrictions on $\law(\pry|\prz)$ or the test statistic used (Section~\ref{sec:neg-results}). On the other hand, test statistics based on decent estimates of $\E[\pry|\prz]$ can compensate for errors in the estimation of $\law(\prx|\prz)$ and restore Type-I error control (Corollary~\ref{cor:dcrt-double-robustness}), a double robustness phenomenon. This result brings model-X inference more in line with double regression inferential methodologies: The conditional mean $\E[\prx|\prz]$ is estimated in the context of in-sample approximation to the ``model for X,'' and the conditional mean $\E[\pry|\prz]$ is estimated when computing the model-X test statistic. Relatedly, a double robustness property was noted for conditional model-X knockoffs \citep{Huang2019}. A doubly robust version of the dCRT has also been recently proposed (the Maxway CRT; \cite{Li2022}), although we argue that the original dCRT is itself doubly robust.
	
	\paragraph*{The GCM test has broadly similar Type-I error and power as the dCRT, but requires no resampling}
	
	When fitting $\law(\prx|\prz)$ in sample, the dCRT is essentially a double regression methodology. This prompts a comparison to the GCM test \citep{Shah2018}, another conditional independence test based on double regression. We established that the two tests are asymptotically equivalent under the null (Theorem~\ref{thm:equivalence}) and under arbitrary local alternatives (Corollary~\ref{cor:asymptotic-equivalence-alternative}). This suggests that the dCRT and the GCM test---when applied with the same estimators for $\E[\prx|\prz]$ and $\E[\pry|\prz]$---should have similar Type-I error control and power. Our numerical simulations (Section~\ref{sec:simulations}) largely confirm this behavior in finite samples. A possible exception to this conclusion is the case when small samples or discreteness in the data slows down the convergence of the $\dCRThat$ resampling distribution to normality (Theorem~\ref{thm:normal-limit}). In such cases, we observed that the $\dCRThat$ can in fact have better Type-I error control than the GCM based on the same estimators (Figure~\ref{fig:binomial_supervised_null}), presumably thanks to a better approximation to the null distribution in finite samples. Nevertheless, the broad similarity between the performances of the GCM test and the dCRT and the fact that the former test requires no resampling suggest that the GCM test may be preferable to the dCRT in practical problems with relatively large sample sizes.
	
	\paragraph*{The post-lasso yields much better Type-I error control than the lasso}
	
	Double robustness results for the GCM test and the dCRT apply only insofar as the estimation methods used in conjunction with these tests are accurate enough~\eqref{eq:sp1}. The default estimation method for $\E[\prx|\prz]$ and $\E[\pry|\prz]$ in many model-X applications is the lasso. As was demonstrated by \cite{Li2022}, the shrinkage bias of the lasso leads to inadequate adjustment of $\prx$ and $\pry$ for $\prz$, which in turn leads to inflated Type-I error. The same authors proposed the Maxway CRT, an extension of the dCRT involving the identification of coordinates of $\prz$ impacting $\prx$ and $\pry$ via the lasso followed by least squares refitting. Inspired by this work, we applied the original dCRT with post-lasso estimates for $\E[\prx|\prz]$ and $\E[\pry|\prz]$. We found vastly improved Type-I error control (Figure~\ref{fig:gaussian_supervised_null}), compared not just to the lasso-based dCRT but also to the Maxway CRT itself. The decreased bias of the post-lasso helps adjust for $\prz$ more fully, although we found that the extra variance incurred by refitting does come at a cost in power. Nevertheless, our results suggest that applying the post-lasso in conjunction with model-X methodologies can lead to significant improvements in robustness.
	
	\paragraph*{The GCM test is the optimal conditional independence test against alternatives without interactions between $\prx$ and $\prz$} 
	
	It is widely known in the semiparametric literature that the GCM test is the efficient score test for (generalized) partially linear models. The connection between the GCM test and semiparametric theory was noted briefly by \citet{Shah2018}, though not explored in depth; presumably because the GCM test is a conditional independence test rather than a test of a parameter in a semiparametric model. Nevertheless, we find that if the semiparametric \textit{null} hypothesis can be embedded within the conditional independence null hypothesis~\eqref{eq:interior-point}, semiparametric optimality theory can be carried over fairly directly to conditional independence testing to establish optimality against semiparametric \textit{alternative} distributions (Theorem~\ref{thm:optimality}). Thanks to this connection, we find that the GCM test has optimal asymptotic power among conditional independence tests against local generalized partially linear model alternatives~\eqref{eq:alternatives-1}. On the other hand, we leave open the question of optimality against alternatives where $\prx$ and $\prz$ are allowed to interact. We also leave open whether our optimality result can be extended to the high-dimensional regime.
	
	\paragraph*{Future work: The proportional regime and the variable selection problem} 
	
	Our results about the equivalence between the GCM test and the dCRT, and the double robustness of the latter, require estimates of $\E[\prx|\prz]$ and $\E[\pry|\prz]$ that are individually consistent and whose rates of convergence are sufficiently fast~\eqref{eq:sp1}. In the case of sparse linear models, we can get such rates if $\E[\prx|\prz]$ and $\E[\pry|\prz]$ depend on at most $s = o(\sqrt{n}/\log(p))$ of the coordinates of $\prz$. Such assumptions are common in other lines of work on high-dimensional / semiparametric / doubly-robust inference, including the debiased lasso \citep{VanDeGeer2014, ZZ14, Javanmard2014, Ning2017, Jankova2018a} and doubly-robust causal inference \citep{BetH14, Chernozhukov2018}. On the other hand, consistent estimates are typically not available in the regime when $n$, $p$, and $s$ grow proportionally \citep{Bayati2011}, causing a failure in traditional debiased estimates \citep{Celentano2021}. An additional limitation of the current work is that we do not directly consider the variable selection problem. For example, application of the GCM test to each variable is much more computationally costly than applying model-X knockoffs. Therefore, the comparison between model-X and doubly robust methodologies for variable selection purposes requires more thought.

\section*{Acknowledgments}

ZN was partially supported by the grant ``Statistical Software for Single Cell CRISPR Screens'' awarded to EK by Analytics at Wharton. OD was partially supported by FWO grant 1222522N and NIH grant AG065276. EK was partially supported by NSF DMS-2113072.  We acknowledge help from Timothy Barry with our simulation studies and the underlying computational infrastructure, including his \verb|simulatr| R package and Nextflow pipeline. We acknowledge dedicated support from the staff at the Wharton High Performance Computing Cluster. We acknowledge Lucas Janson for providing details about the simulation setting in \citet{CetL16}. We acknowledge Eric Tchetgen Tchetgen for helpful discussions on hypothesis testing in the semiparametric models.

\printbibliography


\appendix

\section{The $\dCRThat$ with GCM normalization} \label{sec:ndcrt}

As an alternative to the $\dCRThat$, we consider the $\ndCRThat$. This procedure is based on a normalized statistic that coincides exactly with the GCM statistic:
\begin{equation*}
T^{\ndCRThat}_n(\srx,\sry,\srz) \equiv \frac{1}{\widehat S_{n}^{\GCM}}T_n^{\dCRThat}(\srx, \sry, \srz) \equiv T^{\GCM}_n(\srx, \sry, \srz).
\end{equation*}
The only difference with GCM is that the critical value is given by conditional resampling rather than a normal quantile:
\begin{equation}
	\begin{split}
		\phi^{\ndCRThat}_n(\srx, \sry, \srz) &\equiv \indicator(T_n^{\ndCRThat}(\srx, \sry, \srz) > \Q_{1-\alpha}[T_n^{\ndCRThat}(\srxk, \srx, \sry, \srz) \mid \srx, \sry, \srz]) \\
		&\equiv \indicator\left(T_n^{\ndCRThat}(\srx, \sry, \srz) > C^{\ndCRThat}_n(\srx, \sry, \srz)\right).
	\end{split}
\end{equation}
Here, 
\begin{equation*}
T_n^{\ndCRThat}(\srxk, \srx, \sry, \srz) \equiv \frac{T_n^{\dCRT}(\srxk, \srx, \sry, \srz)}{S_n^{\GCM}(\srxk, \srx, \sry, \srz)},
\end{equation*}
where 
\begin{equation*}
(S_n^{\GCM}(\srxk, \srx, \sry, \srz))^2 \equiv \widehat{\V}\{(\srxk_i - \widehat \mu_{n,x}(\srz_i))(\sry_i - \widehat \mu_{n,y}(\srz_i))\}
\end{equation*}
and $T_n^{\dCRT}(\srxk, \srx, \sry, \srz)$ is as defined in equation~\eqref{eq:resampled-dcrt-def}.

\begin{theorem}\label{thm:normal-limit-ndcrt} 
	Let $\law_n$ be a sequence of laws such that the nondegeneracy conditions~\eqref{eq:var-bounded-below} and~\eqref{eq:variance-bounds} and the conditional Lyapunov condition~\eqref{eq:lyapunov-condition} hold. Then,
	\begin{equation}
		T^{\ndCRThat}_n(\srxk, \srx, \sry,\srz) \mid \srx, \sry, \srz \convdp N(0,1)
		\label{eq:conditional-convergence-ndcrt}
	\end{equation}
	and therefore
	\begin{equation}
		C^{\ndCRThat}_n(\srx, \sry, \srz) \equiv \Q_{1-\alpha}\left[T^{\ndCRThat}_n(\srxk, \srx, \sry,\srz) \mid \srx, \sry, \srz\right] \rightarrow z_{1-\alpha}
		\label{eq:critical-value-convergence-ndcrt}
	\end{equation}
	and the $\ndCRThat$ and GCM tests are equivalent:
	\begin{equation}
	\lim_{n \rightarrow \infty} \P_{\law_n}[\phi_n^{\ndCRThat}(\srx, \sry, \srz) = \phi_n^{\GCM}(\srx, \sry, \srz)] = 1.
	\label{eq:ndcrt-gcm-equivalence}
	\end{equation}
\end{theorem}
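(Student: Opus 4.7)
The plan is to reduce Theorem~\ref{thm:normal-limit-ndcrt} to Theorem~\ref{thm:normal-limit} by showing that the GCM-style normalization computed on resamples, $S_n^{\GCM}(\srxk, \srx, \sry, \srz)$, is conditionally asymptotically equivalent to the $\dCRThat$ normalization $\widehat S_n^{\dCRThat}$. Concretely, I would first prove
\begin{equation*}
\frac{(S_n^{\GCM}(\srxk, \srx, \sry, \srz))^2}{(\widehat S_n^{\dCRThat})^2} \convp 1 \text{ conditionally on } \srx, \sry, \srz,
\end{equation*}
and then splice this with the conditional CLT~\eqref{eq:conditional-convergence} from Theorem~\ref{thm:normal-limit} via a conditional Slutsky-type lemma (these are collected in the appendix on conditional convergence) to obtain the conditional convergence~\eqref{eq:conditional-convergence-ndcrt} of $T_n^{\ndCRThat}$ to $N(0,1)$.

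For the ratio step, expand
\begin{equation*}
(S_n^{\GCM}(\srxk, \srx, \sry, \srz))^2 = \frac{1}{n}\sum_{i=1}^n (\srxk_i - \widehat \mu_{n,x}(\srz_i))^2(\sry_i - \widehat \mu_{n,y}(\srz_i))^2 - \left(\frac{1}{\sqrt n}T_n^{\dCRThat}(\srxk, \srx, \sry, \srz)\right)^2.
\end{equation*}
The squared mean term is $O_p(1/n)$ conditionally by Theorem~\ref{thm:normal-limit} (since the conditionally normalized resampled statistic is $O_p(1)$), and is negligible once~\eqref{eq:var-bounded-below} bounds $\widehat S_n^{\dCRThat}$ away from zero. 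Since $\E_{\lawhat_n}[(\srxk_i - \widehat \mu_{n,x}(\srz_i))^2 \mid \srz_i] = \V_{\lawhat_n}[\srx_i | \srz_i]$, the conditional expectation of the first term is exactly $(\widehat S_n^{\dCRThat})^2$, so the problem reduces to a conditional weak law of large numbers for the centered sum; conditional Chebyshev bounds its variance by
\begin{equation*}
\frac{1}{n^2}\sum_{i=1}^n (\sry_i - \widehat\mu_{n,y}(\srz_i))^4 \E_{\lawhat_n}[(\srxk_i - \widehat \mu_{n,x}(\srz_i))^4 \mid \srz_i].
\end{equation*}

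The main obstacle is that the Lyapunov assumption~\eqref{eq:lyapunov-condition} is only available for some $\delta > 0$, whereas the Chebyshev bound above would close directly only at $\delta = 2$. I would address this by a truncation argument: split $(\srxk_i - \widehat \mu_{n,x}(\srz_i))^2$ at a diverging level $T_n$, use Chebyshev on the truncated piece (where the fourth moment collapses to $T_n$ times the second moment, itself controlled by $(\widehat S_n^{\dCRThat})^2/n$), and convert the tail contribution back to the $(2+\delta)$-th-moment Lyapunov sum via Markov's inequality. Balancing $T_n$ yields the desired $o_p(1)$ bound on $(S_n^{\GCM})^2 - (\widehat S_n^{\dCRThat})^2$, and nondegeneracy of $\widehat S_n^{\dCRThat}$ upgrades this to the ratio convergence.

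The quantile convergence~\eqref{eq:critical-value-convergence-ndcrt} follows from continuity of the $N(0,1)$ CDF applied to the conditional convergence in distribution, again via the appendix's conditional analogs of classical convergence theorems. For the test equivalence~\eqref{eq:ndcrt-gcm-equivalence}, note that $T_n^{\ndCRThat}(\srx, \sry, \srz) \equiv T_n^{\GCM}(\srx, \sry, \srz)$ identically, so disagreement between the two tests requires $T_n^{\GCM}$ to fall between $z_{1-\alpha}$ and $C^{\ndCRThat}_n$, an event whose probability is bounded by $\P[|C^{\ndCRThat}_n - z_{1-\alpha}| > \epsilon] + \P[|T_n^{\GCM}(\srx, \sry, \srz) - z_{1-\alpha}| \leq \epsilon]$. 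The first piece vanishes by~\eqref{eq:critical-value-convergence-ndcrt}; the second vanishes as $\epsilon \to 0$ under a non-accumulation condition in the spirit of~\eqref{eq:non-accumulation}, which holds in particular whenever $T_n^{\GCM}$ converges in distribution to a continuous law (as happens under the null via the GCM theory of \citet{Shah2018}). This gives the equivalence and completes the proof.
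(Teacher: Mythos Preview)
Your overall architecture matches the paper's (Lemma~\ref{lem:var-ratio-ndcrt-dcrt} plus the proof of Theorem~\ref{thm:normal-limit-ndcrt}): factor
\[
T_n^{\ndCRThat}(\srxk,\ldots)=\frac{\widehat S_n^{\dCRThat}}{S_n^{\GCM}(\srxk,\ldots)}\cdot\frac{T_n^{\dCRThat}(\srxk,\ldots)}{\widehat S_n^{\dCRThat}},
\]
invoke Theorem~\ref{thm:normal-limit} for the second factor, show the first tends to $1$, and apply conditional Slutsky; then deduce quantile convergence and test equivalence via Lemma~\ref{lem:conditional-convergence-to-quantile} and Lemma~\ref{lem:equivalence-lemma}.

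The gap is in your ratio step. You propose to control $\frac{1}{n}\sum_i (W_{in}^2 - \E[W_{in}^2\mid\mathcal F_n])$ by Chebyshev after truncating $(\srxk_i-\widehat\mu_{n,x}(\srz_i))^2$ at $T_n$, bounding the truncated fourth moment by ``$T_n$ times the second moment, itself controlled by $(\widehat S_n^{\dCRThat})^2/n$.'' This does not close: the resulting Chebyshev bound carries $(\sry_i-\widehat\mu_{n,y}(\srz_i))^4$, not $(\sry_i-\widehat\mu_{n,y}(\srz_i))^2$, and nothing in the theorem's assumptions controls fourth powers of the $Y$-residuals. Even truncating the full product $U_i\equiv W_{in}^2$ at $T_n$ and using the crude bound $\bar U_i^2\le T_n U_i$ yields a Chebyshev term $\frac{T_n}{n}(\widehat S_n^{\dCRThat})^2$, forcing $T_n=o(n)$, while the tail-mean bound via the $(1+\delta/2)$-moment and~\eqref{eq:lyapunov-condition} is $o_p\bigl((n/T_n)^{\delta/2}\bigr)$, forcing $T_n\gtrsim n$. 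These cannot be balanced.

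The paper sidesteps this entirely by applying the conditional WLLN (Theorem~\ref{thm:wlln_cond}) to $W_{in}^2$ with exponent $1+\delta/2$: the moment hypothesis becomes
\[
\frac{1}{n^{1+\delta/2}}\sum_{i=1}^n \E\bigl[|W_{in}|^{2+\delta}\mid\mathcal F_n\bigr],
\]
which is \emph{exactly} the Lyapunov quantity~\eqref{eq:lyapunov-condition}. (Inside that WLLN proof the truncation is at $T_n=n$ with the sharper bound $U_i^2\indicator(U_i\le n)\le n^{1-\delta/2}U_i^{1+\delta/2}$, which does balance.) The same WLLN with exponent $2+\delta$ handles $\frac{1}{n}\sum_i W_{in}\convp 0$ directly, so the detour through Theorem~\ref{thm:normal-limit} for the squared-mean term is also unnecessary.

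For the equivalence step your sandwich argument is precisely Lemma~\ref{lem:equivalence-lemma}, and you are right to flag the non-accumulation requirement on $T_n^{\GCM}$; the paper invokes that lemma as well without listing the condition explicitly in the theorem statement.
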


\begin{corollary}
\label{cor:double-robustness-ndcrt}
Let $\regclass_n$ be a sequence of regularity conditions such that for any sequence $\law_n \in \regclass_n$, we have the the nondegeneracy conditions~\eqref{eq:var-bounded-below} and~\eqref{eq:variance-bounds}, the conditional Lyapunov condition~\eqref{eq:lyapunov-condition}, and the assumptions~\eqref{eq:sp1} and~\eqref{eq:sp2}. Then, the $\ndCRThat$ has asymptotic Type-I error control over $\nulllaws_n \cap \regclass_n$ in the sense of the definition~\eqref{eq:asymptotic-control}.
\end{corollary}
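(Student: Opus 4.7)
}

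The plan is to deduce Type-I error control of the $\ndCRThat$ from two ingredients already in hand: (i) Theorem~\ref{thm:normal-limit-ndcrt}, which gives asymptotic equivalence between $\phi_n^{\ndCRThat}$ and $\phi_n^{\GCM}$ under the nondegeneracy conditions~\eqref{eq:var-bounded-below}, \eqref{eq:variance-bounds} and the conditional Lyapunov condition~\eqref{eq:lyapunov-condition}; and (ii) the Type-I error control of the GCM test established by \citet{Shah2018}, which holds under~\eqref{eq:sp1} and~\eqref{eq:sp2}. Since $\regclass_n$ is defined to bundle exactly these hypotheses, both (i) and (ii) are available along any sequence $\law_n \in \nulllaws_n \cap \regclass_n$.

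First I would fix an arbitrary sequence $\law_n \in \nulllaws_n \cap \regclass_n$. Theorem~\ref{thm:normal-limit-ndcrt} then yields
\begin{equation*}
\P_{\law_n}[\phi_n^{\ndCRThat}(\srx,\sry,\srz) = \phi_n^{\GCM}(\srx,\sry,\srz)] \longrightarrow 1,
\end{equation*}
so in particular
\begin{equation*}
|\E_{\law_n}[\phi_n^{\ndCRThat}] - \E_{\law_n}[\phi_n^{\GCM}]| \leq \P_{\law_n}[\phi_n^{\ndCRThat} \neq \phi_n^{\GCM}] \longrightarrow 0.
\end{equation*}
Next, since~\eqref{eq:sp1} and~\eqref{eq:sp2} hold for $\law_n$, the results of \citet{Shah2018} give $\limsup_{n\to\infty} \E_{\law_n}[\phi_n^{\GCM}] \leq \alpha$. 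Combining the two displays yields $\limsup_{n\to\infty} \E_{\law_n}[\phi_n^{\ndCRThat}] \leq \alpha$.

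To upgrade this pointwise-in-sequence statement to the supremum form required by~\eqref{eq:asymptotic-control}, I would use a standard subsequence argument: if $\limsup_n \sup_{\law_n \in \nulllaws_n \cap \regclass_n} \E_{\law_n}[\phi_n^{\ndCRThat}] > \alpha$, then there exist $\epsilon > 0$, a subsequence $n_k$, and a selection $\law_{n_k} \in \nulllaws_{n_k} \cap \regclass_{n_k}$ with $\E_{\law_{n_k}}[\phi_{n_k}^{\ndCRThat}] \geq \alpha + \epsilon$. Extending this selection arbitrarily to all $n$ produces a sequence violating the previous paragraph, a contradiction.

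\paragraph{Main obstacle.} The steps above are essentially bookkeeping on top of Theorem~\ref{thm:normal-limit-ndcrt} and \citet{Shah2018}, so there is no serious analytic difficulty. The one subtle point to be careful about is the direction of the inequality in the subsequence argument: we need to be sure that the claim of~\citet{Shah2018} delivers a $\limsup$ bound that is truly uniform (or at least holds along arbitrary sequences in $\regclass_n$), since the definition~\eqref{eq:asymptotic-control} takes a supremum before the $\limsup$. This is already how~\eqref{eq:sp1}--\eqref{eq:sp2} are framed in \citet{Shah2018} (as convergence-in-probability conditions that apply along any admissible sequence), so the argument goes through; but this is the step that merits the most careful verification when writing out the proof in detail.
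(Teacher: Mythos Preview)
Your proposal is correct and follows essentially the same route as the paper, which simply states that the proof is ``directly analogous to that of Corollary~\ref{cor:dcrt-double-robustness}'' and omits the details. The only cosmetic difference is in how the supremum over $\nulllaws_n \cap \regclass_n$ is handled: the paper (in the proof of Corollary~\ref{cor:dcrt-double-robustness}) picks a near-supremizing sequence $\law_n^*$ directly and bounds $\sup_{\law_n}|\E_{\law_n}[\phi_n^{\ndCRThat}] - \E_{\law_n}[\phi_n^{\GCM}]|$ using that sequence, whereas you argue by contradiction via a subsequence; these are interchangeable standard devices, and your identification of the key subtle point (that the Shah--Peters bound must apply along arbitrary sequences in $\regclass_n$) is exactly right.
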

Comparing Corollary~\ref{cor:double-robustness-ndcrt} to Corollary~\ref{cor:dcrt-double-robustness}, we see that the $\ndCRThat$ controls Type-I error under weaker assumptions than required for the $\dCRThat$; in particular the variance of $\law_n(\prx|\prz)$ need not be estimated with any degree of accuracy.

\section{Conditional convergence results} \label{sec:conditional-convergence-results}

The proofs of our theoretical results rely on the conditional counterparts of several standard convergence theorems. In this section, we state these conditional convergence theorems. We defer their proofs to Appendix~\ref{sec:conditional-convergence-proofs}.

First we define a notion of conditional convergence in probability, analogous to our definition of conditional convergence in distribution (Definition~\ref{def:conditional-convergence-distribution}).

\begin{definition} \label{def:conditional-convergence-probability}
	For each $n$, let $W_n$ be a random variable and let $\mathcal F_n$ be a $\sigma$-algebra. Then, we say $W_n$ converges in probability to a constant $c$ conditionally on $\mathcal F_n$ if $W_n$ converges in distribution to the delta mass at $c$ conditionally on $\mathcal F_n$ (recall Definition~\ref{def:conditional-convergence-distribution}). We denote this convergence by ${W_n \mid \mathcal F_n \convpp c}$. In symbols, 
	\begin{equation}
		W_n \mid \mathcal F_n \convpp c \quad \text{if} \quad W_n \mid \mathcal F_n \convdp \delta_c.
	\end{equation}
\end{definition}

\noindent Now we are ready to state the conditional convergence results. 

\subsection{Statements}

For the sake of all results below, let $\mathcal F_n$ be a sequence of $\sigma$-algebras.

\begin{theorem}[Conditional Polya's theorem]\label{thm:cond_polya} 
	Let $W_n$ be a sequence of random variables. If $W_n \mid \mathcal F_n \convdp W$ for some random variable $W$ with continuous CDF, then
	\begin{equation}
		\sup_{t \in \R}|\P[W_n \leq t \mid \mathcal F_n] - \P[W \leq t]| \convp 0.
	\end{equation}

\end{theorem}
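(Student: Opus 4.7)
The plan is to lift the classical proof of Polya's theorem to the conditional setting. Writing $F_n(t, \omega) := \P[W_n \leq t \mid \mathcal F_n](\omega)$ and $F(t) := \P[W \leq t]$, I first fix a regular version so that $t \mapsto F_n(t, \omega)$ is nondecreasing and right-continuous for every $\omega$; this ensures that $\sup_{t \in \R}|F_n(t) - F(t)|$ is $\mathcal F_n$-measurable, since by right-continuity the sup in $t$ coincides with a sup over the rationals.

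For any $\epsilon > 0$, I use the continuity of $F$ together with $F(-\infty) = 0$ and $F(\infty) = 1$ to pick a finite grid $-\infty < t_1 < \cdots < t_K < \infty$ with $F(t_1) < \epsilon$, $1 - F(t_K) < \epsilon$, and $F(t_k) - F(t_{k-1}) < \epsilon$ for $k = 2, \dots, K$. The standard monotonicity sandwich --- for $t \in [t_{k-1}, t_k]$, $F_n(t_{k-1}) \leq F_n(t) \leq F_n(t_k)$ and the analogous inequalities for $F$ --- together with tail handling via $F_n(t) \in [0, F_n(t_1)]$ for $t \leq t_1$ and $F_n(t) \in [F_n(t_K), 1]$ for $t \geq t_K$, yields the deterministic (pathwise) bound
\begin{equation*}
\sup_{t \in \R}|F_n(t) - F(t)| \leq \max_{1 \leq k \leq K}|F_n(t_k) - F(t_k)| + \epsilon.
\end{equation*}

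Finally, the hypothesis $W_n \mid \mathcal F_n \convdp W$ gives $F_n(t_k) - F(t_k) \convp 0$ for each of the finitely many grid points $t_k$, so the finite maximum on the right-hand side is $o_p(1)$. Hence $\P[\sup_t|F_n(t) - F(t)| > 2\epsilon] \to 0$, and since $\epsilon > 0$ was arbitrary, the claimed uniform convergence in probability follows. The only nontrivial technicality is the measurability of the uncountable supremum, which is why I emphasize choosing a regular version at the outset; beyond this, the argument is a pathwise reduction to a pointwise convergence statement on a finite set, so the conditioning plays no role past the initial hypothesis and no further conditional limit theorem is needed.
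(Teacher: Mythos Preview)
Your proof is correct and follows essentially the same approach as the paper's own proof: both lift the classical Polya argument by using continuity of $F$ to reduce the uniform bound to a finite maximum over grid points, then invoke the pointwise hypothesis. The paper chooses the grid points as quantiles $\Q_{i/k}[W]$ (citing \cite[Lemma 2.11]{VDV1998}) and omits the measurability discussion, whereas you pick an arbitrary $\epsilon$-fine grid and explicitly handle the tails and the regular-version issue; these are cosmetic differences in the same proof.
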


\begin{theorem}[Conditional Slutsky's theorem]\label{thm:cond_slutsky}
	Let $W_n$ be a sequence of random variables. Suppose $a_n$ and $b_n$ are sequences of random variables such that $a_n \convp 1$ and $b_n \convp 0$. If $W_n \mid \mathcal F_n \convdp W$ for some random variable $W$ with continuous CDF, then
	\begin{equation}
		a_n W_n + b_n \mid \mathcal F_n \convdp W.
	\end{equation}
	
\end{theorem}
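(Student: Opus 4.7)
The plan is to show directly that for every $t \in \R$, $\P[a_n W_n + b_n \leq t \mid \mathcal F_n] \convp F_W(t)$, where $F_W$ denotes the continuous CDF of $W$ (so every $t$ is a continuity point, and the target limit is unambiguous). The key idea is a sandwich argument. On the event $E_n(\delta) \equiv \{|a_n-1| \leq \delta\} \cap \{|b_n| \leq \delta\}$ (which has unconditional probability tending to one by the hypotheses on $a_n$ and $b_n$), intersected with the truncation event $\{|W_n| \leq M\}$, the algebraic bound
\begin{equation*}
|a_n W_n + b_n - W_n| = |(a_n-1)W_n + b_n| \leq \delta M + \delta = \delta(M+1)
\end{equation*}
forces $\{W_n \leq t - \delta(M+1)\}$ to imply $\{a_n W_n + b_n \leq t\}$, and conversely $\{a_n W_n + b_n \leq t\}$ to imply $\{W_n \leq t + \delta(M+1)\}$.

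Combining these inclusions with the elementary bound $\P[A \cap B \cap C \mid \mathcal F_n] \geq \P[A \mid \mathcal F_n] - \P[B^c \mid \mathcal F_n] - \P[C^c \mid \mathcal F_n]$ yields the two-sided sandwich
\begin{equation*}
\P[W_n \leq t - \delta(M+1) \mid \mathcal F_n] - R_n \leq \P[a_n W_n + b_n \leq t \mid \mathcal F_n] \leq \P[W_n \leq t + \delta(M+1) \mid \mathcal F_n] + R_n,
\end{equation*}
with remainder $R_n \equiv \P[E_n(\delta)^c \mid \mathcal F_n] + \P[|W_n| > M \mid \mathcal F_n]$. I then need to control four quantities in probability. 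First, because $\P[E_n(\delta)^c] \to 0$ and $\E[\P[E_n(\delta)^c \mid \mathcal F_n]] = \P[E_n(\delta)^c]$, we obtain $L^1$ convergence and hence $\P[E_n(\delta)^c \mid \mathcal F_n] \convp 0$. Second, applying the conditional convergence hypothesis at $\pm M$ gives $\P[|W_n| > M \mid \mathcal F_n] \leq 1 - \P[W_n \leq M \mid \mathcal F_n] + \P[W_n \leq -M \mid \mathcal F_n] \convp 1 - F_W(M) + F_W(-M)$, which can be made arbitrarily small by choosing $M$ large. Third, the same hypothesis applied at $t \pm \delta(M+1)$ gives $\P[W_n \leq t \pm \delta(M+1) \mid \mathcal F_n] \convp F_W(t \pm \delta(M+1))$. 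Finally, by continuity of $F_W$, taking $\delta$ sufficiently small forces $F_W(t \pm \delta(M+1))$ to lie arbitrarily close to $F_W(t)$.

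The main obstacle I expect is the coupling between the truncation level $M$ and the perturbation tolerance $\delta$ through the slack $\delta(M+1)$: one cannot let $M \to \infty$ and $\delta \to 0$ independently. The argument must fix $\epsilon > 0$, then choose $M$ large enough to make $1 - F_W(M) + F_W(-M)$ small, then choose $\delta$ small enough relative to $M$ so that continuity of $F_W$ absorbs $\delta(M+1)$ into $\epsilon$, and only then send $n \to \infty$. A related subtlety, handled cleanly by the $L^1$ argument above, is that $a_n$ and $b_n$ are not assumed $\mathcal F_n$-measurable, so they cannot simply be pulled outside the conditional probability; instead we replace the event $E_n(\delta)$ by its $\mathcal F_n$-conditional probability and bound uniformly.
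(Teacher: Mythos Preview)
Your proof is correct and takes a genuinely different route from the paper's. The paper first invokes the conditional Polya theorem (Theorem~\ref{thm:cond_polya}) to upgrade the pointwise hypothesis $W_n \mid \mathcal F_n \convdp W$ to uniform convergence $\sup_{t'} |\P[W_n \leq t' \mid \mathcal F_n] - F(t')| \convp 0$, then evaluates the conditional CDF at the random argument $(t-b_n)/a_n$, uses the continuous mapping theorem to get $F((t-b_n)/a_n) \convp F(t)$, and finally disposes of the sign of $a_n$ via $\P[a_n \leq 0 \mid \mathcal F_n] \convp 0$. Your argument bypasses Polya entirely: it is the classical sandwich/truncation proof of Slutsky, adapted to the conditional setting, and only uses the conditional-convergence hypothesis at the four deterministic points $t \pm \delta(M+1)$ and $\pm M$. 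The paper's route is shorter once Polya is in hand, while yours is self-contained and more elementary. Your explicit treatment of the possibility that $a_n, b_n$ are not $\mathcal F_n$-measurable---bounding via $\E[\P[E_n(\delta)^c \mid \mathcal F_n]] = \P[E_n(\delta)^c] \to 0$ rather than pulling $a_n, b_n$ outside the conditioning---is a nice touch; the paper's evaluation of $\P[W_n \leq (t-b_n)/a_n \mid \mathcal F_n]$ via the uniform bound is cleanest when $a_n, b_n$ are $\mathcal F_n$-measurable, which is the case in every application within the paper.
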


\begin{theorem}[Conditional law of large numbers]\label{thm:wlln_cond} 
	Let $W_{in}$ be a triangular array of random variables, such that $W_{in}$ are independent conditionally on $\mathcal F_n$ for each $n$. If for some $\delta > 0$ we have
	\begin{equation}
		\frac{1}{n^{1+\delta}}\sum_{i=1}^n\E[|W_{in}|^{1+\delta}\mid\mathcal{F}_n] \convp 0,
		\label{eq:wlln_cond_assumption}
	\end{equation}
	then 
	\begin{equation}
		\frac{1}{n} \sum_{i = 1}^n (W_{in} - \E[W_{in}|\mathcal{F}_n]) \mid \mathcal F_n \convpp 0.
		\label{eq:wlln_cond_conclusion}
	\end{equation}
	The condition~\eqref{eq:wlln_cond_assumption} is satisfied when
	\begin{equation}
		\sup_{1\leq i\leq n}\E[|W_{in}|^{1+\delta} \mid \mathcal{F}_n]=o_p(n^{\delta}).
		\label{eq:wlln_cond_sufficient}
	\end{equation}
\end{theorem}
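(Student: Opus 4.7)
The goal is to show that $S_n \equiv \frac{1}{n}\sum_{i=1}^n Y_{in} \mid \mathcal{F}_n \convpp 0$, where $Y_{in} \equiv W_{in} - \E[W_{in}\mid\mathcal{F}_n]$ are conditionally mean-zero and conditionally independent given $\mathcal{F}_n$. By Definitions~\ref{def:conditional-convergence-distribution} and~\ref{def:conditional-convergence-probability}, this reduces to showing $\P[|S_n| > \epsilon \mid \mathcal{F}_n] \convp 0$ for every $\epsilon > 0$. The plan is to establish this via a conditional Markov inequality at exponent $r \equiv 1 + \min(\delta,1) \in (1,2]$, combined with a conditional analogue of the von Bahr-Esseen moment inequality applied to the centered array $(Y_{in})$.

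The crux is the following conditional von Bahr-Esseen inequality: for $r \in [1,2]$ and $Y_1,\ldots,Y_n$ that are conditionally independent and conditionally centered given $\mathcal{F}_n$,
\begin{equation*}
\E\!\left[\bigg|\sum_{i=1}^n Y_i\bigg|^r \,\bigg|\, \mathcal{F}_n\right] \leq 2 \sum_{i=1}^n \E\!\left[|Y_i|^r \mid \mathcal{F}_n\right].
\end{equation*}
I would prove this by mimicking the classical symmetrization argument, constructing the auxiliary Rademacher variables on an enlarged product space so that they remain independent of $\mathcal{F}_n$ and preserve the conditional independence of $(Y_i)$. Combining this with the conditional Markov inequality and the $c_r$-inequality $\E[|Y_{in}|^r \mid \mathcal{F}_n] \leq 2^r \E[|W_{in}|^r \mid \mathcal{F}_n]$ (obtained by applying conditional Jensen to the centering term), I arrive at
\begin{equation*}
\P[|S_n| > \epsilon \mid \mathcal{F}_n] \leq \frac{C(r)}{\epsilon^r} \cdot \frac{1}{n^r}\sum_{i=1}^n \E[|W_{in}|^r \mid \mathcal{F}_n].
\end{equation*}
For $\delta \in (0,1]$ the right-hand side is exactly a constant multiple of the left-hand side of~\eqref{eq:wlln_cond_assumption} (with $r = 1+\delta$) and is therefore $o_p(1)$, giving~\eqref{eq:wlln_cond_conclusion}. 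For $\delta > 1$ the same argument still works with $r = 2$ after truncating $W_{in} \mapsto W_{in}\indicator(|W_{in}| \leq n^a)$ for any $a > 1$: the truncated part is controlled by conditional Chebyshev (whose variance bound only improves as $\delta$ increases), and the tail is absorbed by the $(1+\delta)$-moment assumption via conditional Markov.

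The sufficient condition~\eqref{eq:wlln_cond_sufficient} follows from the trivial averaging bound
\begin{equation*}
\frac{1}{n^{1+\delta}}\sum_{i=1}^n\E[|W_{in}|^{1+\delta}\mid\mathcal{F}_n]\ \leq\ n^{-\delta}\sup_{1\leq i\leq n}\E[|W_{in}|^{1+\delta}\mid\mathcal{F}_n] = o_p(1).
\end{equation*}
The main technical obstacle will be establishing the conditional von Bahr-Esseen inequality: while the structure of the proof mirrors the classical unconditional one, every measurability and independence claim must be re-verified in the presence of conditioning on $\mathcal{F}_n$, which is essentially a careful bookkeeping exercise on an enlarged probability space. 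Once this inequality is in hand, the rest of the argument is a faithful conditional translation of the standard $L^{1+\delta}$ weak law of large numbers.
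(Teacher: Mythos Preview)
For $\delta\in(0,1]$ your approach is correct and genuinely different from the paper's. The paper proves a conditional truncated weak law (Lemma~\ref{lem:cond_wlln}) via Chebyshev, applies it with truncation level $b_n=n$, and then removes the truncation by showing $\tfrac{1}{n}\sum_i\E[W_{in}\indicator(|W_{in}|>n)\mid\mathcal F_n]\convp 0$ via conditional H\"older; this last step also requires the conditional Slutsky theorem. Your route instead establishes a conditional von Bahr--Esseen inequality and applies conditional Markov at exponent $1+\delta$ in a single stroke, with no truncation or Slutsky step. The trade-off is that the paper's argument uses only second-moment tools, whereas yours imports a less elementary moment inequality. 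A remark on proving the conditional vBE: rather than symmetrization on an enlarged space, it is cleaner to pass to regular conditional distributions and apply the \emph{unconditional} vBE inequality under each conditional law---exactly the device the paper already uses in its proof of the conditional CLT (Theorem~\ref{thm:conditional-clt}).

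Your treatment of $\delta>1$ is a genuine gap. The claim that ``the variance bound only improves as $\delta$ increases'' is not enough: for $\delta>1$ the best available pointwise bound is $|W|^2\le 1+|W|^{1+\delta}$, which after summing and dividing by $n^2$ leaves a term of order $n^{\delta-1}$ times the assumed $o_p(1)$ quantity, and this need not vanish (regardless of the truncation level $n^a$). In fact the statement itself fails for $\delta>1$: with trivial $\mathcal F_n$, take $X_1,X_2,\dots$ i.i.d.\ mean-zero with finite third moment and positive variance, and set $W_{in}=n^{3/5}X_i$. Then $n^{-3}\sum_i\E|W_{in}|^3=n^{-1/5}\E|X_1|^3\to 0$, yet $n^{-1}\sum_i W_{in}=n^{1/10}\cdot n^{-1/2}\sum_i X_i$ diverges by the CLT. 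The paper's own proof has the same defect: the step $|W|^2\indicator(|W|\le n)\le n^{1-\delta}|W|^{1+\delta}$ is valid only for $\delta\le 1$. So neither argument can be repaired for $\delta>1$; one should simply restrict to $\delta\in(0,1]$, which covers every application in the paper.
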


As a corollary of Theorem~\ref{thm:wlln_cond}, if we choose $\mathcal{F}_n=\{\varnothing,\Omega\}$, we are able to obtain the following version of the weak law of large numbers for triangular arrays.

\begin{corollary}[Unconditional weak law of large numbers] \label{cor:wlln} 
	Let $W_{in}$ be a triangular array of random variables, such that $W_{in}$ are independent for each $n$. If for some $\delta > 0$ we have
	\begin{equation}
		\frac{1}{n^{1+\delta}}\sum_{i=1}^n\E[|W_{in}|^{1+\delta}] \rightarrow 0,
		\label{eq:one-plus-delta-moment-condition}
	\end{equation}
	then 
	\begin{equation}
		\frac{1}{n} \sum_{i = 1}^n (W_{in} - \E[W_{in}]) \convp 0.
	\end{equation}
	The condition~\eqref{eq:one-plus-delta-moment-condition} is satisfied when
	\begin{equation}
		\sup_{1\leq i\leq n}\E[|W_{in}|^{1+\delta}]=o(n^{\delta}).
		\label{eq:wlln_sufficient}
	\end{equation}
\end{corollary}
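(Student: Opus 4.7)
The plan is to apply Theorem~\ref{thm:wlln_cond} with the trivial $\sigma$-algebra $\mathcal{F}_n \equiv \{\varnothing,\Omega\}$ and verify that under this choice every ingredient of the conditional statement collapses to its unconditional counterpart, making the corollary a direct specialization.

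First, I would observe that conditional independence given the trivial $\sigma$-algebra is equivalent to ordinary independence, so the independence hypothesis on the triangular array matches the hypothesis of Theorem~\ref{thm:wlln_cond} in this setting. Moreover $\E[\cdot \mid \mathcal{F}_n] = \E[\cdot]$ almost surely, so the deterministic hypothesis~\eqref{eq:one-plus-delta-moment-condition} is a stronger version of the probabilistic hypothesis~\eqref{eq:wlln_cond_assumption} (deterministic convergence to zero trivially implies convergence in probability to zero). Thus all hypotheses of Theorem~\ref{thm:wlln_cond} are in force.

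Second, I would verify that conditional convergence in probability to a constant reduces to ordinary convergence in probability under the trivial $\sigma$-algebra. By Definitions~\ref{def:conditional-convergence-probability} and~\ref{def:conditional-convergence-distribution}, the conclusion $W_n \mid \mathcal{F}_n \convpp 0$ from Theorem~\ref{thm:wlln_cond} unfolds to $\P[W_n \leq t \mid \mathcal{F}_n] \convp \indicator\{t \geq 0\}$ at every continuity point $t \neq 0$. With $\mathcal{F}_n$ trivial, the left-hand side is the deterministic quantity $\P[W_n \leq t]$, so the requirement becomes $\P[W_n \leq t] \to \indicator\{t \geq 0\}$ for $t \neq 0$, i.e.\ $W_n \convd 0$, which is equivalent to $W_n \convp 0$. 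Applied to $W_n \equiv \frac{1}{n}\sum_{i=1}^n (W_{in} - \E[W_{in}])$, this yields the desired conclusion.

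For the sufficiency of~\eqref{eq:wlln_sufficient}, I would simply bound
\begin{equation*}
\frac{1}{n^{1+\delta}} \sum_{i=1}^n \E[|W_{in}|^{1+\delta}] \leq \frac{1}{n^{\delta}} \sup_{1 \leq i \leq n} \E[|W_{in}|^{1+\delta}] = o(1),
\end{equation*}
recovering~\eqref{eq:one-plus-delta-moment-condition}. There is no serious obstacle here; the only thing to be careful about is making the definitional reductions (trivial conditioning kills randomness in conditional probabilities and expectations) explicit, so that the specialization of Theorem~\ref{thm:wlln_cond} is unambiguous.
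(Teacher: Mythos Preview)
Your proposal is correct and follows exactly the approach the paper indicates: the corollary is obtained from Theorem~\ref{thm:wlln_cond} by specializing to the trivial $\sigma$-algebra $\mathcal{F}_n = \{\varnothing,\Omega\}$. The paper does not spell out the definitional reductions you carefully verify (conditional independence becoming independence, conditional expectations becoming expectations, conditional convergence in probability becoming ordinary convergence in probability), so your write-up is in fact more detailed than the paper's, but the route is the same.
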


\begin{theorem}[Conditional central limit theorem] \label{thm:conditional-clt} 
	Let $W_{in}$ be a triangular array of random variables, such that for each $n,W_{in}$ are independent conditionally on $\mathcal F_n$. Define
	\begin{equation}
		S_n^2 \equiv \sum_{i = 1}^n \V[W_{in} \mid \mathcal F_n],
	\end{equation} 
	and assume $0 < \mathrm{Var}[W_{in}|\mathcal{F}_n]<\infty $ almost surely for all $i=1,\ldots,n$ and for all $n\in\mathbb{N}$. If for some $\delta > 0$ we have
	\begin{equation}
		\frac{1}{S_n^{2+\delta}} \sum_{i = 1}^n \E[|W_{in}-\E[W_{in}|\mathcal{F}_n]|^{2+\delta} \mid \mathcal{F}_n] \convp 0,
		\label{eq:conditional-lyapunov}
	\end{equation}
	then 
	\begin{equation}
		\frac{1}{S_n} \sum_{i = 1}^n (W_{in} - \E[W_{in} \mid \mathcal{F}_n]) \mid \mathcal F_n \convdp N(0,1).
	\end{equation}
	
\end{theorem}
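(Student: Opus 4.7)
The plan is to reduce the conditional Lyapunov CLT to the classical Lyapunov CLT for triangular arrays by a subsequence argument. Introduce the centered variables $\widetilde W_{in} \equiv W_{in} - \E[W_{in}\mid \mathcal F_n]$ and the normalized partial sum $T_n \equiv S_n^{-1}\sum_{i=1}^n \widetilde W_{in}$. By Definition~\ref{def:conditional-convergence-distribution} and continuity of the standard normal CDF $\Phi$, the desired conclusion $T_n \mid \mathcal F_n \convdp N(0,1)$ is equivalent to showing $F_n(t) \equiv \P[T_n \leq t \mid \mathcal F_n] \convp \Phi(t)$ for each fixed $t \in \R$, where $F_n(t)$ is a bona fide $\mathcal F_n$-measurable random variable because $\R$-valued random variables admit regular conditional distributions.

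Fix $t \in \R$. Convergence in probability is characterized by the property that every subsequence contains a further almost-surely convergent subsequence, so take any subsequence $n_k$. The hypothesis \eqref{eq:conditional-lyapunov} asserts $L_n \equiv S_n^{-(2+\delta)}\sum_i \E[|\widetilde W_{in}|^{2+\delta}\mid \mathcal F_n] \convp 0$; hence along $n_k$ there is a further subsequence $n_{k_j}$ with $L_{n_{k_j}} \to 0$ almost surely. On the full-measure event where this a.s. convergence holds (and where $0 < \V[W_{in}\mid \mathcal F_n] < \infty$ for all $i,n$, which is given by hypothesis so in particular $S_n > 0$), fix $\omega$ and regard $\P[\,\cdot\, \mid \mathcal F_{n_{k_j}}](\omega)$ as a deterministic probability measure.

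Under this deterministic measure the variables $\{\widetilde W_{i,n_{k_j}}\}_{i=1}^{n_{k_j}}$ form a row of a triangular array of independent, mean-zero random variables with variances summing to $S_{n_{k_j}}^2(\omega) > 0$ and satisfying the deterministic Lyapunov condition $L_{n_{k_j}}(\omega) \to 0$. Applying the classical Lyapunov CLT for triangular arrays (which requires only row-wise independence and the Lyapunov moment condition, and imposes no requirement that the rows share a common underlying probability space) yields weak convergence of the conditional law of $T_{n_{k_j}}$ to $N(0,1)$; in particular $F_{n_{k_j}}(t)(\omega) \to \Phi(t)$ by continuity of $\Phi$ at $t$. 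Since this holds almost surely along the further subsequence, we conclude $F_n(t) \convp \Phi(t)$, proving the theorem. The main (and fairly mild) obstacle is the subsequence bookkeeping that bridges the stochastic Lyapunov condition \eqref{eq:conditional-lyapunov} with the deterministic Lyapunov condition required by the classical CLT; all other steps are routine once the correct framing is in place.
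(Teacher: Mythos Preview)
Your proposal is correct and follows essentially the same approach as the paper: pass to a further subsequence along which the Lyapunov ratio converges almost surely, fix $\omega$, invoke regular conditional distributions, and apply the classical Lyapunov CLT row by row before re-assembling via the subsequence characterization of convergence in probability. The paper additionally invokes an explicit embedding lemma (its Lemma~\ref{lem:embedding}) to place all rows on a common probability space before using the subsequence argument, a technical step you handle implicitly by noting that the classical Lyapunov CLT does not require rows to share a common space.
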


\begin{lemma}[Conditional convergence implies quantile convergence] \label{lem:conditional-convergence-to-quantile} 
	Let $W_n$ be a sequence of random variables and $\alpha \in (0,1)$. If $W_n \mid \mathcal F_n \convdp W$ for some random variable $W$ whose CDF is continuous and strictly increasing at $\mathbb Q_{\alpha}[W]$, then
	\begin{equation}
		\Q_{\alpha}[W_n \mid \mathcal F_n] \convp \Q_{\alpha}[W].
	\end{equation}
\end{lemma}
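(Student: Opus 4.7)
The plan is to exploit the standard CDF characterization of quantiles. For any right-continuous CDF $G$, one has $\Q_\alpha[G] \leq t$ if and only if $G(t) \geq \alpha$, and hence $\Q_\alpha[G] > t$ if and only if $G(t) < \alpha$. Since for almost every realization the map $t \mapsto \P[W_n \leq t \mid \mathcal F_n]$ is a bona fide CDF (monotone and right-continuous), this equivalence applies to the conditional distribution as well. Let $F$ denote the CDF of $W$, $F_n(t) \equiv \P[W_n \leq t \mid \mathcal F_n]$, and $q \equiv \Q_\alpha[W]$.

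First I would fix $\epsilon > 0$ and, if necessary, replace it by a slightly smaller $\epsilon' \in (0, \epsilon]$ so that both $q - \epsilon'$ and $q + \epsilon'$ are continuity points of $F$; this is possible because $F$ has at most countably many discontinuities. Continuity of $F$ at $q$ gives $F(q) = \alpha$, and strict monotonicity of $F$ at $q$ gives $F(q - \epsilon') < \alpha < F(q + \epsilon')$. I would then invoke Definition~\ref{def:conditional-convergence-distribution} at the continuity points $q \pm \epsilon'$ to obtain
\begin{equation*}
F_n(q - \epsilon') \convp F(q - \epsilon') < \alpha \quad \text{and} \quad F_n(q + \epsilon') \convp F(q + \epsilon') > \alpha,
\end{equation*}
so that $\P[F_n(q - \epsilon') < \alpha] \to 1$ and $\P[F_n(q + \epsilon') \geq \alpha] \to 1$.

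The quantile-CDF equivalence then translates these two high-probability events into
\begin{equation*}
\Q_\alpha[W_n \mid \mathcal F_n] > q - \epsilon' \quad \text{and} \quad \Q_\alpha[W_n \mid \mathcal F_n] \leq q + \epsilon',
\end{equation*}
respectively, each with probability tending to one. Intersecting and using $\epsilon' \leq \epsilon$ yields $\P[|\Q_\alpha[W_n \mid \mathcal F_n] - q| \leq \epsilon] \to 1$, which is exactly $\Q_\alpha[W_n \mid \mathcal F_n] \convp q$ since $\epsilon > 0$ was arbitrary.

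I do not anticipate a serious obstacle here. The only minor point requiring care is the choice of perturbation $\epsilon'$ at continuity points of $F$ (needed to legitimately apply conditional convergence in distribution) and the verification that the CDF-quantile duality holds pointwise for the conditional CDF $F_n$, which follows from its a.s.\ right-continuity via the same argument as in the unconditional case. Both are routine.
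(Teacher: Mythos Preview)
Your proof is correct and takes a genuinely different, more elementary route than the paper's. The paper argues by a subsequence transfer: it invokes the conditional Polya theorem (Theorem~\ref{thm:cond_polya}) to upgrade pointwise conditional convergence to uniform convergence, passes to a sub-subsequence along which this holds almost surely (Lemma~\ref{lem:sub_subseq}), applies the unconditional quantile convergence result \citep[Lemma 11.2.1]{RL2005} for each fixed $\omega$ via a regular conditional distribution, and then transfers back to convergence in probability via the sub-subsequence characterization. Your argument bypasses all of this machinery by working directly with the CDF--quantile duality at just two carefully chosen points $q\pm\epsilon'$, using only the definition of $\convdp$ at continuity points. This is shorter and avoids invoking Polya, the embedding lemma, and the cited unconditional lemma; the paper's route, on the other hand, illustrates a reusable template (reduce to a.s.\ convergence, apply the classical result pointwise) that it also deploys in the proof of the conditional CLT. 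One small remark: your strict inequality $\Q_\alpha[W_n\mid\mathcal F_n] > q-\epsilon'$ does follow from right-continuity of the regular conditional CDF, but only the weak inequality $\geq q-\epsilon'$ is needed for the conclusion, and that requires only monotonicity.
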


\subsection{Discussion}

The above definitions and results on conditional convergence are not particularly surprising, and related results are present in the existing literature. Nevertheless, we have not found any of the above results stated in the literature in exactly this form. Here we discuss the relationships of our definitions and results with existing ones.

Notions of conditional convergence in probability and in distribution have been explicitly defined by \citet{Nowak2005}. However, these notions require a single conditioning $\sigma$-algebra as well as almost sure convergences of conditional probabilities, whereas in Definitions~\ref{def:conditional-convergence-distribution} and~\ref{def:conditional-convergence-probability} we allow the conditioning $\sigma$-algebra to change with $n$ and for the conditional probabilities to converge in probability. Our Definition~\ref{def:conditional-convergence-distribution} can be viewed as formalizing the notion of conditional convergence in distribution implicitly used by \citet{Wang2020b}. Related notions of conditional convergence in distribution allowing for changing conditioning $\sigma$-algebra are present implicitly in the works of \citet{Dedecker2002} and \citet{bulinski2017conditional}, though these are based on the convergence of conditional characteristic functions as opposed to conditional cumulative distribution functions. 

Turning to the convergence results themselves, we were not able to find conditional Polya's theorem (Theorem~\ref{thm:cond_polya}) in the literature. Conditional Slutsky's theorem (Theorem~\ref{thm:cond_slutsky}) is a generalization of \citet[Lemma 5]{Wang2020b} to the case when $a_n$ is not necessarily independent of $\mathcal F_n$ and $b_n \neq 0$. Versions of the conditional law of large numbers are given by \cite{Majerek2005a} and \cite{PrakasaRao2009}, but these involve a single conditioning $\sigma$-algebra and do not allow for triangular arrays, unlike Theorem~\ref{thm:wlln_cond}. Remarkably, we could not find even the unconditional triangular array law of large numbers (Corollary~\ref{cor:wlln}) in the literature; existing results either assume a second-moment condition or use truncation \citep[Theorems 2.2.4 and 2.2.6, respectively]{Durrett2010} instead of a $1+\delta$ moment condition or are not applicable to triangular arrays \citep[Lemma 19]{Shah2018}. As for central limit theorems, \citet{Grzenda2008, PrakasaRao2009, Yuan2014} give non-triangular array versions of the conditional central limit theorem that require a single conditioning $\sigma$-algebra, unlike Theorem~\ref{thm:conditional-clt}. Versions of the conditional central limit theorem appropriate for varying conditioning $\sigma$-algebras and triangular arrays are given by \citet{Dedecker2002, bulinski2017conditional}, those these involve different notions of conditional convergence in distribution. Results similar to Theorem~\ref{thm:conditional-clt} for $\delta = 1$ are presented in a recent line of work on sample-splitting-based inference \citep{Kim2020a, Shekhar2022a, Shekhar2022}; these can be proved via the Berry-Esseen theorem. Finally, we note that our result that conditional convergence in distribution implies in-probability quantile convergence (Lemma~\ref{lem:conditional-convergence-to-quantile}) is a generalization of \citet[Lemma 3]{Wang2020b} to general conditioning $\sigma$-algebras.

\section{Proofs for Section~\ref{sec:conv-to-normal}}

\subsection{Proofs of main results}

\begin{proof}[Proofs of Theorem~\ref{thm:normal-limit} and Corollary~\ref{cor:asymptotic-equivalence}] 
We prove instead the stronger Theorem~\ref{thm:normal-limit-stronger} and Corollary~\ref{cor:asymptotic-equivalence-stronger} below.
\end{proof}

\begin{theorem} \label{thm:normal-limit-stronger}
	Let $\law_n$ be a sequence of laws and $\lawhat_n$ be a sequence of estimates. Suppose either of the following two sets of assumptions is satisfied: 
	\begin{enumerate}
		\item The nondegeneracy conditions ~\eqref{eq:var-bounded-below} and~\eqref{eq:variance-bounds} and the conditional Lyapunov condition~\eqref{eq:lyapunov-condition} hold.
		\item The assumptions of Theorem~\ref{thm:equivalence} hold.
	\end{enumerate}
	Then, the normalized $T_n^{\dCRThat}(\srxk, \srx, \sry, \srz)$ converges conditionally to a normal distribution~\eqref{eq:conditional-convergence} and therefore the $\dCRThat$ critical value $C^{\dCRThat}_n(\srx, \sry, \srz)$ converges to $z_{1-\alpha}$~\eqref{eq:critical-value-convergence}.
\end{theorem}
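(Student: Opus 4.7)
The plan is to apply the conditional central limit theorem (Theorem~\ref{thm:conditional-clt}) with the $\sigma$-algebra $\mathcal{F}_n \equiv \sigma(\srx, \sry, \srz)$ and summands
\begin{equation*}
W_{in} \equiv (\srxk_i - \widehat\mu_{n,x}(\srz_i))(\sry_i - \widehat\mu_{n,y}(\srz_i)).
\end{equation*}
Because $\srxk_i \mid \srx, \sry, \srz \sim \lawhat_n(\srx_i \mid \srz_i)$ independently across $i$ with conditional mean $\widehat\mu_{n,x}(\srz_i)$, the $W_{in}$ are independent given $\mathcal{F}_n$ with $\E[W_{in} \mid \mathcal{F}_n] = 0$ and $\V[W_{in} \mid \mathcal{F}_n] = (\sry_i - \widehat\mu_{n,y}(\srz_i))^2 \V_{\lawhat_n}[\srx_i \mid \srz_i]$. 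Hence $\sum_{i=1}^n \V[W_{in} \mid \mathcal{F}_n] = n(\widehat S_n^{\dCRThat})^2$ and
\begin{equation*}
\frac{T_n^{\dCRThat}(\srxk, \srx, \sry, \srz)}{\widehat S_n^{\dCRThat}} = \frac{1}{\sqrt{\sum_{i=1}^n \V[W_{in} \mid \mathcal{F}_n]}}\sum_{i=1}^n W_{in},
\end{equation*}
which is precisely the object for which Theorem~\ref{thm:conditional-clt} provides a conditional $N(0,1)$ limit (with (NDG2) guaranteeing strictly positive, finite conditional variances). Once~\eqref{eq:conditional-convergence} is established, the critical-value convergence~\eqref{eq:critical-value-convergence} follows from Lemma~\ref{lem:conditional-convergence-to-quantile}, since the standard normal CDF is continuous and strictly increasing at $z_{1-\alpha}$.

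The remaining task is to verify the conditional Lyapunov condition~\eqref{eq:conditional-lyapunov}, which in the present notation reads
\begin{equation*}
\frac{1}{n^{1+\delta/2}(\widehat S_n^{\dCRThat})^{2+\delta}}\sum_{i=1}^n |\sry_i - \widehat\mu_{n,y}(\srz_i)|^{2+\delta} \E_{\lawhat_n}[|\srxk_i - \widehat\mu_{n,x}(\srz_i)|^{2+\delta} \mid \srx, \srz] \convp 0.
\end{equation*}
Under case 1 this is essentially immediate: the numerator tends to zero in probability by (Lyap-1), while (NDG1) forces the denominator to be at least $\epsilon^{1+\delta/2}$ on an event of probability approaching one, and combining these via a routine probability-bound argument gives the claim. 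Under case 2 I must instead derive (Lyap-1) and (NDG1) from the hypotheses of Theorem~\ref{thm:equivalence}. For (Lyap-1) I use $|\sry_i - \widehat\mu_{n,y}(\srz_i)|^{2+\delta} \leq 2^{1+\delta}(|\sry_i - \mu_{n,y}(\srz_i)|^{2+\delta} + |\widehat\mu_{n,y}(\srz_i) - \mu_{n,y}(\srz_i)|^{2+\delta})$: the oracle-residual piece is handled by replacing $|\sry_i - \mu_{n,y}(\srz_i)|^{2+\delta}$ by $\E_{\law_n}[|\sry_i - \mu_{n,y}(\srz_i)|^{2+\delta} \mid \srz_i]$, with the difference negligible by Theorem~\ref{thm:wlln_cond} conditional on $\srz$, after which (Lyap-2) applies directly; the estimation-error piece is absorbed by $\widehat E'_{n,y} = o_{\law_n}(1)$ from~\eqref{eq:sp1prime} together with a uniform-in-$i$ bound on $\E_{\lawhat_n}[|\srxk_i - \widehat\mu_{n,x}(\srz_i)|^{2+\delta} \mid \srx, \srz]$ extracted from (Lyap-2). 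For (NDG1) I write $(\widehat S_n^{\dCRThat})^2 = n^{-1}\sum_i \V_{\lawhat_n}[\srx_i \mid \srz_i](\sry_i - \widehat\mu_{n,y}(\srz_i))^2$, swap $\V_{\lawhat_n}$ for $\V_{\law_n}$ using variance consistency~\eqref{eq:variance-consistency} (with a Cauchy--Schwarz step to isolate the product structure), swap $\widehat\mu_{n,y}$ for $\mu_{n,y}$ using $E'_{n,y} = o_{\law_n}(1)$, and apply the unconditional law of large numbers (Corollary~\ref{cor:wlln}) to the remaining average, whose mean $\E_{\law_n}[(\prx - \mu_{n,x}(\prz))^2(\pry - \mu_{n,y}(\prz))^2]$ is bounded below by $c_1 > 0$ via (SP2).

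The main obstacle I expect is the estimation-error piece of (Lyap-1) under case 2: namely, converting a $(2+\delta)$-th moment bound on $\widehat\mu_{n,y} - \mu_{n,y}$ (of the $\widehat E'_{n,y}$ flavour, which is phrased against $\V_{\lawhat_n}[\srx_i \mid \srz_i]$) into a control on the product with $\E_{\lawhat_n}[|\srxk_i - \widehat\mu_{n,x}(\srz_i)|^{2+\delta} \mid \srx, \srz]$ without sacrificing the $n^{-(1+\delta/2)}$ normalization. Either an auxiliary application of Theorem~\ref{thm:wlln_cond} conditional on $(\srx, \srz)$ or a H\"older-style decoupling of the two random factors will be needed; everything else reduces to bookkeeping with the conditional-convergence machinery of Appendix~\ref{sec:conditional-convergence-results}.
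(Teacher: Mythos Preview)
Your treatment of Case~1 matches the paper's proof exactly: apply Theorem~\ref{thm:conditional-clt} to the summands $W_{in}$ and combine \eqref{eq:var-bounded-below} with \eqref{eq:lyapunov-condition} to get \eqref{eq:conditional-lyapunov}.

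For Case~2, however, your plan has a genuine gap, and the paper takes a different route precisely to avoid it. You try to reduce Case~2 to Case~1 by deriving \eqref{eq:lyapunov-condition} from the assumptions of Theorem~\ref{thm:equivalence}, splitting $|\sry_i-\widehat\mu_{n,y}(\srz_i)|^{2+\delta}$ into an oracle piece and an estimation-error piece. The oracle piece is fine (and in fact you do not need the conditional WLLN there: taking $\E[\,\cdot \mid \srx,\srz]$ and invoking conditional independence gives exactly the \eqref{eq:lyapunov-condition-2} expression, after which Lemma~\ref{lem:conditional-expectation-to-unconditional} suffices). The trouble is the estimation-error piece
\[
\frac{1}{n^{1+\delta/2}}\sum_{i=1}^n |\widehat\mu_{n,y}(\srz_i)-\mu_{n,y}(\srz_i)|^{2+\delta}\,\E_{\lawhat_n}\big[|\srxk_i-\widehat\mu_{n,x}(\srz_i)|^{2+\delta}\mid \srx,\srz\big].
\]
Neither of the two ingredients you propose is available. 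The assumption $\widehat E'_{n,y}=o_{\law_n}(1)$ in \eqref{eq:sp1prime} controls only a \emph{weighted second moment} of $\widehat\mu_{n,y}-\mu_{n,y}$; the hypotheses of Theorem~\ref{thm:equivalence} give no control whatsoever on its $(2+\delta)$-th moment. And \eqref{eq:lyapunov-condition-2} is a statement about an average tending to zero, not a uniform-in-$i$ bound on the factors $\E_{\lawhat_n}[|\srxk_i-\widehat\mu_{n,x}(\srz_i)|^{2+\delta}\mid \srx,\srz]$; no H\"older argument will extract one from the other without additional assumptions. So the ``obstacle'' you flag is not a technicality but an actual hole: \eqref{eq:lyapunov-condition} simply cannot be verified from the stated hypotheses.

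The paper sidesteps this entirely by decomposing the \emph{unnormalized} statistic before applying the CLT:
\[
T_n^{\dCRThat}(\srxk,\srx,\sry,\srz)=\underbrace{\frac{1}{\sqrt n}\sum_i(\srxk_i-\widehat\mu_{n,x}(\srz_i))(\sry_i-\mu_{n,y}(\srz_i))}_{I_n}
-\underbrace{\frac{1}{\sqrt n}\sum_i(\srxk_i-\widehat\mu_{n,x}(\srz_i))(\widehat\mu_{n,y}(\srz_i)-\mu_{n,y}(\srz_i))}_{J_n}.
\]
The point is that $J_n$ has conditional mean zero and conditional second moment $(\widehat E'_{n,y})^2$, so $J_n\convp 0$ follows from $\widehat E'_{n,y}=o_{\law_n}(1)$ via Lemma~\ref{lem:conditional-expectation-to-unconditional}---only second moments are needed. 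The conditional CLT is then applied to $I_n$, whose summands involve the \emph{true} $\mu_{n,y}$; the Lyapunov check for $I_n$ reduces (after conditioning on $(\srx,\srz)$) exactly to \eqref{eq:lyapunov-condition-2}. Lemma~\ref{lem:cond-sample-variance-equivalence2} supplies $(S_n^{\dCRThat})^2/(\widehat S_n^{\dCRThat})^2\convp 1$ and the lower bound on $(S_n^{\dCRThat})^2$, and conditional Slutsky (Theorem~\ref{thm:cond_slutsky}) finishes. This decomposition is the missing idea: it trades a $(2+\delta)$-moment requirement on $\widehat\mu_{n,y}-\mu_{n,y}$ for a second-moment one, which is all that \eqref{eq:sp1prime} provides.
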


\begin{proof}

It suffices to prove the conditional convergence in distribution~\eqref{eq:conditional-convergence}, as the convergence of the critical value~\eqref{eq:critical-value-convergence} follows from Lemma~\ref{lem:conditional-convergence-to-quantile} because the normal distribution has continuous and strictly increasing CDF. We prove the conditional convergence~\eqref{eq:conditional-convergence} for each of the two sets of assumptions.

\paragraph*{Assumption 1}

We proceed by applying the conditional CLT (Theorem~\ref{thm:conditional-clt}) with 
\begin{equation}
W_{in} \equiv \frac{1}{\sqrt{n}}(\srxk_i - \widehat \mu_{n,x}(\srz_i))(\sry_i - \widehat \mu_{n,y}(\srz_i))
\end{equation}
and $\mathcal F_n \equiv \sigma(\srx, \sry, \srz)$. To verify the assumptions of the conditional CLT, note first that $W_{in}$ are independent conditionally on $\mathcal F_n$ by construction and satisfy $0 < \V[W_{in} \mid \mathcal F_n] < \infty$ by the nondegeneracy assumption~\eqref{eq:variance-bounds}. Next, recalling definition~\eqref{eq:conditional-variance-def}, we have
\begin{equation*}
S^2_n \equiv \sum_{i = 1}^n \V_{\lawhat_n}[W_{in}\mid \srx,\sry,\srz] \equiv (\widehat S_n^{\dCRThat})^2,
\end{equation*}
so that
\begin{align*}
		&\frac{1}{S_n^{2+\delta}} \sum_{i = 1}^n \E_{\lawhat_n}[|W_{in}|^{2+\delta} \mid \srx,\sry,\srz] \\
		&\quad= \frac{1}{(\widehat S_n^{\dCRThat})^{2+\delta}} \cdot \frac{1}{n^{1+\delta/2}} \sum_{i=1}^n |\sry_i-\widehat\mu_{n,y}(\srz_i)|^{2+\delta}\E_{\lawhat_n}\left[\left|\srxk_i-\widehat\mu_{n,x}(\srz_i)\right|^{2+\delta}\mid \srx,\srz\right].
\end{align*}
This quantity converges to zero in probability due to the nondegeneracy condition~\eqref{eq:var-bounded-below} and the Lyapunov condition~\eqref{eq:lyapunov-condition}. Hence, the conditional CLT gives the desired conditional convergence~\eqref{eq:conditional-convergence}.
 
 \paragraph*{Assumption 2}

We begin by decomposing $T_n^{\dCRThat}(\srxk, \srx, \sry, \srz)$:
 \begin{align}
 	T_n^{\dCRThat}(\srxk, \srx, \sry, \srz) &= \frac{1}{\sqrt n} \sum_{i = 1}^n (\srxk_i - \widehat \mu_{n,x}(\srz_i))(\sry_i - \mu_{n,y}(\srz_i)) \\
 	&\quad -\frac{1}{\sqrt n}\sum_{i = 1}^n (\srxk_i - \widehat \mu_{n,x}(\srz_i))(\widehat \mu_{n,y}(\srz_i) - \mu_{n,y}(\srz_i))\\
	&\equiv I_n - J_n.
 \end{align}
We claim that $J_n \convp 0$. Indeed, from $\E[J_n \mid \srx,\sry,\srz] = 0$ and the assumption $\widehat E'_{n, y} \convp 0$~\eqref{eq:sp1prime} it follows
\begin{equation*}
\E[J_n^2 \mid \srx,\sry,\srz] = \V[J_n \mid \srx,\sry,\srz] \equiv (\widehat E'_{n,y})^2 \convp 0. 
\end{equation*}
Hence by Lemma \ref{lem:conditional-expectation-to-unconditional} we have $J_n^2 \convp 0$, so that $J_n \convp 0$, as claimed. Next, we claim that an appropriately rescaled $I_n$ converges conditionally to $N(0,1)$. To this end, we apply the conditional CLT (Theorem~\ref{thm:conditional-clt}) with
 \begin{equation}
 	W_{in} \equiv \frac{1}{\sqrt{n}}(\srxk_i - \widehat \mu_{n,x}(\srz_i))(\sry_i - \mu_{n,y}(\srz_i))
 \end{equation}
 and $\mathcal{F}_n \equiv \sigma(\srx,\sry,\srz)$. To verify the assumptions of the conditional CLT, note first that $W_{in}$ are independent conditionally on $\mathcal F_n$ by construction and satisfy $0 < \V[W_{in} \mid \mathcal F_n] < \infty$ by the nondegeneracy assumption~\eqref{eq:variance-bounds}. Next, observe that
 \begin{equation*}
	S^2_n \equiv \sum_{i = 1}^n \V[W_{in}\mid \srx,\sry,\srz] =  \frac{1}{n}\sum_{i = 1}^n \V_{\lawhat_n}[\srx_i|\srz_i](\sry_i - \mu_{n,y}(\srz_i))^2 \equiv(S_n^{\dCRThat})^2,
 \end{equation*}
so that
\begin{align*}
	&\frac{1}{S_n^{2+\delta}} \sum_{i = 1}^n \E[|W_{in}|^{2+\delta} \mid \srx,\sry,\srz] \\
	&\quad= \frac{1}{(S_n^{\dCRThat})^{2+\delta}} \cdot \frac{1}{n^{1+\delta/2}} \sum_{i=1}^n |\sry_i-\mu_{n,y}(\srz_i)|^{2+\delta}\E_{\lawhat_n}[|\srxk_i-\widehat\mu_{n,x}(\srz_i)|^{2+\delta}\mid \srx,\srz].
\end{align*}
Since the first factor is stochastically bounded (conclusion~\eqref{eq:bounded-away-from-zero} from Lemma~\ref{lem:cond-sample-variance-equivalence2}), it suffices to show that the second factor converges to zero in probability. To this end, by Lemma~\ref{lem:conditional-expectation-to-unconditional} it suffices to note that $\law_n \in \nulllaws_n$ and the Lyapunov assumption~\eqref{eq:lyapunov-condition-2} give
\begin{equation}
\begin{split}
&\E\left[\left.\frac{1}{n^{1+\delta/2}} \sum_{i=1}^n |\sry_i-\mu_{n,y}(\srz_i)|^{2+\delta}\E_{\lawhat_n}[|\srxk_i-\widehat\mu_{n,x}(\srz_i)|^{2+\delta}\mid \srx,\srz] \right| \srx, \srz \right] \\
&\quad= \frac{1}{n^{1+\delta/2}} \sum_{i=1}^n \E_{\law_n}[|\sry_i-\mu_{n,y}(\srz_i)|^{2+\delta}|\srz_i]\E_{\lawhat_n}[|\srxk_i-\widehat\mu_{n,x}(\srz_i)|^{2+\delta}\mid \srx,\srz] \\
&\quad \convp 0.
\end{split}
\end{equation}
Therefore, we may apply the conditional CLT to obtain that 
\begin{equation*}
\frac{1}{S_{n}^{\dCRThat}}I_n \mid \srx, \sry, \srz \convdp N(0,1).
\end{equation*}
Furthermore, equation~\eqref{eq:dcrt-var-equivalence} from Lemma~\ref{lem:cond-sample-variance-equivalence2} gives $S_n^{\dCRThat}/\widehat S_n^{\dCRThat} \convp 1$, so by conditional Slutsky's theorem (Theorem~\ref{thm:cond_slutsky}) we conclude that
\begin{equation}
\frac{1}{\widehat S_{n}^{\dCRThat}}T_n^{\dCRThat}(\srxk, \srx, \sry, \srz) = \left.\frac{S_n^{\dCRThat}}{\widehat S_n^{\dCRThat}}\left(\frac{I_n}{S_{n}^{\dCRThat}} - \frac{J_n}{S_{n}^{\dCRThat}}\right)\right| \srx, \sry, \srz \convdp N(0,1),
\end{equation}
as desired.
\end{proof}

\begin{corollary}
	\label{cor:asymptotic-equivalence-stronger}
Under the assumptions of Theorem \ref{thm:normal-limit-stronger}, if the non-accumulation condition~\eqref{eq:non-accumulation} is satisfied then the $\dCRThat$ is asymptotically equivalent to the $\MXtwohat$ $F$-test.
\end{corollary}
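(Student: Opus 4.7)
The plan is to show that the two tests can disagree only on an event whose probability tends to zero, by combining the critical-value convergence from Theorem~\ref{thm:normal-limit-stronger} with the non-accumulation hypothesis~\eqref{eq:non-accumulation}. Write $T^*_n \equiv (\widehat S_n^{\dCRThat})^{-1} T_n^{\dCRThat}(\srx,\sry,\srz)$ for the common normalized statistic. Since $\phi_n^{\dCRThat} = \indicator(T^*_n > C_n^{\dCRThat})$ and $\phi_n^{\MXtwohat} = \indicator(T^*_n > z_{1-\alpha})$, a disagreement forces $T^*_n$ to lie in the (possibly degenerate) closed interval with endpoints $C_n^{\dCRThat}$ and $z_{1-\alpha}$. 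This yields the pointwise containment
\[
\{\phi_n^{\dCRThat} \neq \phi_n^{\MXtwohat}\} \subseteq \{|T^*_n - z_{1-\alpha}| \leq |C_n^{\dCRThat} - z_{1-\alpha}|\}.
\]

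Next, for any fixed $\delta > 0$ I would split on whether $|C_n^{\dCRThat} - z_{1-\alpha}|$ exceeds $\delta$, giving
\[
\P_{\law_n}[\phi_n^{\dCRThat} \neq \phi_n^{\MXtwohat}] \leq \P_{\law_n}[|T^*_n - z_{1-\alpha}| \leq \delta] + \P_{\law_n}[|C_n^{\dCRThat} - z_{1-\alpha}| > \delta].
\]
The second term vanishes as $n \to \infty$ by the critical-value convergence~\eqref{eq:critical-value-convergence} supplied by Theorem~\ref{thm:normal-limit-stronger}. For the first term I would appeal to the non-accumulation condition~\eqref{eq:non-accumulation} — read as a statement about the normalized statistic $T^*_n$, which is the object actually being compared to the normal quantile $z_{1-\alpha}$ — to conclude that $\limsup_n \P_{\law_n}[|T^*_n - z_{1-\alpha}| \leq \delta] \to 0$ as $\delta \to 0$.

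Chaining these two pieces, for each fixed $\delta > 0$,
\[
\limsup_{n \to \infty} \P_{\law_n}[\phi_n^{\dCRThat} \neq \phi_n^{\MXtwohat}] \leq \limsup_{n \to \infty} \P_{\law_n}[|T^*_n - z_{1-\alpha}| \leq \delta],
\]
and taking $\delta \to 0$ drives the right-hand side to zero, which is the desired conclusion.

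There is no serious technical obstacle here; the proof is essentially a two-line bookkeeping argument once one has the containment. The one subtle point worth flagging is the order of limits: the right-hand side of the bound above must first be controlled uniformly in $n$ at each fixed $\delta$ via critical-value convergence, and only then is $\delta$ sent to zero using non-accumulation. The other minor issue is reconciling the fact that~\eqref{eq:non-accumulation} is written with the unnormalized $T_n^{\dCRThat}$, whereas both tests threshold the normalized $T^*_n$; I would read~\eqref{eq:non-accumulation} as a non-accumulation statement for the statistic being thresholded, which is the only way it meshes with the normal critical value $z_{1-\alpha}$.
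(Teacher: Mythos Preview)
Your proof is correct and follows essentially the same route as the paper: the paper packages precisely this argument into an auxiliary Lemma (the ``asymptotic equivalence of tests'' lemma) and then invokes it with $T_n = (\widehat S_n^{\dCRThat})^{-1} T_n^{\dCRThat}$ and $C_n = C_n^{\dCRThat}$. Your observation about the normalization mismatch in~\eqref{eq:non-accumulation} is also apt---the paper silently reads that condition as a statement about the normalized statistic when applying its lemma, just as you do.
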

\begin{proof}
The desired equivalence is given by Lemma~\ref{lem:equivalence-lemma} with $T_n(\srx, \sry, \srz) \equiv (\widehat S_{n}^{\dCRThat})^{-1} T_n^{\dCRThat}$ and $C_n(\srx, \sry, \srz) \equiv C_n^{\dCRThat}(\srx, \sry, \srz)$, since $\phi_n^1$ and $\phi_n^2$ in the lemma statement reduce to $\dCRThat$ and the $\MXtwohat$ $F$-test, respectively. This lemma is applicable because the convergence of the critical value~\eqref{eq:convergence-of-critical-value} is given by Theorem~\ref{thm:normal-limit-stronger} and the non-accumulation condition~\eqref{eq:non-accumulation-app} is assumed.
\end{proof}

\subsection{Auxiliary lemmas}

\begin{lemma}
	\label{lem:conditional-expectation-to-unconditional}
	Let $W_n$ be a sequence of nonnegative random variables and let $\mathcal F_n$ be a sequence of $\sigma$-algebras. If $\E[W_n \mid \mathcal{F}_n]\convp 0$, then $ W_n \convp 0$.
\end{lemma}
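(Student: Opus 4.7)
The plan is to use conditional Markov's inequality together with a standard two-event decomposition of $\{W_n \geq \epsilon\}$ based on whether the conditional expectation is already small.

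Let $U_n \equiv \E[W_n \mid \mathcal{F}_n]$, so by assumption $U_n \convp 0$. Fix $\epsilon > 0$; the goal is to show $\P[W_n \geq \epsilon] \to 0$. First I would split on a threshold $\delta > 0$:
\begin{equation*}
\P[W_n \geq \epsilon] \leq \P[W_n \geq \epsilon,\ U_n \leq \delta] + \P[U_n > \delta].
\end{equation*}
The second term vanishes as $n \to \infty$ for any fixed $\delta > 0$ because $U_n \convp 0$. For the first term, I would tower and apply the conditional Markov inequality (which is valid precisely because $W_n \geq 0$):
\begin{equation*}
\P[W_n \geq \epsilon,\ U_n \leq \delta] = \E\bigl[\indicator\{U_n \leq \delta\}\,\P[W_n \geq \epsilon \mid \mathcal{F}_n]\bigr] \leq \E\!\left[\indicator\{U_n \leq \delta\}\cdot \frac{U_n}{\epsilon}\right] \leq \frac{\delta}{\epsilon}.
\end{equation*}

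Combining gives $\limsup_{n \to \infty} \P[W_n \geq \epsilon] \leq \delta/\epsilon$. Since $\delta > 0$ was arbitrary, sending $\delta \to 0$ yields $\P[W_n \geq \epsilon] \to 0$, which is exactly $W_n \convp 0$.

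The proof is short and uses only two ingredients: the nonnegativity of $W_n$ (to legitimize conditional Markov) and the fact that $U_n \convp 0$ makes the ``bad event'' $\{U_n > \delta\}$ asymptotically negligible. There is no real obstacle here; the only subtlety worth flagging is that nonnegativity is essential — without it, $\E[W_n \mid \mathcal{F}_n]$ could be small through cancellation while $|W_n|$ remains large, so conditional Markov could not be invoked directly on $W_n$.
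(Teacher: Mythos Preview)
Your proof is correct. The paper takes a slightly different but equally short route: instead of splitting on a threshold for $U_n$, it truncates $W_n$ itself, writing
\[
\P[W_n \geq \epsilon] = \P[W_n \wedge \epsilon \geq \epsilon] \leq \epsilon^{-1}\E[W_n \wedge \epsilon] = \epsilon^{-1}\E\bigl[\E[W_n \wedge \epsilon \mid \mathcal F_n]\bigr] \leq \epsilon^{-1}\E\bigl[\E[W_n \mid \mathcal F_n] \wedge \epsilon\bigr],
\]
and then invokes the bounded convergence theorem (the integrand is bounded by $\epsilon$ and tends to $0$ in probability). Your two-event decomposition on $\{U_n \leq \delta\}$ achieves the same end via conditional Markov plus a $\delta \to 0$ limit, avoiding any appeal to dominated/bounded convergence. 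Both arguments hinge on nonnegativity in the same essential way; neither is more general, and the choice between them is a matter of taste.
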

\begin{proof}
	For any $\epsilon>0$, we have
	\begin{align}
		\P[W_n \geq \epsilon] &= \P[W_n \wedge \epsilon \geq \epsilon] \\
		&\leq \epsilon^{-1} \E[W_n \wedge \epsilon] \\
		&= \epsilon^{-1} \E[\E[W_n \wedge \epsilon\mid \mathcal{F}_n]]\\
		&\leq \epsilon^{-1} \E[\E[W_n \mid \mathcal{F}_n]\wedge \epsilon] \to 0,
	\end{align}
	where the last convergence is due to bounded convergence theorem and the assumption $\E[W_n \mid \mathcal{F}_n]\convp 0$. 
\end{proof}

\begin{lemma}[Asymptotic equivalence of tests] \label{lem:equivalence-lemma}
	
Consider two hypothesis tests based on the same test statistic $T_n(\srx, \sry, \srz)$ but different critical values:
\begin{equation*}
\phi_n^1(\srx, \sry, \srz) \equiv \indicator(T_n(\srx, \sry, \srz) > C_n(\srx, \sry, \srz)); \quad \phi_n^2(\srx, \sry, \srz) \equiv \indicator(T_n(\srx, \sry, \srz) > z_{1-\alpha}). 
\end{equation*}
If the critical value of the first converges in probability to that of the second:
\begin{equation}
C_n(\srx, \sry, \srz) \convp z_{1-\alpha}
\label{eq:convergence-of-critical-value}
\end{equation}
and the test statistic does not accumulate near the limiting critical value:
\begin{equation}
\lim_{\delta \rightarrow 0}\limsup_{n \rightarrow \infty}\ \P_{\law_n}[|T_n(\srx, \sry, \srz)-z_{1-\alpha}| \leq \delta] = 0,
\label{eq:non-accumulation-app}
\end{equation}
then the two tests are asymptotically equivalent:
\begin{equation}
\lim_{n \rightarrow \infty}\P_{\law_n}[\phi_n^{1}(\srx, \sry, \srz) = \phi_n^2(\srx, \sry, \srz)] = 1.
\end{equation}
\end{lemma}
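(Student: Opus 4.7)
The plan is to observe that the two tests can only disagree when $T_n(\srx, \sry, \srz)$ falls strictly between $C_n(\srx, \sry, \srz)$ and $z_{1-\alpha}$. Concretely, on the event $\{\phi_n^1 \neq \phi_n^2\}$, the statistic $T_n$ lies in the closed interval with endpoints $C_n$ and $z_{1-\alpha}$, so in particular
\begin{equation*}
\{\phi_n^1(\srx, \sry, \srz) \neq \phi_n^2(\srx, \sry, \srz)\} \subseteq \{|T_n(\srx, \sry, \srz) - z_{1-\alpha}| \leq |C_n(\srx, \sry, \srz) - z_{1-\alpha}|\}.
\end{equation*}
This reduces the problem to controlling the probability of the right-hand event.

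Next, I would fix an arbitrary $\delta > 0$ and split according to whether $|C_n - z_{1-\alpha}|$ is small or large. Using a union bound,
\begin{equation*}
\P_{\law_n}[\phi_n^1 \neq \phi_n^2] \leq \P_{\law_n}[|C_n(\srx, \sry, \srz) - z_{1-\alpha}| > \delta] + \P_{\law_n}[|T_n(\srx, \sry, \srz) - z_{1-\alpha}| \leq \delta].
\end{equation*}
The first term tends to zero as $n \to \infty$ by the assumed convergence in probability~\eqref{eq:convergence-of-critical-value} of $C_n$ to $z_{1-\alpha}$. Taking $\limsup_{n \to \infty}$ and then letting $\delta \downarrow 0$, the second term vanishes by the non-accumulation hypothesis~\eqref{eq:non-accumulation-app}. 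Since $\delta > 0$ was arbitrary, this yields $\P_{\law_n}[\phi_n^1 \neq \phi_n^2] \to 0$, which is the desired conclusion.

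There is no real obstacle here; the argument is purely a two-line bound plus the standard ``$\epsilon$-$\delta$'' swap between the convergence of critical values and the non-accumulation of the test statistic. The only subtlety worth noting is that the inclusion of events above must accommodate both orientations (i.e., the case $C_n > z_{1-\alpha}$ and $C_n < z_{1-\alpha}$), which is handled automatically by phrasing the bound in terms of absolute values rather than one-sided intervals.
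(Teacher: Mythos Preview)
Your proposal is correct and follows essentially the same approach as the paper: the paper writes the disagreement event as $\{\min(z_{1-\alpha}, C_n) < T_n \leq \max(z_{1-\alpha}, C_n)\}$ and then splits on $|C_n - z_{1-\alpha}| \lessgtr \delta$, arriving at the identical two-term bound $\P_{\law_n}[|T_n - z_{1-\alpha}| \leq \delta] + \P_{\law_n}[|C_n - z_{1-\alpha}| > \delta]$ before taking $\limsup_n$ and then $\delta \to 0$. Your phrasing via the inclusion $\{\phi_n^1 \neq \phi_n^2\} \subseteq \{|T_n - z_{1-\alpha}| \leq |C_n - z_{1-\alpha}|\}$ is a slightly cleaner way to handle both orientations at once, but the substance is the same.
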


\begin{proof}

Note that for any $\delta>0$, we have
\begin{equation}
	\begin{aligned}
		\mathbb{P}_{\law_{n}}&  {\left[\phi_{n}^{1}(\srx,\sry,\srz) \neq \phi_{n}^{2}(\srx,\sry,\srz)\right] } \\
		&= \mathbb{P}_{\law_{n}}\left[\min \left(z_{1-\alpha}, C_{n}\right)<T_{n} \leq \max \left(z_{ 1-\alpha}, C_{n}\right)\right] \\
		&= \mathbb{P}_{\law_{n}}\left[\min \left(z_{ 1-\alpha}, C_{n} \right)<T_{n} \leq \max \left(z_{ 1-\alpha}, C_{n}\right),\left|C_{n}-z_{ 1-\alpha}\right| \leq \delta\right] \\
		&\quad+\mathbb{P}_{\law_{n}}\left[\min \left(z_{ 1-\alpha}, C_{n}\right)<T_{n} \leq \max \left(z_{ 1-\alpha}, C_{n}\right),\left|C_{n}-z_{ 1-\alpha}\right|>\delta\right] \\
		& \leq  \mathbb{P}_{\law_{n}}\left[\left|T_{n}-z_{1-\alpha}\right| \leq \delta\right]+\mathbb{P}_{\law_{n}}\left[\left|C_{n}-z_{ 1-\alpha}\right|>\delta\right] .
	\end{aligned}
\label{eq:tests-not-same}
\end{equation}

\noindent To justify the last step, suppose without loss of generality that $z_{1-\alpha} \leq C_n$. Then note that if $z_{1-\alpha}<T_{n} \leq C_{n}$ and $C_{n}-z_{1-\alpha} \leq \delta$ then
$$
|T_{n} - z_{1-\alpha}| = T_{n} -z_{1-\alpha} \leq C_n-z_{1-\alpha} \leq \delta.
$$
Taking a limsup on both sides in equation~\eqref{eq:tests-not-same} and using the assumed convergence~\eqref{eq:convergence-of-critical-value}, we find that
$$
\begin{aligned}
	\limsup_{n\to \infty}\mathbb{P}_{\law_{n}} & {\left[\phi_{n}^{1}(\srx,\sry,\srz) \neq \phi_{n}^{2}(\srx,\sry,\srz)\right] } \\
	\leq & \limsup_{n\to \infty}\mathbb{P}_{\law_{n}}\left[\left|T_{n}(\srx,\sry,\srz)-z_{1-\alpha}\right| \leq \delta\right]+ \limsup_{n\to \infty}\mathbb{P}_{\law_{n}}\left[\left|C_{n}(\srx,\sry,\srz)-z_{ 1-\alpha}\right|>\delta\right]\\
	= &\limsup_{n\to \infty}\mathbb{P}_{\law_{n}}\left[\left|T_{n}(\srx,\sry,\srz)-z_{1-\alpha}\right| \leq \delta\right].
\end{aligned}
$$
Letting $\delta \rightarrow 0$ and using our assumption~\eqref{eq:non-accumulation-app}, we arrive at the claimed asymptotic equivalence. This completes the proof.
\end{proof}

\section{Proofs for Section~\ref{sec:dr-and-equivalence} and Appendix~\ref{sec:ndcrt}}

For the sake of this section, we define
\begin{equation}
	s^2_n \equiv \E_{\law_n}[\V_{\law_n}[\prx|\prz]\V_{\law_n}[\pry|\prz]]. 
\end{equation}

\subsection{Proofs of main results}

\begin{proof}[Proof of Theorem~\ref{thm:equivalence}]

To show the asymptotic equivalence of variance estimates~\eqref{eq:asymptotic-variance-equivalence} it suffices to show that
\begin{equation}
(\widehat S_n^{\dCRThat})^2 - s^2_n \convp 0; \quad (\widehat S_n^{\GCM})^2 - s^2_n \convp 0; \quad \inf_{n} s^2_n > 0.
\end{equation}
The first of these statements is given by equation~\eqref{eq:variance_convg_7} in Lemma~\ref{lem:variance_convg}, the second follows from the proof of Theorem 6 in \citet{Shah2018}, and the third is a consequence of assumption~\eqref{eq:sp2} and conditional independence.
  
Given the asymptotic equivalence of the variance estimates~\eqref{eq:asymptotic-variance-equivalence}, we can show using Lemma~\ref{lem:equivalence-lemma} that the GCM test is asymptotically equivalent to the $\MXtwohat$ $F$-test. Indeed, set
\begin{equation}
	T_n \equiv T_n^\GCM \quad \text{and} \quad C_n \equiv \frac{\widehat S_n^{\dCRThat}}{\widehat S_{n}^{\GCM}}z_{1-\alpha}
\end{equation}
Then, $\phi_n^1$ is the $\MXtwohat$ $F$-test and $\phi_n^2$ is the GCM test. The asymptotic equivalence of the variance estimates~\eqref{eq:asymptotic-variance-equivalence} then implies the critical value convergence assumption~\eqref{eq:convergence-of-critical-value} of Lemma~\ref{lem:equivalence-lemma}. The non-accumulation assumption~\eqref{eq:non-accumulation-app} is a consequence of the fact that, under the assumptions of Theorem~\ref{thm:equivalence}, we have $T^{\text{\GCM}}_n \convd N(0,1)$ \citep{Shah2018}. On the other hand, by Corollary~\ref{cor:asymptotic-equivalence-stronger} (whose conclusion holds under the assumptions of Theorem~\ref{thm:equivalence}), we also know that $\dCRThat$ is asymptotically equivalent to the $\MXtwohat$ $F$-test. Hence, both the GCM test and the $\dCRThat$ are asymptotically equivalent to the $\MXtwohat$ $F$-test, so they are asymptotically equivalent to each other as well.
\end{proof}

\begin{proof}[Proof of Proposition~\ref{prop:sufficient-for-variance-consistency}]

We treat the two cases separately.

\paragraph*{Case 1.}
We have
\begin{align*}
&\frac{1}{n} \sum_{i = 1}^n ((\srx_i - \widehat \mu_{n,x}(\srz_i))^2 - \V_{\law_n}[\srx_i \mid \srz_i])\V_{\law_n}[\sry_i\mid\srz_i] \\
&\quad= \frac{1}{n} \sum_{i = 1}^n (\srx_i - \widehat \mu_{n,x}(\srz_i))^2\V_{\law_n}[\sry_i\mid\srz_i] - \frac{1}{n} \sum_{i = 1}^n \V_{\law_n}[\srx_i \mid \srz_i]\V_{\law_n}[\sry_i\mid\srz_i] \\
&\quad= (s^2_n + o_p(1)) - (s^2_n + o_p(1)) \\
&\quad= o_p(1),
\end{align*}
where the third line follows from the convergences~\eqref{eq:variance_convg_4} and~\eqref{eq:variance_convg_1} from Lemma \ref{lem:variance_convg}. This shows the variance consistency property~\eqref{eq:variance-consistency}.

\paragraph*{Case 2.}
We need to show that
$$
\frac{1}{n} \sum_{i=1}^n (f(\widehat \mu_{n,x}(Z_i))-f(\mu_{n,x}(\srx_i)))\V_{\law_n}[\sry_i\mid \srz_i] \convp 0.
$$
By the Cauchy-Schwartz inequality, we have
\begin{equation}
\begin{split} \label{eq:CS_corrolary_doubly_robust}
	&\left|\frac{1}{n} \sum_{i=1}^n (f(\widehat \mu_{n,x}(\srz_i))-f(\mu_{n,x}(\srz_i)))\V_{\law_n}[\sry_i\mid \srz_i]\right| \\ &\quad\leq \sqrt{\frac{1}{n} \sum_{i=1}^n (f(\widehat \mu_{n,x}(Z_i))-f( \mu_{n,x}(Z_i))^2\V_{\law_n}[\sry_i\mid \srz_i]}\sqrt{\frac{1}{n} \sum_{i=1}^n \V_{\law_n}[\sry_i\mid \srz_i]}.
\end{split}
\end{equation}
Given the assumption that $\sup_n \E_{\law_n}[|\pry-\mu_{n,y}(\prz)|^{2+\delta}] < \infty$ for some $\delta>0$, Jensen's inequality gives
\begin{equation*}
	\sup_n\ \E[\V_{\law_n}[\pry\mid \prz]^{1+\delta/2}] \leq \sup_n\ \E_{\law_n}[|\pry-\mu_{n,y}(\prz)|^{2+\delta}] < \infty.
\end{equation*}
Therefore, the weak law of large numbers (Corollary~\ref{cor:wlln}) gives
\begin{equation*}
\frac{1}{n} \sum_{i=1}^n \V_{\law_n}[\sry_i\mid \srz_i]- \E[\V_{\law_n}[\pry\mid \prz]]\convp 0.
\end{equation*}
Furthermore, 
\begin{equation*}
\E[\V_{\law_n}[\pry\mid \prz]] \leq \sup_n\ \E_{\law_n}[|\pry-\mu_{n,y}(\prz)|^{2+\delta}]^{\frac2{2+\delta}} < \infty,
\end{equation*}
so 
\begin{equation*}
	\frac{1}{n} \sum_{i=1}^n \V_{\law_n}[\sry_i\mid \srz_i] = O_p(1).
\end{equation*}
On the other hand, we know $\mathrm{supp}(\mu_{n,x}(Z_i))\subseteq \mathrm{Conv}(\mathrm{supp}(\law_n(\prx))$ for every $i$ and $n$ and by assumption $\widehat{\mu}_{n,x}(\prz)\subseteq\mathrm{Conv}(\mathrm{supp}(\law_n(\prx)))$ almost surely. Together with the fact that $f$ is Lipschitz (say with Lipschitz constant $L$) on $\cup_{n=1}^{\infty}\mathrm{Conv}(\mathrm{supp}(\law_n(\prx)))$, it follows that
\begin{align*}
&\frac{1}{n} \sum_{i=1}^n (f(\widehat \mu_{n,x}(Z_i))-f( \mu_{n,x}(Z_i)))^2\V_{\law_n}[\sry_i\mid \srz_i] \\
&\quad\leq 
\frac{L^2}{n} \sum_{i=1}^n (\widehat \mu_{n,x}(Z_i)- \mu_{n,x}(Z_i))^2\V_{\law_n}[\sry_i\mid \srz_i] \equiv L^2(E'_{n,x})^2 = o_p(1).
\end{align*}
Combining the last two displays with equation~\eqref{eq:CS_corrolary_doubly_robust} gives us the desired result.
\end{proof}

\begin{proof}[Proof of Corollary~\ref{cor:asymptotic-equivalence-alternative}]
Define the event
$$
A_n \equiv \{\phi^{\dCRThat}_{n}(\srx, \sry, \srz) \neq \phi^{\GCM}_{n}(\srx, \sry, \srz)\}
$$
The conclusion~\eqref{eq:asymptotic-test-equivalence} of Theorem~\ref{thm:equivalence} implies that $\P_{\law_n}[A_n] \to 0$, so by the assumed contiguity of $\law'_n$ to $\law_n$ we also have $\P_{\law'_n}[A_n] \to 0$. This shows the desired asymptotic equivalence of tests~\eqref{eq:equivalent-tests-alternative}. To show the asymptotic equivalence of powers, we derive
\begin{align*}
|\E_{\law'_n}[\phi_n^{\dCRThat}&(\srx, \sry, \srz)-\phi_n^{\GCM}(\srx, \sry, \srz)]|
\\ &\leq
\E_{\law'_n}[|\phi_n^{\dCRThat}(\srx, \sry, \srz)-\phi_n^{\GCM}(\srx, \sry, \srz)|]\\
&= \E_{\law'_n}[|\phi_n^{\dCRThat}(\srx, \sry, \srz)-\phi_n^{\GCM}(\srx, \sry, \srz)|\indicator(\phi_n^{\dCRThat} \neq \phi_n^{\GCM})]\\
&\leq\E_{\law'_n}[\indicator(\phi_n^{\dCRThat} \neq \phi_n^{\GCM})]\\
&=\P_{\law'_n}[\phi_n^{\dCRThat} \neq \phi_n^{\GCM}] \to 0.
\end{align*}
This completes the proof.
\end{proof}

\begin{proof}[Proof of Corollary~\ref{cor:dcrt-double-robustness}] 
Fix $\epsilon > 0$, and for each $n$ let $\law^*_n \in \nulllaws_n \cap \regclass_n$ be such that
\begin{align*}
&\P_{\law^*_n}[\phi_n^{\dCRThat}(\srx, \sry, \srz) \neq \phi_n^{\GCM}(\srx, \sry, \srz)] \\
&\quad\geq \sup_{\law_n \in \nulllaws_n \cap \regclass_n} \P_{\law_n}[\phi_n^{\dCRThat}(\srx, \sry, \srz) \neq \phi_n^{\GCM}(\srx, \sry, \srz)] - \epsilon.
\end{align*}
Applying Theorem~\ref{thm:equivalence} to the sequence $\law^*_n$ and using the asymptotic Type-I error control of the GCM test, we obtain
\begin{equation*}
\begin{split}
&\limsup_{n \rightarrow \infty} \sup_{\law_n \in \nulllaws_n \cap \regclass_n} \E_{\law_n}[\phi_n^{\dCRThat}(\srx, \sry, \srz)] \\
&\quad\leq \limsup_{n \rightarrow \infty} \sup_{\law_n \in \nulllaws_n \cap \regclass_n} |\E_{\law_n}[\phi_n^{\dCRThat}(\srx, \sry, \srz)] - \E_{\law_n}[\phi_n^{\GCM}(\srx, \sry, \srz)]| + \E_{\law_n}[\phi_n^{\GCM}(\srx, \sry, \srz)] \\
&\quad\leq \limsup_{n \rightarrow \infty} \sup_{\law_n \in \nulllaws_n \cap  \regclass_n} |\E_{\law_n}[\phi_n^{\dCRThat}(\srx, \sry, \srz)] - \E_{\law_n}[\phi_n^{\GCM}(\srx, \sry, \srz)]| \\
&\quad \quad + \limsup_{n \rightarrow \infty} \sup_{\law_n \in \nulllaws_n \cap \regclass_n} \E_{\law_n}[\phi_n^{\GCM}(\srx, \sry, \srz)] \\
&\quad\leq \limsup_{n \rightarrow \infty} \sup_{\law_n \in \nulllaws_n \cap  \regclass_n} \P_{\law_n}[\phi_n^{\dCRThat}(\srx, \sry, \srz) \neq \phi_n^{\GCM}(\srx, \sry, \srz)]  \\
&\quad \quad + \limsup_{n \rightarrow \infty} \sup_{\law_n \in \nulllaws_n \cap \regclass_n} \E_{\law_n}[\phi_n^{\GCM}(\srx, \sry, \srz)] \\
&\quad\leq \limsup_{n \rightarrow \infty} \sup_{\law_n \in \nulllaws_n \cap  \regclass_n} \P_{\law_n}[\phi_n^{\dCRThat}(\srx, \sry, \srz) \neq \phi_n^{\GCM}(\srx, \sry, \srz)] + \alpha \\
&\quad\leq \limsup_{n \rightarrow \infty}\ \P_{\law^*_n}[\phi_n^{\dCRThat}(\srx, \sry, \srz) \neq \phi_n^{\GCM}(\srx, \sry, \srz)] + \epsilon +\alpha \\
&\quad=\epsilon + \alpha.
\end{split}
\end{equation*}
Sending $\epsilon \rightarrow 0$ gives the desired conclusion.
\end{proof}

\begin{proof}[Proof of Theorem~\ref{thm:normal-limit-ndcrt}]
We have
\begin{equation*}
T_n^{\ndCRThat}(\srxk, \srx, \sry, \srz) \equiv \frac{T_n^{\dCRT}(\srxk, \srx, \sry, \srz)}{S_n^{\GCM}(\srxk, \srx, \sry, \srz)} \equiv  \frac{S_n^{\dCRThat}(\srx, \sry, \srz)}{S_n^{\GCM}(\srxk, \srx, \sry, \srz)} \cdot \frac{T_n^{\dCRT}(\srxk, \srx, \sry, \srz)}{S_n^{\dCRThat}(\srx, \sry, \srz)}
\end{equation*}
The first factor converges to 1 in probability (Lemma~\ref{lem:var-ratio-ndcrt-dcrt}), whereas the second factor converges conditionally on $\srx, \sry, \srz$ to $N(0,1)$ (Theorem \ref{thm:normal-limit}). Putting these two statements together with conditional Slutsky's theorem (Theorem~\ref{thm:cond_slutsky}), we arrive at the convergence~\eqref{eq:conditional-convergence-ndcrt}. Since the standard normal has continuous CDF we can use Lemma~\ref{lem:conditional-convergence-to-quantile} to conclude the convergence of the critical value~\eqref{eq:critical-value-convergence-ndcrt}. The equivalence statement~\eqref{eq:ndcrt-gcm-equivalence} follows from the convergence~\eqref{eq:critical-value-convergence-ndcrt} and Lemma~\ref{lem:equivalence-lemma} applied with $T_n = T_n^{\GCM}$ and $C_n = C_n^{\ndCRThat}$.
\end{proof}

\begin{proof}[Proof of Corollary~\ref{cor:double-robustness-ndcrt}] 
 
The proof of this corollary is directly analogous to that of Corollary~\ref{cor:dcrt-double-robustness}, so we omit it for the sake of brevity.

\end{proof}

\subsection{Auxiliary lemmas}

\begin{lemma}[Conditional Jensen inequality, \cite{Davidson2003}, Theorem 10.18] \label{lem:conditional-jensen}
Let $W$ be a random variable and let $\phi$ be a convex function, such that $W$ and $\phi(W)$ are integrable. For any $\sigma$-algebra $\mathcal F$, we have the inequality
\begin{equation*}
	\phi(\E[W \mid \mathcal F]) \leq  \E[\phi(W) \mid \mathcal F] \quad \text{almost surely}.
\end{equation*}
\end{lemma}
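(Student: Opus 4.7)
The plan is to reduce the conditional statement to the affine lower-bound characterization of convex functions, apply that bound pointwise in the first argument, and then take conditional expectations to transfer the inequality to $\E[W \mid \mathcal F]$. Concretely, I will use the fact that any convex $\phi : \R \to \R$ can be written as the pointwise supremum of a countable family of affine minorants,
\begin{equation*}
	\phi(x) = \sup_{n \in \mathbb{N}} (a_n x + b_n), \qquad x \in \R,
\end{equation*}
where $(a_n, b_n)$ enumerate supporting lines at a countable dense set of points (e.g.\ the rationals in the interior of $\mathrm{dom}\,\phi$). This representation handles measurability issues that would otherwise arise from selecting a subgradient at the random point $\E[W \mid \mathcal F]$.

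First, for each fixed $n$ the affine inequality $a_n W + b_n \leq \phi(W)$ holds pointwise. Since $W$ and $\phi(W)$ are integrable, conditional expectations of both sides exist, and by linearity plus monotonicity of conditional expectation,
\begin{equation*}
	a_n \E[W \mid \mathcal F] + b_n \leq \E[\phi(W) \mid \mathcal F] \quad \text{almost surely}.
\end{equation*}
The exceptional null set here depends on $n$, but there are only countably many $n$, so outside a single null set the inequality holds simultaneously for all $n$. Taking the supremum in $n$ on the left-hand side and invoking the affine-supremum representation of $\phi$ at the point $\E[W \mid \mathcal F]$ yields
\begin{equation*}
	\phi(\E[W \mid \mathcal F]) = \sup_{n} (a_n \E[W \mid \mathcal F] + b_n) \leq \E[\phi(W) \mid \mathcal F] \quad \text{almost surely},
\end{equation*}
which is the claimed inequality.

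The only subtlety is establishing the countable affine-minorant representation; I would handle this by taking, for each rational $q$ in the interior of the domain of $\phi$, a subgradient $s_q \in \partial \phi(q)$ and setting $a_q = s_q$, $b_q = \phi(q) - s_q q$, then verifying that the supremum of these lines equals $\phi$ by continuity of convex functions on the interior of their domain. Boundary behavior can be handled by either restricting $W$ to the interior (where $\phi$ is finite) or treating $\pm\infty$ values separately; since $\phi(W)$ is assumed integrable, $W$ lies in $\mathrm{dom}\,\phi$ almost surely, so no real obstacle arises. The main conceptual obstacle—selecting a subgradient at the random point $\E[W \mid \mathcal F]$ in a measurable way—is precisely what the countable-supremum device circumvents, so the argument goes through cleanly.
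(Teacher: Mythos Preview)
Your argument is correct and is essentially the standard proof of conditional Jensen via the countable affine-minorant representation of a convex function. Note, however, that the paper does not actually supply a proof of this lemma: it is stated with a citation to \cite{Davidson2003}, Theorem 10.18, and used as a black box. So there is no ``paper's own proof'' to compare against; your proposal simply fills in what the cited reference would provide.
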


\begin{lemma} \label{lem:conditional-convergence-to-unconditional}
	Let $W_n$ be a sequence of random variables and $\mathcal F_n$ a sequence of $\sigma$-algebras. If $W_n \mid \mathcal F_n \convpp 0$, then $W_n \convp 0$. 
\end{lemma}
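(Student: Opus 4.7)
The plan is to unpack the definition of $W_n \mid \mathcal F_n \convpp 0$ into a statement about conditional CDFs, then remove the conditioning by taking expectations and applying the bounded convergence theorem.

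First I would note that by Definition~\ref{def:conditional-convergence-probability}, the assumption $W_n \mid \mathcal F_n \convpp 0$ means $W_n \mid \mathcal F_n \convdp \delta_0$. The CDF $t \mapsto \P[\delta_0 \leq t] = \indicator(t \geq 0)$ is continuous at every $t \neq 0$, so by Definition~\ref{def:conditional-convergence-distribution} we have
\begin{equation*}
\P[W_n \leq t \mid \mathcal F_n] \convp \indicator(t \geq 0) \quad \text{for every } t \neq 0.
\end{equation*}
In particular, for any fixed $\epsilon > 0$, $\P[W_n \leq \epsilon \mid \mathcal F_n] \convp 1$ and $\P[W_n \leq -\epsilon \mid \mathcal F_n] \convp 0$.

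Next, for any $\epsilon > 0$, I would bound
\begin{equation*}
\P[|W_n| > \epsilon] \leq \P[W_n > \epsilon] + \P[W_n \leq -\epsilon] = \E\bigl[1 - \P[W_n \leq \epsilon \mid \mathcal F_n]\bigr] + \E\bigl[\P[W_n \leq -\epsilon \mid \mathcal F_n]\bigr],
\end{equation*}
using the tower property. Each integrand is $[0,1]$-valued, so the bounded convergence theorem (applied to in-probability rather than almost sure convergence, which is standard) implies both expectations tend to zero using the two conditional-CDF convergences above. Hence $\P[|W_n| > \epsilon] \to 0$, giving $W_n \convp 0$.

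There is no real obstacle here; the only subtlety is the bounded convergence step, which I would justify by recalling that convergence in probability of a uniformly bounded sequence implies convergence in $L^1$ (or equivalently, every subsequence has a further almost surely convergent subsequence, and then dominated convergence applies subsequentially).
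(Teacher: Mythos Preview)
Your proof is correct and follows essentially the same route as the paper's: the paper simply asserts $\P[|W_n|>\epsilon\mid\mathcal F_n]\convp 0$ (which is exactly the combination of the two one-sided CDF convergences you spell out), then applies the tower property and bounded convergence. Your version is just a more explicit unpacking of that first step.
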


\begin{proof}
	Let $\epsilon > 0$. Because of the assumed conditional convergence in probability, we have
	\begin{equation*}
	\P[|W_n| > \epsilon \mid \mathcal F_n] \convp 0.
	\end{equation*}
	By the bounded convergence theorem, it follows that
	\begin{equation*}
	\P[|W_n| > \epsilon] = \E[\P[|W_n| > \epsilon \mid \mathcal F_n]] \rightarrow 0,
	\end{equation*}
	from which the conclusion follows.
\end{proof}

\begin{lemma}
	\label{lem:variance_convg}
	Consider a sequence of laws $\law_n \in \nulllaws_n$. Given assumption~\eqref{eq:sp2}, we have
	\begin{align}
	\frac{1}{n} \sum_{i=1}^n \V_{\law_n}[\srx_i\mid \srz_i]\V_{\law_n}[\sry_i \mid \srz_i] -s^2_n &\convp 0; \label{eq:variance_convg_1} \\ 
	\frac{1}{n}\sum_{i = 1}^n (\srx_i - \mu_{n,x}(\srz_i))^2\V_{\law_n}[\sry_i\mid\srz_i]-s^2_n &\convp 0; \label{eq:variance_convg_2} \\
	\frac{1}{n}\sum_{i = 1}^n \V_{\law_n}[\srx_i\mid\srz_i](\sry_i - \mu_{n,y}(\srz_i))^2-s^2_n &\convp 0. \label{eq:variance_convg_3}
	\end{align}
	Given additionally assumption~\eqref{eq:sp1}, we also have
	\begin{align}
	\frac{1}{n} \sum_{i = 1}^n (\srx_i - \widehat \mu_{n,x}(\srz_i))^2\V_{\law_n}[\sry_i\mid\srz_i] - s^2_n &\convp 0; \label{eq:variance_convg_4} \\
	\frac{1}{n}\sum_{i = 1}^n \V_{\law_n}[\srx_i\mid\srz_i](\sry_i - \widehat \mu_{n,y}(\srz_i))^2-s^2_n &\convp 0. \label{eq:variance_convg_5}
	\end{align}
	Given additionally the conditional Lyapunov condition~\eqref{eq:lyapunov-condition-2} and the variance consistency condition~\eqref{eq:variance-consistency}, we also have
	\begin{align}
	(S^{\dCRThat}_n)^2 - s_n^2 \equiv \frac{1}{n}\sum_{i = 1}^n \V_{\lawhat_n}[\srx_i\mid\srz_i](\sry_i - \mu_{n,y}(\srz_i))^2-s^2_n &\convp 0. \label{eq:variance_convg_6}
	\end{align}
	Given additionally the assumption $\widehat E'_{n,y} \convp 0$, we also have
	\begin{align}
		(\widehat S^{\dCRThat}_n)^2 - s_n^2 \equiv \frac{1}{n}\sum_{i = 1}^n \V_{\lawhat_n}[\srx_i\mid\srz_i](\sry_i - \widehat \mu_{n,y}(\srz_i))^2-s^2_n &\convp 0. \label{eq:variance_convg_7}
	\end{align}

\end{lemma}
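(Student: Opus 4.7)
The plan is to derive the seven convergences in the order stated, each using the previous one(s) as input via algebraic expansion, Cauchy--Schwarz, and weak laws of large numbers. Statements (1)--(3) are the ``oracle'' versions reducing to i.i.d.\ weak laws (Corollary~\ref{cor:wlln}), with the required $(1+\delta/2)$-th moments controlled by combining~\eqref{eq:sp2}, conditional Jensen (Lemma~\ref{lem:conditional-jensen}), and the conditional independence of $\prx$ and $\pry$ given $\prz$. Statements (4)--(5) replace $\mu_{n,x}, \mu_{n,y}$ by $\widehat\mu_{n,x}, \widehat\mu_{n,y}$ using~\eqref{eq:sp1}. Statement (6) replaces $\V_{\law_n}[\srx|\srz]$ by $\V_{\lawhat_n}[\srx|\srz]$ using the variance consistency assumption~\eqref{eq:variance-consistency} together with the conditional WLLN (Theorem~\ref{thm:wlln_cond}), and statement (7) substitutes $\widehat\mu_{n,y}$ for $\mu_{n,y}$ using the extra bound $\widehat E'_{n,y} = o_p(1)$ from~\eqref{eq:sp1prime}.

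\textbf{Executing (1)--(5).} For (1), the summands $\V_{\law_n}[\srx_i|\srz_i]\V_{\law_n}[\sry_i|\srz_i]$ are i.i.d.\ with mean $s_n^2$; conditional Jensen combined with conditional independence gives
\begin{equation*}
\E\!\left[\bigl(\V_{\law_n}[\prx|\prz]\V_{\law_n}[\pry|\prz]\bigr)^{1+\delta/2}\right] \le \E\!\left[|(\prx-\mu_{n,x}(\prz))(\pry-\mu_{n,y}(\prz))|^{2+\delta}\right] \le c_2,
\end{equation*}
activating Corollary~\ref{cor:wlln}. For (2), split the summand as $\{(\srx_i-\mu_{n,x}(\srz_i))^2 - \V[\srx_i|\srz_i]\}\V[\sry_i|\srz_i] + \V[\srx_i|\srz_i]\V[\sry_i|\srz_i]$; the second piece is handled by (1), while the first is a mean-zero i.i.d.\ average whose $(1+\delta/2)$-th moment is again bounded using the asymmetric Jensen trick $\V[\sry|\srz]^{1+\delta/2} \le \E[(\sry-\mu_{n,y})^{2+\delta}|\srz]$ together with~\eqref{eq:sp2}. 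Statement (3) is symmetric. For (4), the expansion
\begin{equation*}
(\srx_i-\widehat\mu_{n,x}(\srz_i))^2 = (\srx_i-\mu_{n,x}(\srz_i))^2 + 2(\srx_i-\mu_{n,x}(\srz_i))(\mu_{n,x}(\srz_i)-\widehat\mu_{n,x}(\srz_i)) + (\mu_{n,x}(\srz_i)-\widehat\mu_{n,x}(\srz_i))^2
\end{equation*}
yields, after multiplication by $\V_{\law_n}[\sry_i|\srz_i]$ and summation, three averages controlled respectively by (2), by Cauchy--Schwarz with (2) and $E'_{n,x} = o_p(1)$, and by $(E'_{n,x})^2 = o_p(1)$. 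Statement (5) follows by the symmetric expansion of $(\sry_i-\widehat\mu_{n,y}(\srz_i))^2$, now invoking $E'_{n,y} = o_p(1)$.

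\textbf{Step (6), step (7), and the main obstacle.} For (6), decompose
\begin{equation*}
\frac{1}{n}\sum_{i=1}^n \V_{\lawhat_n}[\srx_i|\srz_i](\sry_i - \mu_{n,y}(\srz_i))^2 - s_n^2 = A_n + B_n + C_n,
\end{equation*}
where $A_n \equiv \frac{1}{n}\sum (\V_{\lawhat_n}[\srx_i|\srz_i]-\V_{\law_n}[\srx_i|\srz_i])\{(\sry_i-\mu_{n,y}(\srz_i))^2 - \V_{\law_n}[\sry_i|\srz_i]\}$, $B_n \equiv \frac{1}{n}\sum (\V_{\lawhat_n}[\srx_i|\srz_i]-\V_{\law_n}[\srx_i|\srz_i])\V_{\law_n}[\sry_i|\srz_i]$, and $C_n$ is the residual oracle piece handled by (3). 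Then $B_n = o_p(1)$ directly from~\eqref{eq:variance-consistency}. The delicate term is $A_n$: unlike every earlier step, its summands are \emph{not} independent because $\V_{\lawhat_n}[\srx_i|\srz_i]$ depends on the entire $(\prx,\prz)$ sample, so an i.i.d.\ WLLN is unavailable---this is the main obstacle of the proof. The resolution is that $\V_{\lawhat_n}$ is measurable with respect to $\mathcal{F}_n \equiv \sigma(\prx,\prz)$, and under the null the $\sry_i$'s are conditionally independent given $\mathcal{F}_n$ with $\E[(\sry_i-\mu_{n,y}(\srz_i))^2 \mid \mathcal{F}_n] = \V_{\law_n}[\sry_i|\srz_i]$, so $A_n$ is an $\mathcal{F}_n$-centered average to which the conditional WLLN (Theorem~\ref{thm:wlln_cond}) applies. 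The requisite conditional Lyapunov bound is obtained by splitting $|\V_{\lawhat_n} - \V_{\law_n}|^{1+\delta/2}$ into its $\V_{\lawhat_n}^{1+\delta/2}$ and $\V_{\law_n}^{1+\delta/2}$ pieces: the former is dominated via Jensen by $\E_{\lawhat_n}[|\srxk_i-\widehat\mu_{n,x}(\srz_i)|^{2+\delta}|\srx,\srz]$, reducing exactly to~\eqref{eq:lyapunov-condition-2}, while the latter is handled by Markov together with~\eqref{eq:sp2}; Lemma~\ref{lem:conditional-convergence-to-unconditional} then lifts the conditional convergence to unconditional. Finally, (7) follows from (6) by expanding $(\sry_i-\widehat\mu_{n,y}(\srz_i))^2$ exactly as in step (4), now with $\widehat E'_{n,y} = o_p(1)$ from~\eqref{eq:sp1prime} controlling the cross and squared remainder terms.
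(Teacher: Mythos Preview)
Your proposal is correct and follows essentially the same approach as the paper: weak laws of large numbers for (1)--(3), algebraic expansion plus Cauchy--Schwarz and~\eqref{eq:sp1} for (4)--(5), the conditional WLLN with $\mathcal{F}_n = \sigma(\srx,\srz)$ together with~\eqref{eq:variance-consistency} and~\eqref{eq:lyapunov-condition-2} for (6), and the $\widehat E'_{n,y}$ bound for (7). The only cosmetic difference is in the decomposition for (6): the paper applies the conditional WLLN directly to $W_{in} = \V_{\lawhat_n}[\srx_i|\srz_i](\sry_i-\mu_{n,y}(\srz_i))^2$ (so only the $\V_{\lawhat_n}$ moment needs to be controlled, not the difference $\V_{\lawhat_n}-\V_{\law_n}$), and pairs this with statement~(1) rather than~(3); your route works equally well but is a touch longer.
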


\begin{proof}

	We prove the convergence statements in order.

\paragraph*{Proofs of statements~\eqref{eq:variance_convg_1}, \eqref{eq:variance_convg_2}, \eqref{eq:variance_convg_3}.}

These statements are consequences of the weak law of large numbers (Corollary~\ref{cor:wlln}). To verify the statement~\eqref{eq:variance_convg_1}, we note that
\begin{equation*}
	\E_{\law_n}[\V_{\law_n}[\srx_i\mid \srz_i]\V_{\law_n}[\sry_i \mid \srz_i]] = \E_{\law_n}[\V_{\law_n}[\prx\mid \prz]\V_{\law_n}[\pry \mid \prz]] \equiv s^2_n, 
\end{equation*}
and
\begin{equation}
	\begin{split}
	&\sup_{i,n} \ \E_{\law_n}[|\V_{\law_n}[\srx_i\mid \srz_i]\V_{\law_n}[\sry_i \mid \srz_i]|^{1+\delta/2}] \\
	&\quad= \sup_n\ \E_{\law_n}[|\V_{\law_n}[\prx\mid \prz]\V_{\law_n}[\pry \mid \prz]|^{1+\delta/2}] \\
	&\quad\leq \sup_n\ \E_{\law_n}[\E_{\law_n}[|\prx - \mu_{n,x}(\prz)|^{2+\delta} \mid \prz]\E_{\law_n}[|\pry - \mu_{n,y}(\prz)|^{2+\delta} \mid \prz]] \\
	&\quad= \sup_n\ \E_{\law_n}[|(\prx - \mu_{n,x}(\prz))(\pry - \mu_{n,y}(\prz))|^{2 + \delta}] \\
	&\quad< c_2 < \infty.
	\end{split}
	\label{eq:variance-bound-derivation}
\end{equation}
The inequality in the third line follows from the conditional Jensen inequality (Lemma~\ref{lem:conditional-jensen}), the equality in the fourth line follows from the conditional independence assumption, and the inequality in the fifth line follows from the $2+\delta$ moment assumption~\eqref{eq:sp2}. Hence we have verified the sufficient condition~\eqref{eq:wlln_sufficient} for the WLLN, so the convergence~\eqref{eq:variance_convg_1} follows. Statements~\eqref{eq:variance_convg_2} and~\eqref{eq:variance_convg_3} can be verified with similar arguments.

\paragraph*{Proofs of statements~\eqref{eq:variance_convg_4} and~\eqref{eq:variance_convg_5}.} 

We prove only the first, as the second will follow by symmetry. Given the convergence~\eqref{eq:variance_convg_2}, the statement~\eqref{eq:variance_convg_4} will follow if we show that
\begin{equation}
	\begin{split}
	&\frac{1}{n} \sum_{i = 1}^n (\srx_i - \widehat \mu_{n,x}(\srz_i))^2\V_{\law_n}[\sry_i\mid\srz_i] - \frac{1}{n} \sum_{i = 1}^n (\srx_i - \mu_{n,x}(\srz_i))^2\V_{\law_n}[\sry_i\mid\srz_i] \\
	&\quad= \frac{1}{n} \sum_{i = 1}^n (\widehat \mu_{n,x}(\srz_i) - \mu_{n,x}(\srz_i))^2\V_{\law_n}[\sry_i\mid\srz_i] \\
	&\quad \quad -\frac{2}{n}\sum_{i = 1}^n (\srx_i - \mu_{n,x}(\srz_i))(\widehat \mu_{n,x}(\srz_i) - \mu_{n,x}(\srz_i))\V_{\law_n}[\sry_i\mid\srz_i] \\
	&\quad \equiv (E'_{n,x})^2 - \frac{2}{n}\sum_{i = 1}^n (\srx_i - \mu_{n,x}(\srz_i))(\widehat \mu_{n,x}(\srz_i) - \mu_{n,x}(\srz_i))\V_{\law_n}[\sry_i\mid\srz_i] \\
	&\quad\convp 0.
	\label{eq:sufficient-convergence}
	\end{split}
\end{equation}
The convergence $E'_{n,x} \convp 0$ is assumed~\eqref{eq:sp1}. To verify the convergence of the second term, note first that conditional H\"older (Lemma~\ref{lem:cond_holder}) and the derivation~\eqref{eq:variance-bound-derivation} give
\begin{equation}
\sup_n\ s^2_n \leq \sup_n\ \{\E_{\law_n}[|\V_{\law_n}[\prx\mid \prz]\V_{\law_n}[\pry \mid \prz]|^{1+\delta/2}]\}^{2/(2+\delta)} < \infty,
\end{equation}
which combined with the convergence~\eqref{eq:variance_convg_2} implies that
\begin{equation}
	\frac{1}{n}\sum_{i = 1}^n (\srx_i - \mu_{n,x}(\srz_i))^2\V_{\law_n}[\sry_i\mid\srz_i] = O_p(1).
\end{equation}
Therefore, by the Cauchy-Schwartz inequality we find that 
\begin{equation*}
	\begin{split}
		&\left(\frac{1}{n}\sum_{i = 1}^n (\srx_i - \mu_{n,x}(\srz_i))(\widehat \mu_{n,x}(\srz_i) - \mu_{n,x}(\srz_i))\V_{\law_n}[\sry_i\mid\srz_i]\right)^2 \\
		&\quad\leq \left(\frac{1}{n}\sum_{i = 1}^n (\srx_i - \mu_{n,x}(\srz_i))^2 \V_{\law_n}[\sry_i\mid\srz_i]\right) \left(\frac{1}{n}\sum_{i = 1}^n (\widehat \mu_{n,x}(\srz_i) - \mu_{n,x}(\srz_i))^2\V_{\law_n}[\sry_i\mid\srz_i]\right) \\
		&\quad\equiv \left(\frac{1}{n}\sum_{i = 1}^n (\srx_i - \mu_{n,x}(\srz_i))^2 \V_{\law_n}[\sry_i\mid\srz_i]\right) \cdot (E'_{n,x})^2 \\
		&\quad= O_p(1) \cdot o_p(1) = o_p(1).
	\end{split}
\end{equation*}
This proves the convergence~\eqref{eq:sufficient-convergence}, which in turn implies the claimed convergence~\eqref{eq:variance_convg_4}.

\paragraph*{Proof of statement~\eqref{eq:variance_convg_6}.}

Note that
\begin{equation*}
	\begin{split}
		&\frac{1}{n}\sum_{i = 1}^n \V_{\lawhat_n}[\srx_i\mid\srz_i](\sry_i - \mu_{n,y}(\srz_i))^2-s^2_n \\
		&\quad= \frac{1}{n}\sum_{i = 1}^n \V_{\lawhat_n}[\srx_i\mid\srz_i](\sry_i - \mu_{n,y}(\srz_i))^2 - \frac{1}{n}\sum_{i = 1}^n \V_{\lawhat_n}[\srx_i\mid\srz_i]\V[\sry_i \mid \srz_i]  \\
		&\quad \quad + \frac{1}{n}\sum_{i = 1}^n \V_{\lawhat_n}[\srx_i\mid\srz_i]\V[\sry_i \mid \srz_i] - \frac{1}{n}\sum_{i = 1}^n \V_{\law_n}[\srx_i\mid\srz_i]\V[\sry_i \mid \srz_i] \\
		&\quad \quad + \frac{1}{n}\sum_{i = 1}^n \V_{\law_n}[\srx_i\mid\srz_i]\V[\sry_i \mid \srz_i] - s^2_n \\
		&\quad = \frac{1}{n}\sum_{i = 1}^n \V_{\lawhat_n}[\srx_i\mid\srz_i](\sry_i - \mu_{n,y}(\srz_i))^2 - \frac{1}{n}\sum_{i = 1}^n \V_{\lawhat_n}[\srx_i\mid\srz_i]\V[\sry_i \mid \srz_i] + o_p(1),
	\end{split}
\end{equation*}
where we used the variance consistency assumption~\eqref{eq:variance-consistency} and the convergence result~\eqref{eq:variance_convg_1} to obtain the last line. Hence, it suffices to show that
\begin{equation}
	\frac{1}{n}\sum_{i = 1}^n \left(\V_{\lawhat_n}[\srx_i\mid\srz_i](\sry_i - \mu_{n,y}(\srz_i))^2 - \V_{\lawhat_n}[\srx_i\mid\srz_i]\V[\sry_i \mid \srz_i]\right) \convp 0.
	\label{eq:sufficient-convergence-2}
\end{equation}
To this end, we apply the conditional WLLN (Theorem~\ref{thm:wlln_cond}) with $\mathcal{F}_n = \sigma(X,Z)$ and
	\begin{equation*}
	W_{in} = \V_{\lawhat_n}[\srx_i\mid \srz_i](\sry_i-\mu_{n,y}(\srz_i))^2.
	\end{equation*}
	We check the required $1+\delta$ moment condition~\eqref{eq:wlln_cond_assumption}:
	\begin{align*}
		&\frac{1}{n^{1+\delta/2}} \sum_{i = 1}^n \E_{\law_n}[|W_{in}|^{1+\delta/2} \mid \mathcal{F}_n] \\
		&\quad\equiv \frac{1}{n^{1+\delta/2}} \sum_{i = 1}^n \E_{\law_n}\left[\V_{\lawhat_n}[\srx_i\mid \srz_i]^{1+\delta/2} \cdot |\sry_i-\mu_{n,y}(\srz_i)|^{2+\delta}\mid \srx,\srz\right]\\
		&\quad\leq \frac{1}{n^{1+\delta/2}} \sum_{i = 1}^n \E_{\law_n}\left[\E_{\lawhat_n}[|\srxk_i-\widehat{\mu}_{n,x}(Z_i)|^{2+\delta} \mid \srx, \srz] \cdot |\sry_i-\mu_{n,y}(Z_i)|^{2+\delta} \mid \srx,\srz\right]\\
		&\quad= \frac{1}{n^{1+\delta/2}} \sum_{i = 1}^n \E_{\lawhat_n}[|\srxk_i-\widehat\mu_{n,x}(Z_i)|^{2+\delta}\mid \srx, \srz] \cdot \E_{\law_n}\left[|\sry_i-\mu_{n,y}(Z_i)|^{2+\delta} \mid Z_i\right]\\
		&\quad\convp 0. 
	\end{align*}
	The inequality in the third line follows from the conditional Jensen inequality (Lemma~\ref{lem:conditional-jensen}), the equality in the fourth line from the assumed conditional independence, and the convergence in the fifth line from the conditional Lyaponov assumption~\eqref{eq:lyapunov-condition-2}. Therefore, the conditional WLLN gives
	\begin{equation*}
	\begin{split}
	&\frac{1}{n} \sum_{i=1}^n \left(\V_{\lawhat_n}[\srx_i\mid \srz_i](\sry_i-\mu_{n,y}(\srz_i))^2- \V_{\lawhat_n}[\srx_i\mid \srz_i]\V_{\law_n}[\sry_i \mid \srz_i]\right) \\
	&\quad = \frac{1}{n} \sum_{i=1}^n \left(\V_{\lawhat_n}[\srx_i\mid \srz_i](\sry_i-\mu_{n,y}(\srz_i))^2- \E_{\law_n}[\V_{\lawhat_n}[\srx_i\mid \srz_i](\sry_i-\mu_{n,y}(\srz_i))^2 | \srx, \srz]\right) \\
	&\quad \convpp 0,
	\end{split}
	\end{equation*}
	where the equality follows from the assumed conditional independence. Since conditional convergence in probability implies unconditional convergence in probability (Lemma~\ref{lem:conditional-convergence-to-unconditional}), this verifies the claimed convergence statement~\eqref{eq:sufficient-convergence-2} and completes the proof of the statement~\eqref{eq:variance_convg_6}.

\paragraph*{Proof of statement~\eqref{eq:variance_convg_7}.} 

Given the convergence~\eqref{eq:variance_convg_6}, it suffices to show that
\begin{equation}
\frac{1}{n}\sum_{i=1}^n \V_{\lawhat_n}[\srx_i\mid \srz_i](\sry_i-\widehat \mu_{n,y}(\srz_i))^2-\frac{1}{n}\sum_{i=1}^n \V_{\lawhat_n}[\srx_i\mid \srz_i](\sry_i-\mu_{n,y}(\srz_i))^2
\convp 0
\end{equation}
Given the assumption that $\widehat E'_{n,y} \convp 0$, this statement's proof is analogous to that of statement~\eqref{eq:sufficient-convergence}, so we omit it for the sake of brevity. This completes the proof of the lemma.
\end{proof}

\begin{lemma}
	\label{lem:cond-sample-variance-equivalence2}
	Define
	\begin{equation}
	(S_n^{\dCRThat})^2 \equiv \frac{1}{n}\sum_{i = 1}^n \V_{\lawhat_n}[\srx_i|\srz_i](\sry_i - \mu_{n,y}(\srz_i))^2.		
	\end{equation}
	Under the assumptions of Theorem~\ref{thm:equivalence}, we have
	\begin{equation}
	\frac{(S^{\dCRThat}_n)^2}{(\widehat S^{\dCRThat}_n)^2}  \convp 1
	\label{eq:dcrt-var-equivalence}
	\end{equation}
	and
	\begin{equation}
	\P[(S^{\dCRThat}_n)^2 > \epsilon] \rightarrow 1 \quad \text{for some } \epsilon > 0.
	\label{eq:bounded-away-from-zero}
	\end{equation}
\end{lemma}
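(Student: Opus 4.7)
The plan is to reduce both claims to the two convergence statements \eqref{eq:variance_convg_6} and \eqref{eq:variance_convg_7} of Lemma~\ref{lem:variance_convg}, using a uniform lower bound $\inf_n s_n^2 > 0$ as the bridge between them. Since the assumptions of Theorem~\ref{thm:equivalence} are exactly the hypotheses required by \eqref{eq:variance_convg_6} and \eqref{eq:variance_convg_7}, we immediately obtain
\begin{equation*}
(S_n^{\dCRThat})^2 - s_n^2 \convp 0 \quad \text{and} \quad (\widehat S_n^{\dCRThat})^2 - s_n^2 \convp 0.
\end{equation*}

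The next step is to show $\inf_n s_n^2 \geq c_1 > 0$. Under $\law_n \in \nulllaws_n$, conditional independence and the tower property give
\begin{equation*}
\E_{\law_n}[(\prx - \mu_{n,x}(\prz))^2 (\pry - \mu_{n,y}(\prz))^2] = \E_{\law_n}[\V_{\law_n}[\prx|\prz]\V_{\law_n}[\pry|\prz]] \equiv s_n^2,
\end{equation*}
so the first part of assumption \eqref{eq:sp2} yields $s_n^2 > c_1$ for every $n$.

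Given the uniform positive lower bound on $s_n^2$, conclusion \eqref{eq:bounded-away-from-zero} is immediate: choose any $\epsilon \in (0, c_1)$; then
\begin{equation*}
\P[(S_n^{\dCRThat})^2 > \epsilon] \geq \P[|(S_n^{\dCRThat})^2 - s_n^2| < c_1 - \epsilon] \to 1.
\end{equation*}
For the ratio convergence \eqref{eq:dcrt-var-equivalence}, I would write
\begin{equation*}
\frac{(S_n^{\dCRThat})^2}{(\widehat S_n^{\dCRThat})^2} = \frac{s_n^2 + o_p(1)}{s_n^2 + o_p(1)} = \frac{1 + o_p(1)/s_n^2}{1 + o_p(1)/s_n^2},
\end{equation*}
and since $1/s_n^2 \leq 1/c_1$, both $o_p(1)/s_n^2$ terms are $o_p(1)$, so the ratio tends to $1$ in probability by the continuous mapping theorem applied to $(u,v) \mapsto u/v$ at $(1,1)$.

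The only real obstacle is confirming that $s_n^2$ is bounded away from zero purely under the assumptions of Theorem~\ref{thm:equivalence}; this is a short calculation using conditional independence together with the first half of \eqref{eq:sp2}, and once established, both conclusions follow by elementary manipulations from the two convergence statements in Lemma~\ref{lem:variance_convg}.
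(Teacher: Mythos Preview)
Your proposal is correct and follows essentially the same approach as the paper's proof: both rely on the convergences \eqref{eq:variance_convg_6} and \eqref{eq:variance_convg_7} from Lemma~\ref{lem:variance_convg} together with the observation that conditional independence and \eqref{eq:sp2} give $\inf_n s_n^2 > 0$. The paper's version is slightly terser, but the logical structure is identical.
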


\begin{proof}
The equivalence of variances~\eqref{eq:dcrt-var-equivalence} follows from the convergences~\eqref{eq:variance_convg_6} and~\eqref{eq:variance_convg_7} (Lemma~\ref{lem:variance_convg}), as well as the observation that conditional independence and the assumption~\eqref{eq:sp2} imply 
\begin{equation}
	\inf_n s^2_n = \inf_{n}\ \E_{\law_n}[(\prx - \mu_{n,x}(\prz))^2(\pry - \mu_{n,y}(\prz))^2] > 0.
\end{equation}
The stochastic boundedness from below~\eqref{eq:bounded-away-from-zero} follows from the latter fact and the convergence~\eqref{eq:variance_convg_6}.
\end{proof}


\begin{lemma} \label{lem:var-ratio-ndcrt-dcrt}
If the nondegeneracy conditions~\eqref{eq:var-bounded-below} and~\eqref{eq:variance-bounds} and the conditional Lyapunov condition~\eqref{eq:lyapunov-condition} hold, then the variance estimates $(\widehat S_n^{\GCM})^2$ and $(\widehat S_n^{\dCRThat})^2$ are equivalent under resampling:
\begin{equation}
\frac{(\widehat S_n^{\dCRThat}(\srx, \sry, \srz))^2}{(\widehat S_{n}^{\GCM}(\srxk, \srx, \sry, \srz))^2} \convp 1.
\label{eq:var-ratio-ndcrt-dcrt}
\end{equation}
\end{lemma}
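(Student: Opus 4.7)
My plan is to decompose the GCM empirical variance as
$$(\widehat S_n^{\GCM}(\srxk, \srx, \sry, \srz))^2 = A_n - B_n, \quad A_n \equiv \frac{1}{n}\sum_{i=1}^n (\srxk_i - \widehat\mu_{n,x}(\srz_i))^2(\sry_i - \widehat\mu_{n,y}(\srz_i))^2,\ B_n \equiv \frac{1}{n}\bigl(T_n^{\dCRThat}(\srxk, \srx, \sry, \srz)\bigr)^2,$$
show separately that $A_n - (\widehat S_n^{\dCRThat})^2 \convp 0$ and $B_n \convp 0$, and then divide by $(\widehat S_n^{\dCRThat})^2$ using the lower bound from~\eqref{eq:var-bounded-below}.

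For $A_n$, I will apply the conditional weak law of large numbers (Theorem~\ref{thm:wlln_cond}) to the triangular array $W_{in} \equiv (\srxk_i - \widehat\mu_{n,x}(\srz_i))^2(\sry_i - \widehat\mu_{n,y}(\srz_i))^2$ with $\mathcal F_n \equiv \sigma(\srx, \sry, \srz)$. Under resampling the $\srxk_i$ are drawn independently of each other and of $\sry$ given $(\srx, \srz)$, so the $W_{in}$ are conditionally independent given $\mathcal F_n$ with $\E[W_{in}\mid\mathcal F_n] = \V_{\lawhat_n}[\srx_i|\srz_i](\sry_i - \widehat\mu_{n,y}(\srz_i))^2$, whence $n^{-1}\sum_i \E[W_{in}\mid\mathcal F_n] = (\widehat S_n^{\dCRThat})^2$. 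Taking exponent $1+\delta/2$ with the $\delta$ provided by~\eqref{eq:lyapunov-condition}, the moment condition~\eqref{eq:wlln_cond_assumption} of Theorem~\ref{thm:wlln_cond} reads
$$\frac{1}{n^{1+\delta/2}}\sum_{i=1}^n \E[W_{in}^{1+\delta/2}\mid\mathcal F_n] = \frac{1}{n^{1+\delta/2}}\sum_{i=1}^n |\sry_i - \widehat\mu_{n,y}(\srz_i)|^{2+\delta}\,\E_{\lawhat_n}[|\srxk_i - \widehat\mu_{n,x}(\srz_i)|^{2+\delta}\mid \srx, \srz] \convp 0,$$
which is exactly~\eqref{eq:lyapunov-condition}. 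The conditional WLLN then yields $A_n - (\widehat S_n^{\dCRThat})^2 \mid \mathcal F_n \convpp 0$, and Lemma~\ref{lem:conditional-convergence-to-unconditional} upgrades this to unconditional convergence in probability.

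For $B_n$, Theorem~\ref{thm:normal-limit} gives $T_n^{\dCRThat}(\srxk, \srx, \sry, \srz)/\widehat S_n^{\dCRThat} \mid \srx, \sry, \srz \convdp N(0,1)$; integrating the conditional tail bounds against bounded convergence shows this ratio is unconditionally $O_p(1)$, so
$$\frac{B_n}{(\widehat S_n^{\dCRThat})^2} = \frac{1}{n}\left(\frac{T_n^{\dCRThat}(\srxk, \srx, \sry, \srz)}{\widehat S_n^{\dCRThat}}\right)^2 = O_p(1/n) \convp 0.$$
Combining the two displays and using~\eqref{eq:var-bounded-below} to bound $(\widehat S_n^{\dCRThat})^2$ below by $\epsilon$ with high probability yields $(\widehat S_n^{\GCM}(\srxk, \srx, \sry, \srz))^2/(\widehat S_n^{\dCRThat})^2 \convp 1$; the same lower bound propagates to the GCM variance asymptotically, so the reciprocal also converges to 1, giving~\eqref{eq:var-ratio-ndcrt-dcrt}. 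The only mildly subtle point is promoting the conditional distributional convergence of Theorem~\ref{thm:normal-limit} to an unconditional $O_p(1)$ bound on the resampled $\dCRThat$ statistic; everything else reduces to direct applications of the conditional WLLN and the lower bound~\eqref{eq:var-bounded-below}.
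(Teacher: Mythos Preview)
Your proof is correct and follows essentially the same route as the paper: both decompose the resampled GCM variance as the sample second moment minus the squared sample mean, handle the second-moment term $A_n$ via the conditional WLLN (Theorem~\ref{thm:wlln_cond}) with the $(1+\delta/2)$-moment condition supplied by~\eqref{eq:lyapunov-condition}, and then finish by dividing through using~\eqref{eq:var-bounded-below}. The only minor difference is in how the squared-mean term $B_n$ is dispatched: the paper applies the conditional WLLN directly to $\frac{1}{n}\sum_i(\srxk_i-\widehat\mu_{n,x}(\srz_i))(\sry_i-\widehat\mu_{n,y}(\srz_i))$ (which has conditional mean zero), whereas you invoke Theorem~\ref{thm:normal-limit} to get $O_p(1)$ for the normalized resampled statistic and then pick up an extra $1/n$; both are valid, though the paper's route avoids the small detour through bounded convergence to promote conditional convergence in distribution to unconditional tightness.
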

\begin{proof}
Define 
\begin{equation*}
W_{in}  \equiv  (\srxk_i - \widehat \mu_{n,x}(\srz_i))(\sry_i - \widehat \mu_{n,y}(\srz_i)),
\end{equation*}
so that 
\begin{equation*}
(\widehat S_{n}^{\GCM}(\srxk, \srx, \sry, \srz))^2 \equiv \frac{1}{n} \sum_{i=1}^n W_{in}^2 - \left(\frac{1}{n} \sum_{i=1}^n W_{in}\right)^2.
\end{equation*} 
First we claim that $\frac{1}{n}\sum_{i = 1}^n W_{in} \convp 0$.  We will use conditional WLLN (Theorem~\ref{thm:wlln_cond}) with $\mathcal{F}_n \equiv \sigma(\srx,\sry,\srz)$. First note that $\E[W_{in} \mid \mathcal{F}_n] = 0$ by construction. We also check the moment condition~\eqref{eq:wlln_cond_assumption}:
\begin{align*}
	\frac{1}{n^{2+\delta}} \sum_{i = 1}^n &\E[|W_{in}|^{2+\delta} \mid \mathcal{F}_n] \\&= \frac{1}{n^{1+\delta/2}}\frac{1}{n^{1+\delta/2}} \sum_{i=1}^n |\sry_i-\widehat\mu_{n,y}(\srz_i)|^{2+\delta}\E\left[|\srxk_i-\widehat\mu_{n,x}(\srz_i)|^{2+\delta}\mid \srx,\srz\right] \convp 0,
\end{align*} 
where the latter convergence is by the assumption~\eqref{eq:lyapunov-condition}.
Hence we have that $\frac{1}{n} \sum_{i = 1}^n W_{in}  \mid \mathcal F_n \convpp 0$ which by Lemma \ref{lem:conditional-convergence-to-unconditional} implies $\frac{1}{n} \sum_{i=1}^n W_{in} \convp 0$.

Next we show that $\frac{1}{n} \sum_{i=1}^n W^2_{in} - (\widehat S_{n}^{\dCRThat})^2 \convp 0$. We will use conditional WLLN with $\mathcal{F}_n = \sigma(\srx,\sry,\srz)$, observing that $\E\left[\frac{1}{n} \sum_{i=1}^n W^2_{in} \mid \mathcal{F}_n\right] = (\widehat S_{n}^{\dCRThat})^2 $. Next we verify the moment condition~\eqref{eq:wlln_cond_assumption}:
\begin{align}
	\frac{1}{n^{1+\delta/2}} \sum_{i = 1}^n &\E[|W_{in}|^{2+\delta} \mid \mathcal{F}_n] \\&= \frac{1}{n^{1+\delta/2}} \sum_{i=1}^n |\sry_i-\widehat\mu_{n,y}(\srz_i)|^{2+\delta}\E\left[|\srxk_i-\widehat\mu_{n,x}(\srz_i)|^{2+\delta}\mid \srx,\srz\right] \convp 0,
\end{align} 
where the latter convergence is by the assumption~\eqref{eq:lyapunov-condition}.
Hence we have that $\frac{1}{n} \sum_{i = 1}^n W^2_{in} - (\widehat S_{n}^{\dCRThat})^2  \mid \mathcal F_n \convpp 0$ which by Lemma \ref{lem:conditional-convergence-to-unconditional} implies that $\frac{1}{n} \sum_{i=1}^n W^2_{in} - (\widehat S_{n}^{\dCRThat})^2 \convp 0$.

Combining both of these results we find that 
\begin{equation*}
(\widehat S_{n}^{\GCM}(\srxk, \srx, \sry, \srz))^2 - (\widehat S_{n}^{\dCRThat}(\srx, \sry, \srz))^2 \convp 0. 
\end{equation*}
Now using the nondegeneracy condition~\eqref{eq:var-bounded-below} we can conclude that~\eqref{eq:var-ratio-ndcrt-dcrt} holds true, as desired.
\end{proof}

\section{Proofs for Section~\ref{sec:optimality}} \label{sec:optimality-proofs}

The goal of this section is to prove our main optimality result (Theorem~\ref{thm:optimality}) and Corollary~\ref{cor:RKHS_example}. The idea of the proof of Theorem~\ref{thm:optimality} is to reduce the problem to a semiparametric testing problem, and then to use existing semiparametric optimality theory. To this end, we first review the relevant semiparametric theory (Section~\ref{sec:semiparametric-preliminaries}). Then we leverage this theory to prove Theorem~\ref{thm:optimality} (Section~\ref{sec:optimality-proof}) and verify Corollary~\ref{cor:RKHS_example} (Section~\ref{sec:proof-rkhs}). Finally, we carry out deferred semiparametric computations (Section~\ref{sec:proof-of-lemma-9}).

\subsection{Semiparametric preliminaries} \label{sec:semiparametric-preliminaries}

Consider a semiparametric model parameterized by 
\begin{equation}
	(\beta, g) \in \R \times \H_g \subseteq \R \times L^2(\nu),
\end{equation}
where $\nu$ is a measure on $\R^p$ and $\H_g\subseteq L^2(\nu)$ is a linear subspace. First, we define a notion of local Type-I error control within the context of the semiparametric model.
\begin{definition} \label{def:type-I-control}
Fix a point $g_0 \in \H_g$, and define $\theta_0 \equiv (0, g_0)$. A sequence of tests $\phi_n$ of $H_0: \beta = 0$ has asymptotic Type-I error control at $\theta_0$ relative to the tangent space $\dot \law_{\theta_0}$ if, for each submodel $t \mapsto \law_{(0, g_t)}$ with score in $\dot \law_{\theta_0}$ along which $\beta$ is differentiable, we have
\begin{equation}
\limsup_{n \rightarrow \infty}\ \E_{\law_{(0, g_{1/\sqrt{n}})}}[\phi_n(W)] \leq \alpha.
\end{equation}
\end{definition}

This definition is most similar to that of \citet{Choi1996}, except the latter paper does not explicitly use the language of tangent spaces; our definition accommodates Type-I error control over more restricted sets of null distributions reflecting regularity conditions. Next we state a version of the classic semiparametric optimality result:

\begin{theorem}[Theorem 1 in \cite{Choi1996}, Theorem 25.44 in \cite{VDV1998}, Theorem 18.12 in \cite{Kosorok2008}] \label{thm:classic-semiparametric-optimality}
Consider a semiparametric model $\{\law_{\beta, g}: (\beta, g) \in \R \times \H_g\}$ and a point $\theta_0 \equiv (0, g_0)$ for some $g_0 \in \H_g$. Suppose $\beta$ is differentiable at $\law_{\theta_0}$ relative to the tangent space $\dot \law_{\theta_0}$ with efficient influence function $\widetilde S/\widetilde I(\theta_0)$, where $\widetilde S$ is the efficient score and $\widetilde I(\theta_0) > 0$ is the efficient information. For any sequence of tests $\phi_n$ of $H_0: \beta = 0$ with asymptotic Type-I error control at $\theta_0$ relative to the tangent space $\dot \law_{\theta_0}$ and any differentiable submodel $\law_t = \law_{(th_\beta, g_t)}$ with score in $\dot \law_{\theta_0}$ we have
\begin{equation}
\limsup_{n \rightarrow \infty}\ \E_{\law_{1/\sqrt{n}}}[\phi_n(W)] \leq 1 - \Phi(z_{1-\alpha} - h_\beta \cdot \widetilde I(\theta_0)^{1/2}).
\end{equation}
This bound is achieved by the efficient score test $\phi_n^{\textnormal{opt}}(\srx, \sry, \srz) \equiv \indicator(T_n^{\textnormal{opt}}(\srx, \sry, \srz) > z_{1-\alpha})$, where
\begin{equation}
T_n^{\textnormal{opt}}(\srx, \sry, \srz) \equiv \frac{1}{\widetilde I(\theta_0)^{1/2}n^{1/2}}\sum_{i = 1}^n \widetilde S(\srx_i, \sry_i, \srz_i).
\label{eq:normalized-efficient-score}
\end{equation}
In other words,
\begin{equation}
\begin{split}
\lim_{n \rightarrow \infty}\E_{\law_{1/\sqrt{n}}}[\phi_n^{\textnormal{opt}}(W)] &= \lim_{n \rightarrow \infty}\P_{\law_{1/\sqrt{n}}}[T_n^{\textnormal{opt}}(\srx, \sry, \srz) > z_{1-\alpha}] \\
&= 1 - \Phi(z_{1-\alpha} - h_\beta \cdot \widetilde I(\theta_0)^{1/2}).
\end{split}
\label{eq:power-of-efficient-score-test}
\end{equation}
\end{theorem}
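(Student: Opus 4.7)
The plan is to use Le Cam's local asymptotic normality (LAN) theory to reduce the problem to a Gaussian shift experiment in which optimality follows from a Neyman--Pearson argument. The achievability direction requires a direct distributional calculation for $T_n^{\textnormal{opt}}$ under local alternatives, while the upper bound requires passing an arbitrary valid test sequence to a limit in the Gaussian experiment and invoking the classical bound there.

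First I would verify the achievability claim~\eqref{eq:power-of-efficient-score-test}. Under $\law_{\theta_0}$, the efficient score $\widetilde S$ has mean zero and variance $\widetilde I(\theta_0)$, so the ordinary CLT gives $T_n^{\textnormal{opt}} \convd N(0,1)$. For the local alternative $\law_{1/\sqrt{n}}$, whose associated submodel score has the form $\dot\ell = h_\beta S_\beta + a$ with $a$ in the nuisance tangent space, a standard LAN expansion yields
\[
\log \frac{d\law_{1/\sqrt{n}}^{\otimes n}}{d\law_{\theta_0}^{\otimes n}} = \frac{1}{\sqrt n}\sum_{i=1}^n \dot\ell(W_i) - \tfrac{1}{2}\E_{\law_{\theta_0}}[\dot\ell^2] + o_{\law_{\theta_0}}(1).
\]
Because $\widetilde S$ is by construction orthogonal to the nuisance tangent space and equals the projection of $S_\beta$ onto its complement, $\mathrm{Cov}(\widetilde S,\dot\ell) = h_\beta \widetilde I(\theta_0)$. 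A joint CLT then yields joint asymptotic normality of $(T_n^{\textnormal{opt}},\log\mathrm{LR}_n)$, and Le Cam's third lemma gives $T_n^{\textnormal{opt}} \convd N(h_\beta \widetilde I(\theta_0)^{1/2},1)$ under $\law_{1/\sqrt n}$. The probability of exceeding $z_{1-\alpha}$ therefore converges to $1-\Phi(z_{1-\alpha}-h_\beta \widetilde I(\theta_0)^{1/2})$.

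For the upper bound, I would invoke convergence of experiments: the sequence of localized models, indexed by $t$ and directions in $\dot\law_{\theta_0}$, converges in the Le Cam sense to a Gaussian shift experiment on the Hilbert space $\dot\law_{\theta_0}$. For any test sequence $\phi_n$ satisfying asymptotic Type-I error control at $\theta_0$ relative to $\dot\law_{\theta_0}$, the uniform boundedness of $\phi_n$ in $[0,1]$ permits passage along a subsequence to a limit test $\phi$ in the Gaussian experiment, and asymptotic level control along every nuisance submodel forces $\phi$ to be level $\alpha$ under arbitrary shifts in the nuisance tangent space. Because $\widetilde S$ is orthogonal to this space, the limit testing problem reduces to a one-dimensional Gaussian location problem along the direction of $\widetilde S$ with variance $\widetilde I(\theta_0)$, in which Neyman--Pearson gives the sharp upper bound $1-\Phi(z_{1-\alpha}-h_\beta \widetilde I(\theta_0)^{1/2})$. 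The main obstacle is the rigorous verification that asymptotic level along every differentiable nuisance submodel translates, in the limit, to a level constraint under all nuisance translations: this requires contiguity along each direction together with the convergence-of-experiments machinery developed in \citet[Chs.~7 and 25]{VDV1998} and \citet[Ch.~18]{Kosorok2008}, after which the desired bound follows from standard invariance and Neyman--Pearson arguments in the Gaussian limit.
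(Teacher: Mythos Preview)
Your proposal is correct and follows the standard LAN/convergence-of-experiments route that the paper itself invokes. The paper does not write out its own proof of this theorem; it simply cites \citet[Theorem 1]{Choi1996}, \citet[Theorem 25.44]{VDV1998}, and \citet[Theorem 18.12]{Kosorok2008}, and remarks that by inspecting the proof of \citet[Theorem 25.44]{VDV1998} one sees that only local Type-I error control (Definition~\ref{def:type-I-control}) along differentiable nuisance submodels is needed, rather than control at every $(0,g)$. Your sketch---Le Cam's third lemma for achievability, convergence to a Gaussian shift experiment plus Neyman--Pearson for the upper bound---is exactly that standard proof, and you have correctly identified the one point requiring care: that the weaker local level assumption still forces the limit test to be level $\alpha$ under all nuisance translations in the Gaussian experiment.
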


This result is like \citet[Theorem 1]{Choi1996}, except it explicitly deals with tangent spaces. On the other hand, the result is like \citet[Theorem 25.44]{VDV1998} or \citet[Theorem 18.12]{Kosorok2008}, except it is written in terms of semiparametric models and assumes Type-I error control in the sense of Definition~\ref{def:type-I-control} above. By comparison, \citet{VDV1998, Kosorok2008} assume Type-I error control at each point $(0, g)$ for $g \in \H_g$. By inspection of the proof of \citet[Theorem 25.44]{VDV1998}, only local Type-I error control (Definition~\ref{def:type-I-control}) is actually needed. In this sense, Theorem~\ref{thm:classic-semiparametric-optimality} can be verified using the same proof as that of \citet[Theorem 25.44]{VDV1998}, specializing to the case of semiparametric models.

\subsection{Proof of Theorem~\ref{thm:optimality}}  \label{sec:optimality-proof}

To apply the semiparametric theory from the previous section, the following lemma (proved in Section~\ref{sec:proof-of-lemma-9}) identifies the tangent space, the efficient score, and the efficient information at $\law_{\theta_0}$. These results are not novel or surprising; similar results are stated, for example, by \citet{Robins2001} in the cases of linear, logistic, and Poisson regressions. Nevertheless, we state and prove Lemma~\ref{lem:semiparametric-results} for a self-contained exposition and for precisely tracking the technical assumptions used. 

\begin{lemma} \label{lem:semiparametric-results}
In the context of the semiparametric model \eqref{eq:alternatives-1}, suppose the following assumptions hold:
\begin{align}
	s^2(\theta_0) \equiv \E_{\law_{\theta_0}}[\V_{\law_{\theta_0}}[\prx|\prz]\V_{\law_{\theta_0}}[\pry|\prz]] > 0; \label{eq:nonsingular-fisher-info} \\
	\ddot{\psi} = K > 0 \text{ and } \E_{\law_{x,z}}[\prx^2] < \infty \text{ OR } \mathrm{supp}(\prx, \prz) \text{ is compact and } \H_g \subseteq C(\R^p), \label{eq:moment-assumptions-app}\\
\E_{\law_{x,z}}[\prx|\ \cdot \ ] \in \H_g. \label{eq:conditional-expectation-app}
\end{align}
For each $h = (h_\beta, h_g) \in \R \times \H_g$, the parametric submodel $t \mapsto \law_{(th_\beta, g_0 + th_g)}$ is differentiable in quadratic mean at $t = 0$ with score function
\begin{equation}
S(\prx, \pry, \prz) = \big(\prx h_\beta+h_g(\prz)\big)(\pry-\E_{\law_{\theta_0}}[\pry \mid \prz])
\label{eq:gcm-score}
\end{equation}
and satisfies the following local asymptotic normality:
\small
\begin{align}
	\log\prod_{i=1}^n\frac{\mathrm{d}\law_{\theta_n(h)}}{\mathrm{d}\law_{\theta_0}}(X_i,Y_i,Z_i)
	&
	=\frac{1}{\sqrt{n}}\sum_{i=1}^n S(\srx_i, \sry_i, \srz_i) -\frac{1}{2}\V_{\law_{\theta_0}}[S(\prx, \pry, \prz)]+o_{\law_{\theta_0}}(1).
	\label{eq:LAN}
\end{align}
\normalsize
The parameter $\beta$ is differentiable at $\law_{\theta_0}$ relative to the tangent space
\begin{align}
\dot \law_{\theta_0} \equiv \{\big(\prx h_\beta+h_g(\prz)\big)(\pry-\E_{\law_{\theta_0}}[\pry \mid \prz]) : (h_\beta,h_g)\in\mathbb{R}\times \mathcal{H}_g\}
	\label{eq:GCM-tangent-space}
\end{align}
with efficient score function
\begin{equation}
\widetilde S(\prx, \pry, \prz) = (\prx - \E_{\law_{\theta_0}}[\prx \mid \prz])(\pry- \E_{\law_{\theta_0}}[\pry\mid \prz]),
\label{eq:GCM-efficient-score}
\end{equation}
efficient information 
\begin{equation}
\widetilde I(\theta_0) = s^2(\theta_0) \equiv \E_{\law_{\theta_0}}[\V_{\law_{\theta_0}}[\prx|\prz]\V_{\law_{\theta_0}}[\pry|\prz]],
\label{eq:GCM-efficient-info}
\end{equation}
and efficient influence function equal to the ratio of the efficient information and the efficient score.

\end{lemma}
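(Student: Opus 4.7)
The plan is to reduce the joint model $\law_{\theta}$ to its conditional factor $f_{\eta(t;x,z)}(\pry|\prx,\prz)$ (since $\law_{x,z}$ does not depend on $t$), invoke the standard differentiability-in-quadratic-mean (DQM) calculation for one-parameter exponential families in the natural parameter, pass DQM from the conditional to the joint model by dominated convergence, and finally carry out the usual tangent-space projection to identify $\widetilde S$ and $\widetilde I(\theta_0)$.

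For the DQM step, the conditional log-likelihood along the submodel $t\mapsto\law_{(th_\beta,g_0+th_g)}$ has natural parameter $\eta(t;x,z)=t(xh_\beta+h_g(z))+g_0(z)$, and for a one-parameter exponential family the mapping $\eta\mapsto\sqrt{f_\eta}$ is differentiable in $L^2(\mu)$ with score $\ell_\eta(y)=y-\dot\psi(\eta)$. Integrating the pointwise-in-$(x,z)$ DQM identity against $\law_{x,z}$ and invoking dominated convergence yields joint DQM with score $S$ as in~\eqref{eq:gcm-score}, provided the integrand can be uniformly dominated. Under Case~1 of~\eqref{eq:moment-assumptions-app}, the conditional Fisher information $\ddot\psi\equiv K$ is constant and $\E_{\law_{x,z}}[(\prx h_\beta+h_g(\prz))^2]<\infty$ since $h_g\in L^2$ and $\E[\prx^2]<\infty$; under Case~2, compact support of $(\prx,\prz)$ together with continuity of $h_g$ and of $\ddot\psi$ (automatic for exponential families on compacta) give the required uniform bounds. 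DQM plus independence across the $n$ samples yields the local asymptotic normality expansion~\eqref{eq:LAN} evaluated at $t=1/\sqrt{n}$ by the standard argument of \citet[Theorem~7.2]{VDV1998}.

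Since $\H_g$ is a linear subspace and $S$ is linear in $h=(h_\beta,h_g)$, the collection of scores as $h$ ranges over $\R\times\H_g$ coincides with the linear space $\dot\law_{\theta_0}$ in~\eqref{eq:GCM-tangent-space}. To identify the efficient score I would project the score of $\beta$, namely $\dot\ell_\beta\equiv\prx(\pry-\dot\psi(g_0(\prz)))$, onto the orthogonal complement in $L^2(\law_{\theta_0})$ of the nuisance tangent space $\mathcal N\equiv\{h_g(\prz)(\pry-\dot\psi(g_0(\prz))):h_g\in\H_g\}$. Exploiting conditional independence under $\theta_0$ (since $\beta=0$) together with $\E_{\law_{\theta_0}}[\pry\mid\prz]=\dot\psi(g_0(\prz))$, the candidate $\widetilde S$ in~\eqref{eq:GCM-efficient-score} satisfies, for every $h_g\in\H_g$,
\begin{equation*}
	\E_{\law_{\theta_0}}\bigl[\widetilde S\cdot h_g(\prz)(\pry-\dot\psi(g_0(\prz)))\bigr]=\E_{\law_{\theta_0}}\bigl[h_g(\prz)(\prx-\E_{\law_{x,z}}[\prx\mid\prz])\,\V_{\law_{\theta_0}}[\pry\mid\prz]\bigr]=0,
\end{equation*}
confirming $\widetilde S\perp\mathcal N$. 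Assumption~\eqref{eq:conditional-expectation-app} ensures $\prz\mapsto\E_{\law_{x,z}}[\prx\mid\prz]\in\H_g$, so $\dot\ell_\beta-\widetilde S\in\mathcal N$; hence $\widetilde S$ is the desired projection. A final use of $\pry\independent\prx\mid\prz$ under $\theta_0$ gives $\widetilde I(\theta_0)=\E_{\law_{\theta_0}}[\widetilde S^{\,2}]=s^2(\theta_0)$, positive by~\eqref{eq:nonsingular-fisher-info}, and the efficient influence function is $\widetilde S/\widetilde I(\theta_0)$ by definition.

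The main technical obstacle will be the DQM verification, where the two disjoint regularity alternatives in~\eqref{eq:moment-assumptions-app} force two essentially different domination arguments for the same conclusion; the tangent-space identification and the projection for the efficient score are by comparison straightforward once conditional independence under $\theta_0$ is exploited, and the LAN expansion is a direct consequence of DQM plus i.i.d.\ sampling.
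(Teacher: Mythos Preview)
Your proposal is correct and follows the same high-level structure as the paper, but with two noteworthy differences in execution.

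For the DQM step you sketch a conditional-to-joint argument via dominated convergence; the paper instead applies \citet[Lemma~7.6]{VDV1998} directly to the joint density, checking pointwise continuous differentiability of $t\mapsto\sqrt{p_t(x,y,z)}$ and well-definedness and continuity of the Fisher information $I_t$ at $t=0$. The latter route absorbs the uniformity issue you flag as the main obstacle into the continuity check for $I_t$, so no separate domination argument is needed. The more interesting divergence is in the efficient-score computation. The paper develops the full score-operator machinery: it shows the nuisance score operator $A_g:L^2(\nu)\to L^2(\law_{\theta_0})$ is continuous with continuously invertible $A_g^*A_g$ (using that $\ddot\psi(g_0(\prz))$ is bounded away from zero under either branch of~\eqref{eq:moment-assumptions-app}), concludes that $A_g(L^2(\nu))$ is closed, computes the projection onto this larger space via $A_g(A_g^*A_g)^{-1}A_g^*$, and finally argues that the projection onto $\overline{A_g(\H_g)}$ coincides with it because~\eqref{eq:conditional-expectation-app} places the minimizer inside $\H_g$. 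Your approach is more direct: you exhibit the decomposition $\dot\ell_\beta=\widetilde S+(\dot\ell_\beta-\widetilde S)$ and verify by hand that the first summand is orthogonal to $\mathcal N$ while the second lies in $\mathcal N$ (the latter precisely by~\eqref{eq:conditional-expectation-app}), which immediately identifies $\widetilde S$ as the projection onto $\overline{\mathcal N}^\perp$ without any operator inverses or closedness arguments. The paper's route is the systematic one that would generalize when the efficient score is not known in advance; yours is the shortcut that works cleanly here because the answer is already given in the statement.
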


Note that assumptions~\eqref{eq:moment-assumptions-app} and~\eqref{eq:conditional-expectation-app} of Lemma~\ref{lem:semiparametric-results} are the same as assumptions~\eqref{eq:moment-assumptions} and~\eqref{eq:conditional-expectation}
of Theorem~\ref{thm:optimality} in the main text; they are restated here for the reader's convenience. Using Lemma~\ref{lem:semiparametric-results} in conjunction with Theorem~\ref{thm:classic-semiparametric-optimality}, we can prove Theorem~\ref{thm:optimality}.

\begin{proof}[Proof of Theorem~\ref{thm:optimality}]

Let $\phi_n$ be a level $\alpha$ test of $H_0$ as defined in equation~\eqref{eq:testing-problem}, and fix $g_0 \in \mathcal S$. By assumption~\eqref{eq:interior-point}, $\theta_n(0, h_g) \in \regclass$ for all $h_g \in \H_g$ for all sufficiently large $n$. Therefore, $\phi_n$ also has asymptotic Type-I error control at $\theta_0 \equiv (0, g_0)$ relative to the tangent space $\dot \law_{\theta_0}$~\eqref{eq:GCM-tangent-space} in the sense of Definition~\ref{def:type-I-control}. Indeed, it suffices to take submodels $t \mapsto \law_{(0, g_t)}$ for $g_t = g_0 + th_g$ and $h_g \in \H_g$, so that $\law_{(0, g_{1/\sqrt{n}})} = \law_{\theta_n(0, h_g)}$. By Lemma~\ref{lem:semiparametric-results} (applicable because its first assumption~\eqref{eq:nonsingular-fisher-info} is implied by assumption~\eqref{eq:sp-assumptions} of Theorem~\ref{thm:optimality} and its last two assumptions are also assumed by Theorem~\ref{thm:optimality}), the assumptions of Theorem~\ref{thm:classic-semiparametric-optimality} are met with efficient score $\widetilde S$~\eqref{eq:GCM-efficient-score} and efficient information $s^2(\theta_0)$~\eqref{eq:GCM-efficient-info}, so taking submodels $t \mapsto \law_{(th_\beta, g_0 + th_g)}$ we find
\begin{equation}
	\limsup_{n \rightarrow \infty}\ \E_{\law_{\theta_n(h)}}[\phi_n(\srx, \sry, \srz)] \leq 1 - \Phi(z_{1-\alpha} - h_\beta \cdot s(\theta_0)).
\end{equation}
On the other hand, because $\law_{\theta_0} \in \mathscr{R}$, it follows that
\begin{equation}
\begin{split}
T^{\GCM}_n(\srx, \sry, \srz) &= \frac{1}{s(\theta_0)\sqrt{n}}\sum_{i = 1}^n (\srx_i - \E_{\law_{\theta_0}}[\srx_i \mid \srz_i])(\sry_i- \E_{\law_{\theta_0}}[\sry_i \mid \srz_i]) + o_{\law_{\theta_0}}(1) \\
&= \frac{1}{\widetilde I(\theta_0)^{1/2}n^{1/2}}\sum_{i = 1}^n \widetilde S(\srx_i, \sry_i, \srz_i) + o_{\law_{\theta_0}}(1) \\
&= T_n^{\text{opt}}(\srx, \sry, \srz) + o_{\law_{\theta_0}}(1).
\end{split}
\label{eq:asymptotic-equivalence-gcm-opt}
\end{equation}
The first equality follows from the proof of Theorem 6 in \citet{Shah2018}, the second follows from the derivations of the efficient score~\eqref{eq:GCM-efficient-score} and efficient information~\eqref{eq:GCM-efficient-info} in Lemma~\ref{lem:semiparametric-results}, and the third from equation~\eqref{eq:normalized-efficient-score} in Theorem~\ref{thm:classic-semiparametric-optimality}. From the local asymptotic normality~\eqref{eq:LAN} it follows that $\prod_{i=1}^n\law_{\theta_{n}(h)}$ and $\prod_{i=1}^n\law_{\theta_0}$ are contiguous by Le Cam's first lemma \citep[Example 6.5]{VDV1998}. It follows that 
\begin{equation*}
	T^{\GCM}_n(\srx, \sry, \srz) = T_n^{\text{opt}}(\srx, \sry, \srz) + o_{\law_{\theta_{n}(h)}}(1)
\end{equation*}
We therefore find that
\begin{align*}
	1 - \Phi(z_{1-\alpha} - h_\beta \cdot s(\theta_0)) = \lim_{n\rightarrow\infty}\P_{\law_{\theta_{n}(h)}}[T_n^{\text{opt}}>z_{1-\alpha}]=\lim_{n\rightarrow\infty}\P_{\law_{\theta_{n}(h)}}[T_n^{\GCM}>z_{1-\alpha}].
\end{align*}
The first inequality follows from the conclusion~\eqref{eq:power-of-efficient-score-test} of Theorem~\ref{thm:classic-semiparametric-optimality} and the second equality follows from equation~\eqref{eq:asymptotic-equivalence-gcm-opt} and Le Cam's first lemma. Therefore we have shown that for any $h \in (0, \infty) \times \H_g$ and any level $\alpha$ conditional independence test $\phi_n$, we have
\begin{equation*}
\limsup_{n \rightarrow \infty}\ \E_{\law_{\theta_n(h)}}[\phi_n(\srx, \sry, \srz)] \leq 1 - \Phi(z_{1-\alpha} - h_\beta \cdot s(\theta_0)) = \lim_{n\rightarrow\infty}\P_{\law_{\theta_{n}(h)}}(T_n^{\GCM}>z_{1-\alpha}).
\end{equation*}
This shows that $\phi_n^{\GCM}$ is LAUMP($g_0$) and verifies the claimed asymptotic power~\eqref{eq:power-of-gcm-test}. Furthermore, since $g_0 \in \mathcal S$ was chosen arbitrarily, it follows that $\phi_n^{\GCM}$ is also LAUMP($\mathcal S$). This completes the proof.
\end{proof}

\subsection{Proof of Corollary \ref{cor:RKHS_example}} \label{sec:proof-rkhs}

It suffices to verify each of the four assumptions of Theorem~\ref{thm:optimality}.

\paragraph*{Verification of assumption~\eqref{eq:sp-assumptions}.}

Note that assumption~\eqref{eq:sp2} is satisfied because by construction, $\prx - \mu_{x}(\prz)$ and $\pry - \mu_y(\prz)$ are independent standard normal random variables for any $\law \in \regclass$. Next, let 
\begin{align}\label{eq:eigen_decomp}
	k(z,z')=\sum_{j=1}^{\infty}\lambda_{j}e_{j}(z)e_{j}(z')
\end{align}
be an eigendecomposition of the Sobolev kernel $k$ with eigenfunctions $e_{j}$ orthonormal with respect to $\text{Unif}[0,1]$. To verify assumption~\eqref{eq:sp1} given assumption~\eqref{eq:sp2}, it suffices to prove the following statements \citep[Theorem 11 and Remark 12]{Shah2018}: 
\begin{align*}
\V_{\law}[\prx|\prz], \V_{\law}[\pry|\prz] \leq \sigma^2 < \infty \quad \text{almost surely, for all } \law \in \regclass;& \\
\sup_{\law \in \regclass}\ \max(\norm{\mu_{n,x}}_{W^{1,2}[0,1]}, \norm{\mu_{n,y}}_{W^{1,2}[0,1]}) < \infty;& \\
\sum_{j = 1}^\infty \lambda_{j} < \infty.&
\end{align*}
The first two of these statements follow directly from the construction of $\regclass$. The third follows from the eigendecomposition of the Sobolev kernel under the uniform measure on $[0,1]$ \citep[Example 12.23]{Wainwright2019} with $\lambda_j = (\frac{2}{(2j-1)\pi})^2$.

\paragraph*{Verification of assumption~\eqref{eq:moment-assumptions}.}

Since we are using the normal exponential family, we have $\ddot \psi = 1$. Furthermore, 
\begin{equation}
\E_{\law_{x,z}}[\prx^2] = \E_{\law_{x,z}}[\E_{\law_{x,z}}[\prx^2|\prz]] = \E_{\law_{x,z}}[\mu_{0x}(\prz)^2 + 1] < \infty,
\end{equation}
since $\mu_{0x} \in W^{(1,2)}([0,1]) \subseteq L^2([0,1])$.

\paragraph*{Verification of assumption~\eqref{eq:conditional-expectation}.}

By construction, $\E_{\law_{x,z}}[\prx |\ \cdot\ ] = \mu_{0x} \in W^{1,2}([0,1])$.

\paragraph*{Verification of assumption~\eqref{eq:interior-point}.} This assumption is a consequence of the openness of the ball $B_{W^{1,2}}(0, C)$ in $W^2([0,1])$.

\subsection{Proof of Lemma~\ref{lem:semiparametric-results}} \label{sec:proof-of-lemma-9}

\paragraph*{Differentiability of parametric submodels.}

Consider the parametric submodel $t \mapsto \law_{(th_\beta, g_0 + th_g)}$ for some $(h_\beta, h_g) \in \R \times \H_g$, and denote 
\begin{equation*}
\eta_t(x, z) \equiv x th_\beta+g_0(z)+th_g(z). 
\end{equation*}
Letting $\lambda_{y}$ be the base measure of the exponential family $f_\eta$, we denote $\lambda \equiv \law_{x,z} \times \lambda_{y}$ and $\mathrm d \law_{(th_\beta, g_0 + th_g)}(x, y, z)/\mathrm d \lambda$ the density of the parametric model for $(\prx, \pry, \prz)$ with respect to $\lambda$. According to \citet[Lemma 7.6]{VDV1998}, this submodel is differentiable in quadratic mean at $t = 0$ if the map 
\begin{equation*}
t \mapsto \sqrt{\frac{\mathrm d \law_{(th_\beta, g_0 + th_g)}}{\mathrm d\lambda}(x, y, z)} = \sqrt{\frac{\mathrm df_{\eta_t}}{\mathrm d\lambda_y}(y)} = \exp\left(\frac{y\eta_t-\psi(\eta_t)}{2}\right)
\end{equation*}
is continuously differentiable at $t = 0$ for each $(x,  y, z) \in \R^{1 + 1 + p}$ and the elements of the Fisher information matrix are well-defined and continuous at $t = 0$. To show continuous differentiability of the square root density, we compute that
\begin{align*}
	\frac{\partial}{\partial t}\sqrt{\frac{\mathrm d \law_{(th_\beta, g_0 + th_g)}}{\mathrm d\lambda}(x, y, z)} = \exp\left(\frac{y\eta_t-\psi(\eta_t)}{2}\right)\cdot\frac{(y-\dot{\psi}(\eta_t))}{2}\cdot (xh_\beta + h_g(z)).
\end{align*}
The linearity of $\eta_t$ in $t$ and the smoothness of $\psi$ imply the continuous differentiability of the above function in $t$.

Next consider the information matrix
\begin{align*}
	I_t &\equiv \E_{\law_{(th_\beta, g_0 + th_g)}}\left[\left(\frac{\partial}{\partial t}\log \frac{\mathrm d \law_{(th_\beta, g_0 + th_g)}}{\mathrm d\lambda}(\prx, \pry, \prz)\right)^2\right] \\
	&= \E_{\law_{(th_\beta, g_0 + th_g)}}\left[\left(\frac{\partial}{\partial t}(\pry \eta_t(\prx, \prz) - \psi(\eta_t(\prx, \prz)))\right)^2\right] \\
	&= \E_{\law_{(\beta_t,g_t)}}[(\pry-\dot{\psi}(\eta_t(\prx, \prz)))^2(\prx h_\beta+h_g(\prz))^2] \\
	&=\E_{\law_{x,z}}\left[(\prx h_\beta+h_g(\prz))^2\ddot{\psi}(\eta_t(\prx, \prz))\right].
\end{align*}
We must show that $I_t$ is well-defined and continuous at $t = 0$. By assumption~\eqref{eq:moment-assumptions-app}, either $\ddot \psi = K > 0$ and $\E_{\law_{x,z}}[\prx^2] < \infty$ or $(\prx,\prz)$ is compactly supported and $\H_g \subseteq C(\R^p)$. If $\ddot \psi = K > 0$ and $\E_{\law_{x,z}}[\prx^2] < \infty$, then we have 
\begin{align*}
I_0=C\E_{\law_{x,z}}\left[(\prx h_\beta+h_g(\prz))^2\right]&\leq 2Kh_\beta^2\E_{\law_{x,z}}[\prx^2]+2K\E_{\law_{x,z}}[h_g^2(\prz)] \\
&= 2Kh_\beta^2\E_{\law_{x,z}}[\prx^2]+2K\|h_g\|_{L^2(\nu)}^2 <\infty.
\end{align*}
Note that, for the sake of this proof, we denote
\begin{equation}
\nu \equiv \law_{x,z}(\prz), \quad \text{so that} \quad \|h_g\|_{L^2(\nu)}^2 \equiv \E_{\law_{x,z}}[h_g(\prz)^2] < \infty \quad \text{for } h_g \in \H_g. 
\end{equation}
Hence, $I_0$ is well-defined. $I_t$ is also continuous at $t = 0$ because it does not depend on $t$. 

On the other hand, suppose $(\prx,\prz)$ is compactly supported and $\H_g \subseteq C(\R^p)$. The quantity inside the expectation defining $I_0$ is a bounded random variable, because the assumed continuity of $h_g$ implies that this quantity is a continuous function of a random vector $(\prx, \prz)$ with compact support. Hence, $I_0$ is well-defined because it is the expectation of a bounded random variable. To show continuity of $I_t$, note that by the assumed continuity of $h_g$ and compact support of $(\prx,\prz)$ we have $\sup_{t \leq 1} \eta_t(\prx,\prz) \leq B < \infty$ almost surely. Therefore, for $t \leq 1$, we have
\begin{align*}
	|I_t-I_0|
	&
	=\left|\E_{\law_{x,z}}\left[(\prx h_\beta+h_g(\prz))^2(\ddot{\psi}(\eta_t)-\ddot{\psi}(\eta_0))\right]\right|\\
	&
	\leq \E_{\law_{x,z}}\left[(\prx h_\beta+h_g(\prz))^2|\ddot{\psi}(\eta_t)-\ddot{\psi}(\eta_0)|\right]\\
	&
	\leq \E_{\law_{x,z}}\left[(\prx h_\beta+h_g(\prz))^2\sup_{|b| \leq B}|\dddot{\psi}(b)| \cdot |\eta_t-\eta_0|\right]\\
	&
	\leq \sup_{|b| \leq B}|\dddot{\psi}(b)| \cdot \E_{\law_{x,z}}\left[|\prx h_\beta+h_g(\prz)|^3\right] \cdot t.
\end{align*}
We have $\sup_{|b| \leq B}|\dddot{\psi}(b)| < \infty$ because $\dddot \psi$ is a continuous function, and $\prx h_\beta+h_g(\prz)$ almost surely bounded as before. Therefore, we conclude that $|I_t - I_0| \rightarrow 0$ as $t \rightarrow 0$, so $I_t$ is indeed continuous at 0.

Hence, we conclude by \citet[Lemma 7.6]{VDV1998} that the parametric submodel $t \mapsto \law_{(th_\beta, g_0 + th_g)}$ is differentiable in quadratic mean at $t = 0$ with score function
\begin{align*}
S(\prx, \pry, \prz) &= \left.\frac{\partial}{\partial t}(\pry \eta_t(\prx, \prz) - \psi(\eta_t(\prx, \prz)))\right|_{t = 0} \\
&= \left.(\pry-\dot{\psi}(\eta_t(\prx, \prz)))(\prx h_\beta+h_g(\prz))\right|_{t = 0} \\
&= (\prx h_\beta+h_g(\prz))(\pry-\E_{\law_{\theta_0}}[\pry|\prz]),
\end{align*}
as claimed~\eqref{eq:gcm-score}.

\paragraph*{Local asymptotic normality of parametric submodels.}

The local asymptotic normality of parametric submodels~\eqref{eq:LAN} follows from the previously established quadratic mean differentiability \citep[Theorem 7.2]{VDV1998}.

\paragraph*{Efficient score and information.}

For $(h_\beta, h_g) \in \R \times \H_g$, define the score operator
\begin{equation}
	\begin{split}
	A(h) &\equiv \big(\prx h_\beta+h_g(\prz)\big)(\pry-\E_{\law_{\theta_0}}[\pry \mid \prz]) \\
	&=  \prx (\pry-\E_{\law_{\theta_0}}[\pry \mid \prz]) h_\beta + (\pry-\E_{\law_{\theta_0}}[\pry \mid \prz])h_g(\prz) \\
	&\equiv A_{\beta} (h_\beta) + A_{g} (h_g).
	\end{split}
\end{equation}
The tangent space $\dot \law_{\theta_0}$~\eqref{eq:GCM-tangent-space} can then be expressed as the range of $A$: 
\begin{equation}
	\dot \law_{\theta_0} = A(\R \times \H_g).
	\label{eq:GCM-tangent-space-old}
\end{equation}
As discussed in \cite[Section 25.4]{VDV1998}, the efficient score for $\beta$ is
\begin{equation}
\tilde{S} = A_{\beta} - \Pi_{\beta, g}A_{\beta},
\end{equation}
where $\Pi_{\beta, g}$ is the orthogonal projection onto the closure $\overline{A_{g}(\H_g)}$ of the nuisance tangent space $A_{g}(\H_g)$ in $L^2(\law_{\theta_0})$. In other words,
\begin{equation}
\Pi_{\beta, g}A_{\beta} = \argmin{W \in \overline{A_{g}(\H_g)}}\ \|A_{\beta}-W\|_{L^2(\law_{\theta_0})}.
\label{eq:projection-of-score}
\end{equation}
To compute this projection, we first claim that the extended operator $A_{g}: L^2(\nu) \rightarrow L^2(\law_{\theta_0})$ is continuous and that $A_{g}^* A_{g}$ is continuously invertible. To verify continuity of $A_{g}$, note that for $h_g \in L^2(\nu)$ we have
\begin{equation}
\begin{split}
\|A_{g}(h_g)\|^2_{L^2(\law_{\theta_0})} &= \E_{\law_{\theta_0}}\left[\left( (\pry-\E_{\law_{\theta_0}}[\pry \mid \prz])h_g(\prz)\right)^2\right] \\
&= \E_{\law_{\theta_0}}[(\pry-\E_{\law_{\theta_0}}[\pry \mid \prz])^2h^2_g(\prz)] \\
&= \E_{\law_{\theta_0}}[\ddot \psi(g(\prz)) h^2_g(\prz)] \\
&\leq C\E_{\law_{\theta_0}}[h^2_g(\prz)] \\
&= C \|h_g\|^2_{L^2(\nu)}.
\end{split}
\end{equation}
Next, we derive the adjoint operator $A^*_{g}: L^2(\law_{\theta_0}) \rightarrow L^2(\nu)$. For a random variable $W \in L^2(\law_{\theta_0})$, we have
\begin{equation}
\begin{split}
	\langle W, A_{g} h_g \rangle_{L^2(\law_{\theta_0})} & = \E_{\law_{\theta_0}}\left[W (\pry-\E_{\law_{\theta_0}}[\pry \mid \prz])h_g(\prz)\right] \\
	&= \E_{\law_{\theta_0}}[\E_{\law_{\theta_0}}[W(\pry-\E_{\law_{\theta_0}}[\pry\mid \prz])\mid \prz] h_g(\prz)] \\
	&= \langle \E_{\law_{\theta_0}}[W(\pry-\E_{\law_{\theta_0}}[\pry \mid \cdot\ ])\mid \cdot\ ],  h_g \rangle_{L^2(\nu)}.
\end{split}
\end{equation}
It follows that
\begin{equation}
(A^*_{g}W)(z) = \E_{\law_{\theta_0}}[W(\pry-\E_{\law_{\theta_0}}[\pry \mid \prz = z])\mid \prz = z].
\end{equation}
Next we derive that
\begin{equation*}
\begin{split}
	(A^*_{g}A_{g} h_{\eta})(z) &= \E_{\law_{\theta_0}}[A_{g} h_{\eta}(\pry-\E_{\law_{\theta_0}}[\pry\mid \prz= z])\mid \prz= z] \\
	&= \E_{\law_{\theta_0}}[(\pry-\E_{\law_{\theta_0}}[\pry \mid \prz])h_g(\prz)(\pry-\E_{\law_{\theta_0}}[\pry \mid \prz = z])\mid \prz = z] \\
	&= 	\V_{\law_{\theta_0}}[\pry \mid \prz= z] h_g(z) \\
	&= \ddot \psi(g(z))h_g(z).
\end{split}
\end{equation*}
The assumption~\eqref{eq:moment-assumptions-app} implies that $\ddot \psi(g_0(\prz)) \geq c > 0$ almost surely, since either $\ddot \psi$ is a nonzero constant or $g_0(\prz)$ belongs to a compact set almost surely and therefore $\ddot \psi(g_0(\prz))$ belongs to the range of a positive continuous function applied to a compact set. From this it follows that $S^*_{g}S_{g}$ is continuously invertible. Because $A_{g}$ is a continuous linear operator with continuously invertible $A^*_{g}A_{g}$, it follows that $A_{g}(L^2(\nu))$ is closed and that $A_{g}(A^*_{g}A_{g})^{-1}A^*_{g}$ is the orthogonal projection onto this space. Next let us compute the orthogonal projection of the score $A_{\beta}$ onto $A_{g}(L^2(\nu))$. We have
\begin{equation}
	\begin{split}
	(A^*_{g}A_{\beta})(z) &= \E_{\law_{\theta_0}}[A_{\beta}(\pry-\E_{\law_{\theta_0}}[\pry \mid \prz = z])\mid \prz = z] \\
	&= \E_{\law_{\theta_0}}[\prx(\pry- \E_{\law_{\theta_0}}[\pry\mid \prz])(\pry-\E_{\law_{\theta_0}}[\pry\mid \prz = z])\mid \prz = z] \\
	&=  \E_{\law_{\theta_0}}[\prx(\pry-\E_{\law_{\theta_0}}[\pry \mid \prz = z])^2\mid \prz = z] \\
	&= \E_{\law_{\theta_0}}[\prx \mid \prz = z]\ddot\psi(g(z)),
	\end{split}
\end{equation}
and therefore
\begin{equation}
	\begin{split}
	A_{g}(A^*_{g}A_{g})^{-1}A^*_{g}A_{\beta} &= (\pry-\E_{\law_{\theta_0}}[\pry \mid \prz])((A^*_{g}A_{g})^{-1}A^*_{g}A_{\beta})(\prz) \\
	&= (\pry-\E_{\law_{\theta_0}}[\pry \mid \prz])\E_{\law_{\theta_0}}[\prx \mid \prz] \\
	&= A_{g}(\E_{\law_{\theta_0}}[\prx \mid \cdot\ ]).
	\end{split}
\end{equation}
Therefore, we have
\begin{equation}
\argmin{W \in A_{g}(L^2(\nu))} \|A_{\beta}-W\|_{L^2(\law_{\theta_0})} = A_{g}(\E_{\law_{\theta_0}}[\prx \mid \cdot\ ]).
\end{equation}
Since $A_{g}(L^2(\nu))$ is closed, it follows that $\overline{A_{g}(\H_g)} \subseteq A_{g}(L^2(\nu))$. Together with the assumption~\eqref{eq:conditional-expectation-app} that $\E_{\law_{\theta_0}}[\prx \mid \cdot\ ] \in \H_g$ and the definition of the effective score as a projection onto $A_{g}(\H_g)$~\eqref{eq:projection-of-score}, we deduce that
\begin{equation*}
	\Pi_{\beta, g}A_{\beta} = \argmin{W \in \overline{A_{g}(\H_g)}}\ \|A_{\beta}-W\|_{L^2(\law_{\theta_0})} = \argmin{W \in A_{g}(L^2(\nu))} \|A_{\beta}-W\|_{L^2(\law_{\theta_0})} = A_{g}(\E_{\law_{\theta_0}}[\prx \mid \cdot\ ]).
\end{equation*}
Therefore, the efficient score is 
\begin{equation}
	\begin{split}
		\tilde{S} &= A_{\beta} - \Pi_{\beta, g}A_{\beta} \\
		&= \prx (\pry-\E_{\law_{\theta_0}}[\pry \mid \prz]) -   (\pry-\E_{\law_{\theta_0}}[\pry\mid \prz])\E_{\law_{\theta_0}}[\prx \mid \prz] \\
		&=  (\prx- \E_{\law_{\theta_0}}[\prx\mid \prz])(\pry- \E_{\law_{\theta_0}}[\pry \mid \prz]),
	\end{split}
\end{equation}
as claimed~\eqref{eq:GCM-efficient-score}. From there we find that the efficient information is
\begin{equation}
\widetilde I_{\theta_0} = \V_{\law_{\theta_0}}[\tilde{S}] = \E_{\law_{\theta_0}}[\V_{\law_{\theta_0}}[\prx|\prz]\V_{\law_{\theta_0}}[\pry|\prz]] \equiv s^2(\theta_0),
\end{equation}
also as claimed~\eqref{eq:GCM-efficient-info}.

\paragraph*{Differentiability of $\beta$ and efficient influence function.}

By \citet[Lemma 25.25]{VDV1998}, the differentiability of $\beta$ at $\law_{\theta_0}$ with respect to the tangent set $\dot \law_{\theta_0}$ follows from the quadratic mean differentiability proved above and the assumption that $\widetilde I_{\theta_0} = s^2(\theta_0) > 0$~\eqref{eq:nonsingular-fisher-info}. The same lemma gives the efficient influence function as the ratio of the efficient score and the efficient information. This completes the proof.

\section{Additional material related to simulations.}\label{sec:sim_add}

In this section, we present details about existing robustness simulation setups (Section~\ref{sec:sim_liter}), investigate the trade-off between using lasso and post-lasso (Section~\ref{sec:lasso-vs-plasso}) and present the complete simulation results (Section~\ref{sec:additional-simulation-results}).

\subsection{Simulation setup in literature}\label{sec:sim_liter}

Here we provide details about the simulation setups considered in \cite{CetL16, Liu2022a, Li2022}.

\paragraph*{\citet{Liu2022a}}
In this paper, the authors consider the double high-dimensional linear model. Suppose $\{Z_i,Y_i\}_{i=1}^{n}$ is $n=800$ iid data and $Z_i\sim N(\bm{0},\bm{\Sigma}_{p})$ where $p=800$ and $\bm{\Sigma}_p$ is chosen to be AR(1) and the autocorrelation is set to be $0.5$. Then they consider $Y_i=Z_i^{\top}\beta+\epsilon$ where $\beta$ is vector of dimension $p$ and only $s=50$ is set to be nonzero with magnitude $\nu=0.175$ and random sign. They consider two ways to set the nonzero components of $\beta$: spacing these non-zero coefficients equally or choosing them to be the first 50 coefficients of $\beta$. The authors consider, rather than testing conditional independence, the false discovery rate (FDR) of variable selection.

\paragraph*{\citet{CetL16}}
In this paper, the authors consider a bit different setting where $Y|Z$ is now a high-dimensional logistic model. The sample size $n=800$ and $Z_i\sim N(\bm{0},\bm{\Sigma}_{p})$ where $p=1500$ and $\bm{\Sigma}_p$ is chosen to be AR(1) and the autocorrelation is set to be $0.3$. After the sampling, the design matrix is centered and every column is normalized to have norm 1. Similarly, only $s=50$ coordinates of $\beta$ are set to be nonzero and the sign is random whereas the magnitude is set to be $\nu=20$. They set a randomly-chosen set of $s=50$ coefficients of $\beta$ to be nonzero and consider again the FDR control.

\paragraph*{\citet{Li2022}}
In this paper, a similar setting is considered while $Z$ is a data matrix with row $n=250$ and column $p=500$ where each row is sampled from $N(0,\bm{\Sigma}_p)$ and $\bm{\Sigma}_p$ is an AR(1) matrix with autocorrelation $0.5$. However, a crucial difference in this paper is the way to set $\E(X|Z)$ and $\E(Y|Z)$. As for $X$, it is generated by a linear predictor $X=Z\gamma+\epsilon$ where $\gamma$ is a $p$ dimensional vector with first $s=5$ is nonzero and the other coordinate remains zero and $\epsilon$ follows the standard normal distribution. The sign of each coordinate is randomly chosen and the magnitude is set to be $\nu=0.3$. As for $Y$, we set $\beta=\gamma$ and $Y=Z\beta+\xi$ where $\xi$ follows standard normal distribution such that $\xi$ is independent of $\epsilon$. We can see that $X$ and $Y$ support on the same subset of $Z$ so that the marginal association between $\prx$ and $\pry$ is much larger than that in first two simulation designs.

\subsection{Comparing the lasso and post-lasso estimation methods} \label{sec:lasso-vs-plasso}

Compare to the lasso estimation method for $\mu_{n,x}$ and $\mu_{n,y}$, the post-lasso estimation method results in estimates with lower bias but higher variance. This impacts the Type-I error and power of the inferential methods in different ways. For Type-I error, it is only important to have good estimates for $\E[\prx|\prz_{\mathcal A}] \equiv \prz_{\mathcal A}^T\beta_{\mathcal A}$ and $\E[\pry|\prz_{\mathcal A}] \equiv \prz_{\mathcal A}^T \gamma_{\mathcal A}$, where $\mathcal A \subseteq \{1, \dots, p\}$ denotes the set of variables active in both $\E[\prx|\prz]$ and $\E[\pry|\prz]$. Indeed, only the shared active coordinates of $\prz$ act as confounders. On the other hand, for power, it is important to have a good estimate for the entire function $\E[\pry|\prz] = \prz^T \gamma$  \citep{Katsevich2020a}. Therefore, we examine the mean-squared estimation error for $\prz_{\mathcal A}^T\beta_{\mathcal A}$ and $\prz_{\mathcal A}^T\gamma_{\mathcal A}$ in one of our null simulation settings as well as the mean-squared estimation error for $\prz^T \beta$ and $\prz^T\gamma$ in one of our alternative simulation settings (Figure~\ref{fig:MSE}). We find that the post-lasso does a better job estimating the shared active coefficients in the null setting, so the reduced bias in estimating these shared coefficients outweighs the increased variance. On the other hand, the lasso does a better job estimating the entire set of coefficients, so in this case the increased variance outweighs the reduced bias. This explains why the post-lasso-based methods have improved Type-I error control but worse power than the lasso-based methods.

\begin{figure}[!ht]
	\centering
	\includegraphics[width = 0.75\textwidth]{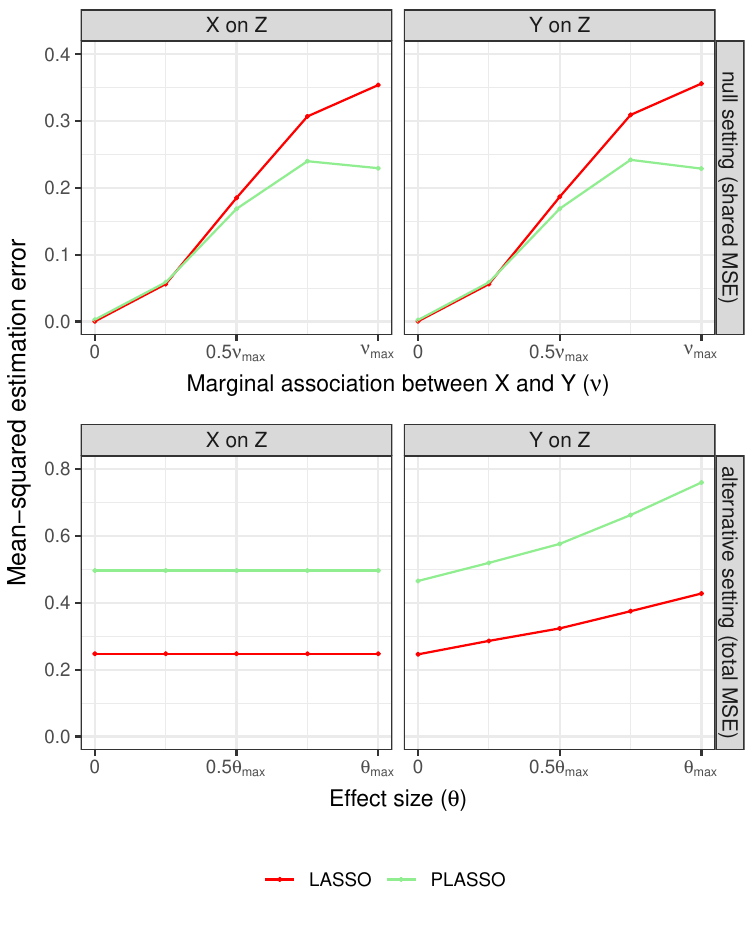}
	\caption{MSE on shared variables and total variables: first column displays the MSE of lasso and post-lasso on shared active variables $Z_{\mathcal{A}}$ and the second column displays the MSE of lasso and post-lasso on total variables. All the experiments are carried out with GCM statistic with $n=100,d=400,s=5,\rho=0.4$.}
	\label{fig:MSE}
\end{figure}

\subsection{Additional simulation results} \label{sec:additional-simulation-results}

Figures \ref{fig:gaussian_supervised_null}-\ref{fig:binomial_semi-supervised_alternative} present the complete simulation results across the null and alternative, Gaussian and binary, and supervised and unsupervised settings.

\begin{figure}[!ht]
	\centering
	\includegraphics[width = \textwidth]{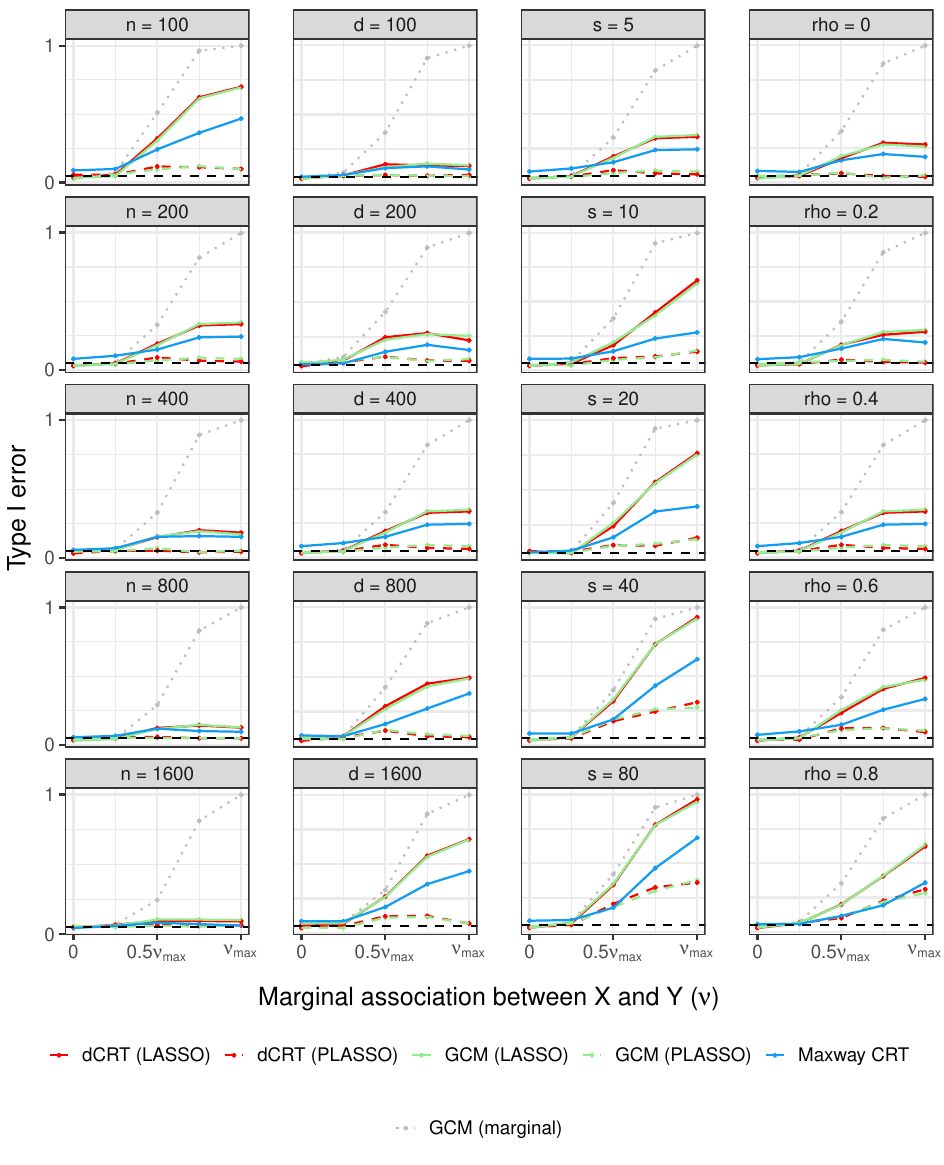}
	\caption{Type-I error in the Gaussian supervised setting.}
	\label{fig:gaussian_supervised_null}
\end{figure}

\begin{figure}[!ht]
	\centering
	\includegraphics[width = \textwidth]{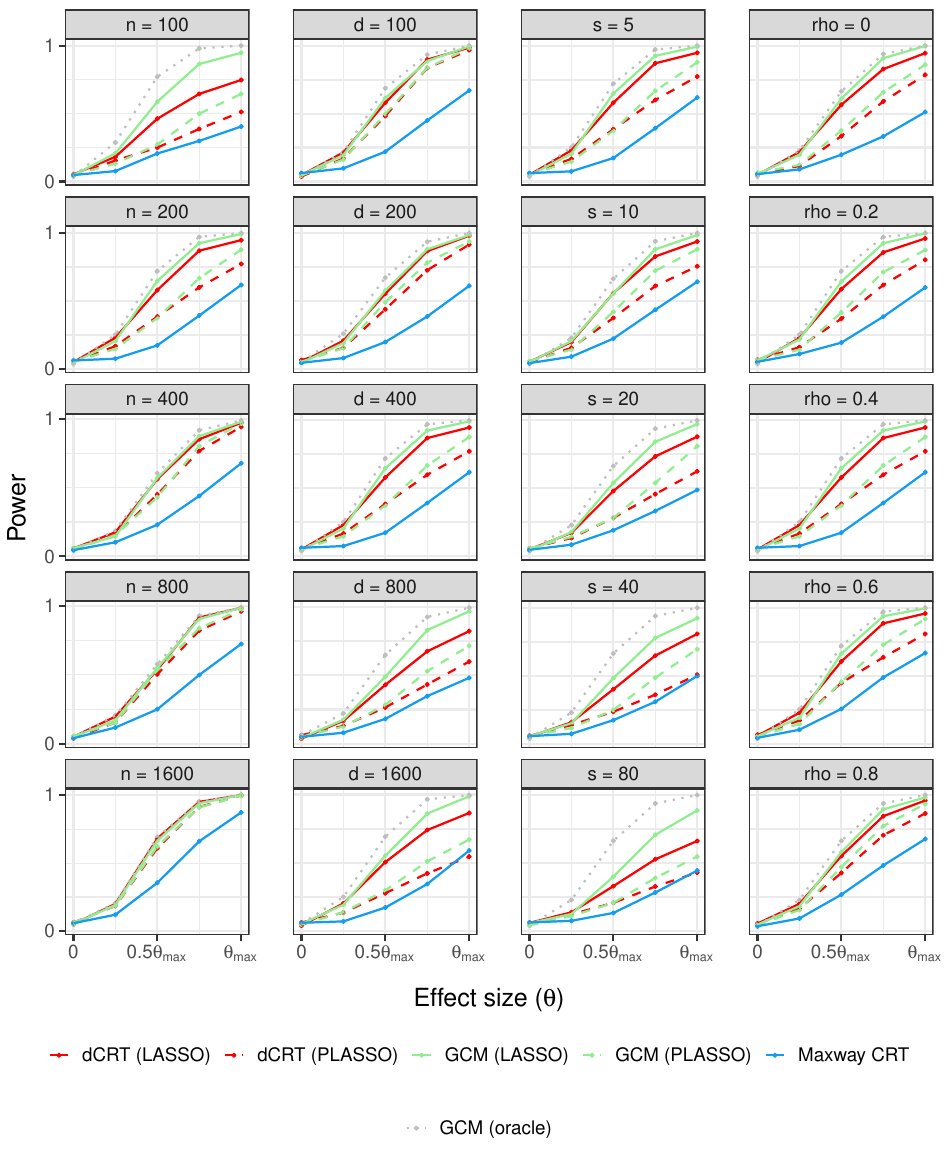}
	\caption{Power in the Gaussian supervised setting.}
	\label{fig:gaussian_supervised_alternative}
\end{figure}

\begin{figure}[!ht]
	\centering
	\includegraphics[width = \textwidth]{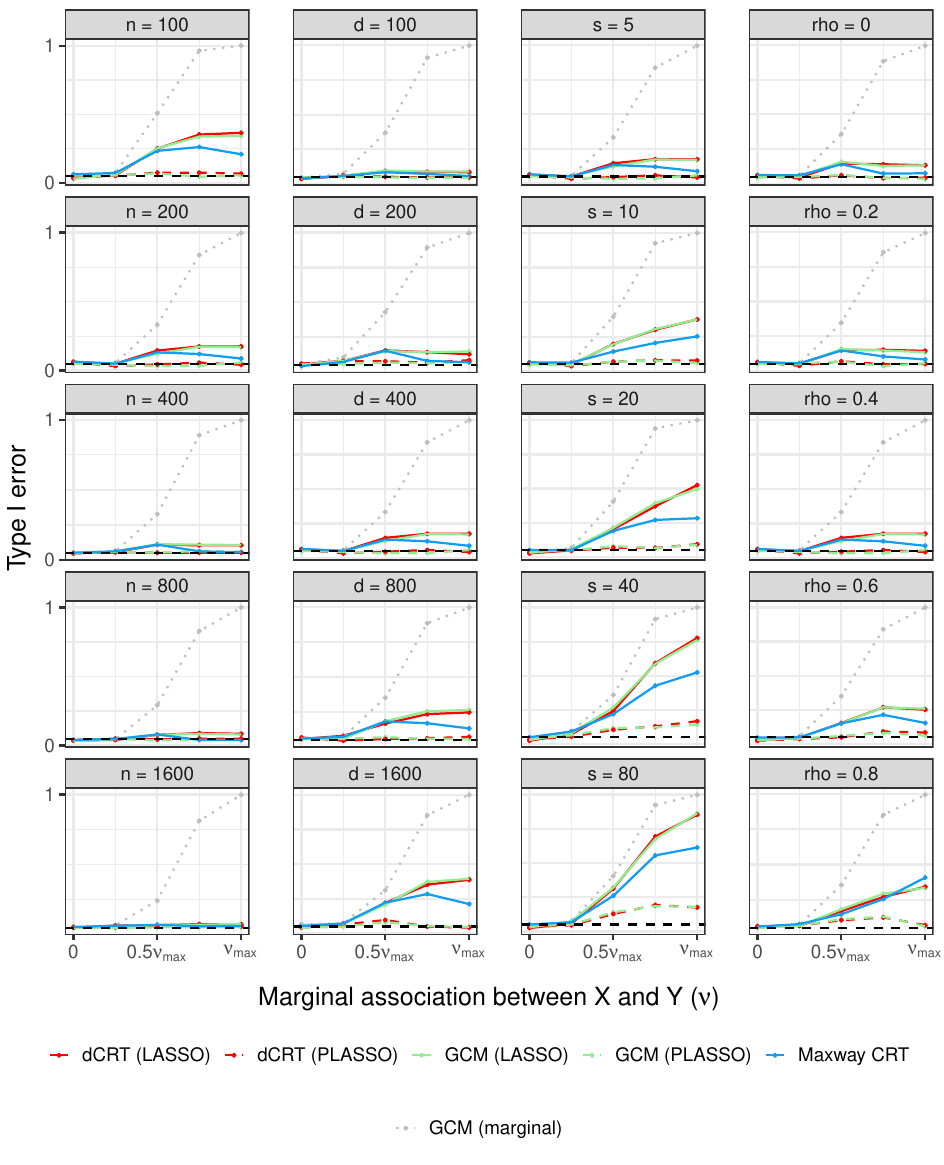}
	\caption{Type-I error in the Gaussian semi-supervised setting.}
	\label{fig:gaussian_semi-supervised_null}
\end{figure}

\begin{figure}[!ht]
	\centering
	\includegraphics[width = \textwidth]{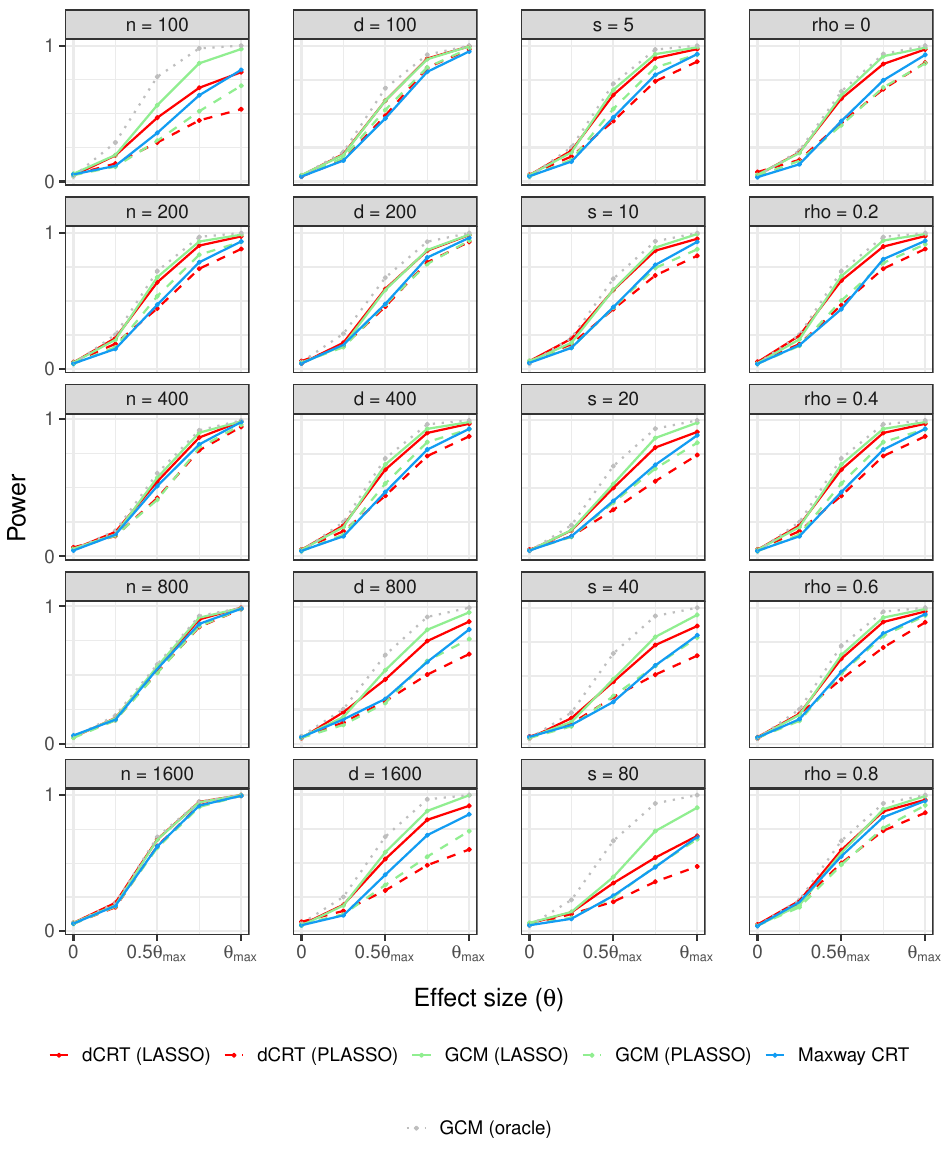}
	\caption{Power in the Gaussian semi-supervised setting.}
	\label{fig:gaussian_semi-supervised_alternative}
\end{figure}

\begin{figure}[!ht]
	\centering
	\includegraphics[width = \textwidth]{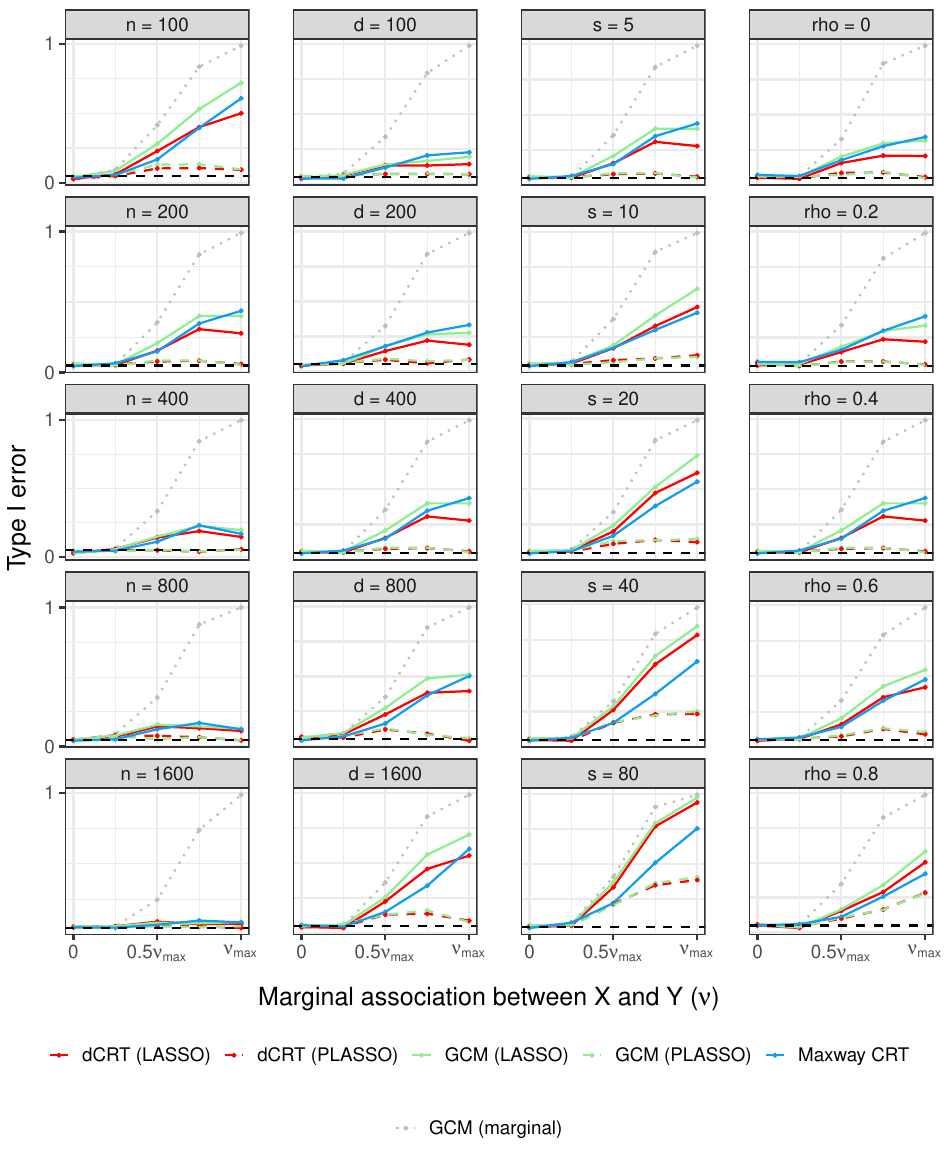}
	\caption{Type-I error in the binary supervised setting.}
	\label{fig:binomial_supervised_null}
\end{figure}

\begin{figure}[!ht]
	\centering
	\includegraphics[width = \textwidth]{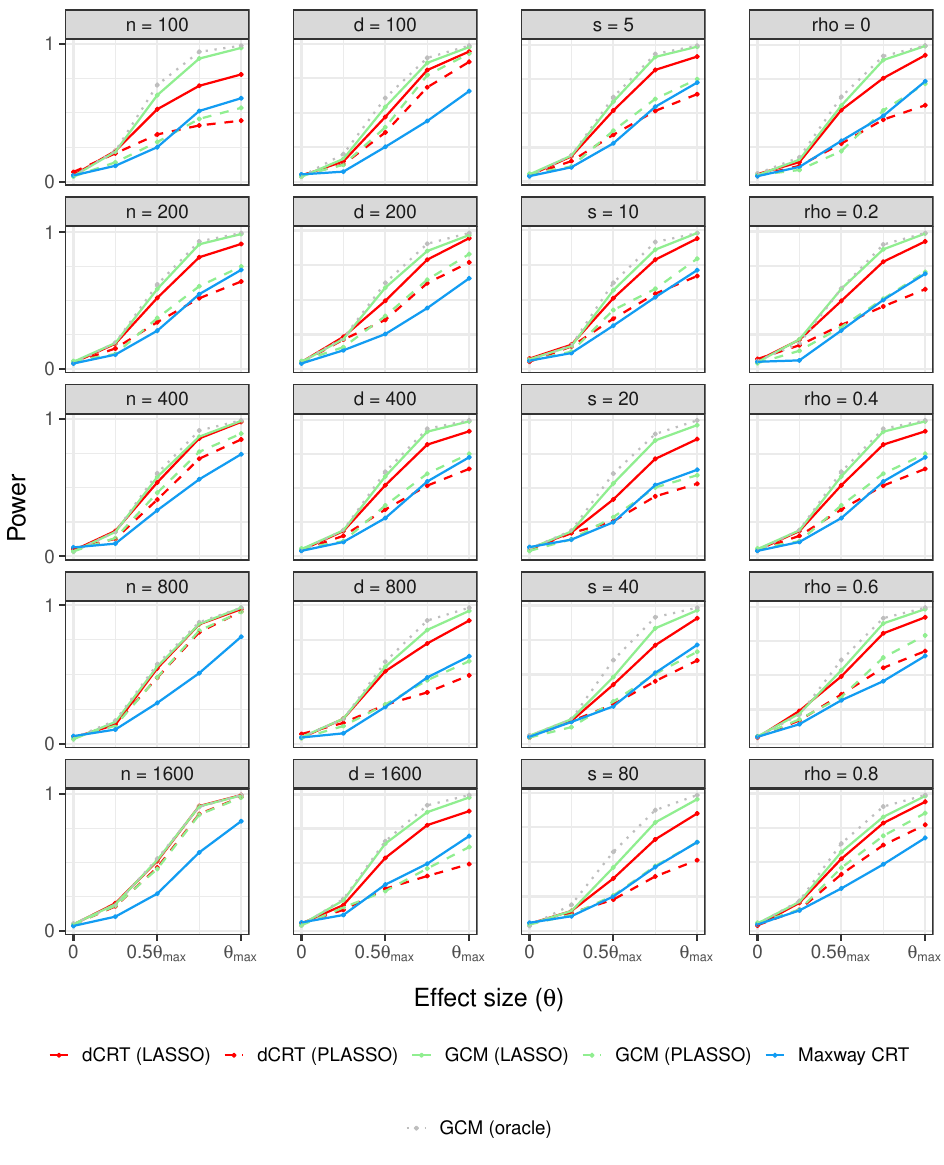}
	\caption{Power in the binary supervised setting.}
	\label{fig:binomial_supervised_alternative}
\end{figure}

\begin{figure}[!ht]
	\centering
	\includegraphics[width = \textwidth]{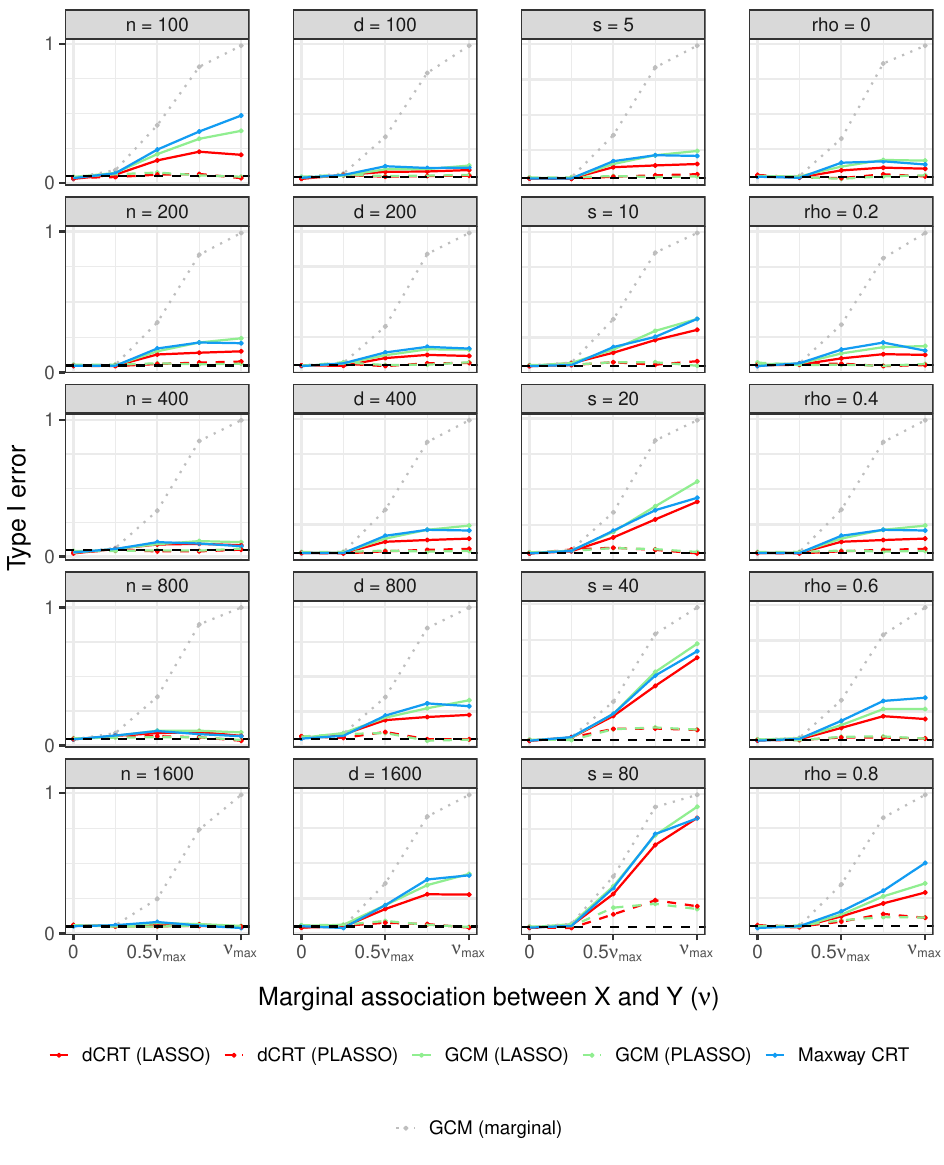}
	\caption{Type-I error in the binary semi-supervised setting.}
	\label{fig:binomial_semi-supervised_null}
\end{figure}

\begin{figure}[!ht]
	\centering
	\includegraphics[width = \textwidth]{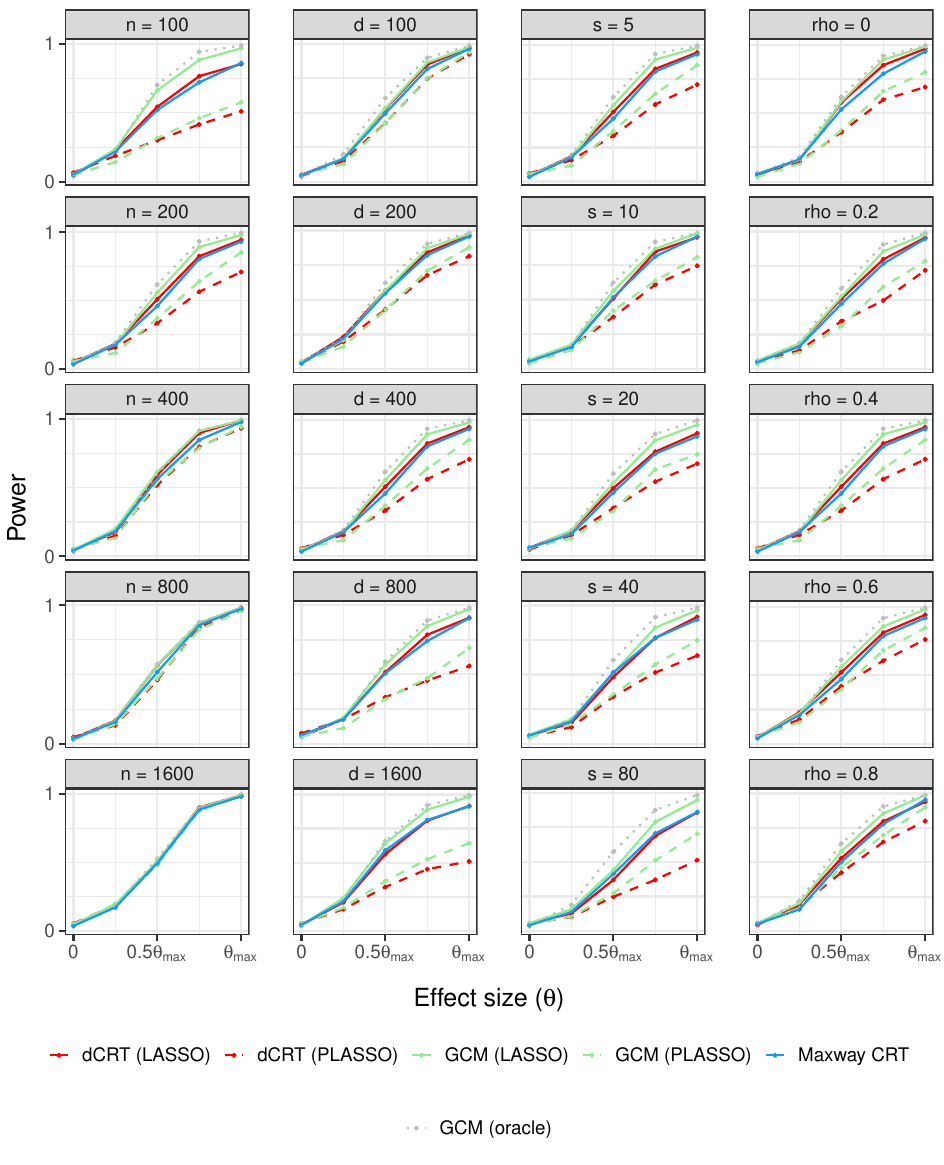}
	\caption{Power in the binary semi-supervised setting.}
	\label{fig:binomial_semi-supervised_alternative}
\end{figure}



\clearpage

\section{Proofs of conditional convergence results} \label{sec:conditional-convergence-proofs}

In this section, we present the proofs of the conditional convergence results from Appendix~\ref{sec:conditional-convergence-results}. We proceed by stating and proving necessary lemmas in Section~\ref{sec:conditional-convergence-lemmas} and then proving the convergence results themselves in Section~\ref{subsec:conditional-convergence-proofs}.

\subsection{Auxiliary lemmas} \label{sec:conditional-convergence-lemmas}

First we state a few known results for the reader's convenience.

\begin{lemma}[\cite{Durrett2010}, Theorem 2.3.2]\label{lem:sub_subseq}
A sequence of random variables $W_n$ converges to a limit $W$ in probability if and only if every subsequence of $W_n$ has a further subsequence that converges to $W$ almost surely.
\end{lemma}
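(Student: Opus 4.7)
The plan is to prove both directions of this classical equivalence via the subsequence principle. The statement is essentially Theorem 2.3.2 of \citet{Durrett2010}, so the argument is standard; I sketch it here for completeness.

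For the forward direction, assume $W_n \convp W$. Given any subsequence $W_{n_k}$, convergence in probability is inherited, so $W_{n_k} \convp W$. I would then build a further subsequence converging almost surely by the usual diagonal argument: pick indices $n_{k_j}$ along the subsequence so that $\P[|W_{n_{k_j}} - W| > 2^{-j}] < 2^{-j}$, which is possible by the definition of convergence in probability. Summability of $2^{-j}$ and the first Borel--Cantelli lemma imply that the event $\{|W_{n_{k_j}} - W| > 2^{-j}\}$ occurs only finitely often almost surely, hence $W_{n_{k_j}} \convas W$.

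For the converse, assume every subsequence of $W_n$ admits a further subsequence converging almost surely to $W$, and suppose for contradiction that $W_n \not\convp W$. Then there exist $\epsilon, \delta > 0$ and a subsequence $\{n_k\}$ with $\P[|W_{n_k} - W| > \epsilon] \geq \delta$ for all $k$. By hypothesis, this subsequence has a further subsequence $W_{n_{k_j}}$ with $W_{n_{k_j}} \convas W$, which in particular implies $W_{n_{k_j}} \convp W$. But this forces $\P[|W_{n_{k_j}} - W| > \epsilon] \to 0$, contradicting the lower bound $\delta$ along the same indices.

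The proof involves no subtle steps: the forward direction is a textbook Borel--Cantelli construction and the reverse is a contradiction argument using only the definition of convergence in probability. The only potential pitfall is being careful that the extracted almost-sure subsequence actually comes from the originally chosen $\{n_k\}$, which is guaranteed by the hypothesis applied to that particular subsequence rather than to $W_n$ directly.
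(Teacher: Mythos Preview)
Your proof is correct and is essentially the standard argument one finds in \citet{Durrett2010}. Note that the paper itself does not prove this lemma at all: it simply cites it as a known result (Theorem 2.3.2 of Durrett) for the reader's convenience, so there is nothing further to compare.
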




\begin{lemma}[Conditional Markov inequality, \cite{Davidson2003}, Theorem 10.17]\label{lem:cond_markov}
Let $W$ be a random variable and let $\mathcal F$ be a $\sigma$-algebra. If for some $q > 0$ we have $\E[|W|^q] < \infty$, then for any $\epsilon$ we have
	\begin{align*}
		\P(|W|\geq\epsilon|\mathcal{F})\leq \frac{\E[|W|^q|\mathcal{F}]}{\epsilon^q} \quad \text{almost surely}.
	\end{align*}
\end{lemma}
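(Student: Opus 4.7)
The plan is to mimic the unconditional Markov proof, but carry through the standard monotonicity of conditional expectation so that the inequality holds almost surely on the conditioning $\sigma$-algebra. The starting point will be the pointwise deterministic bound
\[
\indicator\{|W| \geq \epsilon\} \leq \frac{|W|^q}{\epsilon^q},
\]
which is valid because on $\{|W| \geq \epsilon\}$ the right-hand side is at least $1$, and on the complement the right-hand side is nonnegative while the left is zero.

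Next I would apply $\E[\,\cdot\,\mid\mathcal F]$ to both sides. The existence of $\E[|W|^q\mid\mathcal F]$ as a genuine random variable (and not just an equivalence class with possibly infinite values) is guaranteed by the hypothesis $\E[|W|^q]<\infty$, which makes $|W|^q$ integrable. Monotonicity of conditional expectation then yields
\[
\P(|W|\geq\epsilon\mid\mathcal F)=\E[\indicator\{|W|\geq\epsilon\}\mid\mathcal F]\leq \E\!\left[\frac{|W|^q}{\epsilon^q}\,\Big|\,\mathcal F\right]=\frac{\E[|W|^q\mid\mathcal F]}{\epsilon^q},
\]
where in the last step I pull out the constant $1/\epsilon^q$ using linearity of conditional expectation (the factor is $\mathcal F$-measurable trivially since it is deterministic).

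The only subtlety is the ``almost surely'' qualifier. Both monotonicity and linearity for conditional expectation hold up to a $\P$-null set; intersecting the two null sets still yields a null set, on whose complement the stated inequality holds everywhere. Hence the conclusion holds almost surely, which is the best one can ask for at the level of conditional expectations. I do not anticipate any genuine obstacle here; this is really a one-line argument once the pointwise indicator bound is written down.
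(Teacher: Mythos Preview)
Your proof is correct and is exactly the standard argument: the pointwise bound $\indicator\{|W|\geq\epsilon\}\leq |W|^q/\epsilon^q$ followed by monotonicity and linearity of conditional expectation. The paper does not actually prove this lemma; it is simply quoted from \cite{Davidson2003} as a known result, so there is no alternative approach to compare against.
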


\begin{lemma}[Conditional H\"older inequality, \cite{Swanson2019}, Theorem 6.60]\label{lem:cond_holder}
	Let $W_1$ and $W_2$ be random variables and let $\mathcal F$ be a $\sigma$-algebra. If for some $q_1, q_2 \in (1,\infty)$ with $\frac{1}{q_1} + \frac{1}{q_2} = 1$ we have $\E[|W_1|^{q_1}], \E[|W_2|^{q_2}] < \infty$, then
	\begin{align*}
		\E[|W_1 W_2| \mid \mathcal F] \leq (\E[|W_1|^{q_1} \mid \mathcal F])^{1/q_1}(\E[|W_2|^{q_2} \mid \mathcal F])^{1/q_2} \quad \text{almost surely}.
	\end{align*}
\end{lemma}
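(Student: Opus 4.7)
The plan is to mirror the classical proof of the unconditional Hölder inequality, replacing ordinary expectations with conditional expectations and being careful about the exceptional sets on which the natural normalizing quantities vanish or blow up. The key tool is Young's inequality: for $a,b\ge 0$ and conjugate $q_1,q_2\in(1,\infty)$,
\begin{equation*}
ab \le \frac{a^{q_1}}{q_1} + \frac{b^{q_2}}{q_2}.
\end{equation*}
This requires no integration theory, so it can be applied pointwise and then combined with the linearity and pull-out properties of conditional expectation.

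First I would introduce the $\mathcal F$-measurable normalizers
\begin{equation*}
A \equiv (\E[|W_1|^{q_1}\mid \mathcal F])^{1/q_1},\quad B \equiv (\E[|W_2|^{q_2}\mid \mathcal F])^{1/q_2}.
\end{equation*}
On the event $\{0 < A,B < \infty\}$, I would apply Young's inequality pointwise to $a = |W_1|/A$ and $b = |W_2|/B$, obtaining
\begin{equation*}
\frac{|W_1 W_2|}{AB} \le \frac{1}{q_1}\frac{|W_1|^{q_1}}{A^{q_1}} + \frac{1}{q_2}\frac{|W_2|^{q_2}}{B^{q_2}}\quad\text{almost surely.}
\end{equation*}
Taking $\E[\,\cdot\,\mid\mathcal F]$ on both sides and pulling the $\mathcal F$-measurable quantities $A^{q_1}, B^{q_2}, AB$ outside the conditional expectation yields
\begin{equation*}
\frac{\E[|W_1 W_2|\mid\mathcal F]}{AB} \le \frac{1}{q_1}\frac{\E[|W_1|^{q_1}\mid \mathcal F]}{A^{q_1}} + \frac{1}{q_2}\frac{\E[|W_2|^{q_2}\mid\mathcal F]}{B^{q_2}} = \frac{1}{q_1} + \frac{1}{q_2} = 1,
\end{equation*}
which, after multiplying through by $AB$, is exactly the claimed inequality on this event.

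The remaining work is to handle the exceptional events. On $\{A = 0\}$, the identity $\E[|W_1|^{q_1}\mid\mathcal F] = 0$ together with the standard fact that the conditional expectation of a nonnegative variable is zero almost surely on a set only if the variable itself is zero almost surely on that set forces $W_1 = 0$ a.s.\ on $\{A=0\}$; hence both sides of the inequality vanish on $\{A = 0\}$, and analogously on $\{B = 0\}$. On $\{A = \infty\}\cup\{B = \infty\}$, which is well-defined since the global integrability hypotheses $\E[|W_i|^{q_i}]<\infty$ only guarantee that $A,B$ are finite almost surely (a null set suffices to contain this case), the right-hand side is $+\infty$ and the inequality is trivial. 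The main (and only) subtlety is the zero-normalizer step, which requires careful use of the non-negativity/null-set characterization of conditional expectation; everything else is bookkeeping around the pointwise Young bound and the $\mathcal F$-measurable pull-out.
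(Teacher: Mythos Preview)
The paper does not give its own proof of this lemma; it is stated as a known result with a citation to \cite{Swanson2019}, Theorem 6.60, under the heading ``First we state a few known results for the reader's convenience.'' Your argument via pointwise Young's inequality, normalization by the $\mathcal F$-measurable quantities $A,B$, and case analysis on $\{A=0\}$, $\{B=0\}$, $\{A=\infty\}\cup\{B=\infty\}$ is the standard proof and is correct.
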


\begin{lemma}[\cite{RL2005}, Lemma 11.2.1]\label{lem:conver_quantile}
	Suppose $W_n \convd W$. If for given $\alpha \in (0,1)$ the CDF of $W$ is continuous and strictly increasing at $\Q_{\alpha}[W]$, then
	\begin{align*}
		\Q_{\alpha}[W_n] \rightarrow \Q_{\alpha}[W].
	\end{align*}
\end{lemma}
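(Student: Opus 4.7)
The plan is to derive the quantile convergence directly from the pointwise convergence of CDFs at continuity points, exploiting the definition $\Q_\alpha[W_n] \equiv \inf\{t : F_{W_n}(t) \geq \alpha\}$ and the fact that the limiting CDF $F_W$ is strictly increasing at the quantile $q \equiv \Q_\alpha[W]$. The strict monotonicity creates a ``gap'' on either side of $q$ that the approximating quantiles must eventually sit inside.

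More concretely, first I would fix $\epsilon > 0$ arbitrary. By strict monotonicity of $F_W$ at $q$ together with continuity at $q$, we have $F_W(q-\epsilon) < \alpha < F_W(q+\epsilon)$. Since $F_W$ has at most countably many discontinuities, I can shrink $\epsilon$ if necessary so that both $q \pm \epsilon$ are continuity points of $F_W$; the strict inequalities $F_W(q-\epsilon) < \alpha < F_W(q+\epsilon)$ are preserved. Convergence in distribution then gives $F_{W_n}(q-\epsilon) \to F_W(q-\epsilon)$ and $F_{W_n}(q+\epsilon) \to F_W(q+\epsilon)$, so for all sufficiently large $n$,
\begin{equation*}
F_{W_n}(q-\epsilon) < \alpha \leq F_{W_n}(q+\epsilon).
\end{equation*}

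The left inequality means no $t \leq q-\epsilon$ lies in $\{t : F_{W_n}(t) \geq \alpha\}$ (since $F_{W_n}$ is nondecreasing), so $\Q_\alpha[W_n] \geq q-\epsilon$; the right inequality exhibits $q+\epsilon$ as an element of the same set, so $\Q_\alpha[W_n] \leq q+\epsilon$. Combining these gives $|\Q_\alpha[W_n] - q| \leq \epsilon$ for all $n$ large, which is the claim. I do not anticipate a significant obstacle: the only subtlety is the perturbation to continuity points (handled by countability of discontinuities) and making sure the direction of the inequalities is correct when converting between statements about the CDF and statements about its generalized inverse. The result is standard and a citation to \citet[Lemma 11.2.1]{RL2005} could also be given in lieu of a full proof.
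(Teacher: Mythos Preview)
Your proof is correct and is essentially the standard argument for this classical fact. The paper itself does not prove this lemma; it is stated with a citation to \citet[Lemma 11.2.1]{RL2005} and used as a black box, so there is no paper proof to compare against beyond noting that your final remark (that a citation would suffice) matches exactly what the authors do.
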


Next we establish that, without of loss of generality, all random variables, $\sigma$-algebras, and conditional expectations in a triangular array may be viewed as being defined on a common probability space.
\begin{lemma}[Embedding into a single probability space]\label{lem:embedding}
	Consider a sequence of probability spaces $\{(\P_n,\Omega_n,\mathcal{G}_n),n \geq 1\}$. For each $n$, let $\{W_{i,n}\}_{i \geq 1}$ be a collection of integrable random variables defined on $(\P_n,\Omega_n,\mathcal{G}_n)$ and let $\mathcal F_n \subseteq \mathcal G_n$ be a $\sigma$-algebra. Then there exists a single probability space $(\widetilde{\P}, \widetilde{\Omega}, \widetilde{\mathcal G})$, random variables $\{\widetilde W_{i,n}\}_{i,n \geq 1}$ on $(\widetilde{\P}, \widetilde{\Omega}, \widetilde{\mathcal G})$, and $\sigma$-fields $\widetilde{\mathcal F}_n \subseteq \widetilde{\mathcal G}$ for $n \geq 1$, such that for each $n$, the joint distribution of $(\{W_{i,n}\}_{i \geq 1}, \{\E[W_{i,n}\mid\mathcal F_n]\}_{i \geq 1})$ on $(\P_n,\Omega_n,\mathcal{G}_n)$ coincides with that of $(\{\widetilde W_{i,n}\}_{i \geq 1}, \{\E[\widetilde W_{i,n}\mid \widetilde{\mathcal F}_n]\}_{i \geq 1})$ on $(\widetilde{\P}, \widetilde{\Omega}, \widetilde{\mathcal G})$.
\end{lemma}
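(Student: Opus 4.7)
The plan is to construct the common space by taking a product of the given probability spaces and use coordinate projections to embed each of the original objects in a measure-preserving way. Specifically, I would set
\begin{equation*}
\widetilde{\Omega} \equiv \prod_{n \geq 1} \Omega_n, \quad \widetilde{\mathcal G} \equiv \bigotimes_{n \geq 1} \mathcal G_n, \quad \widetilde{\P} \equiv \prod_{n \geq 1} \P_n,
\end{equation*}
with the product measure constructed via the Kolmogorov extension theorem. Let $\pi_n : \widetilde{\Omega} \rightarrow \Omega_n$ denote the $n$-th coordinate projection. Then define
\begin{equation*}
\widetilde W_{i,n}(\omega) \equiv W_{i,n}(\pi_n(\omega)) \quad \text{and} \quad \widetilde{\mathcal F}_n \equiv \pi_n^{-1}(\mathcal F_n) \subseteq \widetilde{\mathcal G}.
\end{equation*}

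The first step would be to verify that with these definitions, the marginal distribution of the family $\{\widetilde W_{i,n}\}_{i \geq 1}$ under $\widetilde \P$ coincides with the law of $\{W_{i,n}\}_{i \geq 1}$ under $\P_n$. This is immediate because the push-forward of $\widetilde{\P}$ under $\pi_n$ is exactly $\P_n$, so any measurable function of $(\pi_n(\omega))$ inherits its law from $\P_n$.

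The main step would then be to identify the conditional expectation. Let $g_{i,n} \equiv \E_{\P_n}[W_{i,n} \mid \mathcal F_n]$ and $\widetilde g_{i,n}(\omega) \equiv g_{i,n}(\pi_n(\omega))$; by construction $\widetilde g_{i,n}$ is $\widetilde{\mathcal F}_n$-measurable. To show $\widetilde g_{i,n} = \E_{\widetilde \P}[\widetilde W_{i,n} \mid \widetilde{\mathcal F}_n]$ $\widetilde \P$-a.s., I would take an arbitrary set $\widetilde A \in \widetilde{\mathcal F}_n$, which by definition has the form $\widetilde A = \pi_n^{-1}(A)$ for some $A \in \mathcal F_n$, and compute
\begin{equation*}
\int_{\widetilde A} \widetilde W_{i,n} \, d\widetilde \P = \int_A W_{i,n} \, d\P_n = \int_A g_{i,n} \, d\P_n = \int_{\widetilde A} \widetilde g_{i,n} \, d\widetilde \P,
\end{equation*}
where the first and third equalities use that $\widetilde \P \circ \pi_n^{-1} = \P_n$ and the middle equality is the defining property of the conditional expectation on $(\Omega_n, \mathcal G_n, \P_n)$. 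This identifies $\widetilde g_{i,n}$ as the conditional expectation on the product space.

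The final step is then bookkeeping: combining the two preceding steps, the joint distribution of $(\{\widetilde W_{i,n}\}_{i \geq 1}, \{\E_{\widetilde \P}[\widetilde W_{i,n} \mid \widetilde{\mathcal F}_n]\}_{i \geq 1})$ under $\widetilde \P$ agrees with that of $(\{W_{i,n}\}_{i \geq 1}, \{\E_{\P_n}[W_{i,n} \mid \mathcal F_n]\}_{i \geq 1})$ under $\P_n$, since both tuples are realized as the composition of the same measurable functional with the coordinate projection $\pi_n$ whose push-forward is $\P_n$. I do not expect any serious obstacle here; the only mild subtlety is ensuring that $\widetilde g_{i,n}$ is a genuine version of the conditional expectation on the product space (as opposed to merely coinciding with one $\widetilde \P$-a.s.\ on a fixed index $n$), which the $\pi_n^{-1}(A)$ calculation above handles uniformly in $i$.
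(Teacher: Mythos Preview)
Your proposal is correct and follows essentially the same construction as the paper: both build the product space $\prod_n \Omega_n$ with the product $\sigma$-algebra and product measure, define $\widetilde W_{i,n}$ and $\widetilde{\mathcal F}_n$ via the $n$-th coordinate projection, and verify that the pullback of $\E_{\P_n}[W_{i,n}\mid\mathcal F_n]$ is a version of the conditional expectation on the product space by checking the defining integral identity on cylinder sets. One small remark: the paper invokes a direct construction of the infinite product measure (citing Saeki) rather than the Kolmogorov extension theorem, which is safer here since no topological regularity on the $(\Omega_n,\mathcal G_n)$ is assumed.
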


\begin{proof}
	Define the Cartesian product $\widetilde \Omega \equiv \prod_{n=1}^{\infty}\Omega_n$, the $\sigma$-algebra $\widetilde {\mathcal G}$ generated by measurable cylinders $\prod_{n = 1}^\infty A_n$  for $A_n \in \mathcal G_n$ and $A_n = \mathcal G_n$ for all but finitely many $n$, and the infinite product measure $\widetilde{\P}$ on the measurable space $(\widetilde \Omega, \widetilde{\mathcal G})$ \citep{Saeki1996}. On this probability space, define $\sigma$-algebras 
	\begin{equation}
		\widetilde{\mathcal F}_n \equiv \{\mathcal G_1 \times \cdots \times \mathcal G_{n-1} \times A_n \times\mathcal G_{n+1} \times \cdots: A_n \in \mathcal F_n\}		
	\end{equation}
	and random variables
	\begin{equation}
	\widetilde W_{in}(\omega) \equiv W_{in}(\omega_n)
	\label{eq:random-variables-embedding}
	\end{equation}
	for each $i, n \geq 1$. Next we claim that for each $i, n \geq 1$, the random variable
	\begin{equation}
	\E_{\widetilde{\P}}[\widetilde W_{in}|\widetilde{\mathcal F}_n](\omega) \equiv \E_{\P_n}[W_{in}|\mathcal F_n](\omega_n) \quad \text{for each } \omega \in \widetilde \Omega
	\label{eq:sigma-fields-embedding}
	\end{equation}
	is in fact a version of the conditional expectation $\E_{\widetilde{\P}}[\widetilde W_{in}|\widetilde{\mathcal F}_n]$. Indeed, it suffices to check that for each $A \equiv \mathcal G_1 \times \cdots \times \mathcal G_{n-1} \times A_n \times\mathcal G_{n+1} \times \cdots \in \widetilde{\mathcal F}_n$ we have 
	\begin{equation}
		\begin{split}
	\int_{A} \E_{\widetilde{\P}}[\widetilde W_{in}|\widetilde{\mathcal F}_n](\omega)d\widetilde{\P}(\omega) &\equiv \int_{A} \E_{\P_n}[W_{in}|\mathcal F_n](\omega_n) \mathrm{d}\widetilde{\P}(\omega) \\
	&= \int_{\prod_{n' \neq n} \Omega_{n'}} \int_{A_n} \E_{\P_n}[W_{in}|\mathcal F_n](\omega_n) \mathrm{d}\widetilde{\P}_n(\omega_n)\mathrm{d}\prod_{n'\neq n}\P_{n'}(\omega_{n'}) \\
	&\equiv \int_{\prod_{n' \neq n} \Omega_{n'}} \int_{A_n} W_{in}(\omega_n) \mathrm{d}\widetilde{\P}_n(\omega_n)\mathrm{d}\prod_{n'\neq n}\P_{n'}(\omega_{n'}) \\
	&= \int_{A} W_{in}(\omega_n) \mathrm{d}\widetilde{\P}(\omega) \\
	&\equiv \int_{A} \widetilde W_{in}(\omega)d\widetilde{\P}(\omega).
		\end{split}
	\end{equation}
	From the $\omega$-wise embeddings~\eqref{eq:random-variables-embedding} and~\eqref{eq:sigma-fields-embedding}, it is easy to verify the claimed equality between the joint distributions on $(\P_n,\Omega_n,\mathcal{G}_n)$ and $(\widetilde{\P}, \widetilde{\Omega}, \widetilde{\mathcal G})$.
\end{proof}

Finally, we state a conditional version of the truncated weak law of large numbers:
\begin{lemma}\label{lem:cond_wlln}
	For each $n$, let $W_{in},1\leq i\leq n$ be a set of random variables independent conditionally on $\mathcal{F}_n$. Let $b_n>0$ with $b_n\rightarrow\infty$ and let $\Bar W_{in}=W_{in}\indicator(|W_{in}|\leq b_n)$. Suppose that as $n\rightarrow\infty$ we have
	\begin{enumerate}
		\item $\sum_{i=1}^{n}\P[|W_{in}|>b_n|\mathcal{F}_n]\convp0$ and
		\item $b_n^{-2}\sum_{i=1}^n\E[\Bar W_{in}^2|\mathcal{F}_n]\convp0$.
	\end{enumerate}
	If we set $S_n\equiv\sum_{i = 1}^n W_{in}$ and $a_n \equiv \sum_{i=1}^n\E[\Bar W_{in}]$ then
	\begin{align*}
		\frac{S_n - a_n}{b_n} \mid \mathcal{F}_n \convpp 0.
	\end{align*}
\end{lemma}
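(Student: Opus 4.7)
The plan is to mimic the classical proof of Feller's truncated weak law of large numbers, upgrading every unconditional probability bound to its conditional counterpart via the conditional Markov/Chebyshev inequality (Lemma~\ref{lem:cond_markov}) and the definition of conditional convergence in probability (Definition~\ref{def:conditional-convergence-probability}). Set $\bar S_n \equiv \sum_{i=1}^n \bar W_{in}$. Since $\bar W_{in}$ is $\sigma(W_{in})$-measurable, the $\bar W_{in}$ inherit conditional independence given $\mathcal F_n$ from the $W_{in}$, so in particular $\V[\bar S_n \mid \mathcal F_n]$ splits as a sum.

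The first step is to dispose of the truncation via a conditional union bound: $\P[S_n \neq \bar S_n \mid \mathcal F_n] \leq \sum_{i=1}^n \P[|W_{in}| > b_n \mid \mathcal F_n] \convp 0$ by assumption~1. The second step applies conditional Chebyshev (Lemma~\ref{lem:cond_markov} with $q=2$) together with conditional independence of the $\bar W_{in}$ to yield, for any $\epsilon > 0$,
\begin{equation*}
\P\left[\left|\frac{\bar S_n - a_n}{b_n}\right| > \epsilon \,\Big|\, \mathcal F_n\right] \leq \frac{1}{\epsilon^2 b_n^2}\sum_{i=1}^n \V[\bar W_{in} \mid \mathcal F_n] \leq \frac{1}{\epsilon^2 b_n^2}\sum_{i=1}^n \E[\bar W_{in}^2 \mid \mathcal F_n] \convp 0
\end{equation*}
by assumption~2, where $a_n$ is read as the conditional centering $\sum_{i=1}^n \E[\bar W_{in} \mid \mathcal F_n]$ (see below).

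Combining the two displays via the triangle inequality gives $\P[|(S_n-a_n)/b_n| > \epsilon \mid \mathcal F_n] \convp 0$ for every $\epsilon > 0$, which at each $t \neq 0$ translates into $\P[(S_n-a_n)/b_n \leq t \mid \mathcal F_n] \convp \P[\delta_0 \leq t]$; since the CDF of $\delta_0$ is continuous everywhere except the origin, this is exactly the condition required by Definitions~\ref{def:conditional-convergence-distribution} and~\ref{def:conditional-convergence-probability} for $(S_n-a_n)/b_n \mid \mathcal F_n \convpp 0$. There is no serious analytical obstacle; the only subtlety worth flagging is the interpretation of the centering $a_n$. As stated it appears unconditional ($\sum \E[\bar W_{in}]$), but the argument above requires the conditional centering $\sum \E[\bar W_{in} \mid \mathcal F_n]$, since otherwise the conclusion can fail without an extra hypothesis linking conditional and unconditional truncated means. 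This mirrors the conditional WLLN (Theorem~\ref{thm:wlln_cond}), which similarly centers at $\E[W_{in} \mid \mathcal F_n]$, so the proposed reading is consistent with the paper's convention.
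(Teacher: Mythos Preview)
Your proof is correct and follows essentially the same route as the paper: split $\P[|(S_n-a_n)/b_n|>\epsilon\mid\mathcal F_n]$ into $\P[S_n\neq\bar S_n\mid\mathcal F_n]$ (handled by a conditional union bound and assumption~1) plus $\P[|(\bar S_n-a_n)/b_n|>\epsilon\mid\mathcal F_n]$ (handled by conditional Chebyshev, conditional independence, and assumption~2). Your flag about $a_n$ is well taken: the paper's own proof writes $a_n=\E[\bar S_n\mid\mathcal F_n]$, so the conditional centering is indeed the intended reading despite the unconditional expectation in the lemma statement.
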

\begin{proof}
	Let $\Bar{S}_n \equiv \sum_{i = 1}^n \Bar W_{in}$. We first write
	\begin{align*}
		\P\left[\left|\frac{S_n-a_n}{b_n}\right|>\epsilon\bigg|\mathcal{F}_n\right]\leq \P\left[S_n\neq\Bar{S}_n|\mathcal{F}_n\right]+\P\left[\left|\frac{\Bar{S}_n-a_n}{b_n}\right|>\epsilon\bigg|\mathcal{F}_n\right].
	\end{align*}
	To estimate the first term, we note that
	\begin{align*}
		\P[S_n\neq\Bar{S}_n|\mathcal{F}_n]\leq \P[\cup_{i=1}^n\{\Bar W_{in}\neq W_{in}\}|\mathcal{F}_n] \leq \sum_{i=1}^n\P(|W_{in}|>b_n|\mathcal{F}_n)\convp0
	\end{align*}
	by the first assumption. For the second term, we note that conditional Markov's inequality (Lemma \ref{lem:cond_markov}), $a_n=\E[\Bar{S}_n|\mathcal{F}_n]$ implies that
	\begin{align*}
		\P\left[\left|\frac{\Bar{S}_n-a_n}{b_n}\right|>\epsilon|\mathcal{F}_n\right]
		&
		\leq \epsilon^{-2}\E\left[\left|\frac{\Bar{S}_n-a_n}{b_n}\right|^2\bigg|\mathcal{F}_n\right]\\
		&
		=\epsilon^{-2}b_n^{-2}\mathrm{Var}[\Bar{S}_n|\mathcal{F}_n]\\
		&
		=(b_n\epsilon)^{-2}\sum_{i=1}^n\mathrm{Var}[\Bar W_{in}|\mathcal{F}_n]\\
		&
		\leq (b_n\epsilon)^{-2}\sum_{i=1}^{n}\E\left[\Bar{X}^2_{in}|\mathcal{F}_n\right]\convp0,
	\end{align*}
	where the convergence in the last line is given by the second assumption. This completes the proof.
\end{proof}

	
	\subsection{Proofs of conditional convergence results} \label{subsec:conditional-convergence-proofs}
	
	\begin{proof}[Proof of Theorem~\ref{thm:cond_polya}] 
		This proof generalizes the argument of \cite[Lemma 2.11]{VDV1998} to allow for conditioning. Fix $\epsilon > 0$, and choose an integer $k \geq 2/\epsilon$. Because the CDF of $W$ is continuous, it follows that 
		\begin{equation*}
			\P[W \leq \Q_{i/k}[W]] = i/k \quad \text{for each }0 \leq i \leq k.
		\end{equation*}
		Fix $t \in \R$, and suppose $\Q_{\frac{i-1}{k}}[W] \leq t \leq \Q_{\frac i k}[W]$. It follows that 
		\begin{equation*}
			\P[W_n \leq \Q_{\frac{i-1}{k}}[W] \mid \mathcal F_n] - \frac{i}{k} \leq \P[W_n \leq t \mid \mathcal F_n] - \P[W \leq t] \leq \P[W_n \leq \Q_{\frac i k}[W] \mid \mathcal F_n] - \frac{i-1}{k}.
		\end{equation*}
		Therefore, for all $t \in \R$, we have
		\begin{equation*}
			\begin{split}
				|\P[W_n \leq t \mid \mathcal F_n] - \P[W \leq t]| &\leq \sup_{0 \leq i \leq k} \left|\P[W_n \leq \Q_{i/k}[W] \mid \mathcal F_n] - \frac{i}{k}\right| + \frac 1 k \\
				&=  \sup_{0 \leq i \leq k} \left|\P[W_n \leq \Q_{\frac i k}[W] \mid \mathcal F_n] - \mathbb P[W \leq \Q_{\frac i k}[W]]\right| + \frac{1}{k},
			\end{split}
		\end{equation*}
		so that
		\begin{equation*}
			\sup_{t \in \R}|\P[W_n \leq t \mid \mathcal F_n] - \P[W \leq t]| \leq \sup_{0 \leq i \leq k} \left|\P[W_n \leq \Q_{\frac i k}[W] \mid \mathcal F_n] - \mathbb P[W \leq \Q_{\frac i k}[W]]\right| + \frac{1}{k}.
		\end{equation*}
		By assumption, we have
		\begin{equation}
			\P\left[\left|\P[W_n \leq \Q_{\frac i k}[W] \mid \mathcal F_n] - \mathbb P[W \leq \Q_{\frac i k}[W]]\right| > \frac{1}{k(k+1)}\right] \rightarrow 0. 
		\end{equation}
		Therefore, 
		\begin{equation*}
			\begin{split}
				&\P\left[\sup_{t \in \R}|\P[W_n \leq t \mid \mathcal F_n] - \P[W \leq t]| > \epsilon \right] \\
				&\quad\leq \P\left[\sup_{t \in \R}|\P[W_n \leq t \mid \mathcal F_n] - \P[W \leq t]| > \frac{2}{k} \right] \\
				&\quad\leq \P\left[\sup_{0 \leq i \leq k} \left|\P[W_n \leq \Q_{\frac i k}[W] \mid \mathcal F_n] - \mathbb P[W \leq \Q_{\frac i k}[W]]\right| > \frac{1}{k} \right] \\
				&\quad\leq \sum_{i = 0}^k  \P\left[ \left|\P[W_n \leq \Q_{\frac i k}[W] \mid \mathcal F_n] - \mathbb P[W \leq \Q_{\frac i k}[W]]\right| > \frac{1}{k(k+1)} \right] \\
				&\quad \rightarrow 0.
			\end{split}
		\end{equation*}
		This completes the proof.
	\end{proof}
	
	\begin{proof}[Proof of Theorem \ref{thm:cond_slutsky}]
		Fix $t \in \R$. Letting $F(t') \equiv \P[W \leq t']$ be the CDF of $W$, Theorem \ref{thm:cond_polya} gives 
		\begin{align*}
			\left|\P\left[W_n\leq \frac{t - b_n}{a_n}\mid\mathcal{F}_n\right]-F\left(\frac{t - b_n}{a_n}\right)\right| \leq	\sup_{t' \in \R}|\P[W_n\leq t'|\mathcal{F}_n]-F(t')|\convp0.
		\end{align*}
		By the continuous mapping theorem, we have $F\left(\frac{t - b_n}{a_n}\right) \convp F(t)$, so that
		\begin{equation*}
			\P\left[W_n\leq \frac{t - b_n}{a_n}\mid\mathcal{F}_n\right] \convp F(t).
		\end{equation*}
		Noting that $\P[a_n \leq 0 | \mathcal F_n]$ is a sequence of nonnegative random variables whose expectations converge to zero, it follows that $\P[a_n \leq 0 | \mathcal F_n] \convp 0$ and so
		\begin{equation*}
			\begin{split}
				\P\left[a_nW_n + b_n \leq t \mid\mathcal{F}_n\right] &= \P\left[a_nW_n + b_n \leq t, a_n > 0\mid\mathcal{F}_n\right] + o_p(1) \\
				&= \P\left[W_n\leq \frac{t - b_n}{a_n}, a_n > 0\mid\mathcal{F}_n\right] + o_p(1) \\
				&= \P\left[W_n\leq \frac{t - b_n}{a_n}\mid\mathcal{F}_n\right] + o_p(1) \\
				&\convp F(t),
			\end{split}
		\end{equation*}
		as desired.
	\end{proof}

	\begin{proof}[Proof of Theorem \ref{thm:wlln_cond}] 
		We apply Lemma \ref{lem:cond_wlln} with $b_n=n$. We first verify the first assumption in Lemma \ref{lem:cond_wlln} by conditional Markov's inequality (Lemma \ref{lem:cond_markov}):
		\begin{align*}
			\sum_{i=1}^n\P(|W_{in}|>n|\mathcal{F}_n)\leq \sum_{i=1}^n\frac{\E(|W_{in}|^{1+\delta}|\mathcal{F}_n)}{n^{1+\delta}}\convp0.
		\end{align*}
		For the second condition, we have
		\begin{align*}
			\frac{1}{n^{2}}\sum_{i=1}^n\E[W_{in}^2\indicator(|W_{in}|\leq n)|\mathcal{F}_n]
			&
			\leq \frac{1}{n^2}\sum_{i=1}^n\E[|W_{in}|^{1+\delta}n^{1-\delta}\indicator(|W_{in}|\leq n)]\\
			&
			= \frac{1}{n^{1+\delta}}\sum_{i=1}^n\E[|W_{in}|^{1+\delta}\indicator(|W_{in}|\leq n)|\mathcal{F}_n]\\
			&
			\leq \frac{1}{n^{1+\delta}}\sum_{i=1}^n\E[|W_{in}|^{1+\delta}|\mathcal{F}_n]\convp0.
		\end{align*}
		Therefore, Lemma~\ref{lem:cond_wlln} yields
		\begin{align*}
			\left.\frac1n\sum_{i=1}^n (W_{in}-\E[W_{in}\indicator(|W_{in}|\leq n) \mid \mathcal{F}_n]) \ \right|\ \mathcal{F}_n \convpp 0.
		\end{align*}
		By conditional Slutsky (Theorem~\ref{thm:cond_slutsky}), it now suffices to show that
		\begin{align*}
			\frac{1}{n}\sum_{i=1}^n\E\left[W_{in}\indicator(|W_{in}| > n) \mid \mathcal{F}_n\right] \convp 0.
		\end{align*}
		To see this, applying conditional Markov's and H\"older's inequalities (Lemmas~\ref{lem:cond_markov} and~\ref{lem:cond_holder}, respectively) we obtain
		\begin{align*}
			\left|\frac{1}{n}\sum_{i=1}^n\E\left[W_{in}\indicator(|W_{in}| > n)|\mathcal{F}_n\right]\right|
			&
			\leq \frac{1}{n}\sum_{i=1}^n\E\left[|W_{in}|\indicator(|W_{in}| > n)|\mathcal{F}_n\right]\\
			&
			\leq\frac{1}{n}\sum_{i=1}^n\{\E[|W_{in}|^{1+\delta}|\mathcal{F}_n]\}^{1/(1+\delta)}\{\P[|W_{in}|>n|\mathcal{F}_n]\}^{\delta/(1+\delta)} \\
			&
			\leq \frac{1}{n}\sum_{i=1}^n\{\E[|W_{in}|^{1+\delta}|\mathcal{F}_n]\}^{1/(1+\delta)}\left\{\frac{\E(|W_{in}|^{1+\delta}|\mathcal{F}_n)}{n^{1+\delta}}\right\}^{\delta/(1+\delta)}\\
			&
			=\frac{1}{{n^{1+\delta}}}\sum_{i=1}^n\E[|W_{in}|^{1+\delta}|\mathcal{F}_n]\\
			&
			\convp0,
		\end{align*}
		where the last convergence is by assumption. Finally, we verify that the condition~\eqref{eq:wlln_cond_sufficient} is sufficient for the conditional WLLN assumption~\eqref{eq:wlln_cond_assumption} by noting that it implies
		\begin{align*}
			\frac{1}{n^{1+\delta}} \sum_{i = 1}^n \E[|W_{in}|^{1+\delta} \mid \mathcal{F}_n]\leq \frac{\sup_{1\leq i\leq n}\E[|W_{in}|^{1+\delta} \mid \mathcal{F}_n]n}{n^{1+\delta}}=\frac{\sup_{1\leq i\leq n}\E[|W_{in}|^{1+\delta} \mid \mathcal{F}_n]}{n^{\delta}} \convp 0.
		\end{align*}
		This completes the proof.
	\end{proof}
	
	\begin{proof}[Proof of Theorem \ref{thm:conditional-clt}] 
		Without loss of generality, we assume $\E[W_{in}|\mathcal{F}_n]=0$ and that all random variables and $\sigma$-algebras are defined on a common probability space $(\P, \Omega, \mathcal G)$ (Lemma~\ref{lem:embedding}). Let $\mathcal B(\R^n)$ be the Borel $\sigma$-algebra on $\R^n$. Let $\kappa_{n}$ be a regular conditional distribution of $(W_{1n}, \dots, W_{nn})$ given $\mathcal{F}_{n}$ \citep[Theorem 8.37]{Lista2017}, i.e. a function $\kappa_{n}: \Omega \times \mathcal B(\R^n) \rightarrow [0,\infty]$ such that $\omega \mapsto \kappa_{n}(\omega, B)$ is measurable for each $B \in \mathcal B(\R^n)$, $B \mapsto \kappa_{n}(\omega, B)$ is a $\sigma$-finite measure on $\R^n$ for each $\omega \in \Omega$, and 
		\begin{align*}
			\kappa_{n}(\omega, B) = \P[(W_{1n}, \dots, W_{nn})\in B|\mathcal{F}_{n}](\omega), \quad \text{for almost all } \omega \in \Omega \text{ and all} \ B \in \mathcal B(\R^n).
		\end{align*}
		For each $n$ and each $\omega \in \Omega$, let $(\widetilde W_{1n}(\omega), \dots, \widetilde W_{nn}(\omega))$ be a draw from the measure $\kappa_n(\omega,\cdot)$. By \citet[Theorem 8.38]{Lista2017}, we have for each $n$ that
		\begin{equation*}
			\left(\sum_{i = 1}^n \V[\widetilde W_{in}(\omega)]\right)^{-(2+\delta)/2}\sum_{i = 1}^n \E[|\widetilde W_{in}(\omega)|^{2+\delta}] \overset{a.s.}= \frac{1}{S_{n}^{2+\delta}(\omega)} \sum_{i = 1}^{n} \E[|W_{in}|^{2+\delta} \mid \mathcal{F}_{n}](\omega).
		\end{equation*}
		Now, let $\{n_k\}_{k \geq 1}$ be a subsequence of $\mathbb N$. By the conditional Lyapunov assumption~\eqref{eq:conditional-lyapunov} and Lemma~\ref{lem:sub_subseq}, there is a further subsequence $n_{k_j}$ such that
		\begin{equation}
			\frac{1}{S_{n_{k_j}}^{2+\delta}} \sum_{i = 1}^{n_{k_j}} \E[|W_{in_{k_j}}|^{2+\delta} \mid \mathcal{F}_{n_{k_j}}] \convas 0.
		\end{equation}
		Hence, it follows that 
		\begin{equation}
			\left(\sum_{i = 1}^{n_{k_j}} \V[\widetilde W_{in_{k_j}}(\omega)]\right)^{-(2+\delta)/2}\sum_{i = 1}^{n_{k_j}} \E[|\widetilde W_{i{n_{k_j}}}(\omega)|^{2+\delta}] \rightarrow 0 \quad \text{for almost every } \omega \in \Omega.
		\end{equation}
		Applying the usual Lyapunov CLT to the triangular array $\{\widetilde W_{in_{k_j}}(\omega)\}_{i,n_{k_j}}$, we find that
		\begin{equation}
			\left(\sum_{i = 1}^{n_{k_j}} \V[\widetilde W_{in_{k_j}}(\omega)]\right)^{-1/2}\sum_{i = 1}^{n_{k_j}} \widetilde W_{in_{k_j}}(\omega) \convd N(0,1) \quad \text{for almost every } \omega \in \Omega,
		\end{equation}
		and therefore that, for each $t \in \R$, we have 
		\begin{equation}
			\mathbb P\left[\left(\sum_{i = 1}^{n_{k_j}} \V[\widetilde W_{in_{k_j}}(\omega)]\right)^{-1/2}\sum_{i = 1}^{n_{k_j}} \widetilde W_{in_{k_j}}(\omega) \leq t\right] \rightarrow \Phi(t) \quad \text{for almost every } \omega \in \Omega.
		\end{equation}
		Using \citet[Theorem 8.38]{Lista2017} again, it follows that for each $t \in \R$, we have
		\begin{equation}
			\mathbb P\left[\frac{1}{S_{n_{k_j}}}\sum_{i = 1}^{n_{k_j}} W_{in_{k_j}} \leq t \mid \mathcal F_{n_{k_j}}\right] \convas \Phi(t).
		\end{equation}
		Applying Lemma~\ref{lem:sub_subseq}, it follows that
		\begin{equation}
			\mathbb P\left[\frac{1}{S_{n}}\sum_{i = 1}^{n} W_{in} \leq t \mid \mathcal F_{n}\right] \convp \Phi(t),
		\end{equation}
		as desired.
	\end{proof}
	
	\begin{proof}[Proof of Lemma \ref{lem:conditional-convergence-to-quantile}]
		
		Without loss of generality, we assume that all random variables and $\sigma$-algebras are defined on a common probability space $(\P, \Omega, \mathcal G)$ (Lemma~\ref{lem:embedding}). Let $\mathcal B(\R)$ be the Borel $\sigma$-algebra on $\R$. Let $\kappa_{n}$ be a regular conditional distribution of $W_n$ given $\mathcal{F}_{n}$ \citep[Theorem 8.29]{Lista2017}, i.e. a function $\kappa_{n}: \Omega \times \mathcal B(\R) \rightarrow [0,\infty]$ such that $\omega \mapsto \kappa_{n}(\omega, B)$ is measurable for each $B \in \mathcal B(\R)$, $B \mapsto \kappa_{n}(\omega, B)$ is a $\sigma$-finite measure on $\R^n$ for each $\omega \in \Omega$, and 
		\begin{align*}
			\kappa_{n}(\omega, B) = \P[W_{n} \in B|\mathcal{F}_{n}](\omega), \quad \text{for almost all } \omega \in \Omega \text{ and all} \ B \in \mathcal B(\R).
		\end{align*}
		Now, let $\{n_k\}_{k \geq 1}$ be a subsequence of $\mathbb N$. By conditional Polya's theorem (Theorem~\ref{thm:cond_polya}), we have 
		\begin{equation}
			\sup_{t \in \R}|\P[W_{n}\leq t|\mathcal{F}_n]-\P[W\leq t]| \convp0.
		\end{equation}
		Hence, by Lemma~\ref{lem:sub_subseq} there is a further subsequence $n_{k_j}$ such that
		\begin{equation}
			\sup_{t \in \R}|\P[W_{n_{k_j}}\leq t|\mathcal{F}_{n_{k_j}}](\omega)-\P(W\leq t)|\rightarrow0,\ \text{for almost all } \omega \in \Omega.
		\end{equation}	
		It follows that 
		\begin{equation}
			\sup_{t \in \R}|\kappa_{n_{k_j}}(\omega, (-\infty, t])-\P(W\leq t)|\rightarrow0,\ \text{for almost all } \omega \in \Omega,
		\end{equation}
		i.e.
		\begin{equation}
			\kappa_{n_{k_j}}(\omega, \cdot) \convd W \ \text{for almost all } \omega \in \Omega.
		\end{equation}
		Hence, by Lemma~\ref{lem:conver_quantile}, it follows that
		\begin{equation}
			\Q_{\alpha}[\kappa_{n_{k_j}}(\omega, \cdot)] \rightarrow \Q_{\alpha}[W] \ \text{for almost all } \omega \in \Omega,
		\end{equation}
		and therefore
		\begin{equation}
			\Q_{\alpha}[W_{n_{k_j}}|\mathcal{F}_{n_{k_j}}] \convas \Q_{\alpha}[W].
		\end{equation}
		Applying Lemma~\ref{lem:sub_subseq} again, we conclude that
		\begin{equation}
			\Q_{\alpha}[W_{n}|\mathcal{F}_{n}] \convp \Q_{\alpha}[W],
		\end{equation}
		as desired.
	\end{proof}

\end{document}